\newcommand{\MyTitle}{Derivation of Mean-field Dynamics for Fermions}
\newcommand{\MyAuthor}{S\"oren Petrat}
\newcommand{\MySubject}{Derivation of Mean-field Dynamics for Fermions}
\newcommand{\MyPdfKeywords}{Mean-field dynamics, Hartree-Fock equation}
\title{\MyTitle}
\author{\MyAuthor}
\theoremstyle{definition}\newtheorem{definition}{Definition}[chapter]
\theoremstyle{plain}\newtheorem{lemma}[definition]{Lemma}
\theoremstyle{plain}\newtheorem{theorem}[definition]{Theorem}
\theoremstyle{plain}\newtheorem{corollary}[definition]{Corollary}
\theoremstyle{plain}\newtheorem{assumption}[definition]{Assumption}
\theoremstyle{plain}\newtheorem{proposition}[definition]{Proposition}
\renewcommand\part{%
  \if@openright
      \cleardoublepage
    \else
      \clearpage
   \fi
   \thispagestyle{empty}%
   \if@twocolumn
     \onecolumn
     \@tempswatrue
   \else
     \@tempswafalse
   \fi
   \null\vfil
   \secdef\@part\@spart}
\newcounter{remarks}
\newcommand{\be}{\begin{equation}}
\newcommand{\ee}{\end{equation}}
\newcommand{\scp}[2]{\left\langle #1 , #2 \right\rangle}
\newcommand{\SCP}[2]{\langle\!\langle #1 , #2 \rangle\!\rangle}
\newcommand{\bigSCP}[2]{\Big\langle\!\!\Big\langle #1 , #2 \Big\rangle\!\!\Big\rangle}
\newcommand{\bra}[1]{\langle #1 |}
\newcommand{\ket}[1]{| #1 \rangle}
\newcommand{\ketbra}[2]{| #1 \rangle \langle #2 |}
\newcommand{\ketbr}[1]{| #1 \rangle \langle #1 |}
\newcommand{\norm}[2][]{\left|\left| #2 \right|\right|_{#1}}
\newcommand{\Hilbert}{\mathscr{H}}
\newcommand{\Span}{\mathrm{span}}
\renewcommand{\Im}{\mathrm{Im}}
\newcommand{\AAA}{\mathcal{A}}
\newcommand{\QQQ}{\mathbb{Q}}
\newcommand{\RRR}{\mathbb{R}}
\newcommand{\CCC}{\mathbb{C}}
\newcommand{\NNN}{\mathbb{N}}
\newcommand{\ZZZ}{\mathbb{Z}}
\newcommand{\EEE}{\mathbb{E}}
\newcommand{\id}{\mathbbm{1}}
\newcommand{\floor}[1]{\left\lfloor #1 \right\rfloor}
\newcommand{\dir}{\mathrm{dir}}
\newcommand{\exch}{\mathrm{exch}}
\newcommand{\tr}{\mathrm{tr}}
\newcommand{\HS}{\mathrm{HS}}
\newcommand{\op}{\mathrm{op}}
\newcommand{\sym}{\mathrm{sym}}
\newcommand{\as}{\mathrm{as}}
\newcommand{\kin}{\mathrm{kin}}
\newcommand{\ia}{\mathrm{int}}
\newcommand{\supp}{\mathrm{supp}}
\newcommand{\vol}{\mathrm{vol}}
\newcommand{\Var}{\mathrm{Var}}
\newcommand{\mf}{\mathrm{mf}}
\newcommand{\Term}{\mathrm{Term}}
\newcommand{\loc}{\mathrm{loc}}
\newcommand{\ext}{\mathrm{ext}}
\newcommand{\semicl}{\mathrm{sc}}
\newcommand{\absatz}{\vspace{0.4cm}}
\newcommand{\eqexp}[1]{\text{\scriptsize [#1]~~~}}
\newcommand{\eqexpl}[1]{\text{\scriptsize [#1]\!\!\!\!\!\!\!\!\!\!}}
\begin{document}

\thispagestyle{empty}
\begin{center}
\mbox{}\\ [3\baselineskip]
{\large S\"oren Petrat}\\ [\baselineskip]
{\LARGE \textbf{Derivation of Mean-field Dynamics for Fermions}}\\ [3\baselineskip]
Dissertation an der Fakult\"at f\"ur\\
Mathematik, Informatik und Statistik der\\
Ludwig-Maximilians-Universit\"at M\"unchen\\[2\baselineskip]
\vfill
Eingereicht am 6.\ M\"arz 2014
\end{center}

\cleardoublepage

\pagenumbering{roman}

\thispagestyle{empty}
\begin{center}
\mbox{}\\ [3\baselineskip]
{\large S\"oren Petrat}\\ [\baselineskip]
{\LARGE \textbf{Derivation of Mean-field Dynamics for Fermions}}\\ [3\baselineskip]
Dissertation an der Fakult\"at f\"ur\\
Mathematik, Informatik und Statistik der\\
Ludwig-Maximilians-Universit\"at M\"unchen\\[2\baselineskip]
\vfill
\begin{minipage}[t]{.8\textwidth}
\mbox{}\\
{ \newlength\rowsep
\setlength\rowsep{\the\baselineskip}
Eingereicht am 6.\ M\"arz 2014 \\ [\rowsep]
Tag der m\"undlichen Pr\"ufung: 15.\ Mai 2014 \\ [\rowsep]
\begin{tabular}{@{} l l}
1.\ Gutachter: & {
\begin{minipage}[t][0pt][t]{\textwidth}
Prof.\ Dr.\ Peter Pickl \\
Mathematisches Institut der LMU \\
Theresienstr.\ 39 \\
D-80333 M\"unchen
\end{minipage}} \\[4\rowsep]
\makebox[0pt][l]{2.\ Gutachter:} & 
\makebox[0pt][l]{
\begin{minipage}[t][0pt][t]{\textwidth}
Prof.\ Dr.\ Benjamin Schlein \\
Institut f\"ur Mathematik, Universit\"at Z\"urich \\
Winterthurerstr.\ 190 \\
CH-8057 Z\"urich
\end{minipage}}
\end{tabular}
}
\end{minipage}
\end{center}

\cleardoublepage

\section*{Abstract}
In this work, we derive the time-dependent Hartree(-Fock) equations as an effective dynamics for fermionic many-particle systems. Our main results are the first for a quantum mechanical mean-field dynamics for fermions; in previous works, the mean-field limit is usually either coupled to a semiclassical limit, or the interaction is scaled down so much, that the system behaves freely for large particle number $N$. We mainly consider systems with total kinetic energy bounded by $const \cdot N$ and long-range interaction potentials, e.g., Coulomb interaction. Examples for such systems are large molecules or certain solid states. Our analysis also applies to attractive interactions, as, e.g., in fermionic stars. The fermionic Hartree(-Fock) equations are a standard tool to describe, e.g., excited states or chemical reactions of large molecules (like proteins). A deeper understanding of these equations as an approximation to the time evolution of a many body quantum system is thus highly relevant.

We consider the fermionic Hartree equations (i.e., the Hartree-Fock equations without exchange term) in this work, since the exchange term is subleading in our setting. The main result is that the fermionic Hartree dynamics approximates the Schr\"odinger dynamics well for large $N$. This statement becomes exact in the thermodynamic limit $N\to\infty$. We give explicit values for the rates of convergence. We prove two types of results. The first type is very general and concerns arbitrary free Hamiltonians (e.g., relativistic, non-relativistic, with external fields) and arbitrary interactions. The theorems give explicit conditions on the solutions to the fermonic Hartree equations under which a derivation of the mean-field dynamics succeeds. The second type of results scrutinizes situations where the conditions are fulfilled. These results are about non-relativistic free Hamiltonians with external fields, systems with total kinetic energy bounded by $const \cdot N$ and with long-range interactions of the form $|x|^{-s}$, with $0 < s < \frac{6}{5}$ (sometimes, for technical reasons, with a weaker or cut off singularity).

We prove our main results by using a new method for deriving mean-field dynamics developed by Pickl in [Lett.\ Math.\ Phys., 97(2):151--164, 2011]. This method has been applied successfully in quantum mechanics for deriving the bosonic Hartree and Gross-Pitaevskii equations. Its application to fermions in this work is new. The method is based on a functional that ``counts the number of particles outside the condensate'', i.e., in the case of fermions, it measures those parts of the Schr\"odinger wave function that are not in the antisymmetric product of the Hartree states. We show that convergence of the functional to zero (which means that the mean-field equations approximate the dynamics well) is equivalent to convergence of the corresponding reduced one-particle density matrices in trace norm and in Hilbert-Schmidt norm. Finally, we show how also the recently treated semiclassical mean-field limits can be derived with this method.

\clearpage

\selectlanguage{ngerman}

\section*{Zusammenfassung}
In dieser Arbeit werden die zeitabh\"angigen Hartree(-Fock) Gleichungen als effektive Dynamik f\"ur fermionische Vielteilchen-Systeme hergeleitet. Die Hauptresultate sind die ersten f\"ur eine quantenmechanische Mean-Field Dynamik ("`Mittlere-Feld Dynamik"') f\"ur Fermionen; in vorherigen Arbeiten ist der Mean-Field Limes \"ublicherweise entweder mit einem semiklassischen Limes gekoppelt oder die Wechselwirkung wird so stark runterskaliert, dass sich das System f\"ur gro{\ss}e Teilchenzahl $N$ frei verh\"alt. Wir betrachten haupts\"achlich Systeme, deren kinetische Energie durch $konst \cdot N$ beschr\"ankt ist, und langreichweitige Wechselwirkungen, wie z.B.\ Coulomb Wechselwirkung. Beispiele f\"ur solche Systeme sind gro{\ss}e Molek\"ule oder bestimmte Festk\"orper. Unsere Analyse gilt auch f\"ur anziehende Wechselwirkungen, wie z.B.\ in fermionischen Sternen. Die fermionischen Hartree(-Fock) Gleichungen sind ein Standardwerkzeug um z.B.\ angeregte Zust\"ande oder chemische Reaktionen in gro{\ss}en Molek\"ulen (wie Proteinen) zu beschreiben. Ein tieferes Verst\"andnis dieser Gleichungen als N\"aherung der Zeitentwicklung eines quantenmechanischen Vielteilchen-Systems ist daher \"au{\ss}erst relevant.

Wir betrachten in dieser Arbeit die fermionischen Hartree Gleichungen (d.h., die Hartree-Fock Gleichungen ohne Austauschterm), da der Austauschterm in unserem Fall von niedriger Ordnung ist. Das Hauptresultat ist, dass die fermionische Hartree Dynamik die Schr\"odinger Dynamik f\"ur gro{\ss}e $N$ gut ann\"ahert. Diese Aussage wird im thermodynamischen Limes $N\to\infty$ exakt. Wir geben explizite Konvergenzraten an. Es werden zwei Arten von Resultaten bewiesen. Die erste Art ist sehr allgemein und betrifft beliebige freie Hamiltonians (z.B.\ relativistisch, nicht-relativistisch, mit externen Feldern) und beliebige Wechselwirkungen. Die Theoreme geben explizite Bedingungen an die L\"osungen der fermionischen Hartree-Gleichungen an, unter denen eine Herleitung der Mean-Field Dynamik funktioniert. In der zweiten Art von Resultaten wird untersucht f\"ur welche Situationen diese Bedingungen erf\"ullt sind. Diese Resultate sind \"uber nicht-relativistische freie Hamiltonians mit externen Feldern, Systeme mit kinetischer Energie beschr\"ankt durch $konst \cdot N$ und mit langreichweitiger Wechselwirkung der Form $|x|^{-s}$, mit $0 < s < \frac{6}{5}$ (aus technischen Gr\"unden, manchmal mit abgeschnittener oder abgeschw\"achter Singularit\"at).

Die Hauptresultate werden mit einer neuen Methode zur Herleitung von Mean-Field Limiten bewiesen, die von Pickl in [Lett.\ Math.\ Phys., 97(2):151-164, 2011] entwickelt wurde. Diese Methode wurde in der Quantenmechanik erfolgreich zur Herleitung der bosonischen Hartree und Gross-Pitaevskii Gleichungen angewandt. Die Anwendung auf Fermionen in dieser Arbeit ist neu. Die Methode basiert auf einem Funktional, das die "`Anzahl der Teilchen au{\ss}erhalb des Kondensats z\"ahlt"', d.h. im Falle von Fermionen misst es die Anteile der Schr\"odinger Wellenfunktion, die nicht im antisymmetrisierten Produkt der Hartree-Zust\"ande sind. Wir zeigen, dass die Konvergenz des Funktionals gegen Null (was bedeutet, dass die Mean-Field Gleichungen die Dynamik gut ann\"ahern) \"aquivalent zur Konvergenz der zugeh\"origen Einteilchen-Dichtematrizen in Spur-Norm und Hilbert-Schmidt-Norm ist. Wir zeigen au{\ss}erdem wie die k\"urzlich behandelten semiklassischen Mean-Field Limiten mit dieser Methode hergeleitet werden k\"onnen.

\clearpage

\section*{Danksagung}

Ich m\"ochte zuallererst ganz herzlich meinem Doktorvater Peter Pickl f\"ur den Themenvorschlag und die wunderbare Betreuung dieser Doktorarbeit danken. Ich habe enorm viel von ihm gelernt (nicht nur $p$-$q$ Akrobatik) und es hat mich gefreut, dass ich an seinem Ideenreichtum und seiner Kreativit\"at teilhaben konnte. Weiterhin bin ich Detlef D\"urr \"uberaus dankbar f\"ur die langj\"ahrige Betreuung und Begleitung. Ich habe von ihm unsch\"atzbar viel \"uber Physik und Mathematik, insbesondere \"uber deren Bedeutung und Wert gelernt. Ich danke Roderich Tumulka f\"ur die tolle Zusammenarbeit an dem Projekt zu Viel-Zeit Wellenfunktionen und die vielf\"altige Unterst\"utzung w\"ahrend meiner Aufenthalte an der Rutgers University und auch w\"ahrend meiner Promotion. Benjamin Schlein danke ich f\"ur die Bereitschaft Zweitgutachter f\"ur diese Arbeit zu sein und f\"ur interessante Diskussionen w\"ahrend meines Besuchs in Bonn.

\absatz

Weiterhin ist es mir eine Freude der ganzen Arbeitsgruppe D\"urr/Pickl f\"ur das tolle wissenschaftliche Umfeld und die unvergleichlich wunderbare Atmosph\"are in der Gruppe zu danken. Ich habe die tiefgehenden Diskussionen \"uber Physik, Mathematik und Philosophie sehr genossen. Ganz besonders m\"ochte ich danken: Dirk-Andr\'{e} Deckert, der mich sozusagen f\"ur die Gruppe rekrutiert hat; Robert Grummt, Florian Hoffmann und Nicola Vona f\"ur die vielen gemeinsamen Aktivit\"aten; Florian Hoffmann ('s up?) insbesondere f\"ur die gute B\"uroatmosph\"are und eine tolle USA Reise; dem "`AK Relativity"' und "`DSS"' f\"ur spannende Diskussionen; Maximilian Jeblick und David Mitrouskas f\"ur die Zusammenarbeit an unserem Testteilchen-Projekt; dem Kinderzimmer f\"ur die gute Stimmung. Besonders m\"ochte ich Maximilian Jeblick, Nikolai Leopold und David Mitrouskas f\"ur das Korrekturlesen dieser Arbeit und f\"ur ihre Anmerkungen und Verbesserungsvorschl\"age danken.

\absatz

Neben meinem Studium an der LMU haben mich mehrere Aufenthalte an der Rutgers University sehr gepr\"agt. Daher m\"ochte ich an dieser Stelle herzlich allen danken, die mich w\"ahrend dieser Zeit unterst\"utzt haben - sei es durch ihre Gastfreundschaft oder "`Saturday morning waffles"' - und von denen ich so viel gelernt habe, durch Diskussionen und wissenschaftliche Zusammenarbeit. Insbesondere m\"ochte ich Shelly Goldstein, Michael Kiessling, August Krueger, Joel Lebowitz, Vishnya und Tim Maudlin, Ronald Ransome und Roderich Tumulka danken.

\absatz

Beim Cusanuswerk bedanke ich mich ganz herzlich f\"ur die Stipendien zur F\"orderung meines Studiums und meiner Doktorarbeit. Ich m\"ochte mich einerseits f\"ur die finanzielle Unterst\"utzung, gerade w\"ahrend meiner Promotion, aber auch f\"ur die "`ideelle F\"orderung"', z.B. durch Graduiertentagungen, und die pers\"onliche Betreuung bedanken. Der Fulbright-Kommission danke ich f\"ur das Stipendium, welches mein Auslandsjahr an der Rutgers University erm\"oglicht hat.

\absatz

Der gr\"o{\ss}te Dank geht letztendlich an meine Eltern und meinen Bruder. Meinen Eltern danke ich f\"ur alles, was sie f\"ur mich getan haben und ganz besonders daf\"ur, dass sie mir alle Freiheiten gelassen haben meinen eigenen Weg zu finden und zu gehen. Meinem Bruder bin ich f\"ur so viele Dinge dankbar: angefangen von einer tollen Kindheit, \"uber die vielf\"altigste Art und Weise der Unterst\"utzung im Studium und in der Promotion und ganz besonders f\"ur unsere jetzige starke Freundschaft.

\cleardoublepage

\selectlanguage{english}

\tableofcontents

\cleardoublepage

\pagenumbering{arabic}

\part{Background and Presentation of Main Results}\label{pt:part_one}

\chapter{Introduction}
This thesis is a contribution to the long-standing goal of statistical mechanics to derive \emph{effective dynamics} from microscopic laws of motion. By microscopic law of motion we mean for example Newton's equations in classical mechanics or the Schr\"odinger equation in quantum mechanics. We know that in many situations nature can be described very well by these theories. Nonetheless, very often the microscopic dynamics is not visible: air at room temperature for example obeys the laws of thermodynamics, which are not about the dynamics of the molecules that the gas is made of but rather about quantities like volume, pressure and temperature. So on a different \emph{scale} nature can appear very different. On a macroscopic scale we do not see the microscopic behavior that is described by the fundamental laws of motion, but we often see quite different behavior.

Such effective behavior arises in many different situations, usually when microscopic details can be neglected (e.g., when a system is described on a different scale or when forces can be replaced by their average value). Effective descriptions are much simpler, they involve much fewer degrees of freedom than the original microscopic description. Famous effective evolution equations are in classical mechanics the Boltzmann, Navier-Stokes or Vlasov equations, and in quantum mechanics the Hartree, Hartree-Fock or Gross-Pitaevskii equations. To derive an effective dynamics means to prove rigorously that the solutions to the effective equation approximate the solutions to the microscopic equation of motion well in certain situations. This is an ongoing project of mathematical physics. Only some cases are known where such a derivation can be conducted rigorously. In classical mechanics, this could for example be shown for the Vlasov equation \cite{hepp:1977}; however, for the Boltzmann equation it has been shown only for very short times \cite{lanford:1975}, and for the Navier-Stokes equation it is still an open problem (see the book by Spohn \cite{spohn:1991} for an excellent overview and introduction to this topic). In quantum mechanics, the derivation of effective dynamics for bosons near a condensate is well understood by now; see the works \cite{hepp:1974,spohn:1980,erdoes:2001,froehlich:2009,rodnianski:2009,pickl:2011method,pickl:2010hartree} for the case of the Hartree equation and \cite{erdoes:2006,erdoes:2007,erdoes:2007_2,erdoes:2009,erdoes:2010,pickl:2010gp_pos,pickl:2010gp_ext,benedikter:2012} for the Gross-Pitaevskii equation. However, only very little is known about derivations of mean-field dynamics for fermions. That is the topic of this thesis.

Thus far, only two situations have been considered in the literature: Either, the interaction is weakened so much that the particles behave freely for large particle number $N$, see \cite{bardos:2003,bardos:2004,bardos:2007,froehlich:2011}, or the mean-field limit is considered for wave functions with a semiclassical structure, such that this limit also leads to the classical Vlasov equation, see \cite{narnhofer:1981,spohn:1981,erdoes:2004,benedikter:2013,benedikter:2014}. (We give a more detailed discussion of the literature in Chapter~\ref{sec:mf_fermions_lit}.) The derivation of mean-field dynamics for fermions in a setting that leads to fully quantum mechanical behavior (in the sense that the dynamics is neither free nor close to a classical one) is what we are interested in.

We consider fermionic many-particle systems in quantum mechanics (mostly non-relativistic, although some of our theorems also apply to more general, e.g., relativistic settings). That is, the fundamental law of motion is the Schr\"odinger equation (we set $\hbar=1$)
\be\label{Schr_intro}
i \partial_t \psi^t = H \psi^t 
\ee
for antisymmetric complex-valued $N$-particle wave functions $\psi^t \in L^2(\RRR^{3N})$ (for simplicity, we neglect spin throughout this work). Antisymmetry means that $\psi^t(\ldots,x_j,\ldots,x_k,\ldots) = - \psi^t(\ldots,x_k,\ldots,x_j,\ldots) ~\forall j \neq k$. We consider Hamiltonians
\be\label{Schr_intro_H}
H = \sum_{j=1}^N H_j^0 + \sum_{i<j} v^{(N)}(x_i-x_j),
\ee
where $H_j^0$ acts only on $x_j$ and $v^{(N)}(x)=v^{(N)}(-x)$ is a real-valued pair-interaction potential (the superscript $(N)$ denotes a possible scaling and will be explained in Chapter~\ref{sec:mf_fermions}). According to \eqref{Schr_intro}, the unitary time evolution of an initial wave function $\psi^0$ is given by $\psi^t = e^{-iHt}\psi^0$ if $H$ is self-adjoint which we henceforth will assume. Note that for antisymmetric initial conditions $\psi^0$, the wave function $\psi^t$ remains antisymmetric under the Schr\"odinger evolution \eqref{Schr_intro} with Hamiltonian \eqref{Schr_intro_H} for all times. For the desired effective description, consider $N$ orthonormal one-particle wave functions (also called orbitals) $\varphi_1^t, \ldots, \varphi_N^t \in L^2(\RRR^3)$ which are solutions to the \emph{fermionic Hartree equations} (sometimes called \emph{reduced Hartree-Fock equations}). These are the coupled system of non-linear differential equations
\be\label{hartree_intro}
i \partial_t \varphi_j^t(x) = H^0 \varphi_j^t(x) + \bigg(v^{(N)} \star \sum_{k=1}^N |\varphi_k^t|^2 \bigg)(x) \, \varphi_j^t(x),
\ee
for $j=1,\ldots,N$, where $\star$ denotes convolution. Note that for orthonormal initial conditions $\varphi_1^0, \ldots, \varphi_N^0$, \eqref{hartree_intro} preserves the orthonormality for all times. The term
\be\label{intro_direct}
\bigg( v^{(N)} \star \sum_{k=1}^N |\varphi_k^t|^2 \bigg)(x) = \int_{\RRR^3 } v^{(N)}(x-y) \sum_{k=1}^N |\varphi_k^t(y)|^2 \, d^3y
\ee
is called the \emph{mean-field}. It can be viewed as the average value of the interaction potential at point $x$, created by particles distributed according to the density $\rho_N^t = \sum_{k=1}^N |\varphi_k^t|^2$. Note that closely related mean-field equations for fermions are the Hartree-Fock equations, where an additional exchange term
\be\label{exch_intro}
- \sum_{k=1}^N \Big(v \star ({\varphi_k^t}^*\varphi_j^t) \Big)(x) \, \varphi_k^t(x)
\ee
is present on the right-hand side of \eqref{hartree_intro}. In general, the Hartree-Fock equations are expected to be a better approximation than the fermionic Hartree equations; however, the exchange term is always smaller than the direct term \eqref{intro_direct}, and in our setting it is negligibly small (subleading compared to the direct term). Therefore, it is sufficient to consider only the fermionic Hartree equations here (see Chapter~\ref{sec:exch_term} for more details).

Now suppose that some initial $\varphi_1^0, \ldots, \varphi_N^0$ are given. Let the initial $N$-particle wave function be $\psi^0 \approx \bigwedge_{j=1}^N \varphi_j^0$, where $\bigwedge_{j=1}^N \varphi_j$ means the antisymmetrized product of $\varphi_1, \ldots, \varphi_N$ (see \eqref{antisymm_prod}). Then, under the Schr\"odinger evolution \eqref{Schr_intro}, this initial wave function evolves to $\psi^t = e^{-iHt} \psi^0$. We want to compare this $\psi^t$ to the wave function $\bigwedge_{j=1}^N \varphi_j^t$, where the $\varphi_j^t$ are the solutions to the fermionic Hartree equations \eqref{hartree_intro}. In other words, if still
\be
\psi^t \approx \bigwedge_{j=1}^N \varphi_j^t
\ee
at some time $t$, then the Schr\"odinger dynamics is approximated well by the Hartree dynamics and we say that we have derived the fermionic Hartree equations as an effective dynamics. That is the goal of this thesis.

Note, that in the presence of an interaction potential $v^{(N)}$ it is in general never true that $e^{-iHt} \bigwedge_{j=1}^N \varphi_j^0 = \bigwedge_{j=1}^N \varphi_j^t$, since the interaction leads to correlations between the particles. By correlations we mean those that are not due to the antisymmetry of the wave function, i.e., we mean that the wave function is in a superposition of more than one antisymmetric product state. We can therefore only expect the statement $\psi^t \approx \bigwedge_{j=1}^N \varphi_j^t$ to hold approximately. If a statement of the form $\psi^t \approx \bigwedge_{j=1}^N \varphi_j^t$ holds, then this means that only few correlations have developed. This can only be expected to happen in certain situations, for example, for short times (where the particles couldn't interact with each other long enough to produce correlations) or for weak interactions. The question is then: What exactly does ``few'', ``short'' or ``weak'' mean? This question is dealt with in Chapter~\ref{sec:mf_fermions}. There we identify interesting physical systems where we can expect the mean-field approximation to be valid.

After that, in the mathematical part of the thesis, we have to make precise what we mean by $\approx$ in $\psi^t \approx \bigwedge_{j=1}^N \varphi_j^t$. This is specified by a functional $\alpha(\psi^t,\varphi_1^t,\ldots,\varphi_N^t) =: \alpha(t)$ (first introduced by Pickl in \cite{pickl:2011method} for deriving mean-field limits for bosons), which measures ``how much'' of $\psi^t$ is not in the antisymmetric product of $\varphi_1^t,\ldots,\varphi_N^t$. In more detail, $\alpha(t)$ measures how many correlations have developed due to the interaction. Our main theorems give bounds on this $\alpha(t)$. Note again that the important question is if the antisymmetric product structure survives the time evolution. This is what the functional $\alpha(t)$ directly focuses on.

\absatz

\textbf{Structure of the thesis.} The thesis is organized into two parts. In Part~\ref{pt:part_one}, we introduce the subject of the thesis, give an overview of the underlying physics and present our main results. In Part~\ref{pt:part_two}, we give a proof of our main results.

In Chapter~\ref{sec:mf_fermions}, we provide a discussion of the mean-field description for fermions from a physical and mostly mathematically non-rigorous point of view. In Chapter~\ref{sec:mf_fermions_scaling_general}, we introduce and discuss the scaling we are later concerned with in some of our main results. This scaling is such that it leads to interesting quantum mechanical behavior. We discuss in some detail its meaning and possible applications of the scaled equations. In Chapter~\ref{sec:mf_fermions_semiclassical}, we give a brief overview of another interesting scaling, where the wave function naturally has a semiclassical structure, and, in fact, approximates the solutions to the classical Vlasov equation. We discuss the literature on the subject in more detail in Chapter~\ref{sec:mf_fermions_lit}. Furthermore, in Chapter~\ref{sec:mf_fermions_const_E_kin_fluc}, we make a remark about the connection between the correlations that develop due to the interaction and fluctuations around the mean-field. In Chapter~\ref{sec:exch_term}, we discuss the exchange term \eqref{exch_intro} that arises in the Hartree-Fock equations and argue why it is subleading in our setting.

In Chapter~\ref{sec:main_results}, we present the main results of this thesis and give an outline of their proofs. In Chapter~\ref{sec:counting_functional}, we first present in detail the definition of the counting functional $\alpha(t)$ and discuss its properties. In Chapter~\ref{sec:dens_mat_summary}, we explain how this $\alpha(t)$ is related to the difference of reduced density matrices in trace norm and Hilbert-Schmidt norm. The main result there are two lemmas showing that convergence of $\alpha(t)$ is equivalent to convergence of reduced density matrices in trace and Hilbert-Schmidt norm. The main results of this thesis are then stated and explained in detail in Chapter~\ref{sec:main_theorem_mf}. There are two kinds of results: those in Chapter~\ref{sec:main_theorem_mf_general_v} are concerned with general Hamiltonians as in \eqref{Schr_intro_H}, and they say that $\alpha(t)$ converges (i.e., the mean-field approximation is good) when certain assumptions on the solutions to the fermionic Hartree equations are fulfilled. Afterwards, in Chapter~\ref{sec:main_theorem_mf_x-s}, we present results that explicitly show that these assumptions are fulfilled for the scalings we discussed in Chapter~\ref{sec:mf_fermions_scaling_general} for many different interactions (and, in particular, for non-relativistic Hamiltonians, possibly with external fields). In Chapter~\ref{sec:outline_proof}, we give a detailed outline of the proofs of the so far presented results. In order to demonstrate that the $\alpha$-method is very versatile, we also give a proof of the convergence of $\alpha(t)$ for the semiclassical scaling, which was already achieved with other methods by Benedikter, Porta and Schlein \cite{benedikter:2013}. In Chapter~\ref{sec:estimates_semiclassical}, we present the main idea of our alternative proof, while we defer the full proof to Appendix~\ref{sec:proof_sc_scaling}. Finally, in Chapter~\ref{sec:outlook}, we give a brief outlook on remaining open problems related to mean-field descriptions for fermions.

Part~\ref{pt:part_two} of this thesis contains a proof of the main results. In Chapter~\ref{sec:notation}, we establish some notation, recall inequalities we often use and discuss in more detail the properties of certain projectors that are needed to define $\alpha(t)$. In Chapter~\ref{sec:density_matrices}, we prove the main results of Chapter~\ref{sec:dens_mat_summary} about the relation between $\alpha(t)$ and reduced density matrices. Then, in Chapter~\ref{sec:alpha_dot_and_general_lemmas}, we prove the results of Chapter~\ref{sec:main_theorem_mf_general_v}, and in Chapter~\ref{sec:mean-field_scalings_general}, we prove the results stated in Chapter~\ref{sec:main_theorem_mf_x-s}.

\chapter{The Physics: Mean-field Dynamics of Fermions}\label{sec:mf_fermions}
\section{A New Scaling for Fermionic Mean-field Limits}\label{sec:mf_fermions_scaling_general}
As we mentioned in the introduction, we can expect mean-field behavior only in certain situations. In this chapter we identify physical systems where there is interesting mean-field behavior. It will be convenient to consider scaled evolution equations. Here, we introduce the scaling, discuss its origin and physical relevance, give different formulations of the problem, and discuss applications of the presented scaled equations.

\subsection{The Scaled Equations and the Physics}\label{sec:mf_fermions_const_E_kin_physics}
There will be two types of theorems in this work: those in Chapter~\ref{sec:main_theorem_mf_general_v} are about general Hamiltonians of the form \eqref{Schr_intro_H}, and those in Chapter~\ref{sec:main_theorem_mf_x-s} concern non-relativistic Hamiltonians and certain long-range interactions. In order to explain what physical situations we have in mind, it is easier to formulate our equations explicitly here for the latter case. We first present and discuss the scaled equations and give more details on the origin of the scaling afterwards, in Chapter~\ref{sec:mf_fermions_scaling}. The following analysis is done for dimension $d=3$, but could also be conducted for other dimensions.

We consider the non-relativistic Schr\"odinger equation for an antisymmetric $N$ particle wave function $\psi^t \in L^2(\RRR^{3N})$ (we set $\hbar=1=2m$ throughout this work)
\be\label{scaling_Schr}
i \partial_t \psi^t(x_1,\ldots,x_N) = \left( \sum_{j=1}^N \left( -\Delta_{x_j} + w^{(N)}(x_j) \right) + N^{-\beta} \sum_{i<j} v(x_i-x_j) \right) \psi^t(x_1,\ldots,x_N),
\ee
where $\Delta_{x_j}$ is the Laplace operator, acting on $x_j$, $w^{(N)}$ is an external field (that can possibly depend on $N$), $\beta \in \RRR$ is the \emph{scaling exponent}, and $v(x)=v(-x)$ is a real-valued pair interaction potential. The corresponding fermionic Hartree equations are
\be\label{scaling_hartree}
i \partial_t \varphi_j^t(x) = \left( -\Delta + w^{(N)}(x) + N^{-\beta} \left(v \star \rho_N^t \right)(x) \right) \varphi_j^t(x),
\ee
for $j=1,\ldots,N$, where we denote the density by $\rho_N^t=\sum_{i=1}^N |\varphi_i^t|^2$. 

Let us now discuss for which physical systems Equation~\eqref{scaling_Schr} is applicable. For this discussion it is useful to consider the $N$-dependence of the expectation values of the kinetic energy,
\be\label{expc_E_kin}
E_{\kin,\psi^t} = \bigSCP{\psi^t}{\sum_{i=1}^N (- \Delta_{x_i}) \psi^t},
\ee
and of the interaction energy,
\be\label{expc_E_int}
E_{\ia,\psi^t} = \bigSCP{\psi^t}{\sum_{i<j} v(x_i-x_j) \psi^t}
\ee
(note that we did not include the factor $N^{-\beta}$ in our definition of the interaction energy), where $\SCP{\cdot}{\cdot}$ denotes the scalar product in $L^2(\RRR^{3N})$. (We will often refer to the expressions \eqref{expc_E_kin}, \eqref{expc_E_int} simply as kinetic and interaction energy, although they are only expectation values.) The situation we want to consider is one where the total kinetic energy is bounded from above by $CN$, where $C$ is some $N$-independent constant. We then say that $E_{\kin,\psi^t}$ is $O(N)$.\footnote{In the following we say that a function $f(N)$ is of order $N^p$, or simply $O(N^p)$, if there is a constant $C$ (independent of $N$) such that $f(N) \leq C N^p$. The interesting cases are usually when $f(N)$ is also bounded from below, i.e., when there is a constant $D$, such that $D N^p \leq f(N)$.} Now an interesting effect that holds only for fermions comes into play: Due to the antisymmetry of the wave function (or the Pauli principle or Fermi pressure), the particles have to occupy a volume that grows with $N$. Let us explain in more detail what this means. First, let us compare the situation to bosons. A very simple bosonic wave function is $\phi(x_1,\ldots,x_N) = \prod_{j=1}^N\varphi_V(x_j)$, where $\supp(\varphi_V) = V$ for some volume $V \subset \RRR^3$, and $E_{\kin,\phi} = \SCP{\phi}{ \sum_{j=1}^N (-\Delta_{x_j}) \phi} = N \scp{\varphi_V}{(-\Delta) \varphi_V}$ is $O(N)$ ($\scp{\cdot}{\cdot}$ denotes the scalar product in $L^2(\RRR^3)$). Here the particles occupy a \emph{constant} volume $V$. This, in contrast, is not possible for fermions. To illustrate this, let us give an example. Consider plane waves in a box, that is, the free ground state in $V_L = \big[-\frac{L}{2},\frac{L}{2}\big]^3$ with periodic boundary conditions. The general form of an antisymmetric product state is
\be\label{antisymm_prod}
\psi(x_1,\ldots,x_N) = \left( \bigwedge_{j=1}^N \varphi_j \right)(x_1,\ldots,x_N) := \frac{1}{\sqrt{N!}} \sum_{\sigma \in S_N} (-1)^{\sigma} \prod_{j=1}^N \varphi_{\sigma(j)}(x_j),
\ee
where $S_N$ is the set of all permutations of $1,\ldots,N$, $(-1)^{\sigma}$ is the sign of the permutation $\sigma$ and $\varphi_1,\ldots,\varphi_N$ are orthonormal. For free particles in a box,
\be\label{free_particles_box}
\varphi_j(x) = L^{-\frac{3}{2}} \, e^{i \frac{2\pi}{L} k_j \cdot x} \, \id_{V_L}(x),
\ee
where $k_j \in \ZZZ^3$. Since we want to consider the ground state, we choose the $k_j$ increasingly, such that $|k_N|$ is as small as possible (while of course $k_i\neq k_j \forall i\neq j$). For this wave function, we find that
\begin{align}\label{scaling_plane_wave_E_kin}
E_{\kin,\bigwedge \varphi_j} = \sum_{i=1}^N \scp{\varphi_i}{(-\Delta)\varphi_i} &= \sum_{i=1}^N \left( \frac{2\pi}{L} k_i \right)^2 \nonumber \\
&\leq C \left(\frac{2\pi}{L}\right)^2 \int_0^{N^{\frac{1}{3}}} r^2~ r^2 dr \nonumber \\
&\propto N^{\frac{5}{3}}L^{-2}.
\end{align}
Thus, if the kinetic energy is proportional to $N$, then $L \propto N^{\frac{1}{3}}$, i.e., the volume $L^3 \propto N$. In general, one can show that a similar statement holds for \emph{any} fermionic wave function. The precise statement is Lemma~\ref{lem:Omega}. It says in particular that, if the kinetic energy is of $O(N)$, then the average number of particles in a volume of $O(N)$ is of $O(N)$.

The fact that fermionic wave functions with kinetic energy of $O(N)$ naturally ``occupy a volume'' that grows in $N$ is now important for the interaction energy. We want to consider long-range interactions like the Coulomb interaction, so the ``size'' of the system is very important. What we want to use is that, with growing $N$, each particle ``feels'' more and more other particles. To illustrate this, let us consider the mean-field interaction term $|\cdot|^{-1} \star \rho_N^t$ from the Hartree equation (without the $N^{-\beta}$) with Coulomb interaction. For the example of plane waves from above, its maximum value can easily be evaluated, since $\rho_N^t = \frac{N}{L^3} \id_{V_L}$. Recall that $L\propto N^{\frac{1}{3}}$, i.e., the density $\rho = \frac{N}{L^3}$ is constant. We find
\be\label{coulomb_gs}
\big(|\cdot|^{-1} \star \rho_N^t\big)(x) \leq \big(|\cdot|^{-1} \star \rho_N^t\big)(0) = \rho \int_{\RRR^3} |y|^{-1} \id_{V_L}(y) \,d^3y \propto \int_0^{N^{\frac{1}{3}}} r^{-1}\, r^2 dr \propto N^{\frac{2}{3}}.
\ee
Thus, the interaction energy per particle is $O(N^{\frac{2}{3}})$, due to the long range of the Coulomb interaction; the total interaction energy is then $O(N^{\frac{5}{3}})$. If we now choose the scaling exponent $\beta=\frac{2}{3}$, then the kinetic term and the scaled interaction term in the Schr\"odinger equation are of the same order, $O(N)$. Thus, for times of $O(1)$, we would expect interesting mean-field behavior for large $N$: heuristically speaking, each particle feels $O(N^{\frac{2}{3}})$ other particles (due to the fact that the system size grows and the interaction has long range), but only with strength $O(N^{-\frac{2}{3}})$. Lemma~\ref{lem:scaling_x-s} makes this statement exact. It says that, under the condition that $E_{\kin}$ is $O(N)$, for interactions with long-range part like $|x|^{-s}$, with $0<s<\frac{6}{5}$, $v\star\rho_N$ is of order $N^{\beta}$, with scaling exponent $\beta=1-\frac{s}{3}$. Note again that it is only the long-range behavior of the interaction that dictates the scaling exponent $\beta$. (The interactions we consider in our main results sometimes have the singularity weakened or cut off.)

If an external field $w^{(N)}$ is present, then also the total external field energy
\be
E_{\ext,\psi^t} = \bigSCP{\psi^t}{\sum_{i=1}^N w^{(N)}(x_i) \psi^t}
\ee
should be of $O(N)$. In principle, the external field could be time-dependent, as long as it preserves the bound $E_{\kin,\psi^t} \leq CN$.

Let us summarize the orders in $N$ of the terms in the Equations~\eqref{scaling_Schr} and \eqref{scaling_hartree} (for ease of notation, without external field). In the following, note that the informal notation with the curly brackets refers to (the expectation values of) the energies associated with the terms in the equations. We consider long-range interaction potentials $v$ and the corresponding appropriate $\beta$; more exactly, for interactions with long-range behavior $|x|^{-s}$, $0<s<\frac{6}{5}$, the scaling exponent is $\beta=1-\frac{s}{3}$. For the Schr\"odinger equation we have
\be\label{Schr_orders}
i \partial_t \psi^t(x_1,\ldots,x_N) = \underbrace{ - \sum_{j=1}^N \Delta_{x_j} \psi^t(x_1,\ldots,x_N)}_{O(N)} + \underbrace{N^{-\beta} \sum_{i<j} v(x_i-x_j) \psi^t(x_1,\ldots,x_N)}_{O(N)},
\ee
and for the fermionic Hartree equations
\be\label{hartree_orders}
i \partial_t \varphi_j^t(x) = \underbrace{-\Delta \varphi_j^t(x)}_{O(1)} + \underbrace{N^{-\beta} \left(v \star \rho_N^t \right)(x) \varphi_j^t(x)}_{O(1)}.
\ee
Heuristically, one sees that we can indeed expect the mean-field approximation to be valid for arbitrary times $t$ of $O(1)$. From Equation~\eqref{hartree_orders} one can read off, that the limit of large $N$ leads to interesting mean-field behavior, since both the kinetic term and the interaction term are of $O(1)$. If, for example, the kinetic term would be of $O(1)$ but the interaction term were of $O(N^{-\delta})$ (for some $\delta>0$), then the interaction term would be negligibly small for large $N$ (for times of $O(1)$) and vanish in the limit $N\to\infty$. Our results in Chapter~\ref{sec:main_theorem_mf_x-s} are about the above Equations~\eqref{Schr_orders} and \eqref{hartree_orders}, possibly with external fields, and sometimes with weakened or cut off singularity of the interaction potential. The main result is, that, if the kinetic energy stays bounded by $CN$ for all times $t$, then the mean-field dynamics \eqref{hartree_orders} indeed approximates the Schr\"odinger dynamics \eqref{Schr_orders} well.

Finally, let us briefly compare the situation to the case of bosons near a condensate state $\phi(x_1,\ldots,x_N) = \prod_{j=1}^N\varphi(x_j)$. For such a state, if the associated kinetic energy $E_{\kin,\phi} = \SCP{\phi}{ \sum_{j=1}^N (-\Delta_{x_j}) \phi} = N \scp{\varphi}{(-\Delta) \varphi}$ is $O(N)$, then $\varphi$ naturally lives in some constant, $N$-independent volume. The density is therefore of $O(N)$; each particle ``feels'' the interaction of $O(N)$ other particles, such that the total interaction energy is of $O(N^2)$. For bosons near a condensate it is thus natural to choose the scaling exponent $\beta=1$, so that kinetic and interaction energy are of the same order (independent of the long-range part of the interaction potential). The mean-field description can be expected to hold due to high densities, and not due to the long range of the interaction. We will encounter a similar high-density situation in Chapter~\ref{sec:mf_fermions_semiclassical}, where we discuss the semiclassical scaling for fermions.

\subsection{Origin of the Scaling}\label{sec:mf_fermions_scaling}
Let us explain in this section how a factor $N^{-\beta}$ in front of the interaction arises from a rescaling of time and space coordinates.

Let us denote the ``microscopic'' or ``physical'' time and space coordinates by $\tilde{t} \in \RRR$ and $\tilde{x}_j \in \RRR^3$, $j=1,\ldots,N$. We denote the wave function in these coordinates by $\tilde{\psi}(\tilde{t},\tilde{x}_1,\ldots,\tilde{x}_N)$. We assume that it is normalized. In the following, let us consider non-relativistic fermions with Coulomb interaction, and, for ease of the presentation, without external field. The wave function $\tilde{\psi}$ is then a solution to the Schr\"odinger equation
\be\label{Schr_unscaled}
i \partial_{\tilde{t}} \tilde{\psi}(\tilde{t},\tilde{x}_1,\ldots,\tilde{x}_N) = \left( -\sum_{j=1}^N \Delta_{\tilde{x}_j} + \sum_{i<j} \frac{1}{\lvert \tilde{x}_i-\tilde{x}_j \rvert} \right) \tilde{\psi}(\tilde{t},\tilde{x}_1,\ldots,\tilde{x}_N),
\ee
where we set the coupling constant in front of the Coulomb potential $(4\pi\varepsilon_0)^{-1}=1$. As explained in the introduction, the mean-field approximation is expected to become better the larger the number $N$ of particles gets. We therefore consider $N$-dependent scalings. The scaling we are interested in is given by
\be\label{scaling1}
t=N^{\frac{4}{3}}\tilde{t},~ x=N^{\frac{2}{3}}\tilde{x}.
\ee
What is achieved by this scaling is a ``zoomed in'' description for ``very short'' times: $t$ and $x$ are very big compared to $\tilde{t}$ and $\tilde{x}$ for large $N$. Heuristically, one could say, that we want to look at small length scales where interesting quantum behavior happens on fast time scales. Let us now express the wave function $\tilde{\psi}$ in the new coordinates $t,x$. It is given by
\be\label{wf_rescaled}
\psi(t,x_1,\ldots,x_N) = N^{-N} \tilde{\psi}(N^{-\frac{4}{3}}t,N^{-\frac{2}{3}}x_1,\ldots,N^{-\frac{2}{3}}x_N),
\ee
where the prefactor $N^{-N}$ is introduced such that $\psi$ is normalized. The dynamics of the wave function $\psi$ is determined by the Schr\"odinger equation in the scaled coordinates, that is, $\psi$ is the solution to
\be\label{Schr_scaled}
i N^{\frac{4}{3}} \partial_t \psi(t,x_1,\ldots,x_N) = \left( - \sum_{j=1}^N N^{\frac{4}{3}} \Delta_{x_j} + N^{\frac{2}{3}} \sum_{i<j} \frac{1}{\lvert x_i-x_j \rvert} \right) \psi(t,x_1,\ldots,x_N),
\ee
which follows directly from \eqref{Schr_unscaled} (by applying the chain rule). Let us simplify this equation by dividing by $N^{\frac{4}{3}}$. Then the scaled Schr\"odinger equation is
\be\label{Schr_scaled2}
i \partial_t \psi(t,x_1,\ldots,x_N) = \left( - \sum_{j=1}^N \Delta_{x_j} + N^{-\frac{2}{3}} \sum_{i<j} \frac{1}{\lvert x_i-x_j \rvert} \right) \psi(t,x_1,\ldots,x_N),
\ee
which is exactly \eqref{scaling_Schr} for the case of Coulomb interaction. Thus, the effect of looking at the system on the new scales \eqref{scaling1} is a factor $N^{-\frac{2}{3}}$ in front of the interaction. 

Let us take a closer look at the wave functions $\tilde{\psi}$ and $\psi$. As explained in Chapter~\ref{sec:mf_fermions_const_E_kin_physics}, it is natural to consider wave functions $\psi$ with kinetic energy of $O(N)$; then the particles naturally ``occupy a volume'' that grows with $N$, such that also the interaction term is $O(N)$ in the scaled Schr\"odinger equation. Now suppose that $\psi$ lives in a volume proportional to $N$ (say, a ball with radius $N^{\frac{1}{3}}$, such that $\psi(t,x_1,\ldots,x_N)=0$ whenever any $|x_j| > N^{\frac{1}{3}}$). This means, that the wave function $\tilde{\psi}$ lives in a volume proportional to $N^{-1}$, as can be read off from \eqref{wf_rescaled}. Here, we see again that the effect of our coordinate rescaling is a ``zoomed in'' description, in this case, for a wave function with shrinking volume in $N$; this can be relevant for attractive interactions, e.g., gravitation, where the system becomes smaller the more particles are added.

An equation of the form \eqref{scaling_Schr}, i.e., with interaction $N^{-\beta}\sum_{i<j}v(x_i-x_j)$, can be derived from a scaling only for certain interactions, e.g., $v(x)=|x|^{-s}$. In this case, one can rescale $t=N^{2\delta}\tilde{t}$, $x=N^{\delta}\tilde{x}$. This leads to an interaction $N^{-\delta (2-s)} \sum_{i<j} |x_i-x_j|^{-s}$ in the scaled equation. As mentioned in Chapter~\ref{sec:mf_fermions_const_E_kin_physics}, for $\beta = 1-\frac{s}{3}$, which corresponds to $\delta=\frac{s-3}{3s-6}$, both kinetic and interaction terms are of the same order. For other interactions, the scaled equations look different; there, the effect of the scaling is that $v(\tilde{x})$ becomes $N^{\varepsilon}v(N^{-\delta}x)$ in the scaled equation, for some $\varepsilon\in\RRR$.

\subsection{Different Formulations of the Problem}
The goal of this thesis is to show that the mean-field equations for fermions approximate the Schr\"odinger dynamics well in certain situations. We saw above that systems with kinetic energy of $O(N)$ and long-range interactions are interesting systems where one can expect mean-field behavior on certain scales. There are now different ways of formulating what these scales are.

\begin{enumerate}[(a)]
\item In Chapter~\ref{sec:mf_fermions_const_E_kin_physics}, we saw that, if we put a factor $N^{-\beta}$ in front of the interaction, we can expect the mean-field approximation to hold for times of $O(1)$. Such a factor means that the interaction is weak. One possibility is that such weak interaction has a physical origin, e.g., it can be due to screening effects in a large molecule. There, excited states can be very delocalized: they interact with very many other electrons, but only weakly due to the screening from the nuclei.
\item For certain interactions (usually of the form $|x|^{-s}$), the factor $N^{-\beta}$ can also arise from a scaling in the sense of an $N$-dependent coordinate transformation, as described in Chapter~\ref{sec:mf_fermions_scaling}. The scaling factor arises ``out of convenience'', since one could as well work in the original coordinates. However, then one could not expect the mean-field approximation to hold for times of $O(1)$, but rather for very short times, in fact, times of $O(N^{-\frac{4}{3}})$ (for Coulomb interaction); see also Remark~\ref{itm:no_scaling} in Chapter~\ref{sec:main_theorem_mf_general_v}.
\item Alternatively, one could use no scaling at all. Then, in the situation of Chapter~\ref{sec:mf_fermions_const_E_kin_physics} and Coulomb interaction, the kinetic term is $O(N)$ and the interaction term is $O(N^{\frac{5}{3}})$. Then the mean-field approximation can only be expected to hold for short times, in fact, times of $O(N^{-\frac{2}{3}})$; see also Remark~\ref{itm:no_scaling} in Chapter~\ref{sec:main_theorem_mf_general_v}. This formulation is closer to the idea mentioned in the introduction, that times are so short that the particles could not develop severe correlations.
\item From a practical point of view, it would be useful to not use a scaling at all, and instead, given a fixed physical system, to leave all the constants $\hbar$, $m$ and coupling constants in the original equations. If one could calculate explicit error terms for how much the mean-field approximation deviates from the Schr\"odinger evolution (as we in fact do in some of our main results), then one can directly read off for how long the mean-field approximation can be expected to be good, depending on the constants in the Schr\"odinger equation and the parameters of a given system.
\end{enumerate}

When we write down the main results for the case where the kinetic energy is bounded by $CN$, we simply use interaction potentials with a prefactor $N^{-\beta}$. One could easily reformulate the results without this prefactor as we point out in Remark~\ref{itm:no_scaling} in Chapter~\ref{sec:main_theorem_mf_general_v}.

\subsection{Applications}
This work is mainly a theoretical work, showing that and how in principle the Hartree(-Fock) equations can be derived from the microscopic Schr\"odinger dynamics. We do not focus on practical applications here. However, we strongly want to emphasize that the time-dependent Hartree(-Fock) approximation has very high relevance throughout theoretical physics and chemistry. To illustrate this, let us mention a few applications here. (Since numerous references to the mentioned applications can easily be found, we refrain from explicitly providing them here.)

The Hartree(-Fock) equations are widely used in theoretical chemistry to describe chemical reactions or excited states of large molecules (e.g., large proteins). They are, for example, often used for numerical simulations of chemical reactions. In a large molecule, it is indeed the case that the total energy is proportional to $N$ (and the density is $O(1)$), in accordance with the scenario we discussed in Chapter~\ref{sec:mf_fermions_const_E_kin_physics}. One has to be a bit careful here: the equations for a real molecule do not have a scaling factor $N^{-\beta}$ in front of the interaction; in fact, as the stability of matter program of Lieb et al.\ has proven rigorously \cite{lieb:2010}, the interaction energy and external field energy from the nuclei together are of $O(N)$, which makes the system stable. However, the scaled equation \eqref{scaling_Schr} might model screening effects from the nuclei for very delocalized electrons, e.g., electrons in excited states or certain molecular bonds. Other applications are in solid state physics the description of electrons in metals (e.g., in conduction bands) or semiconductors. The time-dependent Hartree(-Fock) equations have also been used in nuclear physics to study collisions of large nuclei. With recent experimental advances in laser physics, it has become possible to study cold fermions in laser traps, and thus to directly check the validity of the mean-field approximation. Finally, the Hartree(-Fock) equations can be used to describe fermionic stars, e.g., neutron stars or white dwarfs. In this scenario, it is indeed the case that the systems size shrinks with the particle number, due to the attractive gravitational interaction. In particular, the scenario discussed in Chapter~\ref{sec:mf_fermions_scaling} can be applicable (see also the scaling discussed in Chapter~\ref{sec:mf_fermions_semiclassical}).

\section{Mean-field Limit Coupled to a Semiclassical Limit}\label{sec:mf_fermions_semiclassical}
Another situation where one can expect interesting mean-field behavior is when the mean-field limit is coupled to a semiclassical limit. In this case, the wave function $\psi^t(x_1,\ldots,x_N)$ is a solution to the Schr\"odinger equation
\be\label{Schr_scaled_sc}
i N^{-\frac{1}{3}} \partial_t \psi^t = \left( \sum_{j=1}^N \left( -N^{-\frac{2}{3}} \Delta_{x_j} + w^{(N)}(x_j) \right) + N^{-1} \sum_{i<j} v(x_i-x_j) \right) \psi^t,
\ee
and $\varphi_1^t,\ldots,\varphi_N^t$ are solutions to the corresponding Hartree equations
\be\label{hartree_scaled_sc}
i N^{-\frac{1}{3}} \partial_t \varphi_j^t(x) = \left( -N^{-\frac{2}{3}} \Delta + w^{(N)}(x) + N^{-1} \left(v \star \rho_N^t \right)(x) \right) \varphi_j^t(x),
\ee
for $j=1,\ldots,N$ (recall $\rho_N^t=\sum_{i=1}^N |\varphi_i^t|^2$). As in Chapter~\ref{sec:mf_fermions_scaling}, for Coulomb interaction, the scaling can arise from a coordinate transformation
\be\label{scaling_sc}
t=N\tilde{t},~ x=N^{\frac{1}{3}}\tilde{x},
\ee
i.e., similar to \eqref{scaling1}, one considers ``small'' time and length scales.

The physical situation one considers here is particles confined to some $N$-independent volume, e.g., particles in a box with fixed size, or in a nice external trapping potential. One then considers states close to the ground state of such a system. Then we already know from Chapter~\ref{sec:mf_fermions_const_E_kin_physics}, that the kinetic energy cannot be just $O(N)$, but it has to grow faster. If we consider the example from \eqref{free_particles_box} again, where the ground state has $E_\kin \propto N^{\frac{5}{3}}L^{-2}$, we see that for $N$-independent $L$, the kinetic energy is $O(N^{\frac{5}{3}})$. This is an effect that holds only for antisymmetric wave functions; for bosons, the ground state of free particles in a box (with appropriate boundary conditions) has kinetic energy $O(N)$. Thus, one considers a system with very high densities of $O(N)$. Then, naturally, the interaction energy per particle is $O(N)$, independent of the long-range properties of the interaction potential. One can also see this by considering the mean-field interaction term $\big(\,| \cdot |^{-1} \star \rho_N^t\big)$ from the fermionic Hartree equation for Coulomb interaction and the example of the ground state of free particles in a box, as in \eqref{coulomb_gs}. Here, $\rho_N^t = \frac{N}{L^3} \id_{V_L}$, that is, the density is proportional to $N$. Then
\be
\left(| \cdot |^{-1} \star \rho_N^t\right)(x) \leq \int_{\RRR^3} |y|^{-1} \frac{N}{L^3} \id_{V_L}(y) \,d^3y \propto N \int_0^L r^{-1} r^2 dr \propto N.
\ee
Thus, the total interaction energy is of $O(N^2)$. Together with the prefactors $N^{-\frac{2}{3}}$ and $N^{-1}$ from \eqref{Schr_scaled_sc}, both the kinetic term and the interaction term in the Schr\"odinger equation are of $O(N)$ (and in the Hartree equation \eqref{hartree_scaled_sc}, both terms are of $O(1)$). For large $N$, one can expect the mean-field approximation to be good, since each particle ``feels'' the interaction with $O(N)$ other particles, but only weakly, with ``strength'' $O(N^{-1})$.

A crucial difference to the case described in Chapter~\ref{sec:mf_fermions_const_E_kin_physics} is that in Equations~\eqref{Schr_scaled_sc} and \eqref{hartree_scaled_sc} there is an additional $N^{-\frac{1}{3}}$ in front of the time derivative. Heuristically, this factor should be there because the kinetic energy is $O(N^{\frac{5}{3}})$ and thus the average velocity per particle is $O(N^{\frac{1}{3}})$. That means, the particles are so fast that already after times of $O(N^{-\frac{1}{3}})$ they interacted with all other particles (recall that the size of the system is $N$-independent). For large $N$, the factor leads naturally to a semiclassical structure of the wave function. Formally, such a wave function is characterized by ``very small'' $\hbar$. If one sets $\varepsilon_N = N^{-\frac{1}{3}}$, then the Schr\"odinger equation \eqref{Schr_scaled_sc} is
\be\label{Schr_scaled_sc_hbar}
i \varepsilon_N \partial_t \psi^t = \left( \sum_{j=1}^N \left( -\varepsilon_N^2 \Delta_{x_j} + w^{(N)}(x_j) \right) + N^{-1} \sum_{i<j} v(x_i-x_j) \right) \psi^t,
\ee
i.e., the $\varepsilon_N$ appears exactly where the $\hbar$ would appear in the Schr\"odinger equation (in SI units). Considering very large $N$ means thus considering ``very small'' $\hbar$, hence the semiclassical structure. According to \cite{benedikter:2013}, the semiclassical structure can be characterized on the level of reduced one-particle density matrices $\mu^{\psi}_1$.\footnote{$\mu^{\psi}_1$ is defined by its integral kernel, \be
\mu^{\psi}_1(x;y) = \int d^3x_2\ldots d^3x_N \, \psi(x,x_2,\ldots,x_N) \psi^*(y,x_2,\ldots,x_N),
\ee
for more details see Chapter~\ref{sec:density_matrices}.} That is, the integral kernel of $\mu^{\psi}_1$ has roughly the form
\be\label{kernel_sc}
\mu^{\psi}_1(x;y) \approx \phi\left(N^{\frac{1}{3}}(x-y)\right) \chi(x+y).
\ee
This form expresses that the density profile $\mu^{\psi}_1(x;x)\approx \chi(2x)$ has a structure on an $N$-independent scale, thus it does not ``vary too much''. Furthermore, the ``velocity profile'', here approximately given by $\phi$, contains an extra $N^{\frac{1}{3}}$, expressing that the particles move very fast, in accordance with the kinetic energy per particle being $O(N^{\frac{2}{3}})$.

That the physics described by Equation~\eqref{Schr_scaled_sc} is, in a certain sense, close to classical physics can best be seen from the fact that the solutions to the Schr\"odinger equation \eqref{Schr_scaled_sc} are close to solutions to the classical Vlasov equation (as first discussed in \cite{narnhofer:1981}). The Vlasov equation is the classical mean-field equation
\be\label{vlasov}
\partial_t \rho^t(x,p) + p \cdot \nabla_x \rho^t(x,p) =  \Big( \nabla v \star \rho^t \Big)(x) \cdot \nabla_p \rho^t(x,p),
\ee
where $\rho^t(x,p)$ is the classical phase space density, and $v$ the classical interaction potential. In more detail, the Wigner transform of a solution to \eqref{Schr_scaled_sc} (which is a good quantity that can be compared to classical phase space densities),
\be
W_{\psi}(x,p) = (2\pi)^{-3} \int \mu_1^{\psi}(x+\varepsilon_N\frac{y}{2};x-\varepsilon_N\frac{y}{2}) e^{-ipy} \, d^3y,
\ee
is close to a solution $\rho^t(x,p)$ to the classical Vlasov equation. Still, the fermionic Hartree equations~\eqref{hartree_scaled_sc} are a better approximation to the Schr\"odinger dynamics \eqref{Schr_scaled_sc}, so the Hartree equations here describe quantum corrections to the Vlasov dynamics. (Note, in contrast, that the solutions to the equations \eqref{scaling_Schr} are in general not close to any classical dynamics.)

The main application of the discussed scaling and the mean-field equation \eqref{hartree_scaled_sc} are systems of gravitating fermions \cite{narnhofer:1981}. More generally, it should be applicable to certain ``high density'' situations.

Finally, let us discuss the connection between the semiclassical scaling and the one from Chapter~\ref{sec:mf_fermions_const_E_kin_physics}. The connection can best be seen for the case of Coulomb interaction. So let us compare the solutions $\psi^t$ to (for simplicity, we do not write out external fields)
\be\label{scaling_Schr_moi}
i \partial_t \psi^t(x_1,\ldots,x_N) = \left( -\sum_{j=1}^N \Delta_{x_j} + N^{-\frac{2}{3}} \sum_{i<j} \frac{1}{|x_i-x_j|} \right) \psi^t(x_1,\ldots,x_N),
\ee
with the solutions $\psi^t_{\semicl}$ to
\be\label{scaling_Schr_sc}
i N^{-\frac{1}{3}} \partial_t \psi^t_{\semicl}(x_1,\ldots,x_N) = \left( - \sum_{j=1}^N N^{-\frac{2}{3}} \Delta_{x_j} + N^{-1} \sum_{i<j} \frac{1}{|x_i-x_j|} \right) \psi^t_{\semicl}(x_1,\ldots,x_N).
\ee
As discussed, we assume that $\psi^t$ lives in a volume proportional to $N$, while $\psi^t_{\semicl}$ lives in an $N$-independent volume. For both $\psi^t$ and $\psi^t_{\semicl}$ one could expect mean-field behavior for times of $O(1)$. Let us now rescale $x \to N^{\frac{1}{3}}x$ and $t \to N^{\frac{1}{3}}t$, and consider the (from $\psi^t$) rescaled wave function $\tilde{\psi}^t$. This wave function then lives in an $N$-independent volume. It is a solution to the rescaled Schr\"odinger equation
\be\label{scaling_Schr_moi_scaled}
i N^{-\frac{1}{3}} \partial_t \tilde{\psi}^t(x_1,\ldots,x_N) = \left( -\sum_{j=1}^N N^{-\frac{2}{3}} \Delta_{x_j} + N^{-1} \sum_{i<j} \frac{1}{|x_i-x_j|} \right) \tilde{\psi}^t(x_1,\ldots,x_N).
\ee
This is exactly \eqref{scaling_Schr_sc}. However, recall that we expected mean-field behavior for $\psi^t$ for times of $O(1)$, i.e., we can expect mean-field behavior for $\tilde{\psi}^t$ only for times of $O(N^{-\frac{1}{3}})$. This is due to the fact that $\tilde{\psi}^t$, in contrast to $\psi^t_{\semicl}$, does not naturally have a semiclassical structure. Recall that the semiclassical structure meant that $\psi^t_{\semicl}$ has a density that ``varies on an $N$-independent scale'' (see \eqref{kernel_sc}). In contrast, it was natural to assume that the density of $\psi^t$ ``varies on an $N$-independent scale'', i.e., the density of $\tilde{\psi}^t$ ``varies on a scale $N^{-\frac{1}{3}}$''. Therefore, we can only expect mean-field behavior for short times of $O(N^{-\frac{1}{3}})$.

Although this semiclassical scaling is not the focus of this work (it has recently been treated in \cite{benedikter:2013}), we show in Chapter~\ref{sec:estimates_semiclassical} how the mean-field dynamics \eqref{hartree_scaled_sc} can be derived with the $\alpha$-method used in this work (the full proof can be found in Appendix~\ref{sec:proof_sc_scaling}).

\section{Literature}\label{sec:mf_fermions_lit}
The scaling from Chapter~\ref{sec:mf_fermions_const_E_kin_physics}, to the author's knowledge, has not been considered in the literature before for a derivation of mean-field dynamics. Thus far, only the case where the Schr\"odinger equation is
\be\label{Schr_scaled_beta1}
i \partial_t \psi^t(x_1,\ldots,x_N) = \left( \sum_{j=1}^N \left( -\Delta_{x_j} + w^{(N)}(x_j) \right) + N^{-1} \sum_{i<j} v(x_i-x_j) \right) \psi^t(x_1,\ldots,x_N),
\ee
has been considered, i.e., the case $\beta=1$. For Coulomb interaction and kinetic energy of $O(N)$, the interaction is thus scaled down by a factor $N^{\frac{1}{3}}$ too much; the interaction energy per particle is of $O(N^{-\frac{1}{3}})$, which leads to free evolution in the limit of large $N$. We show this explicitly in Proposition~\ref{pro:coulombN1}. The case $\beta=1$ could for example be interesting for systems with kinetic energy of $O(N)$, when the interaction does not go to zero at all for large $|x|$, e.g., $v(x) = \cos(|x|)$. The first result for $\beta=1$ was achieved in \cite{bardos:2003} where bounded $v$ are treated (see also the related works \cite{bardos:2004} and \cite{bardos:2007}). Later, in \cite{froehlich:2011}, the mean-field dynamics was derived for a class of potentials $v$ including Coulomb interaction. Note that it was a crucial improvement to consider Coulomb interaction; first, because it is physically very relevant, and second, because for Coulomb interaction an equation of the type \eqref{Schr_scaled_beta1} follows from a rescaling as discussed in Chapter~\ref{sec:mf_fermions_scaling}.

Another situation which has been considered in the literature is when the mean-field limit is coupled to a semiclassical limit, as discussed in Chapter~\ref{sec:mf_fermions_semiclassical}. A mean-field description of the dynamics \eqref{Schr_scaled_sc} has first been considered in \cite{narnhofer:1981}. There, it is shown that in the limit $N\to\infty$ and for a class of very regular interaction potentials the solutions to \eqref{Schr_scaled_sc} converge to solutions to the Vlasov equation \eqref{vlasov} (in a suitable sense). In \cite{spohn:1981}, a similar result is shown for a more general class of interactions, with fewer regularity assumptions. Later, in \cite{erdoes:2004}, the mean-field equations \eqref{hartree_scaled_sc} are derived from \eqref{Schr_scaled_sc} for bounded analytic potentials and for short times (times of $O(1)$ but smaller than a certain constant). Unlike in \cite{narnhofer:1981} and \cite{spohn:1981}, where only the limit $N\to\infty$ is considered, explicit error terms and a convergence rate of $N^{-1}$ are given in \cite{erdoes:2004}. Recently, in \cite{benedikter:2013}, this result was shown for all times, with fewer regularity assumptions on the interaction, and, depending on the exact formulation of the result, with different convergence rates. In the work \cite{benedikter:2013}, a new method is used for the proof, which uses a Gronwall-type estimate. Note that the extension of the result to all times is a crucial improvement. In Chapter~\ref{sec:estimates_semiclassical}, we show how the main results of \cite{benedikter:2013} can be reproduced with the $\alpha$-method used in this work; the full proof is given in Appendix~\ref{sec:proof_sc_scaling}. In \cite{benedikter:2014}, the results from \cite{benedikter:2013} are generalized to Hamiltonians with pseudo-relativistic kinetic part. However, to this date, a derivation of the semiclassical Hartree equations \eqref{hartree_scaled_sc} for the important case of Coulomb interaction is still missing. 

Finally, let us remark that the fermionic Hartree and Hartree-Fock equations are also widely used in the time-independent version, mostly to calculate ground states. The time-independent equations were actually originally considered by Hartree \cite{hartree:1928}, Fock \cite{fock:1930} and Slater \cite{slater:1930} (apparently Dirac \cite{dirac:1930} first wrote down the time-dependent version). Later, several properties of these equations were discussed and rigorously proven; e.g., in \cite{lieb:1974, lieb:1977, lions:1987, fefferman:1990, bach:1992, bach:1993, graf_solovej:1994, fefferman:1994}, about existence, uniqueness and properties of the solutions to the Hartree-Fock equations for atoms and molecules, and that the mean-field approximation indeed gives asymptotically correct ground state energies for large-$Z$ atoms and molecules.

\section{Fluctuations}\label{sec:mf_fermions_const_E_kin_fluc}
Let us come back here to the question whether the mean-field dynamics can be expected to be a good approximation to the Schr\"odinger dynamics. We already mentioned in the introduction that this can only hold if the particles develop ``few'' correlations due to the interaction. Let us suppose that initially the wave function has antisymmetric product structure, $\psi^0 = \bigwedge_{j=1}^N \varphi_j^0$, and that $N$ is very large. Then, in a more detailed physical picture, each particle ``feels'', on the one hand, a mean interaction coming from all the other particles, but, on the other hand, it also ``feels'' deviations from this mean interaction, i.e., fluctuations around the mean-field. These fluctuations can cause deviations from the mean-field dynamics. One can also think of this from the perspective of the law of large numbers. Suppose that $N$ particles $X_1,\ldots,X_N$ are distributed according to the density $\rho_N^t = \sum_{i=1}^N |\varphi_i^t|^2$, coming from the solutions to the fermionic Hartree equations. Then, typically, their contribution to the interaction at point $y\in \RRR^3$ is close to its mean value, i.e., 
\be
\sum_{k=1}^{N} v\left( y - X_k \right) \approx \int_{\RRR^3} v(y-x) \rho_N^t(x) \, d^3x = \left( v \star \rho_N^t \right)(y),
\ee
only if fluctuations are small. If they are not small, then we can not expect mean-field behavior. An example where the fluctuations are not small is Brownian motion. There, on the right scales, the mean-field prediction is wrong, and one sees instead a diffusive motion, coming from the fluctuations.

Let us now calculate the fluctuations around the mean-field, assuming an antisymmetric product state $\bigwedge_{j=1}^N \varphi_j$. First, the expectation value of the interaction potential at point $y \in \RRR^3$ is, of course, given by
\begin{align}\label{expectation}
\EEE\left( \sum_{k=1}^N v(x_k-y) \right) &:= \bigSCP{\bigwedge_{j=1}^N \varphi_j}{\left( \sum_{k=1}^N v(x_k-y) \right) \, \bigwedge_{j=1}^N \varphi_j} \nonumber \\
&= \sum_{j=1}^N \scp{\varphi_j}{v(\cdot - y) \varphi_j} \nonumber \\
&= (v \star \rho_N)(y),
\end{align}
i.e., the mean-field from the fermionic Hartree equations. The fluctuations around the mean-field at $y \in \RRR^3$ are given by the variance
\begin{align}\label{variance}
&\Var\left( \sum_{k=1}^N v(x_k-y) \right) \nonumber \\ 
&\qquad:= \EEE\left( \left(\sum_{k=1}^N v(x_k-y)\right)^2 \right) - \left(\EEE\left( \sum_{k=1}^N v(x_k-y) \right)\right)^2 \nonumber \\
&\qquad= N(N-1) \bigSCP{\bigwedge_{j=1}^N \varphi_j}{ v(x_1-y)v(x_2-y) \bigwedge_{j=1}^N \varphi_j} \nonumber \\
&\qquad\quad + N \bigSCP{\bigwedge_{j=1}^N \varphi_j}{ v(x_1-y)^2 \bigwedge_{j=1}^N \varphi_j} - N^2 \bigSCP{\bigwedge_{j=1}^N \varphi_j}{ v(x_1-y) \bigwedge_{j=1}^N \varphi_j}^2 \nonumber \\
&\qquad= \sum_{i,j=1}^N \bigg( \scp{\varphi_i}{v(\cdot-y)\varphi_i} \scp{\varphi_j}{v(\cdot-y)\varphi_j}  - \scp{\varphi_i}{v(\cdot-y)\varphi_j} \scp{\varphi_j}{v(\cdot-y)\varphi_i} \bigg) \nonumber \\
&\qquad\quad + (v^2\star\rho_N)(y) - (v\star\rho_N)^2(y) \nonumber \\
&\qquad= (v^2\star\rho_N)(y) - \sum_{i,j=1}^N \left\lvert \scp{\varphi_i}{v(\cdot-y)\varphi_j} \right\rvert^2 \nonumber \\
&\qquad\leq (v^2 \star \rho_N)(y).
\end{align}
Only if this variance is small enough, one can hope the mean-field approximation to hold. (One has to be a bit cautious here, since one additionally has to consider the time-scales on which the fluctuations happen.) We here supposed that the wave function is in an antisymmetric product state. If this is not the case, e.g., if after some time already severe correlations have developed, then typically even more correlations will develop, since many particles are not in the antisymmetric product structure anymore.

The estimates we present later capture the presented physical picture very nicely. One of our main results, Theorem~\ref{thm:estimates_terms_alpha_dot_beta_n}, holds exactly under the assumption that $(v^2 \star \rho_N)(y) \leq CN^{-1}$, i.e., when the fluctuations are very small, of $O(N^{-1})$. We come back to this point in Remark~\ref{itm:fluctuations} in Chapter~\ref{sec:main_theorem_mf_general_v}.

Finally, let us see what happens in the situation of Chapter~\ref{sec:mf_fermions_const_E_kin_physics} where there is a factor $N^{-\beta}$ in front of the interaction. We already saw, that one effect of this factor is that the kinetic energy becomes of the same order as the interaction energy. If we replace $v$ by $N^{-\beta}v$ in \eqref{expectation}, then the mean-field interaction is $O(1)$ (which, in the setting of Chapter~\ref{sec:mf_fermions_const_E_kin_physics}, is the same order as the kinetic term in the fermionic Hartree equations). However, what is crucial is that the $N^{-\beta}$ also makes the fluctuations small: from \eqref{variance} we can read off, that fluctuations are bounded by $N^{-2\beta}(v^2 \star \rho_N)(y)$ (we discuss the size of this term in Remarks~\ref{itm:Coulomb_Hardy} and \ref{itm:coulomb_nice} in Chapter~\ref{sec:main_theorem_mf_general_v}). Without the $N^{-\beta}$, the fluctuations would in general not be small.

\section{The Exchange Term}\label{sec:exch_term}
In this thesis we study the fermionic Hartree equations as mean-field dynamics for fermions. Another related dynamics is given by the Hartree-Fock equations. These are the coupled system of non-linear equations (here given without scaling)
\be\label{Hartree-Fock}
i \partial_t \varphi_j^t(x) = \Big( -\Delta + w(x) \Big) \varphi_j^t(x) + \sum_{k=1}^N \Big(v \star |\varphi_k^t|^2 \Big)(x) \, \varphi_j^t(x) - \sum_{k=1}^N \Big(v \star ({\varphi_k^t}^*\varphi_j^t) \Big)(x) \, \varphi_k^t(x),
\ee
for $j=1,\ldots,N$. In comparison to the fermionic Hartree equations, the Hartree-Fock equations contain an additional ``exchange term''. In general, the dynamics \eqref{Hartree-Fock} is expected to be a better approximation to the Schr\"odinger dynamics than the Hartree dynamics. However, for the situations considered in this work, the exchange term is of smaller order in $N$ than the direct term $v\star\rho_N^t$. Therefore, we consider only the fermionic Hartree dynamics. In the following, let us briefly discuss where the exchange term comes from and then argue why it is subleading in $N$ for the scaled equations considered in Chapter~\ref{sec:mf_fermions_const_E_kin_physics}.

Let us start by considering the Schr\"odinger dynamics with Hamiltonian 
\be
H = \sum_{j=1}^N \Big( -\Delta_{x_j} + w(x_j) \Big) + \sum_{i<j} v(x_i-x_j).
\ee
For fermions, the most simple structure of a wave function is an antisymmetrized product state $\bigwedge_{j=1}^N \varphi_j^t$. Let us now suppose that it is reasonable to approximate the Schr\"odinger wave function by \emph{some} antisymmetrized product state $\bigwedge_{j=1}^N \varphi_j^t$. How do we find a good evolution equation for $\varphi_1^t,\ldots,\varphi_N^t$? One way is to demand that the evolution should be such, that (the expectation value of) the total energy of the wave function $\bigwedge_{j=1}^N \varphi_j^t$ is preserved. It turns out that this job is done by the Hartree-Fock equations \eqref{Hartree-Fock}. Let us explain in more detail. The expectation value of the total energy for the wave function $\bigwedge_{j=1}^N \varphi_j^t$ is given by $\SCP{\bigwedge_{j=1}^N \varphi_j^t}{H\, \bigwedge_{j=1}^N \varphi_j^t}$. A straightforward calculation shows that 
\begin{align}\label{expec_H}
E^t := \bigSCP{\bigwedge_{j=1}^N \varphi_j^t}{H\, \bigwedge_{j=1}^N \varphi_j^t} &= \sum_{j=1}^N \int d^3x ~ {\varphi_j^t(x)}^* \big(-\Delta + w(t,x)\big)\,\varphi_j^t(x) \nonumber \\
&\quad+ \frac{1}{2} \sum_{j,k=1}^N \int d^3x\int d^3y ~ \lvert\varphi_j^t(x)\rvert^2 \,v(x-y)\, \lvert\varphi_k^t(y)\rvert^2 \nonumber \\
&\quad- \frac{1}{2} \sum_{j,k=1}^N \int d^3x\int d^3y ~ {\varphi_j^t(x)}^*\varphi_k^t(x) \,v(x-y)\, {\varphi_k^t(y)}^*\varphi_j^t(y).
\end{align}
It is then easy to check that $\partial_t E^t = 0$, if $\varphi_1^t,\ldots,\varphi_N^t$ are solutions to the Hartree-Fock equations \eqref{Hartree-Fock} (see, e.g., \cite{chadam:1975}). Thus, the Hartree-Fock evolution ensures that, if the approximation of the wave function by $\bigwedge_{j=1}^N \varphi_j^t$ is justified, then $\bigwedge_{j=1}^N \varphi_j^t$ is close to the energy of the solution to the Schr\"odinger equation.

Note that the variation of the energy functional $E^t = E(\varphi_1^t,\ldots,\varphi_N^t)$ leads to the time-independent Hartree-Fock equations, i.e., \eqref{Hartree-Fock} with $i\partial_t$ replace by a constant $e$.

Let us now discuss why the exchange term is subleading for the scaled equations considered in Chapter~\ref{sec:mf_fermions_const_E_kin_physics}. Of course, the best justification that the exchange term is subleading for the dynamics is given by the main results in Chapter~\ref{sec:main_theorem_mf}, which show that already the fermionic Hartree equations are a good approximation to the Schr\"odinger dynamics. Nevertheless, let us again regard the simple example of the non-interacting ground state in a box  $V_L = \big[-\frac{L}{2},\frac{L}{2}\big]^3$ from around Equation~\eqref{free_particles_box}. For such a wave function and Coulomb interaction, let us now estimate the order in $N$ of the total exchange energy $-\frac{1}{2}\sum_{j,k=1}^N \scp{\varphi_j}{\big(v \star ({\varphi_k}^*\varphi_j) \big)\, \varphi_k}$, i.e., the last term on the right-hand side of \eqref{expec_H}. (Keep in mind, that in the scaled equation there would be an additional $N^{-\beta}$ in front of the exchange term.) It turns out that already for this simple example an exact calculation is quite hard to perform for singular interaction potentials, like Coulomb interaction. Therefore, we give here a heuristic estimate (we use the $\lessapprox$ sign to indicate that an estimate is heuristic). In the following, recall that the Fourier transform of $|x|^{-1}$ is given by $const \cdot |k|^{-2}$ (in the sense of convolutions, see \cite{liebloss:2001} for more details). Also, recall that we number the $k_i \in \ZZZ^3$ such that $|k_N|$ is as small as possible, actually $|k_N|< const \cdot N^{\frac{1}{3}}$. In the following, $C$ denotes a constant (independent of $N$) that can be different from line to line. We find
\begin{align}
\sum_{i,j=1}^N \scp{\varphi_j}{\Big(v \star ({\varphi_i}^*\varphi_j) \Big)\, \varphi_i} &= L^{-6} \int_{V_L} d^3x~ \int_{V_L} d^3y~ \frac{1}{|x-y|} \sum_{i,j=1}^N e^{i\frac{2\pi}{L} (k_i-k_j)(x-y)} \nonumber \\
&\lessapprox L^{-6} \int_{V_L} d^3x~ \int_{\RRR^3} d^3y~ \frac{1}{|x-y|} \sum_{i,j=1}^N e^{i\frac{2\pi}{L} (k_i-k_j)(x-y)} \nonumber \\
&= L^{-6} \int_{V_L} d^3x~ \int_{\RRR^3} d^3z~ \frac{1}{|z|} \sum_{i,j=1}^N e^{i\frac{2\pi}{L} (k_i-k_j)z} \nonumber \\
&= L^{-3} \int_{\RRR^3} d^3z~ \frac{1}{|z|} \sum_{i,j=1}^N e^{i\frac{2\pi}{L} (k_i-k_j)z} \nonumber \\
&= L^{-3} \int_{\RRR^3} d^3z'~L^3 \frac{1}{L|z'|} \sum_{i,j=1}^N e^{i2\pi (k_i-k_j)z'} \nonumber \\
&\leq C L^{-1} \sum_{i,j=1}^N \frac{1}{|k_i-k_j|^2} \nonumber \\
&\approx C L^{-1} \int_{\left[0,N^{\frac{1}{3}}\right]^3} d^3k_1 \int_{\left[0,N^{\frac{1}{3}}\right]^3} d^3k_2 ~ \frac{1}{|k_1-k_2|^2} \nonumber \\
&\leq C L^{-1} N \int_0^{N^{\frac{1}{3}}} r^2 dr\, \frac{1}{r^2} \nonumber \\
&\leq C L^{-1} N^{\frac{4}{3}}.
\end{align}
Thus, recalling that $L\propto N^{\frac{1}{3}}$, we see that the total unscaled exchange energy is $O(N)$, so that the exchange term for each particle roughly gives a contribution of $O(1)$ to the dynamics. If we now take the scaling factor $\beta=\frac{2}{3}$ into account, we see that the contribution of the exchange term to the mean-field dynamics is $O(N^{-\frac{2}{3}})$, i.e., subleading in $N$. Note that we expect this behavior not just for plane waves, but for a much larger class of wave functions. (Concerning the exchange term the plane waves are not so special, since they have large overlap. Note that if all the orbitals would have disjoint support with each other, then the exchange term would vanish.) We come back to the role of the exchange term concerning the convergence rates between the Schr\"odinger and the mean-field time evolution in Remark~\ref{itm:exch}, following Theorem~\ref{thm:estimates_terms_alpha_dot_beta_n}.

Finally, note that for bounded interaction potentials $v$, it is easy to see on a heuristic level that the exchange term is small. Let us suppose that the orthonormal $\varphi_1,\ldots,\varphi_N$ are approximately a basis of $L^2(\RRR^3)$, i.e., $\sum_{i=1}^N \ketbr{\varphi_i} \to \id$, or $\sum_{j=1}^N \varphi_j(x)^*\varphi_j(y) \approx \delta(x-y)$ for large $N$. Then we have (recall that we are not mathematical precise at this point)
\begin{align}
\sum_{i,j=1}^N \scp{\varphi_j}{\Big(v \star ({\varphi_i}^*\varphi_j) \Big)\, \varphi_i} &= \int d^3x~ \int d^3y~ \sum_{j=1}^N \varphi_j(x)^*\varphi_j(y) \sum_{i=1}^N \varphi_i(x)^* v(x-y) \varphi_i(y) \nonumber \\
&\approx \int d^3x~ \int d^3y~ \delta(x-y) \sum_{i=1}^N \varphi_i(x)^* v(x-y) \varphi_i(y) \nonumber \\
&= v(0) \int d^3x~ \sum_{i=1}^N \varphi_i(x)^* \varphi_i(x) \nonumber \\
&= N \, v(0).
\end{align}
The exchange term in the unscaled Hartree-Fock equations is thus $O(1)$, i.e., with a scaling $N^{-\beta}$ in front of $v$, it is $O(N^{-\beta})$.

\chapter{Mathematical Results}\label{sec:main_results}
\section{The Counting Functional}\label{sec:counting_functional}
We first introduce the precise meaning of $\approx$ in $\psi \approx \bigwedge_{j=1}^N\varphi_j$. This is done via the functional $\alpha_f(\psi,\varphi_1,\ldots,\varphi_N)$. We want this $\alpha_f$ to be such that $\alpha_f=0$ for $\psi = \bigwedge_{j=1}^N\varphi_j$, and $\alpha_f=1$ for $\psi = \bigwedge_{j=1}^N\chi_j$, where $\chi_i$ is orthogonal to $\varphi_j$ for all $i,j$. In other words, $\alpha_f=0$ should mean that the approximation of $\psi$ by $\bigwedge_{j=1}^N\varphi_j$ is exact, while $\alpha_f=1$ should mean that this approximation is not valid at all. So $\alpha_f$ is supposed to measure the closeness of $\psi$ to the \emph{specific} antisymmetrized product of the $\varphi_1,\ldots,\varphi_N$. Furthermore, we want $\alpha_f$ to measure those parts of $\psi$ that ``do not contain'' $\varphi_1,\ldots,\varphi_N$. Loosely speaking, it should count how many particles are not in the antisymmetrized product structure (hence the name ``counting functional'').

We now define $\alpha_f$ and several projectors that are needed for its definition. In the following, we denote by $\SCP{\cdot}{\cdot}$ the scalar product on $L^2(\RRR^{3N})$ while $\scp{\cdot}{\cdot}$ denotes the scalar product on $L^2(\RRR^3)$.

\begin{definition}\label{def:projectors}
Let $\varphi_1, \ldots, \varphi_N \in L^2(\RRR^3)$ be orthonormal.
\begin{enumerate}[(a)]
\item For all $j,m = 1,\ldots,N$ we define the projector
\be
p_m^{\varphi_j} := \ketbr{\varphi_j}_m = \ketbr{\varphi_j(x_m)} = \underbrace{\id \otimes \ldots \otimes \id}_{m-1 ~ \text{times}} \otimes \,\ketbra{\varphi_j}{\varphi_j}\, \otimes \underbrace{\id \otimes \ldots \otimes \id}_{N-m ~ \text{times}},
\ee
i.e., its action on any $\psi \in L^2(\RRR^{3N})$ is given by
\be
\left(p_m^{\varphi_j}\psi\right)(x_1,\ldots,x_N) = \varphi_j(x_m)\int\varphi_j^*(x_m)\psi(x_1,\ldots,x_N) \, d^3x_m.
\ee
We define
\be
p_m := \sum_{j=1}^N p_m^{\varphi_j},
\ee
and
\be
q_m := 1-p_m.
\ee
\item For any $0 \leq k \leq N$ we define
\be
P_{N,k} = P_{N,k}^{ \, \varphi_1,\ldots,\varphi_N } := \left( \prod_{m=1}^k q_m \prod_{m=k+1}^N p_m \right)_{\sym} = \sum_{\vec{a} \in \AAA_k} \prod_{m=1}^N (p_m)^{1-a_m}(q_m)^{a_m}
\ee
with the set
\be
\AAA_k := \left\{ \vec{a}=(a_1,\ldots,a_N) \in \{0,1\}^N: \sum_{m=1}^N a_m=k \right\},
\ee
i.e., $P_{N,k}$ is the symmetrized tensor product of $q_1,\ldots,q_k,p_{k+1},\ldots,p_N$. We define $P_{N,k}=0$ for all $k<0$ and $k>N$.
\item We call any $f:\{0,\ldots,N\} \to [0,1]$ with $f(0)=0$, $f(N)=1$ a \emph{weight function}. For any weight function $f$ we define the operators
\be
\widehat{f} = \widehat{f}^{\,\varphi_1,\ldots,\varphi_N} := \sum_{k=0}^N f(k) P_{N,k}.
\ee
For any $d \in \ZZZ$, we define the shifted operators
\be
\widehat{f}_d := \sum_{k=-d}^{N-d} f(k+d) P_{N,k} = \sum_{k=0}^N f(k) P_{N,k-d} = \sum_{k=0}^N f(k+d) P_{N,k},
\ee
where for the last expression we defined $f(k)=0$ for all $k<0$ and $k>N$.
\item For any normalized $\psi \in L^2(\RRR^{3N})$ we define
\be\label{definition_alpha}
\alpha_f = \alpha_f(\psi,\varphi_1,\ldots,\varphi_N) := \SCP{\psi}{\widehat{f} \, \psi} = \sum_{k=0}^N f(k) \, \SCP{\psi}{P_{N,k} \psi}.
\ee
\end{enumerate}
\end{definition}

The functional $\alpha_f$ and the projectors from Definition~\ref{def:projectors} have first been introduced by Pickl \cite{pickl:2011method} for bosons, that is, with $p_m = \ketbr{\varphi}_m$. The functional was used in \cite{pickl:2011method,pickl:2010hartree} for the derivation of the bosonic Hartree equation, and in \cite{pickl:2010gp_ext,pickl:2010gp_pos} for the derivation of the Gross-Pitaevskii equation. Let us note here that for fermions $\alpha_f$ with the weight function $\frac{k}{N}$ has been used before by Graf and Solovej \cite{graf_solovej:1994} and Bach \cite{bach:1993} to measure deviation from the antisymmetrized product structure in the static setting; see also the remarks following \eqref{tr_mu_q_1_eq_alpha}.

Let us now explain these definitions a little further. (We give more details in Chapter~\ref{sec:properties_projectors}.) When regarded as operators on $L^2(\RRR^3)$, $p_1$ projects on the subspace spanned by $\varphi_1,\ldots,\varphi_N$, and $q_1$ projects on its complement. Therefore, $p_1q_1=0$. Note that $p_1$ and $q_1$ are indeed projectors, since the $\varphi_1,\ldots,\varphi_N$ are assumed to be orthonormal. One can then easily check that also the operators $P_{N,k}$ are projectors. Let us now consider the definition of $\alpha_f$ from \eqref{definition_alpha}. Heuristically, the scalar product $\SCP{\psi}{P_{N,k} \psi}$ gives a big contribution if $k$ of the orbitals $\varphi_1,\ldots,\varphi_N$ are \emph{not} contained in the wave function $\psi$. In other words, $P_{N,k}$ projects on those wave functions which are missing $k$ of the orbitals $\varphi_1,\ldots,\varphi_N$. Indeed, one finds for example for a wave function $\phi_\ell=\bigwedge_{j=1}^\ell \chi_j \bigwedge_{j=\ell+1}^N\varphi_j$ with $\chi_i \perp \varphi_j \forall i,j$, that $P_{N,k}\phi_\ell = \delta_{k \ell}\,\phi_k$. As we show in Lemma~\ref{lem:properties_P_Nk}, the $P_{N,k}$ have the property that $\sum_{k=0}^N P_{N,k} = 1$, i.e., we can define the decomposition $\psi = \sum_{k=0}^N \psi_k$, with $\psi_k=P_{N,k}\psi$. Then, loosely speaking, each $\psi_k$ has $(N-k)$ particles in one of the orbitals $\varphi_1,\ldots,\varphi_N$ and $k$ particles \emph{not} in the orbitals $\varphi_1,\ldots,\varphi_N$.

The function $f(k)$ determines how much weight is given to the contribution coming from each $P_{N,k}\psi$. By choosing $f(k)$ we can thus fine tune what is meant by closeness of $\psi$ to $\bigwedge_{j=1}^N\varphi_j$. One obvious and very simple weight is the relative number $\frac{k}{N}$. We always denote this weight function by
\be\label{weight_n}
n(k) = \frac{k}{N}
\ee
and the corresponding counting functional by $\alpha_n$. Loosely speaking, $\alpha_n$ measures the relative ``number of particles'' in $\psi$ that are not in the antisymmetrized product structure of the $\varphi_1,\ldots,\varphi_N$. It turns out that for this weight and due to the antisymmetry of $\psi$, the functional has the simple form
\be
\alpha_n := \sum_{k=0}^N \frac{k}{N} \SCP{\psi}{P_{N,k} \psi} = \SCP{\psi}{q_1 \psi},
\ee
see Lemma~\ref{lem:properties_P_Nk}. Recall here that $q_1$ projects on the complement of the subspace spanned by $\varphi_1,\ldots,\varphi_N$. Another important weight is given by
\be\label{weight_m_gamma}
m^{(\gamma)}(k) = \left\{\begin{array}{cl} \frac{k}{N^{\gamma}} &, \text{for } k \leq N^{\gamma}\\ 1 & , \text{otherwise,} \end{array}\right.
\ee
with some $0 < \gamma \leq 1$. The function $m^{(\gamma)}(k)$ gives a much larger weight to already very few particles outside the antisymmetrized product structure. On the other hand, for $k>N^{\gamma}$, i.e., very many particles outside the antisymmetrized product structure, $m^{(\gamma)}(k)$ gives the same weight $1$ for all $k>N^{\gamma}$. These properties enable us to derive mean-field approximations for a much wider range of physical situations. 

The goal of this work is to prove bounds on $\alpha_f\big(\psi^t,\varphi_1^t,\ldots,\varphi_N^t\big)$, where $\psi^t$ is a solution to the Schr\"odinger equation and $\varphi^t_1,\ldots,\varphi^t_N$ are solutions to the fermionic Hartree equations. In more detail, we first look for a bound of the type
\be\label{general_form_bound_alpha_dot}
\partial_t \alpha_f\big(\psi^t,\varphi_1^t,\ldots,\varphi_N^t\big) \leq C(t) \left( \alpha_f\big(\psi^t,\varphi_1^t,\ldots,\varphi_N^t\big) + N^{-\delta} \right),
\ee
which then, by Gronwall's Lemma (see Lemma~\ref{lem:gronwall}), implies the bound
\be\label{general_form_bound}
\alpha_f\big(\psi^t,\varphi_1^t,\ldots,\varphi_N^t\big) \leq \, e^{\int_0^t C(s) ds} \, \left( \alpha_f\big(\psi^0,\varphi_1^0,\ldots,\varphi_N^0\big) + N^{-\delta} \right),
\ee
where the function $C(t)$ is independent of $N$, and $\delta>0$ is called the \emph{convergence rate}. In the main theorems of Chapter~\ref{sec:main_theorem_mf}, the weight function $f$ is either $n$ from \eqref{weight_n} or $m^{(\gamma)}$ from \eqref{weight_m_gamma}. A bound of the form \eqref{general_form_bound} implies that if initially (at time $t=0$) $\alpha_f$ is small, then it stays small for times $t>0$ and $N$ large enough. In the thermodynamic limit, we arrive at the statement that $\lim_{N\to\infty}\alpha_f(t=0)=0$ implies $\lim_{N\to\infty}\alpha_f(t)=0$ for all $t>0$.

Let us summarize the advantages of using the functional $\alpha_f$ for the derivation of mean-field dynamics compared to other approaches:
\begin{itemize}
\item The idea to ``count the number of particles'' not in the antisymmetrized product seems very natural and has a clear physical interpretation. This is also reflected in the proof of a statement like \eqref{general_form_bound_alpha_dot}. As we show in Chapter~\ref{sec:outline_proof}, there are three contributions to $\partial_t \alpha_f(t)$, all of which have a clear physical meaning.
\item It seems that proofs which use BBGKY hierarchies (e.g., \cite{bardos:2003}) are hard to formulate for interactions with scalings weaker than $N^{-1}$, due to combinatorial reasons. Therefore, new methods like the $\alpha$-method or the one developed by Schlein et al.\ (applied to fermions by Benedikter, Porta and Schlein in \cite{benedikter:2013}) are useful.
\item The freedom in the choice of the weight function enables us to prove mean-field dynamics for many different setups, e.g., singular or weakly scaled interaction potentials.
\end{itemize}

\section{Connection to Density Matrices}\label{sec:dens_mat_summary}
It turns out that the functional $\alpha_f$ is closely related to the trace-norm of the difference between reduced one-particle density matrices. Let us here explain the relation and state the exact results; we give more technical details and the proofs in Chapter~\ref{sec:density_matrices}. For any normalized antisymmetric $\psi \in L^2(\RRR^{3N})$, the reduced one-particle density matrix is defined by its integral kernel
\be\label{definition_dens_mat_one_part}
\mu^{\psi}_1(x;y) = \int \psi(x,x_2,\ldots,x_N) \psi^*(y,x_2,\ldots,x_N) \,d^3x_2 \ldots d^3x_N.
\ee
For an antisymmetrized product state $\bigwedge_{j=1}^N \varphi_j$ we find
\be
\mu^{\bigwedge \varphi_j}_1 = \frac{1}{N} p_1.
\ee
Let us now consider $\alpha_n$, i.e., the $\alpha$-functional with the weight $n(k)=\frac{k}{N}$. First, let us mention that $\alpha_n = \tr(\mu_1^{\psi}q_1) = \tr(\mu_1^{\psi}(1-p_1))$, where $\tr(\cdot)$ denotes the trace. This can be seen by evaluating the trace in a basis that contains $\varphi_1,\ldots,\varphi_N$. If we denote the other basis vectors by $\{\varphi_{j}\}_{j>N}$, we find
\be\label{tr_mu_q_1_eq_alpha}
\tr(\mu_1^{\psi}q_1) = \sum_{j=1}^\infty \scp{\varphi_j}{\mu^{\psi}q_1 \varphi_j} = \sum_{j=N+1}^\infty \scp{\varphi_j}{\mu^{\psi} \varphi_j} = \SCP{\psi}{\sum_{j=N+1}^\infty \ketbr{\varphi_j}_1 \psi} = \SCP{\psi}{q_1 \psi}.
\ee
It is the expression $\tr(\mu_1^{\psi}q_1)$ that has been used before to measure deviation from the antisymmetrized product structure in the static setting, see \cite{bach:1993,graf_solovej:1994}. Now consider the difference of the reduced one-particle density matrices in trace norm,
\be\label{diff_dens_mat}
\norm[\tr]{\mu^{\bigwedge \varphi_j}_1 - \mu_1^\psi}.
\ee
Using $p_1+q_1=1$, let us decompose the reduced density of $\psi$ into four contributions,
\be
\mu_1^\psi = (p_1+q_1) \mu_1^\psi (p_1+q_1) = p_1 \mu_1^\psi p_1 + p_1 \mu_1^\psi q_1 + q_1 \mu_1^\psi p_1 + q_1 \mu_1^\psi q_1.
\ee
In the proof of Lemma~\ref{lem:density_conv}, recalling that $\alpha_n=\SCP{\psi}{q_1\psi}=||q_1\psi||^2$, we show that 
\be\label{dens_alpha_pp_qq}
\norm[\tr]{\mu^{\bigwedge \varphi_j}_1 - p_1 \mu_1^\psi p_1} = \norm[\tr]{q_1 \mu_1^\psi q_1} = \alpha_n.
\ee
Furthermore, one can show that
\be\label{dens_alpha_pq}
\norm[\tr]{p_1 \mu_1^\psi q_1} \leq \sqrt{\alpha_n} \quad \text{and} \quad \norm[\tr]{q_1 \mu_1^\psi p_1} \leq \sqrt{\alpha_n}.
\ee
Therefore,
\be
\norm[\tr]{\mu^{\bigwedge \varphi_j}_1 - \mu_1^\psi} \leq C \sqrt{\alpha_n}.
\ee
On the other hand, one can show that
\be
\alpha_n \leq 2 \norm[\tr]{\mu^{\bigwedge \varphi_j}_1 - \mu_1^\psi}.
\ee
As a consequence, convergence of $\mu_1^\psi$ to $\mu^{\bigwedge \varphi_j}_1$ in trace norm is equivalent to convergence of $\alpha_n$ to zero. However, there is a difference in the convergence rates. This comes from the fact that controlling the density matrix difference \eqref{diff_dens_mat} means to control ``more'' correlations than covered by $\alpha_n$. This can be seen from \eqref{dens_alpha_pp_qq}: with $\alpha_n$ one controls only certain ``diagonal'' parts of the density matrix difference, while the ``non-diagonal'' parts as in \eqref{dens_alpha_pq} are weighted more, with $\sqrt{\alpha_n}$. A similar analysis can be done for the Hilbert-Schmidt norm instead of the trace norm. There, due to the choice of normalization of the density matrix, an extra factor $\sqrt{N}$ appears. 

The relations between the different types of convergence are summarized in the following lemma. Recall that $\alpha_n$ is defined in \eqref{definition_alpha} with the weight $n(k)=\frac{k}{N}$, $\mu^{\psi}_1$ is defined in \eqref{definition_dens_mat_one_part}, and note that $\norm[\tr]{\cdot}$ and $\norm[\HS]{\cdot}$ denote the trace and Hilbert-Schmidt norms, respectively.

\begin{lemma}\label{lem:density_conv}
Let $\psi \in L^2(\RRR^{3N})$ be antisymmetric and normalized, and let $\varphi_1,\ldots,\varphi_N \in L^2(\RRR^3)$ be orthonormal. Then
\be\label{bound_tr_alpha_HS}
\norm[\tr]{\mu^{\psi}_1 - \mu^{\bigwedge \varphi_j}_1}^2 \leq 8 \, \alpha_n \leq 8 \sqrt{N} \norm[\HS]{\mu^{\psi}_1 - \mu^{\bigwedge \varphi_j}_1},
\ee
\be\label{bound_HS_alpha_tr}
N \norm[\HS]{\mu^{\psi}_1 - \mu^{\bigwedge \varphi_j}_1}^2 \leq 2 \, \alpha_n \leq \norm[\tr]{\mu^{\psi}_1 - \mu^{\bigwedge \varphi_j}_1}.
\ee
\end{lemma}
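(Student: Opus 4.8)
The plan is to work with the four-block decomposition $\mu_1^\psi = p_1\mu_1^\psi p_1 + p_1\mu_1^\psi q_1 + q_1\mu_1^\psi p_1 + q_1\mu_1^\psi q_1$ together with the identity $\mu_1^{\bigwedge\varphi_j} = \frac{1}{N}p_1$, and to estimate each block separately in the relevant norm. The central observations, which I would establish first, are the two identities $\alpha_n = \SCP{\psi}{q_1\psi} = \norm{q_1\psi}^2 = \tr(\mu_1^\psi q_1)$ (already noted in \eqref{tr_mu_q_1_eq_alpha}) and $\norm[\tr]{q_1\mu_1^\psi q_1} = \tr(q_1\mu_1^\psi q_1) = \alpha_n$ (since $q_1\mu_1^\psi q_1$ is a positive operator, its trace norm equals its trace). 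Likewise $\norm[\tr]{\tfrac{1}{N}p_1 - p_1\mu_1^\psi p_1} = \tr(\tfrac{1}{N}p_1) - \tr(p_1\mu_1^\psi p_1) = 1 - (1-\alpha_n) = \alpha_n$, using $\tr\mu_1^\psi = 1$ and positivity of $p_1\mu_1^\psi p_1$ on the range of $p_1$ — this is \eqref{dens_alpha_pp_qq}.

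Next I would bound the off-diagonal blocks. For $\norm[\tr]{p_1\mu_1^\psi q_1}$ I would write $\mu_1^\psi$ as (a multiple of) the partial trace of $\ketbr{\psi}$ and use the general fact that for a state $\rho$ and projections $p,q$ one has $\norm[\tr]{p\rho q} \le \sqrt{\tr(p\rho p)}\,\sqrt{\tr(q\rho q)}$, which follows from Cauchy--Schwarz for the Hilbert--Schmidt inner product after writing $p\rho q = (p\sqrt\rho)(\sqrt\rho q)$ and $\norm[\tr]{AB}\le\norm[\HS]{A}\norm[\HS]{B}$. With $\tr(q_1\mu_1^\psi q_1)=\alpha_n$ and $\tr(p_1\mu_1^\psi p_1)=1-\alpha_n\le 1$ this gives $\norm[\tr]{p_1\mu_1^\psi q_1}\le\sqrt{\alpha_n}$, and symmetrically for $q_1\mu_1^\psi p_1$; that is \eqref{dens_alpha_pq}. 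Collecting the four blocks by the triangle inequality yields $\norm[\tr]{\mu_1^\psi - \mu_1^{\bigwedge\varphi_j}} \le 2\alpha_n + 2\sqrt{\alpha_n}$. Since $\alpha_n\le 1$, one can absorb this into $\norm[\tr]{\mu_1^\psi - \mu_1^{\bigwedge\varphi_j}}^2 \le (2\alpha_n+2\sqrt{\alpha_n})^2 \le 8\alpha_n$ (bounding $\alpha_n\le\sqrt{\alpha_n}$), giving the left inequality of \eqref{bound_tr_alpha_HS}.

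For the reverse direction, the left inequality of \eqref{bound_HS_alpha_tr}, I would note $q_1\mu_1^\psi q_1$ is a nonnegative block of $\mu_1^\psi - \mu_1^{\bigwedge\varphi_j}$ in an orthogonal corner, so $\norm[\HS]{\mu_1^\psi - \mu_1^{\bigwedge\varphi_j}}^2 \ge \norm[\HS]{q_1\mu_1^\psi q_1}^2 = \tr\big((q_1\mu_1^\psi q_1)^2\big)$; but since $q_1\mu_1^\psi q_1 \ge 0$ and $\le \mu_1^\psi$ which has operator norm $\le 1/N$, we get $\tr((q_1\mu_1^\psi q_1)^2) \ge \frac{1}{N}\tr(q_1\mu_1^\psi q_1)\cdot$(wait — we need the other direction) — more carefully, $\tr\big((q_1\mu_1^\psi q_1)^2\big) \ge \frac{1}{\,\text{(rank)}\,}(\tr q_1\mu_1^\psi q_1)^2$ is the wrong way; instead use $\tr(A^2)\ge \lambda_{\max}(A)\tr(A)\cdot$... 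The clean route: $q_1\mu_1^\psi q_1 \le \mu_1^\psi \le \frac1N \id$ as operators is false in general, but $\norm[\op]{\mu_1^\psi}\le 1$ always and $\norm[\op]{\mu_1^\psi}\le \frac1N$ when... Actually the normalization here is $\tr\mu_1^\psi=1$, so $\norm[\op]{\mu_1^\psi}\le 1$; then $\tr((q_1\mu_1^\psi q_1)^2)\le \norm[\op]{q_1\mu_1^\psi q_1}\tr(q_1\mu_1^\psi q_1)\le \tr(q_1\mu_1^\psi q_1)=\alpha_n$ goes the wrong way too. I would therefore instead get $N\norm[\HS]{\cdot}^2 \le 2\alpha_n$ from $\norm[\HS]{A}^2\le\norm[\tr]{A}\norm[\op]{A}$ applied blockwise together with the operator-norm bound $\norm[\op]{\mu_1^\psi}\le 1/N$ (true here: $N\mu_1^\psi$ is a one-particle reduced density matrix of an antisymmetric state, hence $\le\id$ by the Pauli principle), and $\norm[\op]{\mu_1^{\bigwedge\varphi_j}}=1/N$; combined with the trace-norm bound $\norm[\tr]{\mu_1^\psi-\mu_1^{\bigwedge\varphi_j}}\le 2\alpha_n+2\sqrt{\alpha_n}$ this needs care, so the honest path is: $N\norm[\HS]{\mu_1^\psi-\mu_1^{\bigwedge\varphi_j}}^2 = N\sum(\text{block HS}^2)$, bound each block's HS norm squared by (its trace norm)$\times$(its op norm $\le 1/N$), and sum — the $pp$ and $qq$ blocks contribute $\le\alpha_n$ each after the identities above, the $pq,qp$ blocks contribute $\le\sqrt{\alpha_n}\cdot\frac1N\cdot N$... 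I would reconcile the constants to land at $2\alpha_n$, and the right inequality $2\alpha_n\le\norm[\tr]{\mu_1^\psi-\mu_1^{\bigwedge\varphi_j}}$ of \eqref{bound_HS_alpha_tr} follows since $\alpha_n=\norm[\tr]{q_1\mu_1^\psi q_1}\le\norm[\tr]{q_1(\mu_1^\psi-\mu_1^{\bigwedge\varphi_j})q_1}\le\norm[\tr]{\mu_1^\psi-\mu_1^{\bigwedge\varphi_j}}$ (as $q_1\mu_1^{\bigwedge\varphi_j}q_1 = \frac1N q_1p_1q_1=0$), which in fact gives $\alpha_n\le\norm[\tr]{\cdot}$ directly; I would then track whether the paper's stated constant is $2\alpha_n\le\norm[\tr]{\cdot}$ and adjust. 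Finally, the rightmost inequality of \eqref{bound_tr_alpha_HS}, $\alpha_n\le\sqrt N\norm[\HS]{\mu_1^\psi-\mu_1^{\bigwedge\varphi_j}}$, follows by combining $\alpha_n\le\norm[\tr]{\cdot}$ with the standard $\norm[\tr]{A}\le\sqrt{r}\,\norm[\HS]{A}$ where here the effective rank is controlled by $N$ because $N(\mu_1^\psi-\mu_1^{\bigwedge\varphi_j})$ has operator norm $\le 1$ while its trace norm is $\le 2N\alpha_n/\alpha_n$... more precisely $\norm[\tr]{A}=\tr|A|\le\sqrt{\norm[\op]{A}}\sqrt{\norm[\op]{A}}\cdot$... — I would use $\norm[\tr]{A}^2\le \mathrm{rank}(A)\norm[\HS]{A}^2$ with $\mathrm{rank}$ replaced by the bound coming from $N\norm[\op]{\mu_1^\psi-\mu_1^{\bigwedge\varphi_j}}\le 2$.

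\textbf{Main obstacle.} The conceptual steps — the four-block decomposition and the two exact identities for the diagonal blocks — are immediate. The delicate part is bookkeeping the normalization ($\tr\mu_1^\psi=1$ versus the operator-norm bound $\norm[\op]{\mu_1^\psi}\le 1/N$ from the Pauli principle) correctly so that the factors of $N$ and $\sqrt N$ come out exactly as in \eqref{bound_tr_alpha_HS}--\eqref{bound_HS_alpha_tr}, and getting the off-diagonal blocks $p_1\mu_1^\psi q_1$ with the right power $\sqrt{\alpha_n}$ (not $\alpha_n$) via the Cauchy--Schwarz/Hilbert--Schmidt factorization $p\rho q=(p\sqrt\rho)(\sqrt\rho q)$. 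Everything else is the triangle inequality and elementary inequalities between Schatten norms ($\norm[\HS]{A}^2\le\norm[\tr]{A}\norm[\op]{A}$, $\norm[\tr]{A}\le\sqrt{\mathrm{rank}(A)}\,\norm[\HS]{A}$, $\norm[\tr]{AB}\le\norm[\HS]{A}\norm[\HS]{B}$).
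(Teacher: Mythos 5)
Your first half follows the paper's route for \eqref{bound_tr_alpha_HS}: the same four-block decomposition, the same exact identities $\norm[\tr]{\frac1Np_1-p_1\mu^\psi_1p_1}=\alpha_n=\norm[\tr]{q_1\mu^\psi_1q_1}$, and the same Hilbert--Schmidt Cauchy--Schwarz for the off-diagonal blocks. But the constant-$8$ step fails as written: from $\norm[\tr]{p_1\mu^\psi_1q_1}\le\sqrt{\alpha_n}$ you get $\norm[\tr]{\mu^{\psi}_1-\mu^{\bigwedge \varphi_j}_1}\le 2\alpha_n+2\sqrt{\alpha_n}\le4\sqrt{\alpha_n}$, whose square is $16\,\alpha_n$, not $8\,\alpha_n$ (try $\alpha_n=1$). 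To reach $8$ you must keep the sharper bound $\sqrt{\alpha_n(1-\alpha_n)}$ for the off-diagonal blocks (your Cauchy--Schwarz already gives it, since $\tr(p_1\mu^\psi_1p_1)=1-\alpha_n$) and then verify the not-quite-obvious scalar inequality $2\alpha+2\sqrt{\alpha(1-\alpha)}\le\sqrt{8\alpha}$ on $[0,1]$, which the paper does by a small calculus argument. Similarly, your argument for the right inequality of \eqref{bound_HS_alpha_tr} only yields $\alpha_n\le\norm[\tr]{\mu^{\psi}_1-\mu^{\bigwedge \varphi_j}_1}$; the stated factor $2$ needs the trick you are missing: write $2\alpha_n=\tr\big((\mu^{\bigwedge \varphi_j}_1-\mu^{\psi}_1)(p_1-q_1)\big)$ (both diagonal blocks contribute $\alpha_n$, the off-diagonal traces vanish) and use $|\tr(A)|\le\norm[\tr]{A}$ together with $\norm[\op]{p_1-q_1}=1$.

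The two Hilbert--Schmidt inequalities are genuinely not established in your sketch. For $N\norm[\HS]{\mu^{\psi}_1-\mu^{\bigwedge \varphi_j}_1}^2\le2\alpha_n$, the blockwise plan ($\norm[\HS]{\cdot}^2\le\norm[\tr]{\cdot}\,\norm[\op]{\cdot}$ summed over blocks) does not "reconcile to $2\alpha_n$": the off-diagonal blocks contribute terms of order $\sqrt{\alpha_n}/N$, so you land at $2\alpha_n+2\sqrt{\alpha_n}$, which is far too weak for small $\alpha_n$ and would ruin the equivalence of rates in \eqref{equivalence_alpha_tr_HS}. The paper avoids blocks entirely: expand $\norm[\HS]{\mu^{\psi}_1-\tfrac1Np_1}^2=\tr\big((\mu^{\psi}_1)^2\big)-\tfrac2N\SCP{\psi}{p_1\psi}+\tfrac1N$ and bound $\tr\big((\mu^{\psi}_1)^2\big)\le\norm[\op]{\mu^{\psi}_1}\norm[\tr]{\mu^{\psi}_1}\le\tfrac1N$ (Pauli principle), which gives exactly $\tfrac{2\alpha_n}{N}$. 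For $\alpha_n\le\sqrt N\,\norm[\HS]{\mu^{\psi}_1-\mu^{\bigwedge \varphi_j}_1}$, the "effective rank" idea has no valid form: $\mu^{\psi}_1-\mu^{\bigwedge \varphi_j}_1$ generally has infinite rank, and the inequality $\norm[\tr]{A}\,\norm[\op]{A}\le\norm[\HS]{A}^2$ that your gesture would require is false in general. The correct two-line argument multiplies by the rank-$N$ projector: $\norm[\HS]{\mu^{\bigwedge \varphi_j}_1-\mu^{\psi}_1}\ge\norm[\tr]{(\mu^{\bigwedge \varphi_j}_1-\mu^{\psi}_1)p_1}/\norm[\HS]{p_1}\ge\frac{1}{\sqrt N}\norm[\tr]{p_1(\mu^{\bigwedge \varphi_j}_1-\mu^{\psi}_1)p_1}=\frac{\alpha_n}{\sqrt N}$, using $\norm[\HS]{p_1}=\sqrt N$, $\norm[\op]{p_1}=1$ and your own identity \eqref{dens_alpha_pp_qq}.
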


This lemma is the main result of this section. Its proof is given is Chapter~\ref{sec:density_matrices}. Note that it implies in particular that
\begin{align}\label{equivalence_alpha_tr_HS}
~& \lim_{N \to \infty} \alpha_n = 0 \nonumber \\
\Longleftrightarrow~ & \lim_{N \to \infty} \norm[\tr]{\mu^{\psi}_1 - \mu^{\bigwedge \varphi_j}_1} = 0 \nonumber \\
\Longleftrightarrow~ & \lim_{N \to \infty} \sqrt{N}\norm[\HS]{\mu^{\psi}_1 - \mu^{\bigwedge \varphi_j}_1} = 0.
\end{align}

Let us now consider more general weight functions $f(k)$ that dominate $n(k)$, i.e., $f(k) \geq n(k)$ for all $k$. This includes in particular the weight $m^{(\gamma)}(k)$ from \eqref{weight_m_gamma} which we use later. For those weights, the inequality $\alpha_f \geq \alpha_n$ holds, since
\be
\alpha_n = \sum_{k=0}^N \, n(k) \, \underbrace{\bigSCP{\psi}{P_{N,k} \psi}}_{\geq 0} \leq \sum_{k=0}^N \, f(k) \, \bigSCP{\psi}{P_{N,k} \psi} = \alpha_f.
\ee
Thus, Lemma~\ref{lem:density_conv} directly implies the following lemma.

\begin{lemma}\label{lem:density_conv_alpha_f}
Let $\psi \in L^2(\RRR^{3N})$ be antisymmetric and normalized, and let $\varphi_1,\ldots,\varphi_N \in L^2(\RRR^3)$ be orthonormal. Then, for all $f$ with $f(k)\geq \frac{k}{N} ~\forall k=1,\ldots,N$,
\be\label{bound_tr_alpha_f}
\norm[\tr]{\mu^{\psi}_1 - \mu^{\bigwedge \varphi_j}_1}^2 \leq 8 \, \alpha_f,
\ee
\be\label{bound_HS_alpha_f}
N \norm[\HS]{\mu^{\psi}_1 - \mu^{\bigwedge \varphi_j}_1}^2 \leq 2 \, \alpha_f.
\ee
\end{lemma}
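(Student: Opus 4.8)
The plan is to deduce this lemma directly from Lemma~\ref{lem:density_conv}, using only an elementary monotonicity property of $\alpha_f$ in the weight function; no new computation involving the reduced density matrices is needed.

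First I would recall that each $P_{N,k}$ is an orthogonal projector on $L^2(\RRR^{3N})$ (as noted after Definition~\ref{def:projectors} and established in Lemma~\ref{lem:properties_P_Nk}), so in particular $P_{N,k}\geq 0$ as an operator, and hence every spectral weight $\SCP{\psi}{P_{N,k}\psi}$ is nonnegative (indeed $\sum_{k=0}^N\SCP{\psi}{P_{N,k}\psi}=\norm{\psi}^2=1$ with all summands $\geq 0$). Since by hypothesis $f(k)\geq n(k)=\tfrac{k}{N}$ for every $k\in\{0,\ldots,N\}$ — note this also holds at $k=0$, where $f(0)=0=n(0)$ — a termwise comparison gives
\be
\alpha_n=\sum_{k=0}^N n(k)\,\SCP{\psi}{P_{N,k}\psi}\leq\sum_{k=0}^N f(k)\,\SCP{\psi}{P_{N,k}\psi}=\alpha_f ,
\ee
which is exactly the estimate already displayed immediately before the statement of the lemma.

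It then suffices to chain this with the first inequality on each line of Lemma~\ref{lem:density_conv}: from \eqref{bound_tr_alpha_HS} one gets $\norm[\tr]{\mu^{\psi}_1-\mu^{\bigwedge\varphi_j}_1}^2\leq 8\,\alpha_n\leq 8\,\alpha_f$, which is \eqref{bound_tr_alpha_f}, and from \eqref{bound_HS_alpha_tr} one gets $N\norm[\HS]{\mu^{\psi}_1-\mu^{\bigwedge\varphi_j}_1}^2\leq 2\,\alpha_n\leq 2\,\alpha_f$, which is \eqref{bound_HS_alpha_f}. There is no genuine obstacle here — the statement is a corollary of Lemma~\ref{lem:density_conv} — and the only point worth a word of care is that the comparison $f\geq n$ must be \emph{pointwise} on all of $\{0,\ldots,N\}$ for the termwise bound to be legitimate (a mere inequality between the sums would not suffice); this is precisely what the hypothesis supplies, and the positivity of the $\SCP{\psi}{P_{N,k}\psi}$ makes the argument go through.
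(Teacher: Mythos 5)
Your proof is correct and is essentially the paper's own argument: the paper also derives the lemma by noting the termwise comparison $\alpha_n\leq\alpha_f$ (using $f(k)\geq\frac{k}{N}$ and the nonnegativity of the $\SCP{\psi}{P_{N,k}\psi}$) and then chaining with the bounds $8\,\alpha_n$ and $2\,\alpha_n$ from Lemma~\ref{lem:density_conv}. Nothing is missing.
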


We thus have that convergence of $\alpha_f$ to zero still implies convergence of $\mu_1^\psi$ to $\mu^{\bigwedge \varphi_j}_1$, but not the other way around (in general), i.e.,
\be\label{alpha_f_implies_tr}
\lim_{N \to \infty} \alpha_f = 0 ~\implies~ \lim_{N \to \infty} \norm[\tr]{\mu^{\psi}_1 - \mu^{\bigwedge \varphi_j}_1} = 0,
\ee
\be\label{alpha_f_implies_HS}
\lim_{N \to \infty} \alpha_f = 0 ~\implies~ \lim_{N \to \infty} \sqrt{N}\norm[\HS]{\mu^{\psi}_1 - \mu^{\bigwedge \varphi_j}_1} = 0.
\ee

Finally, let us make a remark about convergence in operator norm. Note that $||\mu^{\psi}_1||_{\op} \leq N^{-1}$ for antisymmetric $\psi$, so a possible indicator of convergence would be the operator norm times $N$. This is not a good type of convergence to consider, though, since the operator norm is given by the largest eigenvalue which at most can be $N^{-1}$ for fermionic density matrices. Thus, while convergence of $N$ times the operator norm does imply convergence of $\alpha_n$, the opposite is not true. One orbital not in the antisymmetrized product of the $\varphi_1,\ldots,\varphi_N$ is enough to let the operator norm of $N$ times the difference between the density matrices be equal to one, while $\alpha_n$ converges to zero. This is summarized in the following proposition which we also prove in Chapter~\ref{sec:density_matrices}.

\begin{proposition}\label{pro:density_matrix_op}
Let $\psi \in L^2(\RRR^{3N})$ be antisymmetric and normalized, and let $\varphi_1,\ldots,\varphi_N \in L^2(\RRR^3)$ be orthonormal. Then
\be\label{bound_alpha_op}
\alpha_n \leq N \norm[\op]{\mu^{\psi}_1 - \mu^{\bigwedge \varphi_j}_1},
\ee
i.e.,
\be\label{op_implies_alpha}
\lim_{N \to \infty} N \norm[\op]{\mu^{\psi}_1 - \mu^{\bigwedge \varphi_j}_1} = 0 ~\implies~ \lim_{N \to \infty} \alpha_n = 0.
\ee
The converse of \eqref{op_implies_alpha} is not true, i.e., $\alpha_n \to 0$ does not imply $N \norm[\op]{\mu^{\psi}_1 - \mu^{\bigwedge \varphi_j}_1} \to 0$.
\end{proposition}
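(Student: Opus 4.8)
The plan is to read the inequality \eqref{bound_alpha_op} directly off the trace identity $\alpha_n = \tr(\mu_1^\psi q_1)$ recorded in \eqref{tr_mu_q_1_eq_alpha}, and then to exhibit an explicit family of states for which the converse of \eqref{op_implies_alpha} fails.

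For the inequality, I would set $A := \mu_1^\psi - \mu^{\bigwedge \varphi_j}_1$. Since $\mu^{\bigwedge \varphi_j}_1 = \tfrac1N p_1$ has trace $1$ and $\mu_1^\psi$ has trace $1$, the operator $A$ is trace class with $\tr A = 0$. Using $p_1 q_1 = 0$ we get $\tr(\mu_1^\psi q_1) = \tr(A q_1) + \tfrac1N \tr(p_1 q_1) = \tr(A q_1)$, and combining this with $0 = \tr A = \tr(A p_1) + \tr(A q_1)$ (legitimate because $A$ is trace class and $p_1 + q_1 = \id$) yields
\be
\alpha_n \;=\; \tr(A q_1) \;=\; -\tr(A p_1) \;=\; -\tr(p_1 A p_1),
\ee
where the last step uses $p_1^2 = p_1$ and cyclicity of the trace. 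Now $p_1 A p_1$ is self-adjoint with range contained in the $N$-dimensional space $\Span\{\varphi_1,\ldots,\varphi_N\}$, so its trace is bounded in modulus by $N\,\norm[\op]{p_1 A p_1}$, and $\norm[\op]{p_1 A p_1} \leq \norm[\op]{A}$. Hence $\alpha_n \leq N \norm[\op]{A}$, which is \eqref{bound_alpha_op}. Since $\alpha_n = \SCP{\psi}{q_1\psi} = \norm{q_1\psi}^2 \geq 0$, the implication \eqref{op_implies_alpha} follows by squeezing.

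For the failure of the converse I would take, for each $N$, orthonormal $\varphi_1,\ldots,\varphi_N \in L^2(\RRR^3)$ together with a normalized $\chi$ with $\chi \perp \varphi_j$ for all $j$ (possible since $L^2(\RRR^3)$ is infinite-dimensional), and set $\psi := \bigwedge_{j=1}^N \tilde\varphi_j$ with $\tilde\varphi_1 := \chi$ and $\tilde\varphi_j := \varphi_j$ for $j = 2,\ldots,N$; that is, one orbital of $\bigwedge_{j=1}^N\varphi_j$ is replaced by $\chi$. Then $\mu_1^\psi = \tfrac1N\big(\ketbr{\chi} + \sum_{j=2}^N \ketbr{\varphi_j}\big)$, so $\mu_1^\psi - \mu^{\bigwedge\varphi_j}_1 = \tfrac1N\big(\ketbr{\chi} - \ketbr{\varphi_1}\big)$; this operator has eigenvalues $\pm\tfrac1N$ on $\Span\{\chi,\varphi_1\}$ and vanishes on the orthogonal complement, so $N\norm[\op]{\mu_1^\psi - \mu^{\bigwedge\varphi_j}_1} = 1$ for every $N$. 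On the other hand $q_1\chi = \chi$ and $q_1\varphi_j = 0$ for $j\geq 2$, hence $\alpha_n = \tr(\mu_1^\psi q_1) = \tfrac1N \to 0$ as $N\to\infty$. This is precisely the claimed counterexample, matching the heuristic that ``one orbital not in the antisymmetrized product'' already saturates $N$ times the operator-norm distance.

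I do not expect a genuine obstacle here: the argument rests entirely on \eqref{tr_mu_q_1_eq_alpha}, the orthogonality $p_1 q_1 = 0$, and elementary bookkeeping with traces of trace-class operators. The only points needing a little care are justifying the splitting $\tr(A p_1) + \tr(A q_1) = \tr A$ and checking that the replacement orbital $\chi$ can indeed be chosen orthogonal to $\Span\{\varphi_1,\ldots,\varphi_N\}$, both of which are routine.
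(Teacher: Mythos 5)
Your proposal is correct. For the inequality \eqref{bound_alpha_op} you take a mildly different, somewhat more elementary route than the paper: writing $A=\mu_1^\psi-\mu_1^{\bigwedge\varphi_j}$, you use $\tr A=0$ together with $\alpha_n=\tr(\mu_1^\psi q_1)$ and $p_1q_1=0$ to get $\alpha_n=-\tr(p_1Ap_1)$, and then bound this trace by $N\norm[\op]{A}$ because $p_1Ap_1$ is self-adjoint of rank at most $N$. The paper instead starts from the identity $\alpha_n=\norm[\tr]{\frac{1}{N}p_1-p_1\mu_1^\psi p_1}=\norm[\tr]{p_1\big(\mu_1^{\bigwedge\varphi_j}-\mu_1^\psi\big)p_1}$ (established, via diagonalization and the bound $\lambda_i\leq\frac1N$, in the proof of Lemma~\ref{lem:density_conv}) and then chains $\norm[\tr]{AB}\leq\norm[\HS]{A}\norm[\HS]{B}$, $\norm[\HS]{AB}\leq\norm[\op]{A}\norm[\HS]{B}$ and $\norm[\HS]{p_1}^2=N$. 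Both arguments express the same mechanism — compressing to the $N$-dimensional range of $p_1$ costs exactly a factor $N$ when trading the trace for the operator norm — but yours avoids the Hilbert--Schmidt norm and the non-negativity statement behind \eqref{trace_mu_phi-mu_psi}, needing only \eqref{tr_mu_q_1_eq_alpha} and routine trace-class bookkeeping (which you correctly justify: $Aq_1$, $Ap_1$ are trace class and cyclicity applies with one trace-class and one bounded factor). Your counterexample for the failure of the converse is identical to the paper's: replace one orbital by an orthogonal $\chi$, so that $\mu_1^\psi-\mu_1^{\bigwedge\varphi_j}=\frac1N\big(\ketbr{\chi}-\ketbr{\varphi_1}\big)$ gives $N\norm[\op]{\mu_1^\psi-\mu_1^{\bigwedge\varphi_j}}=1$ while $\alpha_n=\frac1N\to0$.
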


\section{Main Results}\label{sec:main_theorem_mf}
We now state the main results of this work. The proofs of the results in Chapters \ref{sec:main_theorem_mf_general_v} and \ref{sec:main_theorem_mf_x-s} are given in Chapters \ref{sec:proofs_main_theorems_gen} and \ref{sec:proofs_main_theorems}. Note that in the rest of the Chapter we give the desired bounds only in terms of $\alpha_f$; the corresponding bounds for the convergence of density matrices can be read off from Lemmas \ref{lem:density_conv} and \ref{lem:density_conv_alpha_f}.

\subsection{Main Theorems for General $v^{(N)}$}\label{sec:main_theorem_mf_general_v}
The two theorems \ref{thm:estimates_terms_alpha_dot_beta_n} and \ref{thm:estimates_terms_alpha_dot_beta_general} in this subsection cover very general Hamiltonians. The theorems are of the form: Given certain properties of the solutions to the fermionic Hartree equations, the mean-field approximation for the dynamics is good, i.e., $\alpha(t)\leq C(t)\big(\alpha(0)+N^{-\delta}\big)$ for some $\delta>0$. We consider wave functions $\psi^t \in L^2(\RRR^{3N})$ that are solutions to
\be\label{schroedinger_eq_H0vN}
i \partial_t \psi^t = H^N \psi^t = \left( \sum_{j=1}^N H_j^0 + \sum_{1\leq i < j \leq N} v^{(N)}(x_i-x_j) \right) \psi^t,
\ee
where the Hamiltonian $H^N$ is a self-adjoint operator, $v^{(N)}(x)=v^{(N)}(-x)$ is a (possibly scaled) real interaction potential and $H_j^0$ acts only on the $j$-th particle. The most important example for $H_j^0$ is the non-relativistic free Hamiltonian with external field, $H_j^0=-\Delta_j+w^{(N)}(x_j)$, but we could also replace the Laplacian by relativistic operators like $\sqrt{-\Delta +m^2}-m$ ($m>0$) or $|\nabla|$. The fermionic mean-field equations for the one-particle wave functions $\varphi^t_1,\ldots,\varphi^t_N \in L^2(\RRR^3)$ are
\be\label{mean-field_H0vN}
i \partial_t \varphi_j^t(x)= \left( H^0 + \big(v^{(N)} \star \rho_N^t\big)(x) \right) \varphi_j^t(x),
\ee
for $j=1,\ldots,N$ and where $\rho_N^t = \sum_{i=1}^N |\varphi^t_i|^2$. Recall that antisymmetric initial wave functions stay antisymmetric under the evolution \eqref{schroedinger_eq_H0vN}, and that orthonormal initial one-particle wave functions stay orthonormal under the evolution \eqref{mean-field_H0vN}.

The first theorem gives a bound on $\alpha_n$ as defined in \eqref{definition_alpha} and \eqref{weight_n}. 

\begin{theorem}\label{thm:estimates_terms_alpha_dot_beta_n}
Let $t\in[0,T)$ for some $0<T\in \RRR\cup\infty$. Let $\psi^t \in L^2(\RRR^{3N})$ be a solution to the Schr\"odinger equation \eqref{schroedinger_eq_H0vN} with antisymmetric initial condition $\psi^0 \in L^2(\RRR^{3N})$. Let $\varphi_1^t,\ldots,\varphi_N^t \in L^2(\RRR^3)$ be solutions to the fermionic Hartree equations \eqref{mean-field_H0vN} with orthonormal initial conditions $\varphi_1^0,\ldots,\varphi_N^0 \in L^2(\RRR^3)$.

We assume that $v^{(N)}$ and $\rho_N^t := \sum_{i=1}^N |\varphi_i^t|^2$ for all $t\in[0,T)$ are such that there is a positive $D(t)$ (independent of $N$), such that
\be\label{alpha_dot_n_ass_2}
\sup_{y\in\RRR^3} \Big(\big(v^{(N)}\big)^2\star\rho_N^t\Big)(y) \leq D(t) \, N^{-1}.
\ee
Then there is a positive $C(t) = 24\sqrt{D(t)}$, such that
\be\label{main_alpha_ineq_n}
\alpha_n(t) \leq e^{\int_0^t C(s) ds} \, \alpha_n(0) + \left( e^{\int_0^t C(s) ds} - 1 \right) N^{-1}.
\ee
\end{theorem}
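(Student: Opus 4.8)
The plan is to apply the $\alpha$-method: differentiate $\alpha_n(t)=\SCP{\psi^t}{q_1^t\,\psi^t}$ (equivalently $\SCP{\psi^t}{\widehat n^{\,t}\psi^t}$, using Lemma~\ref{lem:properties_P_Nk}), establish a differential inequality of the form $\partial_t\alpha_n(t)\le C(t)\big(\alpha_n(t)+N^{-1}\big)$, and then invoke Gronwall's Lemma~\ref{lem:gronwall} to obtain \eqref{main_alpha_ineq_n}. For the derivative, note that because the $\varphi_j^t$ solve \eqref{mean-field_H0vN}, the one-particle projector evolves by $i\partial_t p_1^t=[\,H_1^0+(v^{(N)}\star\rho_N^t)(x_1)\,,\,p_1^t\,]$, and hence $i\partial_t q_1^t=-[\,H_1^0+(v^{(N)}\star\rho_N^t)(x_1)\,,\,q_1^t\,]$. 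Inserting this together with the Schr\"odinger equation \eqref{schroedinger_eq_H0vN} into $\partial_t\SCP{\psi^t}{q_1^t\psi^t}$, the free Hamiltonians $H_j^0$ drop out (those with $j\ge 2$ commute with $q_1$, and the $j=1$ one cancels against the term from $\partial_t q_1$), and, using the antisymmetry of $\psi^t$, one is left with
\[
\partial_t\alpha_n(t)=2\,\Im\,\bigSCP{\psi^t}{q_1^t\Big((N-1)\,v^{(N)}(x_1-x_2)-\big(v^{(N)}\star\rho_N^t\big)(x_1)\Big)\psi^t}.
\]

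Next I would insert $\id=p_2^t+q_2^t$ on both sides of $v^{(N)}(x_1-x_2)$ and, using the identity $\big(v^{(N)}\star\rho_N^t\big)(x_1)=\tr_2\!\big[\,p_2^t\,v^{(N)}(x_1-x_2)\,\big]$, also split the mean-field term along $p_2^t,q_2^t$ (which commute with it). The key structural point is that the block in which particle $2$ sits in the condensate on both sides, $(N-1)\,p_2 v^{(N)}(x_1-x_2)p_2$, is exactly what the mean field is designed to approximate, so the naively $O(N)$ contributions cancel against $(v^{(N)}\star\rho_N^t)(x_1)$, up to (i) an exchange-type remainder, which by the discussion of Chapter~\ref{sec:exch_term} is of lower order, and (ii) the discrepancy between $N$ and $N-1$; both are $O(N^{-1})$. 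What remains after this cancellation is a finite collection of terms of the schematic form $N\,\Im\SCP{\psi^t}{q_1(\text{projectors on }2)\,v^{(N)}(x_1-x_2)\,(\text{projectors})\psi^t}$ in which at least one further $q_1$ or a $q_2$ is present.

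Each surviving term is then controlled by Cauchy--Schwarz, and the only analytic input from the hypothesis is the operator bound
\[
\big\|\,\id\otimes p_2^t\cdot v^{(N)}(x_1-x_2)\,\big\|_{\op}^2\ \le\ \sup_{y\in\RRR^3}\Big((v^{(N)})^2\star\rho_N^t\Big)(y)\ \le\ D(t)\,N^{-1},
\]
which follows from a Bessel-inequality computation exploiting the orthonormality of the $\varphi_j^t$ (and likewise with $p_1^t$ in place of $p_2^t$). Pairing this $N^{-1/2}$-factor against $\|q_1^t\psi^t\|=\sqrt{\alpha_n(t)}$ and $\|q_1^t q_2^t\psi^t\|\le\sqrt{\alpha_n(t)}$ (the latter via the projector inequalities of Chapter~\ref{sec:properties_projectors} and antisymmetry), together with $2ab\le a^2+b^2$, bounds each such contribution by a constant times $\sqrt{D(t)}\big(\alpha_n(t)+N^{-1}\big)$; collecting everything produces the stated constant $C(t)=24\sqrt{D(t)}$, and Gronwall's Lemma finishes the proof.

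\textbf{Main obstacle.} I expect the delicate part to be the second step: carrying out the $p$/$q$-expansion and pairing it with the mean-field term so that the apparently $O(N)$ pieces genuinely cancel, and then estimating the exchange-type and $N$-vs-$N{-}1$ remainders carefully enough to see that they are only $O(N^{-1})$. This is precisely where the smallness of the fluctuations --- encoded in the assumption $\sup_y\big((v^{(N)})^2\star\rho_N^t\big)(y)\le D(t)N^{-1}$, cf.\ Chapter~\ref{sec:mf_fermions_const_E_kin_fluc} and Remark~\ref{itm:fluctuations} --- does its work: without it, the naive bookkeeping leaves terms that grow with $N$.
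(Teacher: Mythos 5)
Your overall architecture (differentiating $\alpha_n$, cancellation of the free Hamiltonians, insertion of $p_2+q_2$, Gronwall) coincides with the paper's, and your derivative formula is correct; the gaps are in the quantitative core, and as written the scheme does not close. First, the only analytic input you invoke, the operator-norm bound $\norm[\op]{p_2\,v^{(N)}_{12}}^2\le\sup_y\big((v^{(N)})^2\star\rho_N^t\big)(y)\le D(t)N^{-1}$, is true but insufficient: pairing it with $\norm{q_1\psi^t}=\sqrt{\alpha_n}$ and $\norm{q_1q_2\psi^t}\le\sqrt{\alpha_n}$ leaves every interaction term a factor of order $\sqrt N$ too large because of the prefactor $N-1$ (e.g.\ the $qq$-$pq$ term gives $(N-1)\sqrt{\alpha_n}\cdot\sqrt{D/N}\cdot\sqrt{\alpha_n}\sim\sqrt{DN}\,\alpha_n$). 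What is actually needed are expectation bounds against (partially) antisymmetric wave functions, such as $\SCP{\phi}{p_1\big(v^{(N)}_{12}\big)^2p_1\phi}\le\frac{1}{N-1}\sup_y\big((v^{(N)})^2\star\rho_N^t\big)(y)\,\norm{\phi}^2$ and the two-$p$ analogue with $\frac{1}{N(N-1)}\int\big((v^{(N)})^2\star\rho_N^t\big)\rho_N^t$ (Lemmas \ref{lem:psi_pvp_psi}--\ref{lem:psi_qppvvppq_psi}); the decisive extra powers of $N^{-1}$ come from the antisymmetry of $\psi^t$ via Lemma \ref{lem:projector_norms}, not from Bessel/orthonormality alone. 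Second, even with those lemmas the $qq$-$pp$ term cannot be handled by a single Cauchy--Schwarz, since both $q$'s sit on the same side of the scalar product and one only obtains $\sim\sqrt{D\,\alpha_n}$, which does not give $\partial_t\alpha_n\le C(\alpha_n+N^{-1})$. The missing device is to use antisymmetry to replace $(N-1)q_2v^{(N)}_{12}p_2$ by $\sum_{m=2}^N q_mv^{(N)}_{1m}p_m$ and expand $\big\lVert\sum_m q_mv^{(N)}_{1m}p_1p_m\psi^t\big\rVert^2$ into diagonal and cross terms, so that in the cross terms the two $q$'s land on opposite sides; this is what produces the bound $\sqrt{\alpha_n}\sqrt{\alpha_n+N^{-1}}$, cf.\ \eqref{summary_term2}.

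Third, your description of the cancellation in the $qp$-$pp$ term is the bosonic heuristic and mischaracterizes the remainder: $(N-1)p_2v^{(N)}_{12}p_2-\big(v^{(N)}\star\rho_N^t\big)(x_1)$ is not ``an exchange-type remainder plus the $N$ versus $N-1$ discrepancy, both $O(N^{-1})$''; it is the full fluctuation operator, whose operator norm can be as large as order $\sqrt{DN}$, and it is not small in any naive sense. Because $p_2$ projects onto $N$ orbitals (not one), the paper must diagonalize $p_2v^{(N)}_{12}p_2=\sum_i\lambda_i(x_1)\,p_2^{\chi_i^{x_1}}$ (Lemma \ref{lem:diag_pvp}), use $\sum_i\lambda_i(x_1)=V_1^{\dir,(N)}(x_1)$ and the antisymmetry of $\psi^t$ to rewrite the difference as $-\sum_i\lambda_i(x_1)\,q_{\neq 1}^{\chi_i^{x_1}}$, and then apply Cauchy--Schwarz in the index $i$ together with $\sum_i\lambda_i(x_1)^2\le\sup_y\big((v^{(N)})^2\star\rho_N^t\big)(y)$ and the identity \eqref{p_neq_1}; it is this argument that actually delivers the additional $q_2$ and the $O(N^{-1})$ boundary term, and hence the constant $C(t)=24\sqrt{D(t)}$ (after splitting $v^{(N)}$ into positive and negative parts so the diagonalization applies). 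You correctly identify the role of assumption \eqref{alpha_dot_n_ass_2} and the Gronwall endgame, but without these three ingredients the proposed estimates fail by powers of $N$ or of $\sqrt{\alpha_n}$.
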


\noindent\textbf{Remarks.}
\begin{enumerate}
\setcounter{enumi}{\theremarks}
\item From Lemma~\ref{lem:density_conv} it follows that \eqref{main_alpha_ineq_n} implies for the reduced one-particle density matrices the bounds
\be\label{main_dens_mat_ineq_n}
\norm[\tr]{\mu^{\psi^t}_1 - \mu^{\bigwedge \varphi_j^t}_1} \leq C'(t) \, \left( \, \norm[\tr]{\mu^{\psi^0}_1 - \mu^{\bigwedge \varphi_j^0}_1}^{\frac{1}{2}} + N^{-\frac{1}{2}} \right),
\ee
and
\be\label{main_dens_mat_ineq_n_HS}
\sqrt{N} \norm[\HS]{\mu^{\psi^t}_1 - \mu^{\bigwedge \varphi_j^t}_1} \leq C'(t) \, \left( \left(\sqrt{N} \norm[\HS]{\mu^{\psi^0}_1 - \mu^{\bigwedge \varphi_j^0}_1} \,\right)^{\frac{1}{2}} + N^{-\frac{1}{2}} \right),
\ee
with $C'(t) = \sqrt{8} \, \exp\big(\frac{1}{2}\int_0^t C(s) ds\big)$.

\item The condition \eqref{alpha_dot_n_ass_2} is only a condition on the solutions to the fermionic Hartree equations \eqref{mean-field_H0vN}, and not on the solutions to the Schr\"odinger equation \eqref{schroedinger_eq_H0vN}.

\item Note that condition \eqref{alpha_dot_n_ass_2} implies that (by Cauchy-Schwarz and $\int \rho_N^t=N$)
\be
\sup_{y\in\RRR^3} \Big(\big\lvert v^{(N)} \big\rvert\star\rho_N^t\Big)(y) \leq \sqrt{D(t)}.
\ee
For purely positive or negative $v^{(N)}$, this inequality means that the scaled mean-field interaction is everywhere bounded. In particular, it means that the scaling of the interaction is chosen correctly; e.g., when $v^{(N)}=N^{-\beta}v$, the scaling exponent $\beta$ is chosen correctly (or too big), as discussed in Chapter~\ref{sec:mf_fermions}.

\item We show in Corollary~\ref{cor:estimates_terms_alpha_dot_Coulomb_rho_infty} and Theorem~\ref{thm:E_kin_only} for $H_j^0=-\Delta_j+w^{(N)}(x_j)$ more specifically for which situations condition \eqref{alpha_dot_n_ass_2} holds.

\item\label{itm:Coulomb_Hardy} Let us explain the condition \eqref{alpha_dot_n_ass_2}, for the case $H_j^0=-\Delta_j+w^{(N)}(x_j)$. First, note that we are interested in solutions with total kinetic energy of $O(N)$, that is $\sum_{i=1}^N ||\nabla \varphi_i^t||^2 < A N$. Then, in particular, $\varphi_j^t \in H^1(\RRR^3) ~\forall j=1,\ldots,N$\footnote{$H^1(\RRR^3)$ denotes the first Sobolev space, i.e.,
\be
H^1(\RRR^3) = \left\{ f \in L^2(\RRR^3):  \norm{\nabla f} < \infty \right\}.
\ee}. It turns out that for such solutions and for interesting potentials, that is, potentials with Coulomb singularity or less, the density $\rho_N^t$ is regular enough, so that the quantity on the left-hand side of \eqref{alpha_dot_n_ass_2} is always finite. This is due to the convolution which can smooth out the singularity. (For the Coulomb potential this can be seen by Hardy's inequality.) The condition \eqref{alpha_dot_n_ass_2} can be problematic for $v$ with a strong singularity. Consider the physically most relevant case of Coulomb interaction. There we have $v^{(N)}(x)=N^{-\beta}|x|^{-1}$ with $\beta=\frac{2}{3}$. In a scenario where $\sum_{i=1}^N ||\nabla \varphi_i^t||^2 < A N$, we get in the most general case, by Hardy's inequality, only
\be
\Big(\big(N^{-\beta}v\big)^2\star\rho_N^t\Big)(y) \leq N^{-2\beta}\, 4 A N \leq C N^{-\frac{1}{3}}.
\ee

\item\label{itm:coulomb_nice} Still, we expect that for Coulomb interaction for many scenarios condition \eqref{alpha_dot_n_ass_2} actually holds. For $\sum_{i=1}^N ||\nabla \varphi_i^t||^2 < A N$, the particles naturally occupy a volume proportional to $N$, so the density is of $O(1)$, as discussed in Chapter~\ref{sec:mf_fermions_const_E_kin_physics}. Then, heuristically,
\be\label{v2_heuristic}
\Big(\big(N^{-\beta}v\big)^2\star\rho_N^t\Big)(y) \approx N^{-\frac{4}{3}} \int_{O(N)} |x-y|^{-2} \, \rho_N^t(x) d^3x \lessapprox N^{-\frac{4}{3}} \int_0^{N^{\frac{1}{3}}} r^{-2} C r^2 dr \leq C N^{-1}.
\ee
Therefore, we expect that many solutions to the fermionic Hartree equations \eqref{hartree_scaled_xs_main} fulfill condition \eqref{alpha_dot_n_ass_2}. To show this would be a matter of solution theory for the equations \eqref{mean-field_H0vN}. If the initial conditions $\varphi_1^0,\ldots,\varphi_N^0$ are nice enough, then we expect that condition \eqref{alpha_dot_n_ass_2} holds for long or all times $t$. Note that the properties of the solutions can also depend on the external field $w^{(N)}$.

\item\label{itm:no_scaling} We could easily write down the theorem without any scaling, i.e., we could simply use $v$ instead of $v^{(N)}$. Then the theorem says that, if
\be\label{alpha_dot_n_ass_12_new}
\sup_{y\in\RRR^3} \Big(v^2\star\rho_N^t\Big)(y) \leq D_1^N(t),
\ee
then there is a positive $C^N(t) \propto \sqrt{N D_1^N(t)}$, such that
\be\label{main_alpha_ineq_n_new}
\alpha_n(t) \leq e^{\int_0^t C^N(s) ds} \, \alpha_n(0) + \left( e^{\int_0^t C^N(s) ds} - 1 \right) N^{-1}.
\ee
Suppose that $v$ is Coulomb interaction, $\sum_{i=1}^N ||\nabla \varphi_i^t||^2 < A N$, and that the solutions are nice enough, so that \eqref{v2_heuristic} holds. Then it follows, that the mean-field approximation is good for all times $t$ of $O(N^{-\frac{2}{3}})$, i.e.,
\be\label{main_alpha_ineq_n_new2}
\alpha_n(t) \leq e^{CN^{\frac{2}{3}}t} \, \alpha_n(0) + \left( e^{CN^{\frac{2}{3}}t} - 1 \right) N^{-1}.
\ee
For longer times, the mean-field approximation can not be expected to hold anymore; the dynamics becomes instead dominated by the fluctuations.

Let us also consider the situation from Chapter~\ref{sec:mf_fermions_scaling} before the rescaling. Suppose that $v$ is Coulomb interaction, and the ``system volume'' is proportional to $N^{-1}$, such that the density is $O(N^2)$. Suppose again that the density is nice enough, such that
\be\label{v2_heuristic_small_vol}
\Big(v^2\star\rho_N^t\Big)(y) \approx C N^2 \int_{O(N^{-1})} |x-y|^{-2} \,d^3x \lessapprox C N^2 \int_0^{N^{-\frac{1}{3}}}r^{-2}\, r^2 dr \leq C N^{\frac{5}{3}}.
\ee
Then, it follows from \eqref{main_alpha_ineq_n_new}, that the mean-field approximation is good for all times $t$ of $O(N^{-\frac{4}{3}})$, i.e.,
\be\label{main_alpha_ineq_n_new3}
\alpha_n(t) \leq e^{CN^{\frac{4}{3}}t} \, \alpha_n(0) + \left( e^{CN^{\frac{4}{3}}t} - 1 \right) N^{-1}.
\ee

\item\label{itm:fluctuations} Let us recall Chapter~\ref{sec:mf_fermions_const_E_kin_fluc}. There we showed that the fluctuations around the mean-field at point $y\in\RRR^3$ can be bounded by $\big(\big(v^{(N)}\big)^2\star\rho_N^t\big)(y)$. Thus, the condition \eqref{alpha_dot_n_ass_2} says that the fluctuations have to vanish for large $N$, with rate $N^{-1}$. Note that $N^{-1}$ is the typical size of fluctuations in the (weak) law of large numbers, for independently identically distributed random variables. It is therefore not surprising that under this condition the derivation of the mean-field dynamics succeeds. On the other hand, this condition seems much too restrictive: First, the fluctuations can actually be smaller than $\big(\big(v^{(N)}\big)^2\star\rho_N^t\big)(y)$, see the calculation \eqref{variance}; second, it is sufficient to make the fluctuations vanish in the limit $N\to\infty$ (e.g., they can be $O(N^{-\delta})$, for some $\delta>0$), which, as explained in Section~\ref{sec:mf_fermions_const_E_kin_fluc}, is a necessary condition for the mean-field description to be a good approximation. Indeed, it turns out that condition \eqref{alpha_dot_n_ass_2} can be weakened. This is shown in the next Theorem~\ref{thm:estimates_terms_alpha_dot_beta_general}.

\item\label{itm:exch} Let us consider the case of scaled Coulomb interaction $v^{(N)}(x)=N^{-\frac{2}{3}}|x|^{-1}$. We saw in Chapter~\ref{sec:exch_term} that for the example of plane waves the scaled exchange term is $O(N^{-\frac{2}{3}})$. Since the Hartree-Fock equations (the fermionic Hartree equations with exchange term) are a better approximation to the Schr\"odinger dynamics, it might seem surprising that for $\alpha_n(t)$ we find the convergence rate $N^{-1}$ instead of $N^{-\frac{2}{3}}$. However, looking at the proof of the theorem, we find that an exchange term of $O(N^{-\frac{2}{3}})$ gives an error term of $O(N^{-\frac{4}{3}})$ in the $\alpha_n$ estimate. We show this in Remark \ref{itm:exch_term_order}, following the proof in Chapter~\ref{sec:proofs_main_theorems_gen}. Only for the convergence in the sense of density matrices, see \eqref{main_dens_mat_ineq_n}, does the exchange term give an error term of $O(N^{-\frac{2}{3}})$, but there it is of smaller order than the convergence rate of $N^{-\frac{1}{2}}$ anyway. Note that it follows that for the fermionic Hartree equations with scaled Coulomb interaction the expected optimal convergence rate for the density matrices is between $N^{-\frac{1}{2}}$ and $N^{-\frac{2}{3}}$. If condition \eqref{alpha_dot_n_ass_2} holds, then the error term due to fluctuations is $O(N^{-1})$, see Remark \ref{itm:fluctuations}. It would then be interesting to see if one can improve the convergence rate for the density matrices to $N^{-1}$ by considering the Hartree-Fock equations instead of the Hartree equations.

Note that Theorem~\ref{thm:estimates_terms_alpha_dot_beta_n} shows that under the condition \eqref{alpha_dot_n_ass_2} the scaled exchange term is at most of $O(N^{-\frac{1}{2}})$. This is not so easy to see by directly estimating the exchange term, which is hard for singular potentials.
\end{enumerate}
\setcounter{remarks}{\theenumi}

The condition \eqref{alpha_dot_n_ass_2} can be relaxed and replaced by other conditions if we use the weight function $m^{(\gamma)}$ from \eqref{weight_m_gamma}. This allows to treat more singular interactions and smaller scaling exponents. Let us summarize the precise assumptions that we need on the scaled interaction $v^{(N)}$ and the density $\rho_N^t$.

\begin{assumption}\label{ass:for_main_thm}
For all $t\in[0,T)$, $\rho_N^t := \sum_{i=1}^N |\varphi_i^t|^2$ and $v^{(N)}$ are such that there are a (possibly $N$-dependent) volume $\Omega_N\subset\RRR^3$, positive $D_i(t)$ (independent of $N$) and, for some $0<\gamma\leq 1$, exponents $\delta_2<\gamma$, $\delta_3\geq 0$, $\delta_4\geq 0$ such that
\be\label{alpha_dot_m_ass_1}
\sup_{y\in\RRR^3} \Big(\big\lvert v^{(N)} \big\rvert\star\rho_N^t\Big)(y) \leq D_0(t),
\ee
\be\label{alpha_dot_m_ass_2}
\sup_{y\in\RRR^3} \Big(\big(v^{(N)}\big)^2\star\rho_N^t\Big)(y) \leq D_1(t) \, N^{-\gamma},
\ee
\be\label{alpha_dot_m_ass_3}
\int \bigg(\big(v^{(N)}\big)^2\star\rho^t_N\bigg)(y)\,\rho^t_N(y)\,d^3y \leq D_2(t) \, N^{\delta_2},
\ee
\be\label{alpha_dot_m_ass_4}
\sup_{y\in\RRR^3} \int_{\Omega_N+y} \Big(v^{(N)}(y-x)\Big)^2\rho^t_N(x)\,d^3x \leq D_3(t) \, N^{-1-\delta_3},
\ee
\be\label{alpha_dot_m_ass_5}
\sup_{y \in \RRR^3 \setminus \Omega_N} \big|v^{(N)}(y)\big| \leq D_4(t) \, N^{-\frac{1}{2}-\frac{\gamma}{2}-\delta_4}.
\ee
\end{assumption}

Under this assumption we can conclude convergence of $\alpha_{m^{(\gamma)}}(t)$. The following theorem is the most general version of our main result.

\begin{theorem}\label{thm:estimates_terms_alpha_dot_beta_general}
Let $t\in[0,T)$ for some $0<T\in \RRR\cup\infty$. Let $\psi^t \in L^2(\RRR^{3N})$ be a solution to the Schr\"odinger equation \eqref{schroedinger_eq_H0vN} with antisymmetric initial condition $\psi^0 \in L^2(\RRR^{3N})$. Let $\varphi_1^t,\ldots,\varphi_N^t \in L^2(\RRR^3)$ be solutions to the fermionic Hartree equations \eqref{mean-field_H0vN} with orthonormal initial conditions $\varphi_1^0,\ldots,\varphi_N^0 \in L^2(\RRR^3)$.

We assume that $v^{(N)}$ and $\rho_N^t := \sum_{i=1}^N |\varphi_i^t|^2$ for all $t\in[0,T)$ are such that Assumption~\ref{ass:for_main_thm} holds. Then there is a positive $C(t)$, such that
\be\label{main_alpha_ineq_m}
\alpha_{m^{(\gamma)}}(t) \leq e^{\int_0^t C(s) ds} \, \alpha_{m^{(\gamma)}}(0) + \left( e^{\int_0^t C(s) ds} - 1 \right) N^{-\delta},
\ee
where $0<\delta = \min\left\{ \gamma -\delta_2, \gamma + \frac{\delta_3}{2}, \gamma + \delta_4 \right\}$ and
\be
C(t) = 12 \max\bigg\{ 4 \sqrt{D_3(t)} N^{-\frac{\delta_3}{2}}, 4 \sqrt{2} D_4(t) N^{-\delta_4}, \sqrt{12} D_0(t), \sqrt{12} \frac{D_2(t)}{D_0(t)}, 8\sqrt{D_1(t)} \bigg\}.
\ee
\end{theorem}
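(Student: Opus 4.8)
\medskip
\noindent\textbf{Plan of proof.}
The plan is to run Pickl's $\alpha$-method: establish a differential inequality
\be
\partial_t\alpha_{m^{(\gamma)}}(t)\le C(t)\Big(\alpha_{m^{(\gamma)}}(t)+N^{-\delta}\Big)
\ee
with $C(t)$ and $\delta$ as in the statement, and then obtain \eqref{main_alpha_ineq_m} from Gronwall's Lemma~\ref{lem:gronwall}. To get the differential inequality I would first differentiate $\alpha_{m^{(\gamma)}}(t)=\SCP{\psi^t}{\widehat{f}^{\,t}\psi^t}$ (with $f=m^{(\gamma)}$ from \eqref{weight_m_gamma}), noting that the only $t$-dependence of $\widehat{f}^{\,t}$ sits in the orbitals $\varphi_j^t$ entering the projectors $p_m$. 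Inserting $i\partial_t\psi^t=H^N\psi^t$ together with $\partial_t p_m^{\varphi_j^t}=-i\big[H^0_m+(v^{(N)}\star\rho_N^t)(x_m),\,p_m^{\varphi_j^t}\big]$ (a reformulation of the fermionic Hartree equations \eqref{mean-field_H0vN}), all one-body contributions $H_j^0$ cancel and one is left with
\be
\partial_t\alpha_{m^{(\gamma)}}(t)=i\,\bigSCP{\psi^t}{\Big[\,\sum_{i<j}v^{(N)}(x_i-x_j)-\sum_{m=1}^N\big(v^{(N)}\star\rho_N^t\big)(x_m)\,,\,\widehat{f}\,\Big]\psi^t}.
\ee
Since $\widehat{f}$ and $\ketbr{\psi^t}$ are permutation-symmetric, the right-hand side equals the commutator of $\widehat{f}$ with the representative operator $\binom{N}{2}v^{(N)}(x_1-x_2)-N\big(v^{(N)}\star\rho_N^t\big)(x_1)$, evaluated in the state $\psi^t$.

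Next I would insert $\id=p_1p_2+p_1q_2+q_1p_2+q_1q_2$ on each side of this operator and use the algebra of $\widehat{f}$, $P_{N,k}$ and the shifted operators $\widehat{f}_d$ from Definition~\ref{def:projectors} (namely: $\widehat{f}$ commutes with $p_1p_2$ and with $q_1q_2$, while the mixed projector pairs turn $\widehat{f}$ into shifted operators, so that differences $\widehat{f}-\widehat{f}_d$ appear). Up to the exchange term — negligible here, cf.\ Chapter~\ref{sec:exch_term} — the ``$p_1p_2\cdots p_1p_2$'' part of the interaction cancels against the mean-field term, leaving $\partial_t\alpha_{m^{(\gamma)}}$ organized into the three contributions anticipated in Chapter~\ref{sec:outline_proof}: a term carrying a single $q$, a ``swap'' term of the form $p_1q_2\cdots q_1p_2$, and a term carrying two $q$'s. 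Each is then bounded by Cauchy--Schwarz using, on the one hand, $\norm{q_1\psi^t}^2=\alpha_n\le\alpha_{m^{(\gamma)}}$ (since $m^{(\gamma)}\ge n$) and a weight-difference estimate $\norm{(\widehat{f}-\widehat{f}_d)\psi^t}\le C\alpha_{m^{(\gamma)}}+CN^{-\gamma}$ that relies on the explicit shape \eqref{weight_m_gamma} of $m^{(\gamma)}$; and on the other hand the five bounds of Assumption~\ref{ass:for_main_thm}.

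More concretely, the ``diagonalization trick'' $p_2\,v^{(N)}(x_1-x_2)^2\,p_2\le\big((v^{(N)})^2\star\rho_N^t\big)(x_1)\,\id$ turns \eqref{alpha_dot_m_ass_2} into a bound on the two-$q$ term, while the genuinely three-particle leftover is controlled by \eqref{alpha_dot_m_ass_3}; to allow interactions more singular than \eqref{alpha_dot_m_ass_2}--\eqref{alpha_dot_m_ass_3} can absorb, one splits $v^{(N)}=v^{(N)}\id_{\Omega_N}+v^{(N)}\id_{\RRR^3\setminus\Omega_N}$ and uses \eqref{alpha_dot_m_ass_4} for the short-range singular part (again via the diagonalization trick) together with the uniform bound \eqref{alpha_dot_m_ass_5} for the long-range remainder; \eqref{alpha_dot_m_ass_1} handles the remaining mean-field-type factors. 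Summing the contributions yields exactly $\delta=\min\{\gamma-\delta_2,\gamma+\tfrac{\delta_3}{2},\gamma+\delta_4\}$ and $C(t)$ of the stated form, and Gronwall's Lemma~\ref{lem:gronwall} finishes the argument.

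The main obstacle is the two-$q$ term in the presence of a \emph{singular} $v^{(N)}$: it is small neither by a power of $\alpha$ nor by the scaling prefactor alone, so it must be tamed by the $\Omega_N$-split combined with the diagonalization trick, and this is precisely where the heavier weighting of ``few excited particles'' built into $m^{(\gamma)}$ (as opposed to the linear weight $n$ of Theorem~\ref{thm:estimates_terms_alpha_dot_beta_n}) is essential, and where the separate assumptions \eqref{alpha_dot_m_ass_1}--\eqref{alpha_dot_m_ass_5} get consumed. A secondary, purely technical, difficulty is establishing the commutation and shift identities for $\widehat{f}$, $\widehat{f}_d$ and $P_{N,k}$ carefully enough (these are collected in Chapter~\ref{sec:notation}) that the bookkeeping of which term carries which power of $N$ is airtight.
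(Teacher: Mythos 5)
Your plan follows the same skeleton as the paper (differentiate $\alpha_{m^{(\gamma)}}$, cancel the $H^0_j$ contributions, insert $p/q$ decompositions and use the shift algebra of Lemma~\ref{lem:shift_Ps} to reduce to the three terms of Lemma~\ref{lem:alpha_derivative}, estimate each via Assumption~\ref{ass:for_main_thm}, then Gronwall), but the estimates as you state them would not produce the claimed $C(t)$ and $\delta$, and this is where the content of the theorem lies. First, your operator inequality $p_2(v^{(N)}_{12})^2p_2\le\big((v^{(N)})^2\star\rho_N^t\big)(x_1)\,\id$ throws away the factor $\tfrac{1}{N-1}$ that the paper extracts from the antisymmetry of $\psi^t$ in the remaining variables (Lemma~\ref{lem:projector_norms}, packaged in Lemmas~\ref{lem:psi_pvp_psi}--\ref{lem:psi_qppvvppq_psi}); without that extra $N^{-1}$ the two-$q$ estimates are off by a factor $\sqrt{N}$. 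Second, and more importantly, the weight-difference input cannot be the separated bounds $\norm{q_1\psi^t}^2=\alpha_n\le\alpha_{m^{(\gamma)}}$ and $\norm{(\widehat{f}-\widehat{f}_d)\psi^t}\le CN^{-\gamma}$ that you write down. What is actually needed — and what makes the choice of $m^{(\gamma)}$ pay off — are the joint estimates of Lemma~\ref{lem:q_root_f}, $\norm{q_1(\widehat{m}-\widehat{m}_{-d})^{1/2}\psi}^2\le C N^{-1}\alpha_{m^{(\gamma)}}$ and $\norm{q_1q_2(\widehat{m}-\widehat{m}_{-d})^{1/2}\psi}^2\le C N^{\gamma-2}\alpha_{m^{(\gamma)}}$, applied after splitting $\widehat{m}-\widehat{m}_{-d}$ into two square roots and moving one through the interaction sandwich by Lemma~\ref{lem:root_f_hat}. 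With your separated bounds the $qq$-$pq$ term comes out as $C\sqrt{D_1}\,N^{3(1-\gamma)/2}\alpha_{m^{(\gamma)}}$, which closes only for $\gamma=1$, i.e.\ it reproduces Theorem~\ref{thm:estimates_terms_alpha_dot_beta_n} but not the present theorem.

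The assignment of assumptions to terms is also off. In the derivative, the diagonal blocks ($p_1p_2\cdots p_1p_2$ and your ``swap'' block $p_1q_2\cdots q_1p_2$) drop simply because they commute with $\widehat{f}$ (zero shift), not through a cancellation against the mean field; and there is no exchange term to neglect, since \eqref{mean-field_H0vN} is the Hartree dynamics. The genuine interaction/mean-field cancellation is exploited only in the $qp$-$pp$ term: diagonalizing $p_2v^{(N)}_{12}p_2=\sum_i\lambda_i(x_1)\,p_2^{\chi_i^{x_1}}$ with $\sum_i\lambda_i=V_1^{\dir,(N)}$ turns $V_1^{\dir,(N)}-(N-1)p_2v^{(N)}_{12}p_2$ into $\sum_i\lambda_i\big(1-(N-1)p^{\chi_i^{x_1}}\big)$, which by antisymmetry supplies the extra projector $q_{\neq1}^{\chi_i^{x_1}}$; it is there — not in the two-$q$ terms — that the $\Omega_N$-split of the eigenvalues and the assumptions \eqref{alpha_dot_m_ass_4}, \eqref{alpha_dot_m_ass_5} are consumed (cf.\ \eqref{term_1_dir}), while the $qq$-$pp$ term uses \eqref{alpha_dot_m_ass_1} and \eqref{alpha_dot_m_ass_3} (after shifting one $q$ across by antisymmetry, at the price of an $O(N^{-1})$ diagonal boundary term) and the $qq$-$pq$ term uses \eqref{alpha_dot_m_ass_2}. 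Finally, since $v^{(N)}$ may change sign, one must first split $v^{(N)}=v^{(N)}_{+}-v^{(N)}_{-}$ and estimate each part separately. Without these ingredients the differential inequality with $\delta=\min\{\gamma-\delta_2,\gamma+\tfrac{\delta_3}{2},\gamma+\delta_4\}$ does not follow from your outline.
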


\noindent\textbf{Remarks.}
\begin{enumerate}
\setcounter{enumi}{\theremarks}
\item Similar to \eqref{main_dens_mat_ineq_n}, the bound \eqref{main_alpha_ineq_m} implies
\be
\norm[\tr]{\mu^{\psi^t}_1 - \mu^{\bigwedge \varphi_j^t}_1} \leq \sqrt{8}\, e^{\frac{1}{2}\int_0^t C(s) ds} \, \Big(\alpha_{m^{(\gamma)}}(0)\Big)^{\frac{1}{2}} + \left( 8 \left(e^{\int_0^t C(s) ds} - 1\right) \right)^{\frac{1}{2}} \, N^{-\frac{\delta}{2}}
\ee
and a similar bound for the Hilbert-Schmidt norm. However, in general it is not true that $\alpha_{m^{(\gamma)}}(0) \leq C \Big|\Big|\mu^{\psi^0}_1 - \mu^{\bigwedge \varphi_j^0}_1\Big|\Big|_{\tr}$.
\item We show in Theorem~\ref{thm:E_kin_only} for which situations Assumption~\ref{ass:for_main_thm} holds.
\end{enumerate}
\setcounter{remarks}{\theenumi}

\subsection{Main Results for $-\Delta$ and Interactions $|x|^{-s}$}\label{sec:main_theorem_mf_x-s}
In this section we explicitly consider the non-relativistic Schr\"odinger equation
\be\label{Schr_scaled_xs_main}
i \partial_t \psi^t(x_1,\ldots,x_N) = \left( \sum_{j=1}^N \left( -\Delta_{x_j} + w^{(N)}(x_j) \right) + N^{-\beta} \sum_{i<j} v(x_i-x_j) \right) \psi^t(x_1,\ldots,x_N),
\ee
and the corresponding fermionic Hartree equations
\be\label{hartree_scaled_xs_main}
i \partial_t \varphi_j^t(x) = \left( -\Delta + w^{(N)}(x) + N^{-\beta} \left(v \star \rho_N^t \right)(x) \right) \varphi_j^t(x),
\ee
for $j=1,\ldots,N$, where $\rho_N^t=\sum_{i=1}^N |\varphi_i^t|^2$. The results in this subsection are concerned with potentials $|x|^{-s}$ with $0<s<\frac{6}{5}$, sometimes with singularity weakened or cutoff, and the corresponding $\beta=1-\frac{s}{3}$, as discussed in Chapter~\ref{sec:mf_fermions_scaling_general}. For the following results we assume that the mean kinetic energy per particle is bounded by a constant, independent of $N$, i.e., for the total kinetic energy we have $E_{\kin,\mf}(t) = \sum_{i=1}^N ||\nabla \varphi_i^t||^2 \leq AN$.

First, let us state a result about the Coulomb potential that replaces condition \eqref{alpha_dot_n_ass_2} by other conditions, which again depend on properties of the solutions to the fermionic Hartree equations. (Note that $\norm[\infty]{A} = \sup_x |A(x)|$ for all multiplication operators $A$.)

\begin{corollary}\label{cor:estimates_terms_alpha_dot_Coulomb_rho_infty}
Let $v(x) = \pm |x|^{-1}$ and $\beta=\frac{2}{3}$. We assume that $\varphi_1^t,\ldots,\varphi_N^t$ are such that
\be
E_{\kin,\mf}(t) = \sum_{i=1}^N \norm{\nabla \varphi_i^t}^2 \leq AN,
\ee
\be\label{cond_rho_infty}
\bigg\lvert\bigg\lvert \, \sum_{i=1}^N|\varphi_i^t|^2 \, \bigg\rvert\bigg\rvert_{\infty} \leq D,
\ee
for some $A,D>0$ (independent of $N$) and all $t \in [0,T)$. Then assumption \eqref{alpha_dot_n_ass_2} from Theorem~\ref{thm:estimates_terms_alpha_dot_beta_n} holds. Therefore, there is a positive constant $C$ (independent of $N$), such that
\be\label{main_alpha_ineq_n_applied2}
\alpha_n(t) \leq e^{Ct} \, \alpha_n(0) + \left( e^{Ct} - 1 \right) N^{-1}.
\ee
\end{corollary}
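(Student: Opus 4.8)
The plan is to verify that hypothesis \eqref{alpha_dot_n_ass_2} of Theorem~\ref{thm:estimates_terms_alpha_dot_beta_n} is satisfied under the two conditions on the Hartree solutions, namely $E_{\kin,\mf}(t)\leq AN$ and $\norm[\infty]{\rho_N^t}\leq D$, and then simply invoke that theorem. So the whole content of the corollary is the bound
\be
\sup_{y\in\RRR^3}\Big(\big(v^{(N)}\big)^2\star\rho_N^t\Big)(y) = N^{-4/3}\sup_{y\in\RRR^3}\int_{\RRR^3}\frac{\rho_N^t(x)}{|x-y|^2}\,d^3x \leq D(t)\,N^{-1},
\ee
with $v^{(N)}(x)=N^{-2/3}|x|^{-1}$, $\beta=\tfrac23$. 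Equivalently, I need to show that $\big(|\cdot|^{-2}\star\rho_N^t\big)(y)$ is bounded by a constant times $N^{1/3}$, uniformly in $y$ and $t\in[0,T)$.

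The key step is a split of the convolution integral into a near region and a far region. Fix $y$ and a radius $R>0$ (to be optimized, and expected to scale like a constant times $N^{1/3}$). On the near ball $|x-y|\leq R$, use the $L^\infty$ bound $\rho_N^t(x)\leq D$ to get $\int_{|x-y|\leq R}|x-y|^{-2}\rho_N^t(x)\,d^3x \leq D\int_{|z|\leq R}|z|^{-2}\,d^3z = 4\pi D R$. On the far region $|x-y|>R$, bound $|x-y|^{-2}<R^{-2}$ and use the normalization $\int\rho_N^t = N$ to get $\int_{|x-y|>R}|x-y|^{-2}\rho_N^t(x)\,d^3x \leq N/R^2$. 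Adding the two and choosing $R\propto N^{1/3}$ (for instance $R=N^{1/3}$, giving $4\pi D N^{1/3}+N^{1/3}$, or optimizing in $R$) yields $\big(|\cdot|^{-2}\star\rho_N^t\big)(y)\leq C(D)\,N^{1/3}$ uniformly in $y$. Multiplying by $N^{-4/3}$ gives exactly \eqref{alpha_dot_n_ass_2} with $D(t)$ an explicit constant depending only on $D$; the kinetic energy assumption is then not even needed for the Coulomb bound itself, but it is the natural hypothesis under which $\norm[\infty]{\rho_N^t}\leq D$ is expected to hold and under which the scaling $\beta=\tfrac23$ is the correct one (cf.\ the heuristic \eqref{v2_heuristic}). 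Once \eqref{alpha_dot_n_ass_2} holds with an $N$-independent $D$, Theorem~\ref{thm:estimates_terms_alpha_dot_beta_n} gives \eqref{main_alpha_ineq_n} with $C(t)=24\sqrt{D(t)}$, which for time-independent $D$ is a constant $C$, yielding \eqref{main_alpha_ineq_n_applied2}.

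I do not expect a serious obstacle here: this is essentially the rigorous version of the heuristic computation \eqref{coulomb_gs}–\eqref{v2_heuristic}, and the only thing that matters is that the near–far split of $|x|^{-2}$ against a density that is simultaneously $L^1$ with mass $N$ and $L^\infty$ with a fixed bound produces a gain of exactly $N^{1/3}$, which is the amount needed to beat the $N^{-4/3}$ prefactor down to $N^{-1}$. One minor point to state carefully is that the constant $D$ in \eqref{cond_rho_infty} is assumed uniform in $t\in[0,T)$, so the resulting $D(t)$ and hence $C$ are likewise uniform, which is what makes the exponent $e^{Ct}$ in \eqref{main_alpha_ineq_n_applied2} have an $N$-independent rate. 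If one wanted, the same near–far argument applied to $|x|^{-s}$ for $0<s<\tfrac65$ with $\beta=1-\tfrac s3$ reproduces the more general Lemma-type scaling statements, but for the corollary only the Coulomb case $s=1$ is needed.
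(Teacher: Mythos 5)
Your proposal is correct and follows essentially the same route as the paper: a near--far split of $\big(|\cdot|^{-2}\star\rho_N^t\big)(y)$ at radius $R\propto N^{\frac{1}{3}}$, bounding the near part by $\norm[\infty]{\rho_N^t}\leq D$ and the far part by $\norm[1]{\rho_N^t}=N$ times $R^{-2}$, which gives $\big(\big(v^{(N)}\big)^2\star\rho_N^t\big)(y)\leq CN^{-1}$ and hence condition \eqref{alpha_dot_n_ass_2}, after which Theorem~\ref{thm:estimates_terms_alpha_dot_beta_n} yields \eqref{main_alpha_ineq_n_applied2}. Your side remark that the kinetic-energy hypothesis is not actually used in this particular estimate is also consistent with the paper's own proof.
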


Let us now come to the main result of this section, where we state for which interactions the mean-field approximation holds under the only condition that $E_{\kin,\mf}(t) = \sum_{i=1}^N \norm{\nabla \varphi_i^t}^2 \leq AN$. Note that $E_{\kin,\mf}(t) \leq AN$ is basically just a condition on the initial states $\varphi_1^0, \ldots, \varphi_N^0$ and the external field. Recall that the fermionic Hartree time evolution conserves the total Hartree energy. For repulsive interactions, it is therefore expected that, for nice enough external fields, including, e.g., external Coulomb fields generated by nuclei with some $N$-independent distances to each other, the kinetic energy at all times $t$ is bounded by $AN$, if it is initially bounded by $CN$. A blowup of solutions is only expected to happen for strong attractive interactions (e.g., for gravitating fermions), see, e.g., \cite{froehlich:2007,hainzl:2009,hainzl:2010}. There are several works about solution theory to the Hartree(-Fock) equations \cite{bove:1974, chadam:1975, chadam:1976, bove:1976}; however, estimates of Sobolev norms with explicit $N$-dependence are rare.

We can treat interactions with weak enough singularities ($|x|^{-s}$, with $0<s<\frac{3}{5}$), with singularity cut off (and long-range behavior like $|x|^{-s}$, with $0<s<\frac{6}{5}$), and with long-range behavior like $|x|^{-1}$ but weaker singularity.

\begin{theorem}\label{thm:E_kin_only}
Let $\psi^t \in L^2(\RRR^{3N})$ be a solution to the Schr\"odinger equation \eqref{Schr_scaled_xs_main} with antisymmetric initial condition $\psi^0 \in L^2(\RRR^{3N})$. Let $\varphi_1^t,\ldots,\varphi_N^t \in L^2(\RRR^3)$ be solutions to the fermionic Hartree equations \eqref{hartree_scaled_xs_main} with orthonormal initial conditions $\varphi_1^0,\ldots,\varphi_N^0 \in L^2(\RRR^3)$, and with
\be
E_{\kin,\mf}(t) = \sum_{i=1}^N \norm{\nabla \varphi_i^t}^2 \leq AN
\ee
for some $A>0$ and all $t>0$. Then there is a positive constant $C$, such that
\begin{itemize}
\item \textbf{for interactions} 
\be
v(x) = \pm |x|^{-s}, \text{with } 0<s<\frac{3}{5} \text{ and } \beta=1-\frac{s}{3},
\ee
\textbf{we have}
\be\label{main_alpha_ineq_n_applied1}
\alpha_n(t) \leq e^{Ct} \, \alpha_n(0) + \left( e^{Ct} - 1 \right) N^{-1},
\ee
with $C \propto A^{\frac{s}{2}}$;
\item \textbf{for interactions}
\begin{align}
&v = \pm v_{s,\varepsilon} \in L^{\infty} \text{ with } 0 \leq v_{s,\varepsilon}(x) \left\{\begin{array}{cl} \leq |x|^{-s} &, \, \text{for } |x|\leq \varepsilon \\ =|x|^{-s} &, \, \text{for } |x|>\varepsilon , \end{array}\right., \text{with } \varepsilon>0, \nonumber \\
&\text{with } 0<s<\frac{6}{5} \text{ and } \beta = 1 - \frac{s}{3},
\end{align}
\textbf{we have}
\be\label{main_alpha_ineq_m_applied1}
\alpha_{m^{(\gamma)}}(t) \leq e^{Ct} \, \alpha_{m^{(\gamma)}}(0) + \left( e^{Ct} - 1 \right) N^{-\gamma},
\ee
for all $0 < \gamma \leq 1-\frac{2s}{3}$;
\item \textbf{for interactions}
\be
v(x) = \pm \left\{\begin{array}{cl} |x|^{-s} &, \, \text{for } |x|\leq 1 \\ |x|^{-1} &, \, \text{for } |x|>1 , \end{array}\right., \text{with } 0<s<\frac{1}{3} \text{ and } \beta = \frac{2}{3},
\ee
\textbf{we have}
\be\label{main_alpha_ineq_m_applied2}
\alpha_{m^{(\gamma)}}(t) \leq e^{Ct} \, \alpha_{m^{(\gamma)}}(0) + \left( e^{Ct} - 1 \right) N^{-\gamma},
\ee
for all $0<\gamma\leq \frac{1}{3} - \frac{4s}{9-15s}$.
\end{itemize}
\end{theorem}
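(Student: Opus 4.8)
Each of the three bullets is obtained by verifying, for the family of potentials in question, the hypotheses of one of the two general theorems of the previous subsection and then quoting that theorem verbatim: the first bullet reduces to checking the single condition \eqref{alpha_dot_n_ass_2} of Theorem~\ref{thm:estimates_terms_alpha_dot_beta_n}, the second and third to checking the five conditions \eqref{alpha_dot_m_ass_1}--\eqref{alpha_dot_m_ass_5} of Assumption~\ref{ass:for_main_thm} with exponents $\delta_2,\delta_3,\delta_4$ whose combination in $\delta=\min\{\gamma-\delta_2,\gamma+\tfrac{\delta_3}{2},\gamma+\delta_4\}$ comes out $\geq\gamma$. The only genuine input about the many-body state is the kinetic energy bound $E_{\kin,\mf}(t)=\sum_{i=1}^N\norm{\nabla\varphi_i^t}^2\le AN$, and the whole argument runs off it: the Lieb--Thirring kinetic-energy inequality for orthonormal families gives $\int(\rho_N^t)^{5/3}\le C\,E_{\kin,\mf}(t)\le CAN$, hence $\norm[L^{5/3}]{\rho_N^t}\le C(AN)^{3/5}$, and interpolating against $\norm[L^1]{\rho_N^t}=N$ yields $\norm[L^p]{\rho_N^t}\le C\,A^{s/2}\,N^{1-s/3}$ at the Lebesgue exponent $p=\tfrac{3}{3-s}$, which lies strictly between $1$ and $\tfrac53$ exactly when $s<\tfrac65$. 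Lemma~\ref{lem:scaling_x-s} then gives \eqref{alpha_dot_m_ass_1} for free ($|x|^{-s}\star\rho_N^t=O(N^{1-s/3})$, so after the prefactor $N^{-\beta}$ with $\beta=1-\tfrac s3$ the mean field is $O(1)$), and Lemma~\ref{lem:Omega} supplies the volume information used below to localise the singular part of $v$.

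\textbf{First bullet.} With $v^{(N)}=N^{-\beta}(\pm|x|^{-s})$, $\beta=1-\tfrac s3$, condition \eqref{alpha_dot_n_ass_2} is equivalent to $\sup_y\big(|x|^{-2s}\star\rho_N^t\big)(y)\le D\,N^{1-2s/3}$. I would split the kernel as $|x|^{-2s}=|x|^{-2s}\id_{\{|x|\le R\}}+|x|^{-2s}\id_{\{|x|>R\}}$, estimate the near part by H\"older against $\rho_N^t\in L^{5/3}$ using $\norm[L^{5/2}]{\,|x|^{-2s}\id_{\{|x|\le R\}}}\le C\,R^{(6-10s)/5}$ --- finite precisely because $5s<3$, which is exactly where the restriction $s<\tfrac35$ enters --- and the far part crudely by $R^{-2s}\norm[L^1]{\rho_N^t}=R^{-2s}N$. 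Choosing $R\sim A^{-1/2}N^{1/3}$ balances the two contributions at $C\,A^{s}\,N^{1-2s/3}$, so \eqref{alpha_dot_n_ass_2} holds with $D(t)\propto A^{s}$, hence $C(t)=24\sqrt{D(t)}\propto A^{s/2}$, and Theorem~\ref{thm:estimates_terms_alpha_dot_beta_n} yields \eqref{main_alpha_ineq_n_applied1}.

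\textbf{Second and third bullets.} All five conditions of Assumption~\ref{ass:for_main_thm} must hold. Conditions \eqref{alpha_dot_m_ass_1}--\eqref{alpha_dot_m_ass_2} follow as above (Lemma~\ref{lem:scaling_x-s}, resp.\ the same kernel split applied to $(v^{(N)})^2$, now exploiting the $L^\infty$-bound or the weakened singularity inside the cutoff radius). Condition \eqref{alpha_dot_m_ass_3} is the one that fixes the convergence rate: I would bound the unscaled double integral by Hardy--Littlewood--Sobolev, $\iint\rho_N^t(x)\,|x-y|^{-2s}\,\rho_N^t(y)\,d^3x\,d^3y\le C\,\norm[L^p]{\rho_N^t}^2$ with $p=\tfrac{3}{3-s}$, and insert $\norm[L^p]{\rho_N^t}\le C A^{s/2}N^{1-s/3}$; together with the prefactor $N^{-2\beta}$ this makes the left-hand side of \eqref{alpha_dot_m_ass_3} of order $N^0$, i.e.\ $\delta_2=0$ for the cut-off potential. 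Conditions \eqref{alpha_dot_m_ass_4}--\eqref{alpha_dot_m_ass_5} are then arranged by the choice of the volume $\Omega_N$: in the cut-off case $\Omega_N$ is taken to be a ball of radius $\sim\varepsilon$ (or a slowly shrinking ball $B_{N^{-c}}$ when $s$ is close to $\tfrac65$, so that $\delta_3\ge0$ in \eqref{alpha_dot_m_ass_4}), on which $(v^{(N)})^2$ is merely bounded, while outside it $|v^{(N)}|\le N^{-\beta}\varepsilon^{-s}$; the demand \eqref{alpha_dot_m_ass_5}, $N^{-\beta}\le D_4 N^{-\frac12-\frac\gamma2}$, reads $\gamma\le1-\tfrac{2s}{3}$, which is exactly the stated range. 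For the third (Coulomb-tail, weak-singularity) potential I would first split $v$ into its $|x|^{-s}$ part on $\{|x|\le1\}$ and its $|x|^{-1}$ tail on $\{|x|>1\}$, take $\Omega_N=B_{N^{-c}}$ with $c>0$ chosen to absorb the worst of the singularity in \eqref{alpha_dot_m_ass_4} while keeping $|v^{(N)}|$ off $\Omega_N$ (which is $\le N^{-\beta}N^{cs}$) small enough for \eqref{alpha_dot_m_ass_5}; optimising the resulting exponents over $c$ is exactly what produces the constraint $\gamma\le\tfrac13-\tfrac{4s}{9-15s}$. In both cases Theorem~\ref{thm:estimates_terms_alpha_dot_beta_general} then gives \eqref{main_alpha_ineq_m_applied1}, resp.\ \eqref{main_alpha_ineq_m_applied2}.

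\textbf{Main obstacle.} There is no single hard idea; the difficulty is entirely the bookkeeping of powers of $N$. One must choose the kernel-splitting radius $R$ and the localisation volume $\Omega_N$ --- both as powers of $N$, and of $\varepsilon$, $A$ and $s$ --- so that \eqref{alpha_dot_m_ass_2}--\eqref{alpha_dot_m_ass_5} hold simultaneously with $\delta_2\le0$, $\delta_3,\delta_4\ge0$, and so that the constant $C(t)$ produced by Theorem~\ref{thm:estimates_terms_alpha_dot_beta_general} carries no hidden power of $N$. Pinning down the sharp admissible ranges of $s$ and $\gamma$ --- and in particular the awkward $\tfrac13-\tfrac{4s}{9-15s}$ of the third bullet --- is where the care lies; the outcome is sensitive to exactly which Lebesgue exponent $\rho_N^t$ is paired with in each H\"older/Hardy--Littlewood--Sobolev step, whereas everything else (Lieb--Thirring, interpolation, and the scaling and volume Lemmas~\ref{lem:scaling_x-s} and~\ref{lem:Omega}) is essentially off the shelf.
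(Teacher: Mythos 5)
Your plan is correct and follows essentially the same route as the paper: all three bullets are reduced to the hypotheses of Theorems~\ref{thm:estimates_terms_alpha_dot_beta_n} and \ref{thm:estimates_terms_alpha_dot_beta_general}, verified from $E_{\kin,\mf}(t)\le AN$ via Lieb--Thirring, H\"older/interpolation of $\rho_N^t$ between $L^1$ and $L^{5/3}$, Hardy--Littlewood--Sobolev for \eqref{alpha_dot_m_ass_3}, and a shrinking ball $\Omega_N=B_{N^{-\tilde\delta}}$ with $\tilde\delta=\frac{2}{9-15s}$ whose optimisation yields $\gamma\le\frac13-\frac{4s}{9-15s}$ in the third bullet; your first-bullet kernel split is just an inlined re-derivation of Lemma~\ref{lem:scaling_x-s} at exponent $2s$ (the paper simply cites that lemma, using $v_s^2=v_{2s}$ and $2s<\frac65$), and your constant bookkeeping $D\propto A^{s}$, $C\propto A^{s/2}$ agrees with the paper.

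The one place you diverge is the second bullet's choice of $\Omega_N$, and there your primary prescription has a gap. The paper simply takes $\Omega_N=\emptyset$: then \eqref{alpha_dot_m_ass_4} is vacuous and \eqref{alpha_dot_m_ass_5} follows from the global bound $\lvert v^{(N)}\rvert\le N^{-\beta}\norm[\infty]{v_{s,\varepsilon}}$, giving $\gamma\le1-\frac{2s}{3}$ directly. Your fixed ball of radius $\sim\varepsilon$ does not verify \eqref{alpha_dot_m_ass_4} on the whole claimed range: with $(v^{(N)})^2\le CN^{-2\beta}$ on $B_\varepsilon$ and $\int_{B_\varepsilon(y)}\rho_N^t\le C(AN)^{3/5}\varepsilon^{6/5}$ one only gets $CN^{-2+\frac{2s}{3}+\frac35}$, which is $\le D_3N^{-1}$ (i.e.\ $\delta_3\ge0$) precisely when $s\le\frac35$; so the shrinking ball you mention parenthetically is in fact needed for \emph{all} $s\in(\frac35,\frac65)$, not only for $s$ close to $\frac65$. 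With that correction (or, more cleanly, with $\Omega_N=\emptyset$) your argument closes, and the remaining verifications --- \eqref{alpha_dot_m_ass_2} via $\norm[\infty]{v_{s,\varepsilon}}$ or Hardy for the Coulomb tail, and \eqref{alpha_dot_m_ass_3} via HLS at $p=\frac{3}{3-s}$ with $\norm[p]{\rho_N^t}\le CA^{s/2}N^{1-s/3}$, giving $\delta_2=0$ --- match the paper's proof step for step.
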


Finally, let us show that, when the Coulomb interaction is scaled with $N^{-1}$, then, for systems with initial total kinetic energy bounded by $AN$, the dynamics is free.

\begin{proposition}\label{pro:coulombN1}
Let $\varphi_1^t,\ldots,\varphi_N^t \in L^2(\RRR^3)$ be solutions to the free equations
\be\label{free_equations}
i\partial_t \varphi_j^t(x) = -\Delta \varphi_j^t(x)
\ee
for $j=1,\ldots,N$, with orthonormal initial conditions $\varphi_1^0,\ldots,\varphi_N^0 \in L^2(\RRR^3)$ with
\be\label{condition_E_kin_0}
E_{\kin,\mf}(0) = \sum_{i=1}^N \norm{\nabla \varphi_i^0}^2 \leq AN
\ee
for some $A>0$. Let $\psi^t \in L^2(\RRR^{3N})$ be a solution to the Schr\"odinger equation \eqref{Schr_scaled_xs_main} with $\beta=1$, $v(x)=\pm |x|^{-1}$ and $w^{(N)}(x)=0$, and with antisymmetric initial condition $\psi^0 \in L^2(\RRR^{3N})$. Then, for $\frac{1}{3} < \gamma < 1$, there is a positive constant $C$ such that for all $t\geq 0$,
\be\label{main_alpha_ineq_coulombN1_gamma}
\alpha_{m^{(\gamma)}}(t) \leq \alpha_{m^{(\gamma)}}(0) +Ct \, N^{-\delta},
\ee
where $0<\delta = \min\left\{ \frac{\gamma}{2}-\frac{1}{6}, -\frac{\gamma}{2}+\frac{1}{2} \right\}$. In particular, for $\gamma=\frac{2}{3}$ we have the maximal convergence rate $\delta=\frac{1}{6}$.
\end{proposition}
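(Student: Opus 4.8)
The plan is to run the $\alpha$-method of Chapter~\ref{sec:outline_proof} in the degenerate case where the mean field in the orbital equation is absent. Since $\varphi_1^t,\dots,\varphi_N^t$ solve the free equation \eqref{free_equations}, the projectors $p_m^{\varphi_j^t}$, and hence $\widehat{m^{(\gamma)}}=\widehat{m^{(\gamma)}}^{\,\varphi_1^t,\dots,\varphi_N^t}$, evolve by the free one-particle flow, $i\partial_t p_m^{\varphi_j^t}=[-\Delta_m,p_m^{\varphi_j^t}]$, so that $i\partial_t\widehat{m^{(\gamma)}}=[\sum_{m=1}^N(-\Delta_m),\widehat{m^{(\gamma)}}]$. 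Combining this with the Schr\"odinger equation \eqref{Schr_scaled_xs_main} (here $w^{(N)}\equiv 0$, $\beta=1$) the free parts cancel, and by the antisymmetry of $\psi^t$,
\[
\partial_t\alpha_{m^{(\gamma)}}(t)=-i\,\bigSCP{\psi^t}{\Big[\widehat{m^{(\gamma)}},\,N^{-1}\!\!\sum_{i<j}|x_i-x_j|^{-1}\Big]\psi^t}=-\tfrac{i(N-1)}{2}\,\bigSCP{\psi^t}{\big[\widehat{m^{(\gamma)}},\,v^{(N)}(x_1-x_2)\big]\psi^t},
\]
with $v^{(N)}(x)=N^{-1}|x|^{-1}$. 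Equivalently, passing to the interaction picture $\tilde\psi^t=e^{i\sum_m(-\Delta_m)t}\psi^t$ one sees that $\alpha_{m^{(\gamma)}}(t)$ is the counting functional of $\tilde\psi^t$ against the \emph{fixed} orbitals $\varphi_1^0,\dots,\varphi_N^0$, and that only the unitarily conjugated pair interaction enters its derivative — which is the structural reason why no Gronwall feedback term should appear here.

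Next I would record the a priori bounds on the density $\rho_N^t=\sum_i|\varphi_i^t|^2$. The free flow conserves each $\|\nabla\varphi_j^t\|=\|\nabla\varphi_j^0\|$, so $E_{\kin,\mf}(t)=\sum_i\|\nabla\varphi_i^t\|^2\le AN$ for all $t$. Hardy's inequality then gives $\big(|\cdot|^{-2}\star\rho_N^t\big)(y)\le 4\sum_i\|\nabla\varphi_i^t\|^2\le 4AN$, hence $\big((v^{(N)})^2\star\rho_N^t\big)(y)\le 4AN^{-1}$ uniformly in $y$ and $t$; in particular condition \eqref{alpha_dot_n_ass_2} holds. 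Moreover, since $\|\rho_N^t\|_1=N$ and, by Lieb–Thirring, $\|\rho_N^t\|_{5/3}\le CN^{3/5}$, splitting $|x|^{-1}$ at radius $\sim N^{1/3}$ and using H\"older yields $\|\,|\cdot|^{-1}\star\rho_N^t\,\|_\infty\le CN^{2/3}$, i.e.\ $\|v^{(N)}\star\rho_N^t\|_\infty\le CN^{-1/3}$ (this is Lemma~\ref{lem:scaling_x-s}). These two estimates, valid uniformly in $N$ and $t\ge 0$, are the only inputs needed.

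Finally I would estimate the commutator by the standard $p$–$q$ expansion: insert $\id=(p_1+q_1)(p_2+q_2)$ on both sides of $v^{(N)}(x_1-x_2)$ and expand. The two ``diagonal'' contributions $p_1p_2(\cdots)p_1p_2$ and $q_1q_2(\cdots)q_1q_2$ cancel in the commutator, since conjugating $v^{(N)}(x_1-x_2)$ past $\widehat{m^{(\gamma)}}$ does not change the $q$-count in these cases. For the remaining ``one particle leaves'' and ``two particles leave'' terms I would use, for the two-particle ones, $\|v^{(N)}(x_1-x_2)p_1\|_{\op}=\|p_1(v^{(N)})^2(x_1-x_2)p_1\|_{\op}^{1/2}\le\big(\|(v^{(N)})^2\star\rho_N^t\|_\infty\big)^{1/2}\le 2\sqrt A\,N^{-1/2}$, together with $\|\widehat{m^{(\gamma)}}-\widehat{m^{(\gamma)}}_d\|_{\op}\le|d|\,N^{-\gamma}$ and the support property of $m^{(\gamma)}$ from \eqref{weight_m_gamma}; for the one-particle ones I would combine the same operator-norm bound with $\|v^{(N)}\star\rho_N^t\|_\infty\le CN^{-1/3}$. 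The point peculiar to $\beta=1$ is that both factors of $v^{(N)}$ that occur carry an extra $N^{-1/3}$ compared with what the kinetic bound alone would force, so after multiplying by the $\binom N2$ pairs every term is already a genuinely negative power of $N$ and the leftover factors $\|q_j\psi^t\|\le 1$ may simply be dropped — no $\alpha_{m^{(\gamma)}}(t)$ survives on the right. Tracking the powers, the one-particle terms contribute $O(N^{-(\gamma/2-1/6)})$ and the two-particle terms $O(N^{-(1/2-\gamma/2)})$, so $\partial_t\alpha_{m^{(\gamma)}}(t)\le CN^{-\delta}$ with $\delta=\min\{\gamma/2-1/6,\,1/2-\gamma/2\}$, positive exactly for $1/3<\gamma<1$; integrating over $[0,t]$ gives \eqref{main_alpha_ineq_coulombN1_gamma}, and $\delta=\tfrac16$ at $\gamma=\tfrac23$ is the optimum. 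The main obstacle is this last bookkeeping step: organizing the $p$–$q$ expansion (with the shift operators $\widehat{m^{(\gamma)}}_d$ and the combinatorics of the $N$ orbitals) so that the exponents land precisely on $\delta=\min\{\gamma/2-1/6,\,1/2-\gamma/2\}$ and, above all, confirming that the weakness of the $N^{-1}$-scaled Coulomb potential genuinely lets one close the estimate with no $\alpha$-dependence on the right-hand side.
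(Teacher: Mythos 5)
Your overall strategy is the paper's: conservation of kinetic energy under the free flow, the a priori bounds $\sup_y\big((v^{(N)})^2\star\rho_N^t\big)(y)\leq CN^{-1}$ (Hardy) and $\sup_y\big(|v^{(N)}|\star\rho_N^t\big)(y)\leq CN^{-1/3}$ (Lemma~\ref{lem:scaling_x-s}), and the observation that because every term in $\partial_t\alpha_{m^{(\gamma)}}$ already carries a negative power of $N$ one may use $\alpha_{m^{(\gamma)}}\leq 1$ and integrate instead of invoking Gronwall. However, the step where you actually produce the exponents has a genuine gap: the ingredients you name — the operator norms $\lVert v^{(N)}_{12}p_1\rVert_{\op}\leq CN^{-1/2}$ and $\lVert\widehat{m^{(\gamma)}}-\widehat{m^{(\gamma)}}_{d}\rVert_{\op}\leq |d|N^{-\gamma}$, followed by ``drop $\lVert q_j\psi^t\rVert\leq 1$'' — do not yield $O(N^{-\delta})$. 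For instance, for the $qq$-$pq$ term this recipe gives at best
\begin{equation*}
2N(N-1)\,\lVert\widehat{m^{(\gamma)}}-\widehat{m^{(\gamma)}}_{-1}\rVert_{\op}\,\lVert q_1q_2\psi^t\rVert\,\lVert v^{(N)}_{12}p_1\rVert_{\op}\,\lVert q_2\psi^t\rVert \;\leq\; C\,N^{3/2-\gamma},
\end{equation*}
which grows in $N$ for every $\gamma<1$; the same happens for the $qp$-$pp$ and $qq$-$pp$ terms. The claimed rates $N^{-(\gamma/2-1/6)}$ and $N^{-(1/2-\gamma/2)}$ are therefore asserted rather than derived.

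What is missing is precisely the set of antisymmetry-based gains that the paper's Lemma~\ref{lem:estimates_terms_alpha_dot_beta} is built on. First, on (partially) antisymmetric functions a sea projector costs a factor $N^{-1}$: $\SCP{\chi}{p_1 h_{12}p_1\chi}\leq\frac{1}{N-1}\sup_y(h\star\rho_N)(y)\,\lVert\chi\rVert^2$ (Lemmas~\ref{lem:projector_norms} and \ref{lem:psi_pvp_psi}); the plain operator norm you use, $\lVert p_1h_{12}p_1\rVert_{\op}\leq\sup_y(h\star\rho_N)(y)$, is weaker by exactly the factor $N$ your counting is off by. Second, the weight difference must not be separated from the $q$'s by operator norms: one needs Lemma~\ref{lem:root_f_hat} to split $\widehat{m^{(\gamma)}}-\widehat{m^{(\gamma)}}_{-d}$ over both sides of the scalar product and then Lemma~\ref{lem:q_root_f}, i.e.\ $\lVert q_1(\widehat{m^{(\gamma)}}-\widehat{m^{(\gamma)}}_{-d})^{1/2}\psi\rVert^2\leq CN^{-1}\alpha_{m^{(\gamma)}}$ and $\lVert q_1q_2(\cdot)^{1/2}\psi\rVert^2\leq CN^{\gamma-2}\alpha_{m^{(\gamma)}}$, rather than $\lVert q_j\psi\rVert\leq 1$. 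With these two inputs the three terms are bounded exactly as in \eqref{term_1_dir_2}, \eqref{term_2_alt} (or \eqref{term_2}) and \eqref{term_3}, giving $CN^{1/6-\gamma/2}+CN^{-1/3}+CN^{(\gamma-1)/2}$ (the paper also uses Hardy--Littlewood--Sobolev for the $qq$-$pp$ term, though with \eqref{term_2} one can avoid it), and only then does integration in $t$ produce \eqref{main_alpha_ineq_coulombN1_gamma} with $\delta=\min\{\gamma/2-1/6,\,1/2-\gamma/2\}$.
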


\noindent\textbf{Remarks.}
\begin{enumerate}
\setcounter{enumi}{\theremarks}
\item The proposition also holds with external field $w^{(N)}(x)$ that is such that it preserves the bound $\sum_{i=1}^N ||\nabla \varphi_i^t||^2 \leq AN$ for all times.
\item Note that for $\beta=1$, $v(x)=|x|^{-1}$ and $w^{(N)}(x)$ that preserve the bound $\sum_{i=1}^N ||\nabla \varphi_i^t||^2 \leq AN$ for all times, the condition \eqref{alpha_dot_n_ass_2} from Theorem~\ref{thm:estimates_terms_alpha_dot_beta_n} holds due to Hardy's inequality and energy conservation. Therefore, in this case we can deduce the bound
\be
\alpha_n(t) \leq e^{Ct} \, \alpha_n(0) + \left( e^{Ct} - 1 \right) N^{-1},
\ee
which gives a better convergence rate than in Proposition~\ref{pro:coulombN1}, but exponential growth in time. However, as explained in Remark~\ref{itm:coulomb_nice}, for nice initial data, one would expect that
\be
\Big(v^2\star\rho_N^t\Big)(y) \leq D \, N^{\frac{1}{3}},
\ee
which implies the bound
\be
\alpha_n(t) \leq \exp\left( C N^{-\frac{1}{3}} t \right) \, \big( \alpha_n(0) + N^{-1}\big),
\ee
and therefore in particular
\be
\alpha_n(t) \leq \alpha_n(0) + Ct \, N^{-\frac{1}{3}}.
\ee
\end{enumerate}
\setcounter{remarks}{\theenumi}

\section{Outline of the Proof}\label{sec:outline_proof}
The proofs of the main results from Chapters \ref{sec:dens_mat_summary} and \ref{sec:main_theorem_mf} are given in Chapters~\ref{sec:notation}--\ref{sec:mean-field_scalings_general}. In Chapter~\ref{sec:notation}, we establish some notation, state inequalities we often use during the proofs, and explain in more detail properties of the projectors from Definition~\ref{def:projectors}. In Chapter~\ref{sec:density_matrices}, the proofs of the lemmas about the convergence of reduced density matrices are given. These proofs were already outlined in Chapter~\ref{sec:dens_mat_summary}. Then, in Chapter~\ref{sec:alpha_dot_and_general_lemmas}, we prove the theorems for general free Hamiltonians $H^0$ and interactions $v^{(N)}$ from Chapter~\ref{sec:main_theorem_mf_general_v}, and in Chapter~\ref{sec:mean-field_scalings_general}, we prove the results for $H^0=-\Delta$ and interaction potentials $|x|^{-s}$ from Chapter~\ref{sec:main_theorem_mf_x-s}. Let us now outline these proofs of our main results.

The general strategy of the proof is the following. First, we calculate the time derivative of $\alpha_f(t) = \alpha_f(\psi^t,\varphi_1^t,\ldots,\varphi_N^t)$, where $\psi^t$ is a solution to the Schr\"odinger equation and $\varphi_1^t,\ldots,\varphi_N^t$ are solutions to the fermionic Hartree equations. (This is a simple, straightforward calculation.) Second, we bound the time derivative by terms proportional to $\alpha_f(t)$ or $N^{-\delta}$ for some $\delta>0$, i.e.,
\be
\partial_t \alpha_f(t) \leq C(t) \left(\alpha_f(t) +N^{-\delta} \right).
\ee
Then we use the Gronwall Lemma (which we state as Lemma~\ref{lem:gronwall}) to conclude that
\be
\alpha_f(t) \leq e^{\int_0^t C(s) ds} \, \alpha_f(0) + \left( e^{\int_0^t C(s) ds} - 1 \right) N^{-\delta},
\ee
which is the desired bound.

\textbf{Outline for Theorem~\ref{thm:estimates_terms_alpha_dot_beta_n}.} Let us start with the most simple case where we use $\alpha_n(t)$, i.e., the $\alpha$-functional with the weight function $n(k)=\frac{k}{N}$, as considered in Theorem~\ref{thm:estimates_terms_alpha_dot_beta_n}. The first step is to calculate $\partial_t \alpha_n(t)$. This is done in Chapter~\ref{sec:alpha_dot}, for general weight functions $f(k)$. In the case of $\alpha_n(t)$ this is a very simple calculation, due to the identity $\alpha_n(t) = \SCP{\psi^t}{q_1\psi^t}$. Recall here our loose notation: the projectors $p_1$ and $q_1$ are always time dependent, since they are build with the time dependent solutions to the fermionic Hartree equations $\varphi_1^t,\ldots,\varphi_N^t$. In fact, $q_1$ solves the Heisenberg equation of motion $i\partial_t q_1 = [H_1^{\mf},q_1]$, where $H_1^{\mf}=H_1^0 + V_1^{(N)}$ is the ``mean-field Hamiltonian'', acting on the first variable (the variable that $q_1$ depends on), and $[a,b]=ab-ba$ is the commutator. The wave function $\psi^t$ solves the Schr\"odinger equation $i\partial_t \psi^t = H \psi^t$ with $H = \sum_{j=1}^N H_j^0 + \sum_{i<j} v^{(N)}_{ij}$. We thus find
\begin{align}\label{summary_dt_alpha}
\partial_t \alpha_n(t) &= \partial_t \SCP{\psi^t}{q_1\psi^t} \nonumber \\
&= \SCP{(\partial_t \psi^t)}{q_1\psi^t} + \SCP{\psi^t}{(\partial_t q_1)\psi^t} + \SCP{\psi^t}{q_1(\partial_t \psi^t)} \nonumber \\
&= i \SCP{H\psi^t}{q_1\psi^t} - i \SCP{\psi^t}{[H_1^{\mf}, q_1]\psi^t} - i \SCP{\psi^t}{q_1 H\psi^t} \nonumber \\
&= i \SCP{\psi^t}{\big[H-H_1^{\mf},q_1\big]\psi^t} \nonumber \\
&= i \SCP{\psi^t}{\left[\sum_{j} H_j^0 + \sum_{i<j} v^{(N)}_{ij} - H_1^0 - V_1^{(N)},q_1\right]\psi^t} \nonumber \\
&= i \SCP{\psi^t}{\left[\sum_{j \geq 2} v^{(N)}_{1j} - V_1^{(N)},q_1\right]\psi^t} \nonumber \\
&= i \SCP{\psi^t}{\left[ (N-1) v^{(N)}_{12} - V_1^{(N)},q_1\right]\psi^t},
\end{align}
where we used that $[h_j,q_1]=0 \,\forall j\geq 2$ for all operators $h_j$ that act only on the $j$-th variable, and in the last step we used the antisymmetry of $\psi^t$. Note that the kinetic and external field terms coming from the Schr\"odinger and the fermionic Hartree equations cancel. This is the reason why Theorem~\ref{thm:estimates_terms_alpha_dot_beta_n} (and also \ref{thm:estimates_terms_alpha_dot_beta_general}) holds for any $H_j^0$. We can simplify \eqref{summary_dt_alpha} by inserting two identities $1=p_1+q_1$ and $1=p_2+q_2$ in front of each $\psi^t$. Due to the commutator structure, $v_{12}=v_{21}$ (i.e., $v(x_1-x_2) = v(x_2-x_1)$) and $p_1q_1=0=p_2q_2$, only three summands remain, such that we find
\begin{align}\label{alpha_derivative_n_gen}
\partial_t \alpha_n(t) &= 2 \, \Im\, \bigSCP{\psi^t}{q_1\bigg( (N-1)p_2v^{(N)}_{12}p_2 - V^{(N)}_1 \bigg) p_1 \psi^t} \nonumber \\
&\quad + 2 \, \Im\, \bigSCP{\psi^t}{q_1q_2 \, (N-1)v^{(N)}_{12} \, p_1p_2 \psi^t} \nonumber \\
&\quad + 2 \, \Im\, \bigSCP{\psi^t}{q_1q_2 (N-1)v^{(N)}_{12} p_1q_2 \psi^t}.
\end{align}
Note that the second and third term on the right-hand side of \eqref{alpha_derivative_n_gen} do not depend on the mean-field $V_1$ at all; it is only the first term where the mean-field makes the contribution coming from the Schr\"odinger interaction small.

The three terms on the right-hand side of \eqref{alpha_derivative_n_gen} have a nice intuitive explanation. Let us call the contributions coming from any projector $p$ ``particle in the Sea'' and those coming from any projector $q$ ``particles outside the Sea''. With Sea we thus mean the antisymmetrized product of $\varphi_1^t,\ldots,\varphi_N^t$, the ``Fermi Sea'' or ``condensate'', and with ``outside the Sea'' we mean those parts of $\psi^t$ that do not contain $\varphi_1^t,\ldots,\varphi_N^t$. The $v_{12}^{(N)}$ summand from the first term on the right-hand side of \eqref{alpha_derivative_n_gen} gives a contribution only if two particles in the Sea (the $p_1,p_2$ on the right side of the scalar product) transition into one particle outside the Sea and one in the Sea (the $q_1,p_2$ on the left side of the scalar product). Furthermore, the contribution from this term is ``big'' only if $\psi^t$ contains many parts of $\varphi_1^t,\ldots,\varphi_N^t$, since there are three $p$'s in the scalar product. The second term on the right-hand side of \eqref{alpha_derivative_n_gen} gives a contribution only if two particles in the Sea (the $p_1,p_2$ on the right side of the scalar product) transition into two particles outside the Sea (the $q_1,q_2$ on the left side of the scalar product). The third term contributes only if one particle in the Sea and one outside the Sea (the $p_1,q_2$ on the right side of the scalar product) transition into two particles outside the Sea (the $q_1,q_2$ on the left side of the scalar product). Due to the three $q$'s, this third term is ``big'' if $\psi^t$ already contains many parts orthogonal to $\varphi_1^t,\ldots,\varphi_N^t$. Finally, note that these three contributions are exactly what one would intuitively expect. The change in the ``number of particles in the Sea'' ($\partial_t \alpha_n(t)$) can be caused by those three transitions (and their reverse processes): two particles in the Sea interact and one gets kicked out (becomes correlated), two particles in the Sea interact and both get kicked out, one particle in the Sea interacts with one outside the Sea and gets kicked out.

Now let us discuss how these three terms can be bounded rigorously. At this point, if $v^{(N)}$ has both negative and positive parts, we split up $v^{(N)} = v^{(N)}_+ - v^{(N)}_-$, with $v^{(N)}_{\pm} \geq 0$, and then split up each of the three terms into two contributions, coming from $v^{(N)}_+$ and $v^{(N)}_-$. Since each contribution is estimated separately, we only deal with positive $v^{(N)}$ in the following.

\textbf{The} $\boldsymbol{qq}$-$\boldsymbol{pq}$ \textbf{term.} Let us begin with the third term on the right-hand side of \eqref{alpha_derivative_n_gen}, which is the easiest to bound. Note that $\norm{q_1q_2\psi^t} \leq \norm{q_1\psi^t} = \sqrt{\SCP{\psi^t}{q_1\psi^t}} = \sqrt{\alpha_n(t)}$. Using Cauchy-Schwarz, we find that
\begin{align}\label{summary_term3_v}
N \bigSCP{\psi^t}{q_1q_2 v_{12}^{(N)} p_1q_2 \psi^t} &\leq N \norm{q_1q_2\psi^t} \norm{v_{12}^{(N)} p_1q_2 \psi^t} \nonumber \\
&\leq N \sqrt{\alpha_n(t)} \sqrt{\bigSCP{\psi^t}{q_2p_1 \Big(v_{12}^{(N)}\Big)^2 p_1q_2 \psi^t}} \nonumber \\
&\leq N \sqrt{\alpha_n(t)} \sqrt{\norm{q_2 \psi^t}^2 \sup_{\phi} \bigSCP{\phi}{p_1 \Big(v_{12}^{(N)}\Big)^2 p_1 \phi}} \nonumber \\
&\leq \alpha_n(t) \sqrt{N^2 \sup_{\phi} \bigSCP{\phi}{p_1 \Big(v_{12}^{(N)}\Big)^2 p_1 \phi}},
\end{align}
where the supremum is taken over all $\phi \in L^2(\RRR^{3N})$ which are antisymmetric in all but the second variable (because of the $q_2$). As we show in Chapter~\ref{sec:estimates_projectors}, it turns out that
\be\label{summary_p_1v2_p_1}
\bigSCP{\phi}{p_1 \Big(v_{12}^{(N)}\Big)^2 p_1 \phi} \leq N^{-1} \sup_{y\in\RRR^3}\Big( \Big(v^{(N)}\Big)^2 \star \rho_N^t \Big)(y)
\ee
(where $\rho_N^t=\sum_{i=1}^N |\varphi_i^t|^2$), which follows from diagonalizing $p_1 \Big(v_{12}^{(N)}\Big)^2 p_1$; the extra factor $N^{-1}$ comes from the antisymmetry of $\phi$ (it does not matter that $\phi$ is not antisymmetric in the second variable). Note that the inequality \eqref{summary_p_1v2_p_1} is very similar to the inequality
\be\label{trace_A_1}
\SCP{\phi}{A_1 \phi} \leq N^{-1} \norm[\tr]{A_1}
\ee
for any self-adjoint $A_1$ (that acts only on $x_1$) and antisymmetric $\phi$, where $\norm[\tr]{\cdot}$ denotes the trace norm. This inequality can be proven by diagonalizing $A_1 = \sum_j \lambda_j \, \ketbr{\varphi_j}_1$ and calculating
\be
\SCP{\phi}{A_1 \phi} = \SCP{\phi}{\sum_j \lambda_j \ketbr{\varphi_j}_1 \phi} \leq \sum_j |\lambda_j| \, \sup_j  \big| \SCP{\phi}{\ketbr{\varphi_j}_1 \phi}\big|.
\ee
Since $\sum_j |\lambda_j| = \norm[\tr]{A_1}$ and $\SCP{\phi}{\ketbr{\varphi_j}_1 \phi} = N^{-1} \sum_{m=1}^N \SCP{\phi}{\ketbr{\varphi_j}_m \phi} \leq N^{-1}$ (since $\sum_{m=1}^N \ketbr{\varphi_j}_m$ is a projector due to the orthonormality of the $\varphi_j$'s), \eqref{trace_A_1} follows. To summarize, the $qq$-$pq$ term from the right-hand side of \eqref{alpha_derivative_n_gen} is bounded by $C \alpha_n(t)$, under the assumptions \eqref{alpha_dot_n_ass_2} of Theorem~\ref{thm:estimates_terms_alpha_dot_beta_n}.

\textbf{The} $\boldsymbol{qq}$-$\boldsymbol{pp}$ \textbf{term.} Let us now estimate the second term on the right-hand side of \eqref{alpha_derivative_n_gen}. Here, we have again two $q$'s available, so the term should be proportional to $\alpha_n(t)$. However, both $q$'s are on the same side of the scalar product, so we cannot directly apply Cauchy-Schwarz. But by a trick using the antisymmetry of $\psi^t$, we can shift the $q_2$ to the right side of the scalar product, on the expense of a boundary term of $O(N^{-1})$. In more detail, we estimate, using Cauchy-Schwarz again,
\begin{align}\label{summary_term2}
\bigSCP{\psi^t}{q_1q_2 (N-1) v^{(N)}_{12} p_1p_2 \psi^t} &= \bigSCP{\psi^t}{q_1 \sum_{m=2}^N q_m v^{(N)}_{1m} p_1p_m \psi^t} \nonumber \\
&\leq \norm{q_1\psi^t} \norm{\sum_{m=2}^N q_m v^{(N)}_{1m} p_1p_m \psi^t} \nonumber \\
&\leq \sqrt{\alpha_n(t)} \sqrt{\sum_{m,n=2}^N \bigSCP{\psi^t}{p_m p_1 v^{(N)}_{1m} q_m q_n v^{(N)}_{1n} p_1p_n \psi^t}} \nonumber \\
&\leq \sqrt{\alpha_n(t)} \sqrt{\sum_{m \neq n=2}^N \bigSCP{\psi^t}{p_m p_1 v^{(N)}_{1m} q_m q_n v^{(N)}_{1n} p_1p_n \psi^t}} \nonumber \\
& \quad + \sqrt{\alpha_n(t)} \sqrt{\sum_{m=2}^N \bigSCP{\psi^t}{p_m p_1 v^{(N)}_{1m} q_m v^{(N)}_{1m} p_1p_m \psi^t}} \nonumber \\
&\leq \sqrt{\alpha_n(t)} \sqrt{N^2 \bigSCP{\psi^t}{q_3 p_2 p_1 v^{(N)}_{12} v^{(N)}_{13} p_1p_3 q_2\psi^t}} \nonumber \\
& \quad + \sqrt{\alpha_n(t)} \sqrt{N \bigSCP{\psi^t}{p_2 p_1 \Big(v^{(N)}_{12}\Big)^2 p_1p_2 \psi^t}} \nonumber \\
&\leq \sqrt{\alpha_n(t)} \sqrt{\big(N^2 \alpha_n(t) + N\big) \sup_{\phi}\bigSCP{\phi}{p_2 p_1 \Big(v^{(N)}_{12}\Big)^2 p_1p_2 \phi}}.
\end{align}
Similar to \eqref{summary_p_1v2_p_1}, one can show that 
\begin{align}
\bigSCP{\phi}{p_2 p_1 \Big(v^{(N)}_{12}\Big)^2 p_1p_2 \phi} &\leq N^{-2} \int \left( \Big(v^{(N)}\Big)^2 \star \rho_N^t \right)(y) \rho_N^t(y) \, d^3y, \nonumber \\
&\leq N^{-1} \sup_{y\in\RRR^3} \left( \Big(v^{(N)}\Big)^2 \star \rho_N^t \right)(y),
\end{align}
which is also shown in Chapter~\ref{sec:estimates_projectors}, and where the second inequality follows from H\"older's inequality and $\int \rho_N^t = N$. Thus, if condition \eqref{alpha_dot_n_ass_2} of Theorem~\ref{thm:estimates_terms_alpha_dot_beta_n} holds, we find
\begin{align}
\bigSCP{\psi^t}{q_1q_2 (N-1) v^{(N)}_{12} p_1p_2 \psi^t} &\leq C \sqrt{\alpha_n(t)} \sqrt{\alpha_n(t) + N^{-1}} \nonumber \\
&\leq C \sqrt{\alpha_n(t)^2 + 2 \alpha_n(t)N^{-1} + N^{-2}} \nonumber \\
&= C (\alpha_n(t) + N^{-1}).
\end{align}

\textbf{The} $\boldsymbol{qp}$-$\boldsymbol{pp}$ \textbf{term.} We now turn to the first term on the right-hand side of \eqref{alpha_derivative_n_gen}. This term is the hardest to control, since here it is crucial to use the fact that the interaction from the Schr\"odinger equation and the mean-field from the fermionic Hartree equations cancel in a certain sense. Note that we did not use any such cancellation so far. Before we outline the estimate, let us stress that here lies the crucial difference compared to the bosonic Hartree equation, as treated in \cite{pickl:2011method, pickl:2010hartree}. In this case there is just one orbital $\varphi$, so for example $p_2=\ketbra{\varphi(x_2)}{\varphi(x_2)}$. Furthermore, $v^{(N)}(x)=N^{-1}v(x)$ and $V^{(N)}_1(x)=\left( v\star|\varphi|^2 \right)(x)$. Then (recall that $q_2=1-p_2$)
\begin{align}
V^{(N)}_1 - p_2 (N-1)v^{(N)}_{12} p_2 &= \left(v\star|\varphi|^2\right)(x_1) - \frac{(N-1)}{N} \ketbra{\varphi(x_2)}{\varphi(x_2)} v_{12} \ketbra{\varphi(x_2)}{\varphi(x_2)} \nonumber \\
&= \left(v\star|\varphi|^2\right)(x_1) - \frac{(N-1)}{N} \left(v\star|\varphi|^2\right)(x_1) \, p_2 \nonumber \\
&= \left(v\star|\varphi|^2\right)(x_1) \, (q_2 + N^{-1}p_2),
\end{align}
such that the first term can be bounded by $\alpha_n(t)$ (due to the two available $q$'s) and a term of order $N^{-1}$.\footnote{Note that in \cite{pickl:2011method, pickl:2010hartree}, the term $\left(v\star|\varphi|^2\right)(x_1) \, q_2$ is usually regarded as being part of the $qq$-$pq$ term.} In the fermionic case we cannot use the same argument since $p_2$ is a sum of projectors, each projecting on one of the $N$ orbitals. However, as we have used before and show in Chapter~\ref{sec:estimates_projectors}, one can diagonalize the operator $p_2v_{12}^{(N)}p_2$ in the sense that $p_2v_{12}^{(N)}p_2 = \sum_{i=1}^N \lambda_i(x_1) \ketbr{\chi_i^{x_1}(x_2)}$, where it turns out that $\sum_{i=1}^N \lambda_i(x) = V_1^{(N)}(x)$. Then 
\begin{align}\label{v_v_mf}
V_1^{(N)} - p_2 (N-1)v_{12}^{(N)} p_2 &= \sum_{i=1}^N \lambda_i(x_1) - (N-1) \sum_{i=1}^N \lambda_i(x_1) \, \ketbra{\chi_i^{x_1}(x_2)}{\chi_i^{x_1}(x_2)} \nonumber \\
&= \sum_{i=1}^N \lambda_i(x_1) \Big( 1 - (N-1)\ketbra{\chi_i^{x_1}(x_2)}{\chi_i^{x_1}(x_2)} \Big).
\end{align}
Due to the antisymmetry of $\psi^t$, the term $\big( 1 - (N-1)\ketbra{\chi_i^{x_1}(x_2)}{\chi_i^{x_1}(x_2)} \big)$ corresponds to a projector called $q^{\chi_i^{x_1}}$. It can be shown that this additional $q^{\chi_i^{x_1}}$ gives us an additional $q_2$ on the right side of the scalar product of the first term; again at the expense of a small boundary term of $O(N^{-1})$. We do not present the full estimate here in the outline, since it is technical and lengthy; the explicit estimate can be found in Chapter~\ref{sec:alpha_m_dot_rigorous}. To summarize, also the first term can be bounded by $C \alpha_n(t)$. This concludes the proof of Theorem~\ref{thm:estimates_terms_alpha_dot_beta_n}.

Note, that the $qp$-$pp$ term is not only technically the hardest to control, but also the reason why we cannot prove Theorem~\ref{thm:E_kin_only} for Coulomb interaction (including the singularity). The conditions from the estimate make further properties of the solutions to the fermionic Hartree equations necessary.

\absatz

\textbf{Outline for Theorem~\ref{thm:estimates_terms_alpha_dot_beta_general}.} Let us now describe what is gained by using a different weight function than $n(k)$. We saw that in the third term (and, as it turns out, also in the first term) on the right-hand side of \eqref{alpha_derivative_n_gen}, there is only one projector $p$ available which lead to the condition $\big(v^{(N)}\big)^2\star\rho_N^t \leq CN^{-1}$. This condition seems to be too strong and we would like to relax it. Let us first consider the time derivative of $\alpha_f(t)$ for a general weight function $f(k)$. This is, as in the case $f(k)=n(k)$, a straightforward, but more lengthy calculation, which is the content of Chapter~\ref{sec:alpha_dot}. Let us present the result here:
\begin{align}\label{alpha_derivative_summary}
\partial_t \alpha_f(t) =& 2 \, \Im\, \bigSCP{\psi^t}{N\left(\widehat{f}-\widehat{f}_{-1}\right)q_1p_2\Big( (N-1)v^{(N)}_{12} - V^{(N)}_1 \Big)p_1p_2 \psi^t} \nonumber \\
& + \Im\, \bigSCP{\psi^t}{N\left(\widehat{f}-\widehat{f}_{-2}\right)q_1q_2 \, (N-1)v^{(N)}_{12} \, p_1p_2 \psi^t} \nonumber \\
& + 2 \, \Im\, \bigSCP{\psi^t}{N\left(\widehat{f}-\widehat{f}_{-1}\right)q_1q_2\Big( (N-1)v^{(N)}_{12} - V^{(N)}_1 \Big)p_1q_2 \psi^t}.
\end{align}
The terms $\widehat{f}-\widehat{f}_{-d}$ can be interpreted as a ``derivative'', since
\begin{align}
\widehat{f}-\widehat{f}_{-1} &= \sum_{k=0}^N \,\Big( f(k) - f(k-1) \Big) \, P_{N,k}.
\end{align}
Let us now see what we gain by choosing the weight function
\be
m^{(\gamma)}(k) = \left\{\begin{array}{cl} \frac{k}{N^{\gamma}} &, \text{for } k \leq N^{\gamma}\\ 1 & , \text{otherwise} \end{array}\right.
\ee
for some $0<\gamma \leq 1$. (Note that for $\gamma=1$ we recover the case $m^{(1)}(k)=n(k)=\frac{k}{N}$ from above.) For the ``derivative'' we find
\begin{align}\label{derivative_m}
\widehat{m^{(\gamma)}}-\widehat{m^{(\gamma)}}_{-1} &= \sum_{k=0}^N \,\Big( m(k) - m(k-1) \Big) \, P_{N,k} \nonumber \\
&\approx \sum_{k=0}^{N^{\gamma}} \, N^{-\gamma} \, P_{N,k}.
\end{align}
Now consider the splitting $\phi = \sum_{k=0}^N P_{N,k} \phi := \sum_{k=0}^N \phi_k$ for some antisymmetric $\phi \in L^2(\RRR^{3N})$. Then we find that
\be
\left( \widehat{m^{(\gamma)}}-\widehat{m^{(\gamma)}}_{-1} \right) \phi_k = 0 \quad \forall k > N^{\gamma}.
\ee
We use this fact to improve the estimate for the third term. Heuristically, $q_1q_2\psi^t$ has only contributions coming from large $k$ (at least in the first and second variable, which is enough to make the argument work). But for these contributions the ``derivative'' is zero, which makes the third term on the right-hand side of \eqref{alpha_derivative_summary} very small. This can also be used for the first term, where we gain additional projectors $q$ due to cancellations between Schr\"odinger and mean-field interactions, as explained after \eqref{v_v_mf}. The downside of using the weight function $m^{(\gamma)}(k)$ is that we gain only a prefactor $N^{-\gamma}$ from the ``derivative'' \eqref{derivative_m}, instead of $N^{-1}$ when we use $n(k)$. However, as it turns out, this effect can be controlled, and only leads to a worse convergence rate. This heuristic reasoning is made precise in Chapter~\ref{sec:alpha_m_dot_rigorous}, where we bound $\partial_t \alpha_{m^{(\gamma)}}(t)$ rigorously. For that, several lemmas are necessary, which we establish in Chapter~\ref{sec:general_lemmas}. Finally, in Chapter~\ref{sec:proofs_main_theorems_gen}, we use the conditions from Theorem~\ref{thm:estimates_terms_alpha_dot_beta_general} to calculate the convergence rate and to put the estimates together.

\absatz

\textbf{Outline for results of Section~\ref{sec:main_theorem_mf_x-s}.} Let us now discuss how to prove the results for $H_j^0=-\Delta_j+w^{(N)}(x_j)$ and interactions $v^{(N)}(x) = N^{-\beta} |x|^{-s}$. In order to apply Theorems \ref{thm:estimates_terms_alpha_dot_beta_n} and \ref{thm:estimates_terms_alpha_dot_beta_general}, we have to evaluate expressions like $v\star\rho_N^t$ and $v^2\star\rho_N^t$ (where $\rho_N^t=\sum_{i=1}^N|\varphi_i^t|^2$) for interactions $|x|^{-s}$ with weak or cut off singularity. It turns out that naturally these expressions can be bounded in terms of the total kinetic energy. A key role is played by the kinetic energy inequality due to Lieb and Thirring \cite{lieb:1975,lieb:2010},
\be\label{summary_kin_en_ineq}
\int_{\RRR^3} \Big( \rho_N^t(x) \Big)^{\frac{5}{3}} \, d^3x \leq C \, \sum_{i=1}^N \norm{\nabla \varphi_i^t}^2.
\ee
This inequality crucially depends on the fermionic nature of the wave function; for bosonic wave functions, it only holds with an extra factor $N^{\frac{2}{3}}$ on the right-hand side (think of the example of plane waves in a box from Chapter~\ref{sec:mf_fermions_const_E_kin_physics}). Let us now consider the mean-field interaction $|\cdot|^{-s} \star \rho_N^t$. We assume that the total kinetic energy $\sum_{i=1}^N ||\nabla \varphi_i^t||^2 \leq AN$. We first split the integration into two parts, over a ball with radius $R_N \propto N^{\frac{1}{3}}$ and its complement, then apply H\"older's inequality, and use \eqref{summary_kin_en_ineq} and $\int \rho_N^t = N$:
\begin{align}\label{summary_rho_x-1_calc}
\int_{\RRR^3} \frac{\rho_N^t(x)}{|x-y|^s} \,d^3x &= \int_{B_{R_N}(y)} \frac{\rho_N^t(x)}{|x-y|^s} \,d^3x + \int_{\overline{B_{R_N}(y)}} \frac{\rho_N^t(x)}{|x-y|^s} \,d^3x \nonumber \\
\eqexp{by H\"older} &\leq \left( \int_{B_{R_N}(y)} \rho_N^t(x)^{\frac{5}{3}} \,d^3x \right)^{\frac{3}{5}} \left( \int_{B_{R_N}(y)} |x-y|^{-\frac{5}{2}s} \,d^3x \right)^{\frac{2}{5}} + \nonumber \\
&\quad\quad \left( \int_{\overline{B_{R_N}(y)}} \rho_N^t(x) \,d^3x \right) \left( \sup_{x \in \overline{B_{R_N}(y)}} |x-y|^{-s} \right) \nonumber \\
&\leq \left( \int_{\RRR^3} \rho_N^t(x)^{\frac{5}{3}} \,d^3x \right)^{\frac{3}{5}} \left( \int_{B_{R_N}(0)} |x|^{-\frac{5}{2}s} \,d^3x \right)^{\frac{2}{5}} + \nonumber \\
&\quad\quad \left( \int_{\RRR^3} \rho_N^t(x) \,d^3x \right) \left( \sup_{x \in \overline{B_{R_N}(0)}} |x|^{-s} \right) \nonumber \\
\eqexp{by \eqref{summary_kin_en_ineq}} &\leq C N^{\frac{3}{5}} R_N^{\frac{6}{5}-s} + N R_N^{-s} \nonumber \\
&\leq C N^{1-\frac{s}{3}}.
\end{align}
Since $|x|^{-\frac{5}{2}s}$ is integrable over a ball only for $0 < s <\frac{6}{5}$, we restrict ourselves to those $s$. We thus showed that $\beta=1-\frac{s}{3}$ is the correct scaling exponent for interactions $v_s(x)=|x|^{-s}$. Note that this remains so when we cut off the singularity, i.e., \eqref{summary_rho_x-1_calc} can in general not be improved.

The condition $v^2\star\rho_N^t$ can be evaluated by similar methods (which we do in Chapter~\ref{sec:mean-field_scalings}). However, here we have to deal with a much stronger singularity. Thus, we either need stronger conditions on $\rho_N^t$, for example $||\rho_N^t||_{\infty} \leq C$ as we consider in Corollary~\ref{cor:estimates_terms_alpha_dot_Coulomb_rho_infty}, or we need to restrict ourselves to weaker or cut off singularities, as considered in Theorem~\ref{thm:E_kin_only}. Finally, Proposition~\ref{pro:coulombN1} can be proven by using that for $v^{(N)}(x)=N^{-1}|x|^{-1}$, the mean-field interaction is only of $O(N^{-\frac{1}{3}})$, as \eqref{summary_rho_x-1_calc} shows. The explicit proofs of the results are given in Chapter~\ref{sec:proofs_main_theorems}.

\section{Theorem and Sketch of Proof for Semiclassical Scaling}\label{sec:estimates_semiclassical}
The proof we outlined in Chapter~\ref{sec:outline_proof} was, in a sense, tailor-made for particles with average velocities of $O(1)$. In order to demonstrate that the $\alpha$-method also works for situations where this is not the case, we here give a derivation of the mean-field dynamics for the semiclassical case discussed in Chapter~\ref{sec:mf_fermions_semiclassical}. There, one has to use the fact that the average velocities are $O(N^{\frac{1}{3}})$. A derivation of the mean-field dynamics in this case has recently been given in \cite{benedikter:2013}. Here, we mostly reproduce the result obtained there; we actually use estimates about the propagation of properties of the initial data from \cite{benedikter:2013}. A slight improvement is that our conditions on the initial data are more transparent and general. Let us state our main theorem for the semiclassical case here, and give an outline of the proof, in particular, of what steps are different compared to Chapter~\ref{sec:outline_proof}. The full proof can be found in Appendix~\ref{sec:proof_sc_scaling}.

We consider, as discussed in Chapter~\ref{sec:mf_fermions_semiclassical}, the non-relativistic Schr\"odinger equation with semiclassical scaling (for simplicity, without external fields),
\be\label{outline_Schr_scaled_sc_app}
i N^{-\frac{1}{3}} \partial_t \psi^t = - N^{-\frac{2}{3}} \sum_{j=1}^N \Delta_{x_j} \psi^t + N^{-1} \sum_{1\leq i<j \leq N} v(x_i-x_j) \psi^t.
\ee
The corresponding semiclassical fermionic Hartree equations are
\be\label{outline_hartree_scaled_sc_app}
i N^{-\frac{1}{3}} \partial_t \varphi_j^t = \left( -N^{-\frac{2}{3}} \Delta + N^{-1} \left(v \star \rho_N^t \right) \right) \varphi_j^t,
\ee
and the semiclassical Hartree-Fock equations are
\be\label{outline_hartree_fock_scaled_sc_app}
i N^{-\frac{1}{3}} \partial_t \varphi_j^t = \left( -N^{-\frac{2}{3}} \Delta + N^{-1} \left(v \star \rho_N^t \right) \right) \varphi_j^t - N^{-1} \sum_{k=1}^N \left(v \star (\varphi_k^{t*}\varphi_j^t)\right) \varphi_k^t,
\ee
for $j=1,\ldots,N$, and recall $\rho_N^t = \sum_{i=1}^N |\varphi_i^t|^2$. Note that here we do not have to use the long-range behavior of the interaction, since we are interested in solutions in some constant, $N$-independent volume, i.e., very high densities. For technical reasons, we consider basically bounded interactions (the more exact conditions are stated in Theorem~\ref{thm:sc_main_thm} below). Note that for these interactions the exchange term is always subleading (taking the $N^{-1}$ from the scaling into account). Since the exchange term is easier to handle for bounded interactions, we can prove the theorem for both the fermionic Hartree and Hartree(-Fock) equations. The following theorem is analogous to \cite[Thm.\ 2.1]{benedikter:2013}. We write $p_1(0)=\sum_{j=1}^N \ketbr{\varphi_j^0}_1$ for the projector $p_1$ at time $t=0$. Recall that we denote the trace norm by $\norm[\tr]{\cdot}$ (see also Chapter~\ref{sec:trace_and_HS}).

\begin{theorem}\label{thm:sc_main_thm}
Let $\psi^t \in L^2(\RRR^{3N})$ be a solution to the Schr\"odinger equation \eqref{outline_Schr_scaled_sc_app} with antisymmetric initial condition $\psi^0 \in L^2(\RRR^{3N})$. Let $\varphi_1^t,\ldots,\varphi_N^t \in L^2(\RRR^3)$ be either solutions to the fermionic Hartree equations \eqref{outline_hartree_scaled_sc_app} or to the Hartree-Fock equations \eqref{outline_hartree_fock_scaled_sc_app}, with orthonormal initial conditions $\varphi_1^0,\ldots,\varphi_N^0 \in L^2(\RRR^3)$.

We assume that $v \in L^1(\RRR^3)$ and
\be\label{sc_thm_cond0}
\int d^3k \, (1+|k|^2)\, |\hat{v}(k)| < \infty,
\ee
where $\hat{v}$ is the Fourier transform of $v$. We also assume that the initial conditions $\varphi_1^0,\ldots,\varphi_N^0 \in L^2(\RRR^3)$ are such, that
\be\label{sc_thm_cond1}
\sup_{k\in\RRR^3} \frac{1}{1+|k|} \norm[\tr]{\, \left[ p_1(0), e^{ik\cdot x} \right] \,} \leq cN^{\frac{2}{3}},
\ee
\be\label{sc_thm_cond2}
\norm[\tr]{\, \left[ p_1(0), \nabla \right] \,} \leq c N,
\ee
for some constant $c>0$, where $p_1(0) = \sum_{j=1}^N \ketbr{\varphi_j^0}_1$.

Then, there are positive $C_1,C_2$, such that for all $t>0$,
\be\label{main_alpha_ineq_sc}
\alpha_n(t) \leq \exp\big(C_1 \exp(C_2 \, t)\big) \Big( \alpha_n(0) + N^{-1} \Big).
\ee
\end{theorem}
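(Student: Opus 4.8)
The plan is to run the counting-functional ($\alpha$) method with the simplest weight $n(k)=\frac{k}{N}$, keeping careful track of the two new features of the semiclassical regime: the factor that the rescaled time derivative produces in $\partial_t\alpha_n$, and the fact that here the density is $O(N)$, so pointwise conditions like $\norm[\infty]{v^2\star\rho_N^t}$ are \emph{not} $O(N^{-1})$ and cannot be used. First I would differentiate. Writing the Schr\"odinger and the Hartree(-Fock) equations in the time variable carrying the bare $\partial_t$, i.e. $i\partial_t\psi^t=\big(-N^{-\frac13}\sum_j\Delta_{x_j}+N^{-\frac23}\sum_{i<j}v_{ij}\big)\psi^t$ and correspondingly $i\partial_t q_1=[\mathcal H_1^{\mf},q_1]$ with $\mathcal H_1^{\mf}=-N^{-\frac13}\Delta_1+N^{-\frac23}\big(v\star\rho_N^t\big)_1$, the kinetic parts cancel and the antisymmetry of $\psi^t$ gives
\be
\partial_t\alpha_n(t)=iN^{-\frac23}\,\bigSCP{\psi^t}{\Big[(N-1)v_{12}-\big(v\star\rho_N^t\big)_1,\,q_1\Big]\psi^t}
\ee
(plus, in the Hartree--Fock case, an analogous exchange contribution). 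Inserting $\id=p_1+q_1$ on the left and $\id=p_2+q_2$ on both sides of the interaction, the real $q_1(\cdots)q_1$ piece drops from the imaginary part, and one is left with the three terms of \eqref{alpha_derivative_n_gen} --- a $qp$-$pp$, a $qq$-$pp$, and a $qq$-$pq$ term --- each now carrying the prefactor $(N-1)N^{-\frac23}\sim N^{\frac13}$.

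Absorbing this $N^{\frac13}$ is the whole point, and it must be done using \emph{both} the mean-field cancellation and the semiclassical (commutator) structure: neither alone suffices, because with density of $O(N)$ even the bare mean-field term $\big(v\star\rho_N^t\big)_1$ has operator norm $O(N)$, which against the $N^{-\frac23}$ prefactor leaves $N^{\frac13}$. So, first, in the $qp$-$pp$ term I would carry out the exact cancellation of \eqref{v_v_mf}: diagonalize $p_2 v_{12}p_2=\sum_i\lambda_i(x_1)\ketbr{\chi_i^{x_1}(x_2)}$ with $\sum_i\lambda_i(x)=(v\star\rho_N^t)(x)$, so that $(N-1)p_2 v_{12}p_2-\big(v\star\rho_N^t\big)_1=-\sum_i\lambda_i(x_1)\,q^{\chi_i^{x_1}}_2$ produces, on the antisymmetric $\psi^t$, an extra $q$-type projector on the second variable (up to an $O(N^{-1})$ boundary term). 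Second, I would Fourier-decompose the potential, $v(x)=\int\hat v(k)\,e^{ik\cdot x}\,d^3k$ --- the hypothesis \eqref{sc_thm_cond0} guarantees that the weights $\int(1+|k|^2)|\hat v(k)|\,d^3k$ we meet are finite --- and likewise $\big(v\star\rho_N^t\big)_1=\int\hat v(k)\,\widehat{\rho_N^t}(-k)\,e^{ik\cdot x_1}\,d^3k$; then every $q_j\,e^{\pm ik\cdot x_j}\,p_j$ becomes $q_j[e^{\pm ik\cdot x_j},p_j]$, and for an antisymmetric one-particle state the inequality $\SCP{\phi}{B_1\phi}\leq N^{-1}\norm[\tr]{B_1}$ from \eqref{trace_A_1} turns each such commutator into a factor bounded, via \eqref{sc_thm_cond1} and $\norm[\HS]{[p_1,e^{ik\cdot x}]}^2\leq\norm[\op]{[p_1,e^{ik\cdot x}]}\,\norm[\tr]{[p_1,e^{ik\cdot x}]}\leq 2cN^{\frac23}(1+|k|)$, by $\sim(1+|k|)^{\frac12}N^{-\frac16}$. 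Each of the three terms is then estimated by Cauchy--Schwarz, using the available $q$'s (the $q_1$ next to $\psi^t$, and the extra $q$ from the cancellation) to pull out factors $\sqrt{\alpha_n(t)}$ and the commutator factors to defeat the $N^{\frac13}$; residual cross-terms of the form $\sqrt{\alpha_n(t)}\,N^{-\frac12}$ are absorbed via $\sqrt{\alpha_n}\,N^{-\frac12}\leq\frac12\big(\alpha_n+N^{-1}\big)$. The Hartree--Fock exchange term is handled by the same Fourier/commutator bookkeeping and, $v$ being essentially bounded, contributes only to the $N^{-1}$ error; the outcome is $\partial_t\alpha_n(t)\leq C(t)\big(\alpha_n(t)+N^{-1}\big)$.

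The subtlety that makes this genuinely work for all times --- and, in my view, the main obstacle --- is that \eqref{sc_thm_cond1}--\eqref{sc_thm_cond2} are hypotheses on $p_1(0)$ only, whereas the estimates above use the \emph{time-evolved} projector $p_1^t=\sum_j\ketbr{\varphi_j^t}_1$. One must therefore show that these semiclassical commutator bounds propagate. Following \cite{benedikter:2013}, I would derive differential inequalities for $a(t):=\sup_k\frac{1}{1+|k|}\norm[\tr]{[p_1^t,e^{ik\cdot x}]}$ and $b(t):=\norm[\tr]{[p_1^t,\nabla]}$ directly from $i\partial_t p_1^t=[\mathcal H_1^{\mf},p_1^t]$; here $\big[[-N^{-\frac13}\Delta_1,p_1^t],e^{ik\cdot x}\big]$ feeds $b$ into $\dot a$ --- this is precisely why \eqref{sc_thm_cond2} is needed at all, even though the kinetic term cancels in $\partial_t\alpha_n$ --- and $\int(1+|k|^2)|\hat v(k)|\,d^3k$ controls the interaction contributions; Gronwall then gives $a(t)\lesssim N^{\frac23}e^{C_2 t}$ and $b(t)\lesssim N e^{C_2 t}$. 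Feeding this back, the constant $C(t)$ in the bound for $\partial_t\alpha_n$ grows like $e^{C_2 t}$, so a second application of Gronwall (Lemma~\ref{lem:gronwall}) yields
\be
\alpha_n(t)\leq e^{\int_0^t C(s)\,ds}\,\big(\alpha_n(0)+N^{-1}\big)\leq\exp\!\big(C_1\exp(C_2 t)\big)\,\big(\alpha_n(0)+N^{-1}\big),
\ee
which is \eqref{main_alpha_ineq_sc}. As in the non-semiclassical case the hardest single estimate is the $qp$-$pp$ term, where the pair-interaction/mean-field cancellation and the commutator smallness must be exploited simultaneously; the time-propagation of the semiclassical structure is the heaviest technical ingredient but can be taken over essentially from \cite{benedikter:2013}, and the full details are deferred to Appendix~\ref{sec:proof_sc_scaling}.
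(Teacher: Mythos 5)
Your proposal is correct and follows essentially the same route as the paper's proof in Appendix~\ref{sec:proof_sc_scaling}: the $\alpha_n$-derivative with its three terms carrying the $N^{\frac{1}{3}}$ prefactor, the Fourier decomposition of $v$ combined with per-particle trace-norm bounds of the type \eqref{trace_A_1} on $q_1 e^{ik\cdot x_1}p_1$ to gain the missing $N^{-\frac{1}{3}}$, the mean-field cancellation producing an extra $q$-type projector up to an $O(N^{-1})$ boundary term, the subleading exchange term, the propagation of the semiclassical commutator bounds taken from \cite{benedikter:2013}, and a double Gronwall argument yielding the double exponential. The only cosmetic difference is the order of operations in the $qp$-$pp$ term: the paper Fourier-decomposes first and then diagonalizes the self-adjoint pieces $p_2\cos(kx_2)p_2$ and $p_2\sin(kx_2)p_2$ (since $p_2e^{-ikx_2}p_2$ is not self-adjoint), rather than diagonalizing $p_2 v_{12}p_2$ outright before passing to Fourier modes.
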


\noindent\textbf{Remarks.}
\begin{enumerate}
\setcounter{enumi}{\theremarks}
\item It follows from Lemma~\ref{lem:density_conv} that \eqref{main_alpha_ineq_sc} implies for density matrices the estimate
\be\label{main_dens_mat_ineq_sc}
\norm[\tr]{\mu^{\psi^t}_1 - \mu^{\bigwedge \varphi_j^t}_1} \leq \tilde{C}(t) \left( \norm[\tr]{\mu^{\psi^0}_1 - \mu^{\bigwedge \varphi_j^0}_1}^{\frac{1}{2}} + \frac{1}{\sqrt{N}} \right),
\ee
for some constant $\tilde{C}(t)$; a similar estimate holds in Hilbert-Schmidt norm.
\item The theorem also holds with external fields that are such that they preserve the bounds \eqref{sc_thm_cond1} and \eqref{sc_thm_cond2} for all $t$.
\end{enumerate}
\setcounter{remarks}{\theenumi}

Let us now outline the proof of Theorem~\ref{thm:sc_main_thm}. Recall that we are looking for a bound
\be
\partial_t \alpha_n(t) \leq C(t) \left(\alpha_n(t) +N^{-1} \right),
\ee
and then use the Gronwall Lemma to deduce \eqref{main_alpha_ineq_sc}. It turns out that for this proof it is sufficient to use $\alpha_n(t)$, i.e., the weight function $n(k)=\frac{k}{N}$. For $\psi^t$ a solution to the Schr\"odinger equation \eqref{outline_Schr_scaled_sc_app} and $\varphi_1^t,\ldots,\varphi_N^t$ solutions to the fermionic Hartree equations \eqref{outline_hartree_scaled_sc_app} (or the Hartree-Fock equations \eqref{outline_hartree_fock_scaled_sc_app}), we find (as in \eqref{alpha_derivative_n_gen}, using explicitly the scaling)
\begin{align}\label{outline_alpha_derivative_sc}
\partial_t \alpha_n(t) &= 2 N^{-\frac{2}{3}} \, \Im\, \bigSCP{\psi^t}{q_1\Big( (N-1)p_2v_{12}p_2 - V_1 \Big) p_1 \psi^t} \nonumber \\
&\quad + 2 N^{-\frac{2}{3}} \, \Im\, \bigSCP{\psi^t}{q_1q_2 (N-1)v_{12} p_1p_2 \psi^t} \nonumber \\
&\quad + 2 N^{-\frac{2}{3}} \, \Im\, \bigSCP{\psi^t}{q_1q_2 (N-1)v_{12} p_1q_2 \psi^t}
\end{align}
(where $V_1$ is either the direct or direct plus exchange term). As mentioned above, we cannot use here that we gain an additional $N^{-\frac{1}{3}}$ from the long-range behavior of the interaction (as we did, e.g., in Lemma~\ref{lem:scaling_x-s}); the mean-field term $V_1$ is of $O(N)$, such that it seems that $\partial_t \alpha_n(t)$ is of $O(N^{\frac{1}{3}})$. However, what we use now is that the average velocity of the particles is $O(N^{\frac{1}{3}})$ (due to the high density and the fact that we consider fermions). In Theorem~\ref{thm:sc_main_thm} we phrased this in the form that
\be\label{vel_O_1}
\norm[\tr]{\, \left[ p_1(0), e^{ik\cdot x} \right] \,} \leq c N^{\frac{2}{3}}
\ee
(where $c$ can depend on $k$). An argument why \eqref{vel_O_1} (together with \eqref{sc_thm_cond2}) expresses the fact that the average velocities are $O(N^{\frac{1}{3}})$ can be found in \cite{benedikter:2013}; at this point, let us just recall from Chapter~\ref{sec:mf_fermions_semiclassical} that a semiclassical density matrix has roughly the form
\be
\phi\left(N^{\frac{1}{3}}(x-y)\right) \chi(x+y).
\ee
It turns out, that \eqref{vel_O_1} captures that there is an additional factor $N^{\frac{1}{3}}$ in the ``velocity profile'' $\phi$ of the density matrix. One part of the proof of Theorem~\ref{thm:sc_main_thm} is to propagate the conditions \eqref{sc_thm_cond1} and \eqref{sc_thm_cond2} in time, i.e., to show that 
\be\label{sc_thm_cond1_t}
\sup_{k\in\RRR^3} \frac{1}{1+|k|} \norm[\tr]{\, \left[ p_1(t), e^{ik\cdot x} \right] \,} \leq c(t) N^{\frac{2}{3}},
\ee
\be\label{sc_thm_cond2_t}
\norm[\tr]{\, \left[ p_1(t), \nabla \right] \,} \leq c(t) N,
\ee
where $p_1(t) = \sum_{j=1}^N \ketbr{\varphi_j^t}_1$, and $\varphi_1^t,\ldots,\varphi_N^t$ are solutions to the Hartree(-Fock) equations. Note that \eqref{sc_thm_cond2_t} is necessary to show that \eqref{sc_thm_cond1_t} holds, but will not be used in the estimates for $\partial_t \alpha_n(t)$. The exact statement is Lemma~\ref{lem:sc_prop_sc} which has been proven in \cite{benedikter:2013}. The constants $c(t)$ are of the form $c(t) = c_1 \exp(c_2 t)$ (which is the reason why we get the double exponential in the estimate \eqref{main_alpha_ineq_sc}; the other exponential comes from the Gronwall argument for $\partial_t\alpha_n(t)$). Let us from now on take for granted that \eqref{sc_thm_cond1_t} holds for all times and note that (due to $p_1q_1=0$)
\be\label{vel_O_1_t}
\norm[\tr]{\, p_1 e^{ik\cdot x} q_1 \,} = \norm[\tr]{\, \left[ p_1, e^{ik\cdot x} \right] q_1 \,} \leq  \norm[\tr]{\, \left[ p_1, e^{ik\cdot x} \right] \,} \leq C N^{\frac{2}{3}},
\ee
where for ease of notation we do not write out the $t$-dependence of $p_1,q_1$ anymore (also constants $C$ can be time-dependent).

The estimate \eqref{vel_O_1_t} can now be used to gain an additional factor $N^{-\frac{1}{3}}$ in the time derivative of $\alpha_n(t)$. Let us here only regard the third term from \eqref{outline_alpha_derivative_sc}, the $qq$-$pq$ term, which is again the most simple to estimate. Using the Fourier decomposition of the interaction potential, $v(x) = \int d^3k \, \hat{v}(k) e^{ikx}$, and Cauchy-Schwarz, we find
\begin{align}\label{outline_term3}
N^{\frac{1}{3}} \, \Big| \bigSCP{\psi^t}{q_1q_2 v_{12} p_1q_2 \psi^t} \Big| &= N^{\frac{1}{3}} \, \Big\lvert \int d^3k \, \hat{v}(k) \, \bigSCP{\psi^t}{q_1q_2 e^{ik(x_1-x_2)} p_1q_2 \psi^t} \Big\rvert \nonumber \\
&= N^{\frac{1}{3}} \, \Big\lvert \int d^3k \, \hat{v}(k) \, \bigSCP{\psi^t}{q_1e^{ikx_1}p_1 q_2 e^{-ikx_2} q_2 \psi^t} \Big\rvert.
\end{align}
It is now convenient to use the singular value decomposition of the (compact) operator $q_1e^{ikx_1}p_1$, i.e., we use $q_1e^{ikx_1}p_1 = \sum_{\ell} \mu_{\ell} \ketbra{\phi_{\ell}}{\tilde{\phi}_{\ell}}_1$ for some orthonormal $\{ \phi_{\ell} \}_{\ell\in\NNN}$ and $\{ \tilde{\phi}_{\ell} \}_{\ell\in\NNN}$ and $\mu_{\ell}>0$, where $\sum_{\ell} \mu_{\ell} = \norm[\tr]{q_1e^{ikx_1}p_1}$. Then, by Cauchy-Schwarz, \eqref{trace_A_1} and \eqref{vel_O_1_t},
\begin{align}
N^{\frac{1}{3}} \, \Big\lvert \bigSCP{\psi}{q_1q_2 v_{12} p_1q_2 \psi} \Big\rvert
&= N^{\frac{1}{3}} \Big\lvert \int d^3k \, \hat{v}(k) \sum_{\ell} \mu_{\ell} \bigSCP{\psi}{q_2\ketbra{\phi_{\ell}}{\tilde{\phi}_{\ell}}_1 e^{-ikx_2}q_2 \psi} \Big\rvert \nonumber \\
\eqexp{by C.-S.} &\leq N^{\frac{1}{3}} \int d^3k \, |\hat{v}(k)| \sum_{\ell} \mu_{\ell} \, \Big|\Big|\bra{\phi_{\ell}}_1 q_2\psi\Big|\Big| \norm{\bra{\tilde{\phi}_{\ell}}_1 q_2 \psi} \nonumber \\
\eqexp{by \eqref{trace_A_1}} &\leq N^{\frac{1}{3}} \int d^3k \, |\hat{v}(k)| \norm[\tr]{q_1e^{ikx_1}p_1} N^{-1} \norm{q_2\psi}^2.
\end{align}

Therefore, if $\int d^3k \, |\hat{v}(k)| < \infty$ (or, more exactly, when \eqref{sc_thm_cond0} holds), we find the bound
\be
N^{\frac{1}{3}} \, \Big| \bigSCP{\psi^t}{q_1q_2 v_{12} p_1q_2 \psi^t} \Big| \leq C \alpha_n(t).
\ee
For the first and second term in \eqref{outline_alpha_derivative_sc}, we proceed similar to Chapter~\ref{sec:outline_proof}, and use the singular value decomposition of $q_1e^{ikx_1}p_1$ again. As in Chapter~\ref{sec:outline_proof}, an additional boundary term of $O(N^{-1})$ arises, such that in total we find
\be
\partial_t \alpha_n(t) \leq C(t) \Big( \alpha_n(t) + N^{-1} \Big).
\ee

\section{Outlook}\label{sec:outlook}
In this thesis we have seen how the fermionic Hartree(-Fock) equations can be derived from the microscopic Schr\"odinger dynamics in a many particle limit. An understanding of how and why this is the case is very important with respect to the more general goal to understand how macroscopic (or effective) behavior arises from microscopic physics. With the results outlined in Chapter~\ref{sec:mf_fermions_lit} and this work, we now have a good understanding of how this works for mean-field descriptions for (non-relativistic) fermions. However, several more detailed and more advanced questions are still open.

\begin{itemize}
\item Concerning the results in this work, it would be interesting to show that the conditions we formulated on the solutions to the fermionic Hartree equations (in particular, in Theorem~\ref{thm:estimates_terms_alpha_dot_beta_n}, Theorem~\ref{thm:estimates_terms_alpha_dot_beta_general} or Corollary~\ref{cor:estimates_terms_alpha_dot_Coulomb_rho_infty}) hold for the physically very relevant case of Coulomb interaction under suitable smoothness conditions on the initial data. It might furthermore be interesting to investigate other physically relevant interactions.
\item A derivation for the semiclassical case with Coulomb interaction is still an open problem.
\item One could try a derivation of mean-field limits for fermions for relativistic equations, e.g., the Dirac-Fock equations (see, e.g., \cite{hainzl:2005} and references therein). A first step in this direction is \cite{benedikter:2014} where a pseudo-relativistic Hamiltonian is considered.
\item It would be interesting to identify relevant situations where the exchange term is not subleading. Alternatively, one could look for scenarios where the exchange term is subleading, but gives a larger contribution to the dynamics than the error terms in a derivation of the fermionic Hartree equations. Then one could try to show that the Hartree-Fock equations give a better approximation to the Schr\"odinger dynamics than the Hartree equations.
\item It might be possible to give technically better estimates for $\partial_t \alpha_f(t)$, which reflect even better that correlations are caused by fluctuations, as discussed in Chapter~\ref{sec:mf_fermions_const_E_kin_fluc} (see, in particular, Equation~\ref{variance}). A first step in this direction is \cite{mitrouskas:2013} where the fluctuations around the mean-field are analyzed much more carefully.
\item Another topic is to analyze and derive other effective evolution equations for fermions. For example, it could be interesting to consider scalings for fermions that are similar to Gross-Pitaevskii scalings for bosons. Another very interesting scaling is the so-called kinetic limit for fermions (see, e.g., \cite{lukkarinen:2009}) where the long-time behavior is investigated and the dynamics is approximated by a quantum Boltzmann equation.
\end{itemize}

\part{Proof of Main Results}\label{pt:part_two}

\chapter{Notation and Preliminaries}\label{sec:notation}
\section{Notation and Basic Inequalities}
Let us first establish some notation that we use throughout the following chapters. We denote by $\Hilbert$ a Hilbert space and  we always assume it is separable. Its inner product is denoted by $\scp{\cdot}{\cdot}$ or $\SCP{\cdot}{\cdot}$ and the norm of any $f\in \Hilbert$ by $\norm{f} = \sqrt{\scp{f}{f}}$. For any $z\in \CCC^d$ we write $|z|^2:=\sum_{i=1}^d |z_i|^2$, where for $z_i\in \CCC$, $|z_i|^2=z_i^*z_i$, with ${}^*$ denoting complex conjugation. The Hilbert space of complex square integrable functions on $\RRR^d$ is denoted by $L^2(\RRR^d) = L^2(\RRR^d, \CCC)$ and $H^1(\RRR^d)$ denotes the first Sobolev space, i.e.,
\be
H^1(\RRR^d) = \left\{ f \in L^2(\RRR^d):  \norm{\nabla f} < \infty \right\}.
\ee
For $f\in L^2(\RRR^d)$ we sometimes write
\be
\norm{f}^2 := \int_{\RRR^d} |f(x)|^2 \, d^dx = \int |f|^2.
\ee
In order to differentiate between the scalar product on $L^2(\RRR^{3N})$ and scalar products on another $L^2(\RRR^d)$ (usually $L^2(\RRR^3)$) we always write $\SCP{\cdot}{\cdot}$ for the scalar product on $L^2(\RRR^{3N})$. We denote by $\SCP{\cdot}{\cdot}_{a+1,\ldots,N}$ the scalar product only in the variables $x_{a+1},\ldots,x_N$, i.e., it is a ``partial trace'' or ``partial scalar product'', formally defined for any $\chi,\psi\in L^2(\RRR^{3N})$ by
\be\label{partial_scp}
\SCP{\psi}{\chi}_{a+1,\ldots,N}(x_1,\ldots,x_a) := \int d^3x_{a+1} \ldots \int d^3x_N \, \psi^*(x_1,\ldots,x_N) \chi(x_1,\ldots,x_N),
\ee
which should be regarded as a vector in $L^1(\RRR^{3a})$ (for $\chi=\psi$, it is the diagonal of the reduced $a$-particle density matrix, see Chapter~\ref{sec:reduced_density_matrices}). As mentioned in Definition~\ref{def:projectors}, for any $\varphi\in L^2(\RRR^3)$, we use the bra-ket notation
\be
p_m^{\varphi} = \ketbra{\varphi}{\varphi}_m = \ketbra{\varphi(x_m)}{\varphi(x_m)}
\ee
for the projector defined by
\be
\left(p_m^{\varphi}\psi\right)(x_1,\ldots,x_N) = \varphi(x_m)\int\varphi^*(x_m)\psi(x_1,\ldots,x_N)d^3x_m
\ee
for any $\psi \in L^2(\RRR^{3N})$. In other words,
\be
p_m^{\varphi} = \underbrace{\id \otimes \ldots \otimes \id}_{m-1 ~ \text{times}} \otimes \,\ketbra{\varphi}{\varphi}\, \otimes \underbrace{\id \otimes \ldots \otimes \id}_{N-m ~ \text{times}}.
\ee
In the same style we denote by $\ket{\cdot}_m$ a vector in $L^2(\RRR^3)$ acting only on the $m$-th variable of $L^2\big((\RRR^3)^N\big)$, by $\bra{\cdot}_m$ its dual, and by $\scp{\cdot}{\cdot}_m$ the scalar product only in the $m$-th variable.

For any operator $A:L^2(\RRR^d)\to L^2(\RRR^d)$ we denote the operator norm by
\begin{align}
\norm[\op]{A} &:= \sup_{0 \neq f \in L^2(\RRR^d)} \frac{\norm{Af}}{\norm{f}} \nonumber \\
&\;= \sup_{f \in L^2(\RRR^d), \norm{f} = 1} \norm{Af}.
\end{align}
For self-adjoint $A$, the operator norm can be expressed as
\be
\norm[\op]{A} = \sup_{f \in L^2, \norm{f}=1} |\scp{f}{Af}| \, ,
\ee
see, e.g., \cite{reedsimon1:1980}. We denote the trace norm by $\norm[\tr]{\cdot}$ and the Hilbert-Schmidt norm by $\norm[\HS]{\cdot}$, see Section~\ref{sec:trace_and_HS} for more details. We denote the commutator of two operators $A,B$ by
\be
\left[ A, B \right] := AB - BA.
\ee

Given $\varphi_1,\ldots,\varphi_N \in L^2(\RRR^3)$ we denote the density by $\rho_N(x) = \sum_{i=1}^N |\varphi_i(x)|^2$ and the total kinetic energy by
\be
E_{\kin,\mf} = \sum_{i=1}^N \scp{\varphi_i}{(-\Delta)\varphi_i} = \sum_{i=1}^N \norm{\nabla \varphi_i}^2.
\ee
We use the notation $\prod \varphi$ (for any $\varphi \in L^2(\RRR^3)$) as abbreviation for the simple product
\be
\left(\prod_{i=1}^N \varphi\right)(x_1,\ldots,x_N) = \prod_{i=1}^N \varphi(x_i)
\ee
and $\bigwedge \varphi_j$ as abbreviation for the antisymmetrized product
\be
\left(\bigwedge_{j=1}^N \varphi_j\right)(x_1,\ldots,x_N) = \frac{1}{\sqrt{N}} \sum_{\sigma \in S_N} (-1)^{\sigma} \prod_{j=1}^N \varphi_{\sigma(j)}(x_i),
\ee
where $S_N$ is the symmetric group and $(-1)^\sigma$ the sign of the permutation $\sigma$.

Given a function $h:\RRR^d\to\RRR$ we introduce $h_{12}:\RRR^d \times \RRR^d \to \RRR, h_{12}(x_1,x_2)=h(x_1-x_2)$. We use the Landau notation $O(\cdot)$, i.e., $f(N) \in O(g(N))$ means that $\lim_{N\to\infty}\frac{f(N)}{g(N)} < \infty$. We always denote by $B_R(x)$ the open ball with radius $R$ around $x$, i.e.,
\be
B_R(x) = \left\{ y \in \RRR^d: |x-y| < R \right\}.
\ee
For any set $\Omega \subset \RRR^d$ we write $\overline{\Omega} = \RRR^d \setminus \Omega$.

Let us also list some well-known inequalities that we frequently use (for proofs, see, e.g., \cite{liebloss:2001}). We denote by $(\Omega,\Sigma,\mu)$ a general measure space.
\begin{itemize}
\item \emph{Cauchy-Schwarz inequality for $\CCC^N$.} Let $a,b \in \CCC^N$. Then
\be
\left\lvert \, \sum_{k=1}^N a_k b_k \, \right\rvert \leq \sqrt{\sum_{k=1}^N \lvert a_k \rvert^2} \sqrt{\sum_{k=1}^N \lvert b_k \rvert^2}.
\ee
\item \emph{Cauchy-Schwarz inequality for $L^2(\Omega)$.} Let $\psi,\chi \in L^2(\Omega)$. Then
\be
\left\lvert \scp{\psi}{\chi} \right\rvert \leq \norm{\psi} \norm{\chi}.
\ee
\item \emph{H\"older's inequality.} Let $1 \leq p,q \leq \infty$ with $\frac{1}{p} + \frac{1}{q} = 1$. Let $f \in L^p(\Omega)$, $g \in L^q(\Omega)$. Then $fg \in L^1(\Omega)$ and
\be
\norm[1]{fg} \leq \norm[p]{f} \norm[q]{g}.
\ee
\end{itemize}

\section{More about the Projectors}\label{sec:properties_projectors}
In this section we summarize some properties of the projectors from Definition~\ref{def:projectors} and define more projectors that we need in the course of the proofs in the following chapters.

\begin{definition}\label{def:projectors2}
Let $\varphi_1, \ldots, \varphi_N \in L^2(\RRR^3)$ be orthonormal.
\begin{enumerate}[(a)]
\item  We define
\be
p^{\varphi_j} := \sum_{m=1}^N p_m^{\varphi_j}
\ee
and
\be
q^{\varphi_j} = 1- p^{\varphi_j}.
\ee
\item For $a \leq n < N$ we define
\be
P_a^{\{i_1,\ldots,i_n\}} := \left(\prod_{m=1}^a q_{i_m} \prod_{m=a+1}^n p_{i_m}\right)_{\sym}.
\ee
\end{enumerate}
\end{definition}

The operator
\be
p_m^{\varphi} = \ket{\varphi}\bra{\varphi}_m = \ket{\varphi(x_m)}\bra{\varphi(x_m)}
\ee
is indeed a projector on $L^2(\RRR^{3N})$. For $\varphi_i \perp \varphi_j$ and all $m,n=1,\ldots,N$ we have
\be
p_m^{\varphi_i} p_m^{\varphi_j} = 0 \quad\text{and}\quad \left[ p_m^{\varphi_i}, p_n^{\varphi_j} \right] = 0.
\ee
From that we conclude the following properties of the projectors $p_m$ and $q_m=1-p_m$:
\be
p_m q_m = 0,
\ee
\be
[p_m,p_n] = [p_m,q_n] = [q_m,q_n] = 0,
\ee
for all $m,n=1,\ldots,N$. For antisymmetric $\psi_{as} \in L^2(\RRR^{3N})$, we have
\be
p_m^{\varphi}p_n^{\varphi} \psi_{as} = 0
\ee
for all $m \neq n$, since
\begin{align}
\left(p_m^{\varphi}p_n^{\varphi} \psi_{as}\right)(x_1,\ldots,x_N) &= \varphi(x_m)\varphi(x_n) \int dx_m dx_n \varphi(x_m)^*\varphi(x_n)^* \psi_{as}(\ldots,x_m,\ldots,x_n,\ldots) \nonumber \\
&= - \varphi(x_m)\varphi(x_n) \int dx_m dx_n \varphi(x_m)^*\varphi(x_n)^* \psi_{as}(\ldots,x_n,\ldots,x_m,\ldots) \nonumber \\
&= - \varphi(x_m)\varphi(x_n) \int dx_n dx_m \varphi(x_n)^*\varphi(x_m)^* \psi_{as}(\ldots,x_m,\ldots,x_n,\ldots).
\end{align}
Therefore, on antisymmetric functions in $L^2(\RRR^{3N})$, the operators $p^{\varphi_i}$ and $q^{\varphi_i} = 1 - p^{\varphi_i}$ are projectors, and for any antisymmetric $\psi_{as} \in L^2(\RRR^{3N})$ we have
\be
p^{\varphi_i} q^{\varphi_i} \psi_{as} = 0,
\ee
\be
[p^{\varphi_i},p^{\varphi_j}] \psi_{as} = [p^{\varphi_i},q^{\varphi_j}] \psi_{as} = [q^{\varphi_i},q^{\varphi_j}] \psi_{as} = 0,
\ee
for all $i,j=1,\ldots,N$.

When one considers the operator norm of the projectors $p_m^{\varphi}$ it makes an important difference if it is calculated on all $L^2$ functions or only on antisymmetric functions in $L^2$, as the following lemma shows. Recall that $\SCP{\cdot}{\cdot}_{a+1,\ldots,N}$ denotes the scalar product only in the variables $x_{a+1},\ldots,x_N$.
\begin{lemma}\label{lem:projector_norms}
\begin{enumerate}[(a)]
\item Let $\psi_{as}\in L^2(\RRR^{3N})$ be antisymmetric and normalized. Then, for all $m=1,\ldots,N$,
\be
\SCP{\psi_{as}}{p_m^{\varphi}\psi_{as}} \leq \frac{1}{N}.
\ee
\item Let $\psi_{as}^{1,\ldots,a}\in L^2(\RRR^{3N})$ be antisymmetric in all variables except $x_1,\ldots,x_a$. Let $m \in \{ a+1,\ldots,N \}$. Then
\be
\SCP{\psi_{as}^{1,\ldots,a}}{p_m^{\varphi}\psi_{as}^{1,\ldots,a}} \leq \frac{1}{N-a} \SCP{\psi_{as}^{1,\ldots,a}}{\psi_{as}^{1,\ldots,a}}.
\ee
Furthermore, also
\be\label{scp_a-N}
\SCP{\psi_{as}^{1,\ldots,a}}{p_m^{\varphi}\psi_{as}^{1,\ldots,a}}_{a+1,\ldots,N}(x_1,\ldots,x_a) \leq \frac{1}{N-a} \SCP{\psi_{as}^{1,\ldots,a}}{\psi_{as}^{1,\ldots,a}}_{a+1,\ldots,N}(x_1,\ldots,x_a),
\ee
for almost all $x_1,\ldots,x_a$ (with the definition of $\SCP{\cdot}{\cdot}_{a+1,\ldots,N}$ from \eqref{partial_scp}).
\end{enumerate}
\end{lemma}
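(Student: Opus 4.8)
The plan is to use the (partial) antisymmetry of the wave function to symmetrize: replace the single projector $p_m^\varphi$ by the average of the $N$ (resp.\ $N-a$) projectors $p_{m'}^\varphi$, and observe that this average assembles into an operator that acts as an orthogonal projection on the relevant antisymmetric subspace, hence has norm at most one.

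\textbf{Part (a).} First I would note that for a transposition $\tau=(m\,m')$ the permutation unitary $U_\tau$ satisfies $U_\tau\psi_{as}=-\psi_{as}$ and $U_\tau p_m^\varphi U_\tau^{-1}=p_{m'}^\varphi$, so that $\SCP{\psi_{as}}{p_m^\varphi\psi_{as}}=\SCP{\psi_{as}}{p_{m'}^\varphi\psi_{as}}$ for all $m,m'$. Averaging over $m'$ gives
\[
\SCP{\psi_{as}}{p_m^\varphi\psi_{as}}=\frac1N\sum_{m'=1}^N\SCP{\psi_{as}}{p_{m'}^\varphi\psi_{as}}=\frac1N\SCP{\psi_{as}}{p^\varphi\psi_{as}},
\]
with $p^\varphi=\sum_{m'=1}^N p_{m'}^\varphi$ as in Definition~\ref{def:projectors2}. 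It then remains to observe that, as recalled in the text preceding the lemma, $p_m^\varphi p_n^\varphi\psi_{as}=0$ for $m\neq n$, so $(p^\varphi)^2\psi_{as}=\sum_{m'}p_{m'}^\varphi\psi_{as}=p^\varphi\psi_{as}$; since $p^\varphi$ is self-adjoint and is invariant under the permutation action (hence preserves the antisymmetric subspace), it acts there as an orthogonal projection, whence $\SCP{\psi_{as}}{p^\varphi\psi_{as}}=\norm{p^\varphi\psi_{as}}^2\leq\norm{\psi_{as}}^2=1$. Combining the two displays gives $\SCP{\psi_{as}}{p_m^\varphi\psi_{as}}\leq\frac1N$.

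\textbf{Part (b).} I would run the identical argument restricted to the block of variables $x_{a+1},\ldots,x_N$ in which $\psi_{as}^{1,\ldots,a}$ is antisymmetric. For $m\in\{a+1,\ldots,N\}$ and a transposition of two such indices, $U_\tau\psi_{as}^{1,\ldots,a}=-\psi_{as}^{1,\ldots,a}$, so $\SCP{\psi_{as}^{1,\ldots,a}}{p_m^\varphi\psi_{as}^{1,\ldots,a}}$ is independent of $m$ in this range; averaging over $m=a+1,\ldots,N$ yields $(N-a)\SCP{\psi_{as}^{1,\ldots,a}}{p_m^\varphi\psi_{as}^{1,\ldots,a}}=\SCP{\psi_{as}^{1,\ldots,a}}{\widetilde p^\varphi\psi_{as}^{1,\ldots,a}}$ with $\widetilde p^\varphi:=\sum_{m'=a+1}^N p_{m'}^\varphi$, which is an orthogonal projection on the subspace of functions antisymmetric in $x_{a+1},\ldots,x_N$, so $\SCP{\psi_{as}^{1,\ldots,a}}{\widetilde p^\varphi\psi_{as}^{1,\ldots,a}}\leq\SCP{\psi_{as}^{1,\ldots,a}}{\psi_{as}^{1,\ldots,a}}$; this is the first claim. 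For the pointwise inequality \eqref{scp_a-N} I would repeat the argument for a.e.\ fixed $(x_1,\ldots,x_a)$, viewing $\psi_{as}^{1,\ldots,a}(x_1,\ldots,x_a,\cdot)$ as an antisymmetric element of $L^2(\RRR^{3(N-a)})$: the permutation-invariance identities and the projection bound hold verbatim for the partial scalar product $\SCP{\cdot}{\cdot}_{a+1,\ldots,N}$.

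\textbf{Main obstacle.} There is no genuine difficulty here; the proof is essentially an exercise in bookkeeping. The only points needing a little care are (i) the fact that $p^\varphi$ (resp.\ $\widetilde p^\varphi$) is bounded by one, which is exactly where the Pauli-type identity $p_m^\varphi p_n^\varphi\psi_{as}=0$ established in the excerpt enters, and (ii) making the pointwise statement \eqref{scp_a-N} rigorous, i.e.\ invoking Fubini to discard the null set of $(x_1,\ldots,x_a)$ for which the slice of $\psi_{as}^{1,\ldots,a}$ may fail to lie in $L^2(\RRR^{3(N-a)})$ or to be antisymmetric.
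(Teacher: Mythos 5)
Your proposal is correct and follows essentially the same route as the paper: replace $p_m^{\varphi}$ by the average of $p_{m'}^{\varphi}$ over the antisymmetric block (full set in (a), $m'=a+1,\ldots,N$ in (b)), use that this sum acts as an orthogonal projection on the relevant (partially) antisymmetric functions — which rests on the identity $p_m^{\varphi}p_n^{\varphi}\psi_{as}=0$ — and note that the same computation goes through for the partial scalar product at a.e.\ fixed $(x_1,\ldots,x_a)$. The only cosmetic difference is that you invoke the projection bound directly, while the paper reaches the same conclusion via a norm estimate plus Cauchy–Schwarz.
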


\begin{proof}
\begin{enumerate}[(a)]
\item First, note that for all antisymmetric $\psi_{as}$, $\norm{p^{\varphi}\psi_{as}}\leq \norm{\psi_{as}}$, since $p^{\varphi}$ is a projector on antisymmetric $\psi_{as}$. Using this and the antisymmetry of $\psi_{as}$ we find
\be
\SCP{\psi_{as}}{p_m^{\varphi}\psi_{as}} = \frac{1}{N} \SCP{\psi_{as}}{\sum_{n=1}^N p_n^{\varphi}\psi_{as}} = \frac{1}{N} \SCP{\psi_{as}}{p^{\varphi}\psi_{as}} = \frac{1}{N} \norm{p^{\varphi}\psi_{as}}^2 \leq \frac{1}{N} \norm{\psi_{as}}^2.
\ee

\item Now suppose that $\psi_{as}^{1,\ldots,a}$ is antisymmetric in all variables except $x_1,\ldots,x_a$ and let $m \in \{ a+1,\ldots,N \}$. Then $\sum_{n=a+1}^N p_n^{\varphi}$ is still a projector on those functions, and
\begin{align}\label{antisymm1a_norm}
\norm{\sum_{n=a+1}^N p_n^{\varphi} \psi_{as}^{1,\ldots,a}}^2 &= \bigSCP{\sum_{\ell=a+1}^N p_{\ell}^{\varphi}\psi_{as}^{1,\ldots,a}}{\sum_{n=a+1}^N p_n^{\varphi} \psi_{as}^{1,\ldots,a}} \nonumber \\
&= \bigSCP{\psi_{as}^{1,\ldots,a}}{\sum_{n=a+1}^N p_n^{\varphi} \psi_{as}^{1,\ldots,a}} \nonumber \\
&\leq \norm{\sum_{n=a+1}^N p_n^{\varphi} \psi_{as}^{1,\ldots,a}}.
\end{align}
Therefore, using Cauchy-Schwarz,
\begin{align}\label{antisymm1a_estimate}
\SCP{\psi_{as}^{1,\ldots,a}}{p_m^{\varphi}\psi_{as}^{1,\ldots,a}} &= \frac{1}{N-a} \bigSCP{\psi_{as}^{1,\ldots,a}}{\sum_{n=a+1}^N p_n^{\varphi} \psi_{as}^{1,\ldots,a}} \nonumber \\
&\leq \frac{1}{N-a} \norm{ \psi_{as}^{1,\ldots,a}} \norm{\sum_{n=a+1}^N p_n^{\varphi} \psi_{as}^{1,\ldots,a}} \nonumber \\
&\leq \frac{1}{N-a} \SCP{\psi_{as}^{1,\ldots,a}}{\psi_{as}^{1,\ldots,a}}.
\end{align}
Both \eqref{antisymm1a_norm} and \eqref{antisymm1a_estimate} remain true if the scalar products (and the corresponding norms) are only partial, i.e., when it is integrated only in the variables $x_{a+1},\ldots,x_N$. \hfill $\qedhere$
\end{enumerate}
\end{proof}

By using the example of the antisymmetrized product state $\bigwedge_{j=1}^{N} \varphi_j$, we see that the operator norm on antisymmetric functions $\norm[\op,\as]{\cdot}$ of $p_m^{\varphi_k}$ is indeed
\be
\norm[\op,\as]{p_m^{\varphi_k}} = \frac{1}{\sqrt{N}},
\ee
while in general 
\be
\norm[\op]{p_m^{\varphi_k}} = 1,
\ee
which can be seen by using the product state $\prod_{j=1}^N\varphi_k(x_j)$. Let us make one more remark (which is not necessary for the proofs later). One could as well define
\be
P_{N,k}' := \left( \prod_{j=1}^k q^{\varphi_j} \prod_{j=k+1}^N p^{\varphi_j} \right)_{\sym} = \sum_{\vec{a} \in \AAA_k} \prod_{j=1}^N (p^{\varphi_j})^{1-a_j}(q^{\varphi_j})^{a_j},
\ee
with the same notation as in Definition~\ref{def:projectors}, and
\be
\widehat{f'} = \sum_{k=0}^N f(k) P'_{N,k},
\ee
i.e., one could define $P_{N,k}$, $\widehat{f}$ and $\alpha_f$ with the projectors $p^{\varphi_j}$ instead of $p_m$. However, on antisymmetric functions both definitions coincide, i.e., for all antisymmetric $\psi_{as} \in L^2(\RRR^{3N})$,
\be
P_{N,k}\psi_{as} = P'_{N,k}\psi_{as}.
\ee
This can be seen by multiplying out $P_{N,k}\psi_{as}$ and $P'_{N,k}\psi_{as}$, and using $p_m^{\varphi_i}p_m^{\varphi_j} = 0$ for $i \neq j$ and $p_m^{\varphi}p_n^{\varphi}\psi_{as} = 0$ for $n \neq m$.

\chapter{Density Matrices}\label{sec:density_matrices}
In this chapter, we prove the results from Chapter~\ref{sec:dens_mat_summary} about the relation of $\alpha_f(t)$ to the reduced density matrices of $\psi^t$ and of the antisymmetrized product state $\bigwedge \varphi_j^t$.

\section{Trace Norm and Hilbert-Schmidt Norm}\label{sec:trace_and_HS}
We first give a brief overview of the definition and some properties of trace class and Hilbert-Schmidt operators. The following well-known statements and their proofs can be found, for example, in \cite[chapter~VI]{reedsimon1:1980}.

Let $\Hilbert$ be a Hilbert space and $A,B:\Hilbert\to\Hilbert$. Let $\{ \varphi_i \}_{i\in\NNN}$ be an orthonormal basis of $\Hilbert$. For any bounded positive operator $A:\Hilbert\to\Hilbert$ we define the trace of $A$ as $\tr(A)=\sum_{i\in\NNN} \scp{\varphi_i}{A \varphi_i}$. It is independent of the chosen orthonormal basis and
\begin{itemize}
\item $\tr(A+B) = \tr(A) + \tr(B)$,
\item $\tr(\lambda A) = \lambda \, \tr(A)$ for all $\lambda>0$,
\item $\tr(UAU^{-1}) = \tr(A)$ for any unitary operator $U$.
\end{itemize}
A bounded operator $A$ is called trace class if and only if $\tr|A| < \infty$. The trace class operators are a Banach space with norm $\norm[\tr]{\cdot}=\tr|\cdot|$. We have $|\tr(A)| \leq \norm[\tr]{A}$. If $A$ is trace class then so is $A^*$, the adjoint of $A$. If $A$ is trace class and $B$ is bounded then $AB$ and $BA$ are trace class. Every trace class operator is compact.

A bounded operator $A$ is called a Hilbert-Schmidt operator if and only if $\tr(A^*A) < \infty$. The Hilbert-Schmidt operators are a Hilbert space with the scalar product defined as $(A,B) = \sum_{i\in\NNN} \scp{\varphi_i}{A^*B \varphi_i}$ and norm $\norm[\HS]{A}=\sqrt{\tr(A^*A)}$. Every Hilbert-Schmidt operator is compact.  On $L^2$ spaces, Hilbert-Schmidt operators have a simple form, as the following statement shows. Let $(M,\mu)$ be a measure space. A bounded operator $A$ on $L^2(M,d\mu)$ is Hilbert-Schmidt if and only if there is a $K \in L^2(M \times M, d\mu \otimes d\mu)$ with
\be
(Af)(x) = \int K(x,y)f(y)d\mu(y).
\ee
Then
\be
\norm[\HS]{A}^2 = \int |K(x,y)|^2 d\mu(x) d\mu(y).
\ee
For a positive trace class operator $A: L^2(\RRR^d) \to L^2(\RRR^d)$ ($d\in\NNN$) with \emph{continuous integral kernel} $K(x,y)$, we have that
\be
\norm[\tr]{A} = \tr(A) = \int K(x,x) \, d^dx.
\ee
If $A$ is trace class then it is also a Hilbert-Schmidt operator; in fact
\be
\norm[\op]{A} \leq \norm[\HS]{A} \leq \norm[\tr]{A}.
\ee
Self-adjoint trace class operators $A$ can be diagonalized with real eigenvalues $\lambda_i$ ($i \in \NNN$) and we have
\be
\norm[\op]{A} = \sup_{i \in \NNN} |\lambda_i|, ~\quad \norm[\HS]{A}^2 = \sum_{i \in \NNN} |\lambda_i|^2, ~\quad \norm[\tr]{A} = \sum_{i \in \NNN} |\lambda_i|^.
\ee

Finally we collect some inequalities (which hold whenever the respective norms exist), which we frequently use in the proofs of the lemmas in Chapter~\ref{sec:reduced_density_matrices}:
\be\label{tr_op_tr}
\norm[\tr]{AB} \leq \norm[\op]{A} \norm[\tr]{B},
\ee
\be\label{HS_op_HS}
\norm[\HS]{AB} \leq \norm[\op]{A} \norm[\HS]{B},
\ee
\be\label{tr_HS_HS}
\norm[\tr]{AB} \leq \norm[\HS]{A} \norm[\HS]{B}.
\ee

\section{Convergence of Reduced Density Matrices}\label{sec:reduced_density_matrices}
For any normalized symmetric or antisymmetric $\psi \in L^2(\RRR^{3N})$ we define the \emph{reduced k-particle density matrix} $\mu^{\psi}_k: L^2(\RRR^{3k}) \to L^2(\RRR^{3k})$ by its integral kernel
\begin{align}\label{definition_dens_mat}
&\mu^{\psi}_k(x_1,\ldots,x_k;y_1,\ldots,y_k) \nonumber \\
&= \int \psi(x_1,\ldots,x_k,x_{k+1},\ldots,x_N) \psi^*(y_1,\ldots,y_k,x_{k+1},\ldots,x_N) \, d^3x_{k+1} \ldots d^3x_N.
\end{align}
Note that $\mu^{\psi}_k$ has an eigenfunction expansion.\footnote{One can write
\be
\mu^{\psi}_k(x_1,\ldots,x_k;y_1,\ldots,y_k) = \sum_{i=1}^{\infty} \mu_i \phi_i(x_1,\ldots,x_k) \phi_i^*(y_1,\ldots,y_k),
\ee
with $\phi_i \in L^2(\RRR^{3k})$ and $\mu_i > 0$. Then, in particular, the diagonal is
\be
\mu^{\psi}_k(x_1,\ldots,x_k;x_1,\ldots,x_k) = \sum_{i=1}^{\infty} \mu_i \left|\phi_i(x_1,\ldots,x_k)\right|^2.
\ee
}
Reduced density matrices have the following well-known properties.

\begin{lemma}\label{lem:properties_density_matrix}
\begin{enumerate}[(a)]
\item $\mu^{\psi}_k$ is non-negative, i.e., $\scp{f}{\mu^{\psi}_k f} \geq 0 \quad \forall f \in L^2(\RRR^{3k})$,
\item $\norm[\tr]{\mu^{\psi}_k}=\tr\big(\mu^{\psi}_k\big) = 1$,
\item For antisymmetric $\psi_{as}$, $\norm[\op]{\mu^{\psi_{as}}_1} \leq \frac{1}{N}$.
\end{enumerate}
\end{lemma}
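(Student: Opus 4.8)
The plan is to establish the three parts in the order (a), (b), (c), since part (b) uses part (a) (positivity is what makes $\tr(\mu^\psi_k)$ meaningful as a possibly infinite nonnegative sum, and equal to the trace norm once it is shown finite), and part (c) uses Lemma~\ref{lem:projector_norms}(a).

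For part (a) I would write out the quadratic form directly. For $f\in L^2(\RRR^{3k})$, inserting the kernel \eqref{definition_dens_mat} and reorganising the integrations by Fubini--Tonelli (legitimate since $\psi\in L^2(\RRR^{3N})$ and Cauchy--Schwarz in the first $k$ variables give absolute integrability) yields
\be\label{quad_form_mu}
\scp{f}{\mu_k^\psi f} = \int d^3x_{k+1}\cdots d^3x_N \, \left| \int d^3x_1\cdots d^3x_k \, \overline{f(x_1,\ldots,x_k)}\, \psi(x_1,\ldots,x_N) \right|^2 \geq 0,
\ee
which is the claim. For part (b), rather than appealing to continuity of the kernel, I would evaluate the trace against an arbitrary orthonormal basis $\{e_i\}_{i\in\NNN}$ of $L^2(\RRR^{3k})$: applying \eqref{quad_form_mu} with $f=e_i$, summing over $i$, interchanging the nonnegative sum with the integral (Tonelli), and using Parseval in the variables $x_1,\ldots,x_k$ gives $\tr(\mu_k^\psi)=\int |\psi(x_1,\ldots,x_N)|^2\, d^3x_1\cdots d^3x_N = \norm{\psi}^2 = 1$. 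Thus $\mu_k^\psi$ is a positive operator with finite trace, hence trace class, and $\norm[\tr]{\mu_k^\psi}=\tr(\mu_k^\psi)=1$.

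For part (c), the point is that \eqref{quad_form_mu} with $k=1$ and normalised $f\in L^2(\RRR^3)$ reads $\scp{f}{\mu^{\psi_{as}}_1 f} = \norm{p_1^f\psi_{as}}^2 = \SCP{\psi_{as}}{p_1^f\psi_{as}}$, where $p_1^f := \ketbr{f}_1$ is the rank-one projector onto $f$ in the first variable, as in Definition~\ref{def:projectors}. Since $\mu^{\psi_{as}}_1$ is self-adjoint (its kernel is Hermitian, $\mu^{\psi_{as}}_1(x;y)=\overline{\mu^{\psi_{as}}_1(y;x)}$) and nonnegative by part (a), its operator norm equals $\sup_{\norm{f}=1}\scp{f}{\mu^{\psi_{as}}_1 f}$. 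By Lemma~\ref{lem:projector_norms}(a) — whose proof only needs that $\sum_{m=1}^N p_m^f$ is a projector on antisymmetric functions, hence applies to any normalised one-particle state $f$ — we have $\SCP{\psi_{as}}{p_1^f\psi_{as}}\leq \tfrac1N$ for every normalised $f$. Taking the supremum over such $f$ gives $\norm[\op]{\mu^{\psi_{as}}_1}\leq \tfrac1N$.

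The only genuinely delicate point in all of this is the measure-theoretic bookkeeping: justifying the Fubini/Tonelli interchanges and observing that $\psi(\cdot,x_{k+1},\ldots,x_N)\in L^2(\RRR^{3k})$ for almost every fixed $(x_{k+1},\ldots,x_N)$. This is standard, so I expect no real obstacle; conceptually all three statements fall out immediately once the quadratic form is written in the ``integrate out the last $N-k$ variables'' form \eqref{quad_form_mu}.
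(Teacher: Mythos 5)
Your proposal is correct and follows essentially the same route as the paper: part (a) via the same ``integrate out the last $N-k$ variables'' quadratic-form identity, part (b) by computing $\tr(\mu^{\psi}_k)=\SCP{\psi}{\psi}=1$ (you merely spell out the basis/Parseval bookkeeping the paper leaves implicit), and part (c) via $\scp{f}{\mu^{\psi_{as}}_1 f}=\SCP{\psi_{as}}{p_1^f\psi_{as}}\leq\frac{1}{N}$. The only cosmetic difference is that you cite Lemma~\ref{lem:projector_norms}(a) for the last bound, whereas the paper repeats that symmetrization argument ($p_1^f\to\frac{1}{N}\sum_{i=1}^N p_i^f$ with $p^f$ a projector on antisymmetric functions) inline; your remark that the lemma applies to any normalised one-particle state $f$ is the correct justification.
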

In order to get used to notation we provide a proof for Lemma~\ref{lem:properties_density_matrix}.

\begin{proof}
\begin{enumerate}[(a)]
\item For all $f \in L^2(\RRR^{3k})$ we find
\begin{align}
\scp{f}{\mu^{\psi}_k f} &= \SCP{\psi}{ \ketbra{f}{f} \, \psi} \nonumber \\
&= \int dx_{k+1} \dots dx_N \left| \int dx_1 \dots dx_k f^*(x_1,\ldots,x_k) \psi(x_1,\ldots,x_N) \right|^2 \nonumber \\
&\geq 0.
\end{align}
\item Recall that $\psi \in L^2(\RRR^{3N})$ is normalized. Since $\mu^{\psi}_k$ is non-negative, $\norm[\tr]{\mu^{\psi}_k}=\tr\big(\mu^{\psi}_k\big)$. Also,
\be
\tr\big(\mu^{\psi}_k\big) = \SCP{\psi}{\psi} = 1.
\ee
\item For all $f \in L^2(\RRR^3)$ with $\norm{f} = 1$ and antisymmetric $\psi_{as} \in L^2(\RRR^{3N})$ we find
\begin{align}
\scp{f}{\mu^{\psi_{as}}_1 f}  &= \SCP{\psi_{as}}{ \ketbra{f(x_1)}{f(x_1)} \, \psi_{as}} \nonumber \\
&= \SCP{\psi_{as}}{ p_1^f \psi_{as}} \nonumber \\
&= \frac{1}{N} \bigSCP{\psi_{as}}{ \sum_{i=1}^N p_i^f \psi_{as}}.
\end{align}
Since $p^f = \sum_{i=1}^N p_i^f$ is a projector on antisymmetric $\psi_{as}$ we have
\be
\SCP{\psi_{as}}{ p^f \psi_{as}} \leq 1.
\ee
Since $\norm[\op]{\mu^{\psi}_1} = \sup_{f \in L^2, \norm{f}=1} \scp{f}{\mu^{\psi}_1 f}$ the statement follows. \hfill $\qedhere$
\end{enumerate}
\end{proof}

Let us give some examples of one- and two-particle density matrices. For a simple product $\prod_{i=1}^N \varphi$ (a bosonic condensed state) we find
\begin{align}
&\mu^{\prod \varphi}_1 = p_1^{\varphi}, \\
&\mu^{\prod \varphi}_2 = p_1^{\varphi}p_2^{\varphi}.
\end{align}
For an antisymmetrized product state $\bigwedge_{j=1}^{N} \varphi_j$ we find
\begin{align}
\mu^{\bigwedge \varphi_j}_1 &= \frac{1}{N} p_1, \\
\mu^{\bigwedge \varphi_j}_2 &= \frac{1}{N(N-1)} \left(p_1p_2 - \sum_{i,j=1}^N \ketbra{\varphi_i}{\varphi_j}_1 \, \ketbra{\varphi_j}{\varphi_i}_2 \right)
\end{align}

We now give the proof of Lemma~\ref{lem:density_conv}, i.e., of the relation between convergence in the $\alpha_n$ sense and convergence of the reduced density matrices in trace norm and Hilbert-Schmidt norm. Recall that Lemma~\ref{lem:density_conv} concerns $\alpha_n$, i.e., the $\alpha$-functional with the weight $n(k)=\frac{k}{N}$.

\begin{proof}[Proof of Lemma \ref{lem:density_conv}]
Recall that $\mu^{\bigwedge \varphi_j}_1 = \frac{1}{N} p_1$. We first show that $\frac{1}{N}p_1 - p_1 \mu^{\psi}_1 p_1$ is a non-negative operator with trace norm $\alpha_n$. Note that the operator $p_1\mu^{\psi}_1p_1$ maps the $N$-dimensional subspace $\Span(\varphi_1,\ldots,\varphi_N)$ to itself. Also, $p_1\mu^{\psi}_1p_1$ is non-negative and self-adjoint. We can therefore diagonalize it, i.e., there is an orthonormal basis $\{ \chi_1,\ldots,\chi_N \}$, such that
\be
p_1\mu^{\psi}_1p_1 = \sum_{i=1}^N \lambda_i \, \ketbra{\chi_i}{\chi_i}_1 = \sum_{i=1}^N \lambda_i \, p_1^{\chi_i},
\ee
with $\lambda_i \geq 0 ~\forall i=1,\ldots,N$. Note that, since $\Span(\chi_1,\ldots,\chi_N ) = \Span(\varphi_1,\ldots,\varphi_N)$,
\be
p_1 = \sum_{i=1}^N p_1^{\varphi_i} = \sum_{i=1}^N p_1^{\chi_i}.
\ee
We also have that all $\lambda_i \leq \frac{1}{N}$, since
\be
\lambda_i = \scp{\chi_i}{p_1\mu^{\psi}_1p_1 \chi_i} = \bigSCP{\psi}{\ketbra{\chi_i}{\chi_i}_1\psi} \leq \frac{1}{N} \bigSCP{\psi}{\psi} \leq \frac{1}{N}.
\ee
Also note that
\be
\sum_{i=1}^N \lambda_i = \sum_{i=1}^N \scp{\chi_i}{p_1\mu^{\psi}_1p_1 \chi_i} = \bigSCP{\psi}{\sum_{i=1}^N \ketbra{\chi_i}{\chi_i}_1\psi} = \bigSCP{\psi}{p_1 \psi} = 1- \alpha_n,
\ee
since $1=p_1+q_1$. Since $0 \leq \lambda_i \leq \frac{1}{N}$,
\be
\frac{1}{N} p_1 - p_1\mu^{\psi}_1p_1 = \sum_{i=1}^N \left(\frac{1}{N} - \lambda_i\right) p_1^{\chi_i}
\ee
is non-negative and
\begin{align}\label{trace_mu_phi-mu_psi}
\norm[\tr]{\frac{1}{N} p_1 - p_1\mu^{\psi}_1p_1} &= \tr\left(\frac{1}{N} p_1 - p_1\mu^{\psi}_1p_1\right) = \tr\left(\sum_{i=1}^N \left(\frac{1}{N} - \lambda_i\right) p_1^{\chi_i}\right) \nonumber \\
&= \sum_{i=1}^N \left(\frac{1}{N} - \lambda_i\right) \scp{\chi_i}{\chi_i} \nonumber \\
&= 1 - \sum_{i=1}^N \lambda_i \nonumber \\
&= \alpha_n.
\end{align}

We now show $\norm[\tr]{\mu^{\psi}_1 - \mu^{\bigwedge \varphi_j}_1} \leq \sqrt{8 \alpha_n}$. Note that the operators $\mu_1^{\psi}$, $p_1\mu_1^{\psi}p_1$ and $q_1\mu_1^{\psi}q_1$ are non-negative, and that
\be\label{trace_pmup_qmuq}
\norm[\tr]{p_1\mu^{\psi}_1p_1} = \bigSCP{\psi}{p_1 \psi} = 1-\alpha_n \quad\text{and}\quad \norm[\tr]{q_1\mu^{\psi}_1q_1} = \bigSCP{\psi}{q_1 \psi} = \alpha_n.
\ee
By inserting two identities $1=p_1+q_1$ we find, using \eqref{trace_mu_phi-mu_psi}, \eqref{trace_pmup_qmuq}, the triangle inequality (abbreviated $\Delta$ ineq.), and $\norm[\tr]{AB} \leq \norm[\HS]{A}\norm[\HS]{B}$,
\begin{align}\label{estimate_trace_norm_pq}
\norm[\tr]{\mu^{\bigwedge \varphi_j}_1 - \mu^{\psi}_1} &= \norm[\tr]{\frac{1}{N} p_1 - p_1\mu^{\psi}_1p_1 - p_1\mu^{\psi}_1q_1 - q_1\mu^{\psi}_1p_1 - q_1\mu^{\psi}_1q_1} \nonumber \\
\eqexp{by $\Delta$ ineq.} &\leq \norm[\tr]{\frac{1}{N} p_1 - p_1\mu^{\psi}_1p_1} + \norm[\tr]{p_1\mu^{\psi}_1q_1} + \norm[\tr]{q_1\mu^{\psi}_1p_1} + \norm[\tr]{q_1\mu^{\psi}_1q_1} \nonumber \\
\eqexp{by \eqref{trace_mu_phi-mu_psi}, \eqref{trace_pmup_qmuq}} &\leq \alpha_n + \norm[\tr]{p_1\sqrt{\mu^{\psi}_1}\sqrt{\mu^{\psi}_1}q_1} + \norm[\tr]{q_1\sqrt{\mu^{\psi}_1}\sqrt{\mu^{\psi}_1}p_1} + \alpha_n \nonumber \\
\eqexp{by \eqref{tr_HS_HS}} &\leq 2 \alpha_n + 2\norm[\HS]{p_1\sqrt{\mu^{\psi}_1}} \norm[\HS]{q_1\sqrt{\mu^{\psi}_1}} \nonumber \\
&= 2 \alpha_n + 2 \sqrt{\norm[\tr]{p_1\mu^{\psi}_1p_1} \norm[\tr]{q_1\mu^{\psi}_1q_1}} \nonumber \\
\eqexp{by \eqref{trace_pmup_qmuq}} &= 2 \alpha_n + 2 \sqrt{\alpha_n (1-\alpha_n)}.
\end{align}
Since $0 \leq \alpha_n \leq 1$, it is indeed true that
\be\label{alpha_sqrt_ineq}
2 \alpha_n + 2 \sqrt{\alpha_n (1-\alpha_n)} \leq \sqrt{8\alpha_n},
\ee
since the continuous function $f(\alpha)= \sqrt{8\alpha} - 2 \alpha - 2 \sqrt{\alpha (1-\alpha)}$ has its only minimum at $\alpha=\frac{1}{2}$ with $f\left(\frac{1}{2}\right)=0$, and also $f(0)=f(1) \geq 0$, thus $f(\alpha)\geq 0$ for all $\alpha\in [0,1]$, showing \eqref{alpha_sqrt_ineq}.

We now show $2 \alpha_n \leq \norm[\tr]{\mu^{\psi}_1 - \mu^{\bigwedge \varphi_j}_1}$. We find, using \eqref{trace_mu_phi-mu_psi}, \eqref{trace_pmup_qmuq}, $\tr(q_1\mu^{\psi}_1p_1) = \tr(p_1\mu^{\psi}_1q_1) = 0$ and $|\tr(A)|\leq \norm[\tr]{A}$, that
\begin{align}\label{alpha_leq_tr}
2 \alpha_n &= \tr\left( \mu^{\bigwedge \varphi_j}_1 - p_1\mu^{\psi}_1p_1 \right) + \tr\left( q_1\mu^{\psi}_1q_1 \right) \nonumber \\
&= \tr\left( \mu^{\bigwedge \varphi_j}_1 - p_1\mu^{\psi}_1p_1 + q_1\mu^{\psi}_1q_1 \right) \nonumber \\
&= \tr\left( \left(\mu^{\bigwedge \varphi_j}_1 - \mu^{\psi}_1\right) (p_1-q_1) \right) \nonumber \\
&\leq \norm[\tr]{ \left(\mu^{\bigwedge \varphi_j}_1 - \mu^{\psi}_1\right) (p_1-q_1) } \nonumber \\
\eqexp{by \eqref{tr_op_tr}} &\leq \norm[\op]{p_1-q_1} \norm[\tr]{ \mu^{\bigwedge \varphi_j}_1 - \mu^{\psi}_1 } \nonumber \\
&= \norm[\tr]{ \mu^{\bigwedge \varphi_j}_1 - \mu^{\psi}_1 }.
\end{align}
Note that indeed $\norm[\op]{p_1-q_1}=1$, e.g., since for all $f \in L^2(\RRR^3)$,
\be
\norm{(p_1-q_1)f}^2 = \scp{f}{(p_1-q_1)^2f}= \scp{f}{(p_1+q_1)f}=\norm{f}^2.
\ee

We now show $\norm[\HS]{\mu^{\psi}_1 - \mu^{\bigwedge \varphi_j}_1}^2 \leq \frac{2}{N} \alpha_n$. Recall that $\norm[\op]{\mu^{\psi}_1}=\frac{1}{N}$ and $\norm[\tr]{\mu^{\psi}_1}=1$. We find, using $\norm[\tr]{AB} \leq \norm[\op]{A} \norm[\tr]{B}$,
\begin{align}\label{HS_leq_alpha}
\norm[\HS]{\mu^{\psi}_1 - \mu^{\bigwedge \varphi_j}_1}^2 &= \tr\left( \left(\mu^{\psi}_1 - \mu^{\bigwedge \varphi_j}_1\right)^2 \right) \nonumber \\
&= \tr\left( \frac{1}{N^2} p_1 - \frac{1}{N}p_1\mu^{\psi}_1 - \frac{1}{N}\mu^{\psi}_1 p_1 + \left(\mu^{\psi}_1\right)^2 \right) \nonumber \\
&= \frac{1}{N} - \frac{1}{N}\bigSCP{\psi}{p_1 \psi} - \frac{1}{N}\bigSCP{\psi}{p_1 \psi} + \tr\left( \left(\mu^{\psi}_1\right)^2 \right) \nonumber \\
\eqexp{by \eqref{tr_op_tr}} &\leq \frac{1}{N} \left( 1 - \bigSCP{\psi}{p_1 \psi} \right) - \frac{1}{N}\bigSCP{\psi}{p_1 \psi} + \norm[\op]{\mu^{\psi}_1}\norm[\tr]{\mu^{\psi}_1} \nonumber \\
\eqexp{by Lem.~\ref{lem:properties_density_matrix}} &= \frac{1}{N} \alpha_n - \frac{1}{N}\bigSCP{\psi}{p_1 \psi} + \frac{1}{N} \nonumber \\
&= \frac{2}{N} \alpha_n.
\end{align}

We now show $\alpha_n \leq \sqrt{N} \norm[\HS]{\mu^{\psi}_1 - \mu^{\bigwedge \varphi_j}_1}$. Using $\norm[\tr]{AB} \leq \norm[\HS]{A} \norm[\HS]{B}$, $\norm[\tr]{AB} \leq \norm[\op]{A} \norm[\tr]{B}$, $\norm[\HS]{p_1}=\sqrt{N}$, $\norm[\op]{p_1}=1$ and \eqref{trace_mu_phi-mu_psi} we find
\begin{align}\label{alpha_leq_HS}
\norm[\HS]{\mu^{\bigwedge \varphi_j}_1 - \mu^{\psi}_1} &\geq \frac{\norm[\tr]{\left(\mu^{\bigwedge \varphi_j}_1 - \mu^{\psi}_1\right)p_1}}{\norm[\HS]{p_1}} \nonumber \\
\eqexp{by \eqref{tr_op_tr}} &\geq \frac{1}{\sqrt{N}} \frac{\norm[\tr]{p_1\left(\mu^{\bigwedge \varphi_j}_1 - \mu^{\psi}_1\right)p_1}}{\norm[\op]{p_1}} \nonumber \\
&= \frac{1}{\sqrt{N}} \norm[\tr]{\frac{1}{N}p_1 - p_1 \mu^{\psi}_1 p_1} \nonumber \\
\eqexp{by \eqref{trace_mu_phi-mu_psi}} &= \frac{1}{\sqrt{N}} \, \alpha_n.
\end{align}

Together, the inequalities \eqref{estimate_trace_norm_pq} with \eqref{alpha_sqrt_ineq}, \eqref{alpha_leq_tr}, \eqref{HS_leq_alpha} and \eqref{alpha_leq_HS} prove \eqref{bound_tr_alpha_HS} and \eqref{bound_HS_alpha_tr}. \hfill $\qedhere$
\end{proof}

Finally, let us prove Proposition~\ref{pro:density_matrix_op} which shows that the operator norm is not useful to measure the desired convergence.

\begin{proof}[Proof of Proposition~\ref{pro:density_matrix_op}]
Starting from \eqref{trace_mu_phi-mu_psi} from the proof of Lemma~\ref{lem:density_conv} and using $\norm[\tr]{AB} \leq \norm[\HS]{A} \norm[\HS]{B}$, $\norm[\HS]{AB} \leq \norm[\HS]{A} \norm[\op]{B}$ and $\norm[\HS]{p_1}^2 = N$, we find
\begin{align}
\alpha_n = \norm[\tr]{\frac{1}{N} p_1 - p_1\mu^{\psi}_1p_1} &= \norm[\tr]{p_1 \left(\mu^{\bigwedge \varphi_j}_1 - \mu^{\psi}_1\right)p_1} \nonumber \\
\eqexp{by \eqref{tr_HS_HS}} &\leq \norm[\HS]{p_1} \norm[\HS]{\left(\mu^{\bigwedge \varphi_j}_1 - \mu^{\psi}_1\right)p_1} \nonumber \\
\eqexp{by \eqref{HS_op_HS}} &\leq \norm[\HS]{p_1}^2 \norm[\op]{\mu^{\bigwedge \varphi_j}_1 - \mu^{\psi}_1} \nonumber \\
&= N \norm[\op]{\mu^{\bigwedge \varphi_j}_1 - \mu^{\psi}_1},
\end{align}
which proves \eqref{bound_alpha_op} and thus \eqref{op_implies_alpha}.

We now construct an example for which $\alpha_n \to 0$, but $N \norm[\op]{\mu^{\psi}_1 - \mu^{\bigwedge \varphi_j}_1} = 1$. Consider the density matrix
\be
\mu^{\tilde{\psi}}_1 = \frac{1}{N} \left( p_1^{\chi} + \sum_{i=2}^N p_1^{\varphi_i} \right) \quad\text{with}\quad \scp{\chi}{\varphi_i}=0 ~\forall i=1,\ldots,N,
\ee
arising from the wave function $\tilde{\psi} = \bigwedge_{j=2}^{N} \varphi_j \wedge \chi$. We find $\alpha_n=\frac{1}{N}$ and
\be
N \norm[\op]{\mu^{\tilde{\psi}}_1 - \mu^{\bigwedge \varphi_j}_1} = N \norm[\op]{\frac{1}{N}\left( p_1^{\chi} - p_1^{\varphi_1} \right)} = 1,
\ee
thus providing the desired example. \hfill $\qedhere$
\end{proof}

\chapter{Proof of Theorems for General $v^{(N)}$}\label{sec:alpha_dot_and_general_lemmas}
\section{The Time Derivative of $\alpha_f(t)$}\label{sec:alpha_dot}
The expression for the time derivative of $\alpha_f(t)$ for arbitrary weight functions $f(k)$ follows from direct calculation. We calculate it here for the general setting where the wave function $\psi^t \in L^2(\RRR^{3N})$ is a solution to
\be\label{schroedinger_eq_general}
i \partial_t \psi^t = H^N \psi^t = \left( \sum_{j=1}^N H_j^0 + \sum_{1\leq i < j \leq N} v^{(N)}(x_i-x_j) \right) \psi^t,
\ee
where the Hamiltonian $H^N$ is a self-adjoint operator, $v^{(N)}(x)=v^{(N)}(-x)$ is a (possibly scaled) real interaction potential and $H_j^0$ acts only on the $j$-th variable.

The general form of the fermionic mean-field equations for the one-particle wave functions $\varphi^t_1,\ldots,\varphi^t_N \in L^2(\RRR^3)$ is
\be\label{mean-field_eq}
i \partial_t \varphi_j^t(x) = H^\mf \varphi_j^t(x) = H^0\varphi_j^t(x) + \left(V^{N,j,\varphi_1^t,\ldots,\varphi_N^t}\varphi_j^t\right)(x),
\ee
where $V^{N,j,\varphi_1^t,\ldots,\varphi_N^t} =: V^{(N)}$ is the mean-field interaction. The two interesting cases are when there is only the \emph{direct interaction},
\be\label{mean-field_dir_int}
V^{N,j,\varphi_1^t,\ldots,\varphi_N^t}(x) = V^{\dir,(N)}(x) = (v^{(N)} \star \rho_N^t)(x),
\ee
where $\rho_N^t = \sum_{i=1}^N |\varphi^t_i|^2$, and when there is direct and \emph{exchange interaction},
\begin{align}\label{mean-field_direxch_int}
V^{N,j,\varphi_1^t,\ldots,\varphi_N^t}\varphi_j^t(x) &= \left( V^{\dir,(N)} + V^{\exch,(N)} \right)\varphi_j^t(x) \nonumber \\
&= (v^{(N)} \star \rho_N^t)(x)\varphi_j^t(x) - \sum_{\ell=1}^N \left(v^{(N)} \star (\varphi_{\ell}^{t*}\varphi_j^t)\right)(x) \, \varphi_{\ell}^t(x).
\end{align}
In the following chapters we often add a subscript to some of the operators above to denote the particle index on which the operator acts, for example, as an operator on $L^2(\RRR^{3N})$,
\be
V^{(N)}_k = \underbrace{\id \otimes \ldots \otimes \id}_{k-1 ~ \text{times}} \otimes V^{(N)} \otimes \underbrace{\id \otimes \ldots \otimes \id}_{N-k ~ \text{times}}.
\ee

\begin{lemma}\label{lem:alpha_derivative_pre}
Let $\psi^t\in L^2(\RRR^{3N})$ be an antisymmetric solution to the Schr\"odinger equation \eqref{schroedinger_eq_general} and let $\varphi_1^t,\ldots,\varphi_N^t\in L^2(\RRR^3)$ be orthonormal solutions to the mean-field equations \eqref{mean-field_eq}. We define $W_{12} := N(N-1)v^{(N)}_{12} - NV^{(N)}_1 - NV^{(N)}_2$. Then, for $\widehat{f}$ and $\alpha_f(t)$ from Definition~\ref{def:projectors},
\be\label{f_dot}
i \partial_t \widehat{f} = \left[ \, \sum_{m=1}^N H_m^{\mf} , \widehat{f} \, \right],
\ee
\be\label{alpha_dot_commutator}
\partial_t \alpha_f(t) = \frac{i}{2} \bigSCP{\psi^t}{\left[ W_{12}, \widehat{f} \,\right] \psi^t}.
\ee
\end{lemma}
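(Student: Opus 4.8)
The plan is to prove both identities by direct differentiation, the only real work being bookkeeping. First I would establish a Heisenberg-type equation for the time-dependent one-body projectors. Since each $\varphi_j^t$ solves $i\partial_t\varphi_j^t=H^{\mf}\varphi_j^t$ with $H^{\mf}$ self-adjoint at every fixed $t$, a short computation using $\ket{\partial_t\varphi_j^t}=-iH^{\mf}\ket{\varphi_j^t}$ and $\bra{\partial_t\varphi_j^t}=i\bra{\varphi_j^t}H^{\mf}$ gives $i\partial_t\ketbr{\varphi_j^t}_m=[H_m^{\mf},\ketbr{\varphi_j^t}_m]$, hence, summing over $j$, $i\partial_t p_m=[H_m^{\mf},p_m]$ and also $i\partial_t q_m=[H_m^{\mf},q_m]$ (the latter since $[H_m^{\mf},\id]=0$).

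Next I would differentiate $P_{N,k}=\sum_{\vec a\in\AAA_k}\prod_{m=1}^N(p_m)^{1-a_m}(q_m)^{a_m}$ term by term using the Leibniz rule. In each summand, applying $i\partial_t$ to the $m'$-th factor produces $[H_{m'}^{\mf},\,\cdot\,]$; since $H_{m'}^{\mf}$ acts only on the $m'$-th variable it commutes with all the other factors, so this single-site commutator pulls out past them, giving $i\partial_t\big(\prod_m(\text{factor}_m)\big)=\big[\sum_{m'}H_{m'}^{\mf},\prod_m(\text{factor}_m)\big]$. Summing over $\vec a$ and $k$ with the weights $f(k)$ yields \eqref{f_dot}. (The identical derivation argument applies verbatim to the shifted operators $\widehat{f}_d$, which is needed later.)

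For the second identity I would write $\alpha_f(t)=\SCP{\psi^t}{\widehat{f}\,\psi^t}$, apply the product rule, and substitute $\partial_t\psi^t=-iH^N\psi^t$ together with \eqref{f_dot}, using self-adjointness of $H^N$. The two boundary terms combine to $i\SCP{\psi^t}{[H^N,\widehat{f}\,]\psi^t}$ and the $\widehat{f}$-derivative term to $-i\SCP{\psi^t}{[\sum_m H_m^{\mf},\widehat{f}\,]\psi^t}$, so $\partial_t\alpha_f(t)=i\SCP{\psi^t}{[H^N-\sum_m H_m^{\mf},\widehat{f}\,]\psi^t}$. As the kinetic/external one-body parts cancel, $H^N-\sum_m H_m^{\mf}=\sum_{i<j}v^{(N)}_{ij}-\sum_m V^{(N)}_m$. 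Finally I would use that $\widehat{f}$ is permutation-symmetric while $\psi^t$ is antisymmetric (and $v^{(N)}$ is even): conjugating by the permutation unitary carrying the pair $(1,2)$ to $(i,j)$ (resp. the site $1$ to $m$) leaves $\widehat{f}$ and $v^{(N)}_{12}$ (resp. $V^{(N)}_1$) invariant and changes $\psi^t$ only by a sign, so every pair contribution equals $\SCP{\psi^t}{[v^{(N)}_{12},\widehat{f}\,]\psi^t}$ and every site contribution equals $\SCP{\psi^t}{[V^{(N)}_1,\widehat{f}\,]\psi^t}$. This turns the expression into $\tfrac{N(N-1)}{2}\SCP{\psi^t}{[v^{(N)}_{12},\widehat{f}\,]\psi^t}-\tfrac{N}{2}\SCP{\psi^t}{[V^{(N)}_1+V^{(N)}_2,\widehat{f}\,]\psi^t}=\tfrac12\SCP{\psi^t}{[W_{12},\widehat{f}\,]\psi^t}$, which is \eqref{alpha_dot_commutator}.

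The computation is elementary; the only steps requiring care are the Leibniz step for the symmetrized products $P_{N,k}$ (checking that the single-site commutators genuinely pull through the remaining factors, which rests on $[H_{m'}^{\mf},(\text{factor}_m)]=0$ for $m\neq m'$) and, at the level of full rigor, the implicit domain and differentiability assumptions needed to differentiate under the scalar product and to move $H^N$ across it — these are standard once $\psi^t$ and the $\varphi_j^t$ are in the appropriate operator domains. I do not expect any essential obstacle here.
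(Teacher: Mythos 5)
Your proposal is correct and follows essentially the same route as the paper: first $i\partial_t p_m=[H_m^{\mf},p_m]$ from the mean-field equations, then the Leibniz/commutator-pull-through step for $P_{N,k}$ to get \eqref{f_dot}, and finally the product rule plus cancellation of the one-body terms and the symmetry of $\widehat{f}$ against the antisymmetry of $\psi^t$ to reduce to the $(1,2)$ pair and arrive at \eqref{alpha_dot_commutator}. The only cosmetic difference is that you make the symmetrization step explicit via conjugation with permutation unitaries, which the paper performs implicitly.
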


\noindent\textbf{Remarks.}
\begin{enumerate}
\setcounter{enumi}{\theremarks}
\item At this point, let us emphasize again that the kinetic and external field terms coming from the Schr\"odinger and the fermionic mean-field equations cancel, which is why the main theorems hold for any $H_j^0$. Furthermore, only $W_{12}$, the difference between Schr\"odinger interaction and mean-field, enters, i.e., the method focuses directly on the relevant point.
\end{enumerate}
\setcounter{remarks}{\theenumi}

\begin{proof}
Note that the operators $p_m,q_m,P_{N,k}$ all depend on $t$ through the orbitals $\varphi_1^t,\ldots,\varphi_N^t$. For ease of notation,we do not explicitly write out this $t$-dependence. In order to prove \eqref{f_dot} we first calculate the time derivatives of $p_m$ and $q_m$, and then of $P_{N,k}$. We find
\begin{align}
i \partial_t p_m &= \sum_{j=1}^N \, i \partial_t \bigg( \ketbr{\varphi_j^t(x_m)} \bigg) \nonumber \\
&= \sum_{j=1}^N \left( \left( i \partial_t\ket{\varphi_j^t(x_m)} \right) \bra{\varphi_j^t(x_m)} + \ket{\varphi_j^t(x_m)} \left( i \partial_t \bra{\varphi_j^t(x_m)}\right) \right) \nonumber \\
&= \sum_{j=1}^N \bigg( H_m^{\mf}\ket{\varphi_j^t(x_m)} \bra{\varphi_j^t(x_m)} - \ket{\varphi_j^t(x_m)} \bra{H_m^{\mf}\varphi_j^t(x_m)} \bigg) \nonumber \\
&= \left[ H_m^{\mf}, p_m \right],
\end{align}
and, using $p_m+q_m=1$,
\be
i \partial_t q_m = -i \partial_t p_m = - \left[ H_m^{\mf}, p_m \right] = \left[ H_m^{\mf}, q_m \right].
\ee
Now recall from Definition~\ref{def:projectors} that
\be
P_{N,k} = \sum_{\vec{a} \in \AAA_k} \prod_{m=1}^N (p_m)^{1-a_m}(q_m)^{a_m}
\ee
with the set
\be
\AAA_k = \left\{ \vec{a}=(a_1,\ldots,a_N) \in \{0,1\}^N: \sum_{m=1}^N a_m=k \right\}.
\ee
For the following calculation we abbreviate $R_m = (p_m)^{1-a_m}(q_m)^{a_m}$. Then $i \partial_t R_m = [ H_m^{\mf}, R_m ]$ holds for $a_m=0$ and $a_m=1$. It follows that
\begin{align}
i \partial_t P_{N,k} &= i \partial_t \sum_{\vec{a} \in \AAA_k} \prod_{\ell=1}^N R_{\ell} \nonumber \\
&= \sum_{\vec{a} \in \AAA_k} \sum_{m=1}^N \left( \prod_{\ell=1}^{m-1} R_{\ell} \right) i \partial_t R_m \left( \prod_{\ell=m+1}^N R_{\ell}\right) \nonumber \\
&= \sum_{\vec{a} \in \AAA_k} \sum_{m=1}^N \left( \prod_{\ell=1}^{m-1} R_{\ell} \right) \left[ H_m^{\mf}, R_m \right] \left( \prod_{\ell=m+1}^N R_{\ell}\right) \nonumber \\
&= \sum_{\vec{a} \in \AAA_k} \sum_{m=1}^N \left[ H_m^{\mf}, \prod_{\ell=1}^N R_{\ell} \right] \nonumber \\
&= \left[ \sum_{m=1}^N H_m^{\mf}, P_{N,k} \right],
\end{align}
and thus
\be
i \partial_t \widehat{f} = i \partial_t \sum_{k=0}^N f(k) P_{N,k} = \sum_{k=0}^N f(k) i \partial_t P_{N,k} = \sum_{k=0}^N f(k) \left[ \sum_{m=1}^N H_m^{\mf}, P_{N,k} \right] = \left[ \sum_{m=1}^N H_m^{\mf}, \widehat{f} \right].
\ee
Using this and the antisymmetry of $\psi^t$, we calculate the time derivative of $\alpha_f(t)$. We find
\begin{align}
\partial_t \alpha_f(t) &= \partial_t \bigSCP{\psi^t}{\widehat{f}\,\psi^t} \nonumber \\
&= \bigSCP{\left(\partial_t\psi^t\right)}{\widehat{f}\,\psi^t} + \bigSCP{\psi^t}{\widehat{f}\left(\partial_t\psi^t\right)} + \bigSCP{\psi^t}{\left(\partial_t\widehat{f}\right)\psi^t} \nonumber \\
&= i\, \bigSCP{H^N\psi^t}{\widehat{f}\,\psi^t} - i \, \bigSCP{\psi^t}{\widehat{f} \, H^N\psi^t} - i \, \bigSCP{\psi^t}{\left[\sum_{m=1}^N H_m^{\mf},\widehat{f}\right]\psi^t} \nonumber \\
&= i \, \bigSCP{\psi^t}{\left[H^N-\sum_{m=1}^N H_m^{\mf},\widehat{f}\right]\psi^t} \nonumber \\
&= i \, \bigSCP{\psi^t}{\left[\sum_{j=1}^N H_j^0 + \sum_{1\leq i < j \leq N} v^{(N)}(x_i-x_j) - \sum_{m=1}^N \left( H_m^0 + V_m^{(N)} \right),\widehat{f}\right]\psi^t} \nonumber \\
&= i \, \bigSCP{\psi^t}{\left[\sum_{1\leq i < j \leq N} v^{(N)}(x_i-x_j) - \sum_{m=1}^N V_m^{(N)},\widehat{f}\right]\psi^t} \nonumber \\
&= i \, \bigSCP{\psi^t}{\left[ \frac{N(N-1)}{2} v^{(N)}(x_1-x_2) - \frac{N}{2} V_1^{(N)} - \frac{N}{2} V_2^{(N)},\widehat{f}\right]\psi^t} \nonumber \\
&= \frac{i}{2} \bigSCP{\psi^t}{\left[W_{12},\widehat{f}\right]\psi^t}.
\end{align}
\end{proof}

In order to simplify the expression \eqref{alpha_dot_commutator} we need the following auxiliary lemma.

\begin{lemma}\label{lem:shift_Ps}
As in Definition~\ref{def:projectors2}, we abbreviate $P_0^{\{1,2\}} = p_1p_2$, $P_1^{\{1,2\}} = p_1q_2 + q_1p_2$ and $P_2^{\{1,2\}} = q_1q_2$. Let $h_{12}$ be an operator that acts only on the first and second particle index. Then
\begin{enumerate}[(a)]
\item for all $a,b = 0,1,2$, and for all $k=1,\ldots,N$,
\be\label{shift_Ps_right_to_left}
\left( P_a^{\{1,2\}}h_{12}P_b^{\{1,2\}} \right) P_{N,k} = P_{N,k+a-b} \left(P_a^{\{1,2\}}h_{12}P_b^{\{1,2\}}\right),
\ee
\be\label{shift_Ps_left_to_right}
P_{N,k} \left( P_a^{\{1,2\}}h_{12}P_b^{\{1,2\}} \right) = \left( P_a^{\{1,2\}}h_{12}P_b^{\{1,2\}} \right) P_{N,k+b-a},
\ee
\item for all $a,b = 0,1,2$,
\be\label{shift_f_right_to_left}
\left( P_a^{\{1,2\}}h_{12}P_b^{\{1,2\}} \right) \widehat{f} = \widehat{f}_{b-a} \left(P_a^{\{1,2\}}h_{12}P_b^{\{1,2\}}\right),
\ee
\be\label{shift_f_left_to_right}
\widehat{f} \left( P_a^{\{1,2\}}h_{12}P_b^{\{1,2\}} \right) = \left( P_a^{\{1,2\}}h_{12}P_b^{\{1,2\}} \right) \widehat{f}_{a-b}.
\ee
\end{enumerate}
\end{lemma}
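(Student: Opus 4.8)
The plan is to reduce all four identities to a single structural decomposition of $P_{N,k}$ that separates the action on the distinguished pair of variables $\{1,2\}$ from the action on the remaining variables $\{3,\ldots,N\}$. Starting from $P_{N,k}=\sum_{\vec a\in\AAA_k}\prod_{m=1}^N(p_m)^{1-a_m}(q_m)^{a_m}$ and splitting the sum over $\vec a$ according to the value of $a_1+a_2\in\{0,1,2\}$, one obtains
\be
P_{N,k}=\sum_{j=0}^{2}P_j^{\{1,2\}}\,P_{k-j}^{\{3,\ldots,N\}},
\ee
where $P_\ell^{\{3,\ldots,N\}}$ is the symmetrized projector of Definition~\ref{def:projectors2} on the last $N-2$ variables, with the zero conventions $P_\ell^{\{3,\ldots,N\}}=0$ for $\ell<0$ or $\ell>N-2$ and $P_{N,k}=0$ for $k\notin\{0,\ldots,N\}$. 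Indeed, $\sum_{a_1+a_2=j}(p_1)^{1-a_1}(q_1)^{a_1}(p_2)^{1-a_2}(q_2)^{a_2}$ equals $P_0^{\{1,2\}}=p_1p_2$, $P_1^{\{1,2\}}=p_1q_2+q_1p_2$, $P_2^{\{1,2\}}=q_1q_2$ for $j=0,1,2$, while the remaining sum over $(a_3,\ldots,a_N)$ with $a_3+\cdots+a_N=k-j$ is exactly $P_{k-j}^{\{3,\ldots,N\}}$. The two elementary facts driving the proof are then: (i) the $P_j^{\{1,2\}}$ are mutually orthogonal projectors, $P_j^{\{1,2\}}P_{j'}^{\{1,2\}}=\delta_{jj'}P_j^{\{1,2\}}$; and (ii) $h_{12}$ and all $P_j^{\{1,2\}}$ act only on variables $1,2$, whereas $P_\ell^{\{3,\ldots,N\}}$ acts only on variables $3,\ldots,N$, so the two families commute.

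Given this, part (a) is a one-line computation. For \eqref{shift_Ps_right_to_left}, multiply the decomposition on the left by $P_a^{\{1,2\}}h_{12}P_b^{\{1,2\}}$: by (i) only the term $j=b$ survives, leaving $P_a^{\{1,2\}}h_{12}P_b^{\{1,2\}}P_{k-b}^{\{3,\ldots,N\}}$. Multiplying instead the decomposition with $k$ replaced by $k+a-b$ on the right by $P_a^{\{1,2\}}h_{12}P_b^{\{1,2\}}$, only the term $j=a$ survives, leaving (after commuting the $\{3,\ldots,N\}$-factor freely past everything, by (ii)) $P_a^{\{1,2\}}h_{12}P_b^{\{1,2\}}P_{(k+a-b)-a}^{\{3,\ldots,N\}}=P_a^{\{1,2\}}h_{12}P_b^{\{1,2\}}P_{k-b}^{\{3,\ldots,N\}}$; hence the two sides agree, which is \eqref{shift_Ps_right_to_left}. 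Identity \eqref{shift_Ps_left_to_right} is proved the same way (it is also the adjoint of \eqref{shift_Ps_right_to_left} after replacing $h_{12}$ by $h_{12}^*$ and using that $P_{N,k}$ is self-adjoint). The zero conventions make all out-of-range cases consistent: if $k-b<0$ or $k-b>N-2$, then both sides vanish.

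Part (b) follows by weighting part (a) with $f(k)$ and re-indexing. For \eqref{shift_f_right_to_left}, using the definition of $\widehat{f}$ and then \eqref{shift_Ps_right_to_left},
\be
\bigl(P_a^{\{1,2\}}h_{12}P_b^{\{1,2\}}\bigr)\widehat{f}=\sum_{k}f(k)\,P_{N,k+a-b}\bigl(P_a^{\{1,2\}}h_{12}P_b^{\{1,2\}}\bigr),
\ee
and substituting $\ell=k+a-b$ (so $k=\ell+(b-a)$) turns the right-hand side into $\sum_{\ell}f\bigl(\ell+(b-a)\bigr)P_{N,\ell}\bigl(P_a^{\{1,2\}}h_{12}P_b^{\{1,2\}}\bigr)$, which by the definition $\widehat{f}_d=\sum_\ell f(\ell+d)P_{N,\ell}$ (with $f$ extended by $0$ outside $\{0,\ldots,N\}$) equals $\widehat{f}_{b-a}\bigl(P_a^{\{1,2\}}h_{12}P_b^{\{1,2\}}\bigr)$. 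Identity \eqref{shift_f_left_to_right} is obtained identically from \eqref{shift_Ps_left_to_right} with the substitution $\ell=k+b-a$.

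There is no genuine obstacle here: the argument is pure bookkeeping. The only two points that require a little care are verifying the pair/remainder decomposition of $P_{N,k}$ directly from the combinatorial definition in Definition~\ref{def:projectors}, and keeping the index shifts straight through the re-indexing in part (b)—in particular checking that they land on $\widehat{f}_{b-a}$ and $\widehat{f}_{a-b}$ with the correct signs, and that the zero conventions on $P_{N,k}$ together with the zero extension of $f$ make the boundary terms match on both sides.
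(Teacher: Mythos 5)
Your proof is correct and follows essentially the same route as the paper: the same decomposition $P_{N,k}=\sum_{j=0}^2 P_j^{\{1,2\}}P_{k-j}^{\{3,\ldots,N\}}$, the orthogonality $P_j^{\{1,2\}}P_{j'}^{\{1,2\}}=\delta_{jj'}P_j^{\{1,2\}}$ together with commutation of the $\{3,\ldots,N\}$-factors, and then part (b) by weighting with $f(k)$ and re-indexing onto $\widehat{f}_{b-a}$ and $\widehat{f}_{a-b}$. The only cosmetic differences (verifying both sides equal the common middle expression, and the adjoint shortcut for the second identity in (a)) do not change the argument.
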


\begin{proof}
\begin{enumerate}[(a)]
\item We first split up the $P_{N,k}$,
\be
P_{N,k} = \sum_{d=0}^2 P_d^{\{1,2\}} P_{k-d}^{\{3,\ldots,N\}},
\ee
where $P_{k-d}^{\{3,\ldots,N\}}$, as in Definition~\ref{def:projectors2}, contains $k-d$ $q$'s, and acts only on the variables $3,\ldots,N$. Now note that ${P_a}^{\{1,2\}}{P_b}^{\{1,2\}}=\delta_{ab}{P_a}^{\{1,2\}}$. Then we find
\begin{align}\label{P_k_shift}
\left( P_a^{\{1,2\}}h_{12}P_b^{\{1,2\}} \right) P_{N,k} &= \left( P_a^{\{1,2\}}h_{12}P_b^{\{1,2\}} \right) \sum_{d=0}^2 P_d^{\{1,2\}} P_{k-d}^{\{3,\ldots,N\}} \nonumber \\
&= P_a^{\{1,2\}}h_{12}P_b^{\{1,2\}} P_{k-b}^{\{3,\ldots,N\}} \nonumber \\
&= P_{k-b}^{\{3,\ldots,N\}} P_a^{\{1,2\}}h_{12}P_b^{\{1,2\}} \nonumber \\
&= \sum_{d=0}^2 P_d^{\{1,2\}} P_{k+a-b-d}^{\{3,\ldots,N\}} \left( P_a^{\{1,2\}}h_{12}P_b^{\{1,2\}} \right) \nonumber \\
&= P_{N,k+a-b} \left( P_a^{\{1,2\}}h_{12}P_b^{\{1,2\}} P_b^{\{1,2\}} \right).
\end{align}
In the same way (or by just renaming $k'=k+a-b$, such that $k=k'+b-a$) we find that \eqref{shift_Ps_left_to_right} holds.
\item From \eqref{P_k_shift} it follows directly that
\begin{align}
\left( P_a^{\{1,2\}}h_{12}P_b^{\{1,2\}} \right) \widehat{f} &= P_a^{\{1,2\}}h_{12}P_b^{\{1,2\}} \sum_{k=0}^N f(k) P_{N,k} \nonumber \\
&= \sum_{k=0}^N f(k) P_{N,k+a-b} P_a^{\{1,2\}}h_{12}P_b^{\{1,2\}} \nonumber \\
&= \widehat{f}_{b-a} \left( P_a^{\{1,2\}}h_{12}P_b^{\{1,2\}} \right),
\end{align}
and in the same way \eqref{shift_f_left_to_right} follows directly from \eqref{shift_Ps_left_to_right}. \hfill $\qedhere$
\end{enumerate}
\end{proof}

With Lemma~\ref{lem:shift_Ps} we can simplify the expression \eqref{alpha_dot_commutator} for the time derivative of $\alpha_f(t)$ by splitting it into three parts, each of which will be estimated separately later.

\begin{lemma}\label{lem:alpha_derivative}
Let $\psi^t\in L^2(\RRR^{3N})$ be an antisymmetric solution to the Schr\"odinger equation \eqref{schroedinger_eq_general} and let $\varphi_1^t,\ldots,\varphi_N^t\in L^2(\RRR^3)$ be orthonormal solutions to the mean-field equations \eqref{mean-field_eq}. Then
\begin{align}\label{alpha_derivative}
\partial_t \alpha_f(t) &= 2 \, \Im\, \bigSCP{\psi^t}{N\left(\widehat{f}-\widehat{f}_{-1}\right)q_1 \left( (N-1)p_2v^{(N)}_{12}p_2 - V^{(N)}_1 \right) p_1 \psi^t} \nonumber \\
&\quad + \Im\, \bigSCP{\psi^t}{N\left(\widehat{f}-\widehat{f}_{-2}\right)q_1q_2 (N-1)v^{(N)}_{12} p_1p_2 \psi^t} \nonumber \\
&\quad + 2 \, \Im\, \bigSCP{\psi^t}{N\left(\widehat{f}-\widehat{f}_{-1}\right)q_1q_2 (N-1)v^{(N)}_{12} p_1q_2 \psi^t}.
\end{align}
\end{lemma}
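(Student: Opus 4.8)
The plan is to start from the expression \eqref{alpha_dot_commutator} for $\partial_t\alpha_f(t)$ obtained in Lemma~\ref{lem:alpha_derivative_pre}, namely
\be
\partial_t \alpha_f(t) = \frac{i}{2} \bigSCP{\psi^t}{\left[W_{12},\widehat{f}\,\right] \psi^t},
\ee
with $W_{12} = N(N-1)v^{(N)}_{12} - NV^{(N)}_1 - NV^{(N)}_2$, and to reduce it by inserting resolutions of the identity. Since $W_{12}$ acts only on the first two variables, I will write $1 = p_1 + q_1$ and $1 = p_2 + q_2$ on both sides of $W_{12}$ inside each of the two terms $\SCP{\psi^t}{W_{12}\widehat f\,\psi^t}$ and $\SCP{\psi^t}{\widehat f\, W_{12}\psi^t}$. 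Using the abbreviations $P_0^{\{1,2\}}=p_1p_2$, $P_1^{\{1,2\}}=p_1q_2+q_1p_2$, $P_2^{\{1,2\}}=q_1q_2$, one has $1 = \sum_{a=0}^2 P_a^{\{1,2\}}$, so $W_{12} = \sum_{a,b=0}^2 P_a^{\{1,2\}} W_{12} P_b^{\{1,2\}}$. This decomposes the commutator into a sum over $(a,b)$.

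The key simplification comes from Lemma~\ref{lem:shift_Ps}: for each block $P_a^{\{1,2\}}W_{12}P_b^{\{1,2\}}$ we can pull $\widehat f$ through to produce $\widehat f_{b-a}$ (resp.\ $\widehat f_{a-b}$), so that the commutator contribution of the $(a,b)$ block becomes proportional to $\widehat f - \widehat f_{b-a}$ sandwiched between $P_a^{\{1,2\}}W_{12}P_b^{\{1,2\}}$. The diagonal blocks $a=b$ contribute nothing since $\widehat f - \widehat f_0 = 0$. By the Hermiticity of $\widehat f$ and of $W_{12}$, the $(a,b)$ and $(b,a)$ blocks are complex conjugates of each other, so the whole expression collapses to twice the imaginary part of the sum over $a>b$. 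Next I will expand $W_{12}$ itself: in the block with a $p_2$ on the relevant side, $V^{(N)}_2$ can be absorbed because $p_2 v^{(N)}_{12} p_2$ and $V^{(N)}_2$ combine appropriately, while the $V^{(N)}_1$ term only survives where it is flanked by $p_1 \cdots p_1$ on one side and by a $q_1$ on the other (the $q_1 V^{(N)}_1 q_1$ and $p_1 V^{(N)}_1 p_1$ pieces sit on the diagonal $a=b$ and drop out after the shift); similarly the $q_1q_2$-$q_1q_2$ and $p_1p_2$-$p_1p_2$ pieces are diagonal. A careful bookkeeping of which of the three off-diagonal index pairs $(a,b)\in\{(1,0),(2,0),(2,1)\}$ each surviving term lands in — using $v^{(N)}_{12}=v^{(N)}_{21}$, $p_1q_1=0=p_2q_2$, and the antisymmetry of $\psi^t$ to symmetrize the $p_1q_2/q_1p_2$ contributions — yields exactly the three terms in \eqref{alpha_derivative}, with the prefactors $N(N-1)$ and the combinatorial factor $N$ from $\widehat f$ matching because $W_{12}$ carries an overall factor $N$ relative to the single pair interaction and the antisymmetrization of $\psi^t$ lets us replace sums over particle pairs by $N(N-1)/2$ times the $(1,2)$ contribution.

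I expect the main obstacle to be the bookkeeping in the middle of this argument: keeping track of which mean-field pieces $V^{(N)}_1$, $V^{(N)}_2$ cancel against which parts of $N(N-1)p_2 v^{(N)}_{12}p_2$, and verifying that the residual terms organize precisely into the $q_1(\cdots p_2 v^{(N)}_{12}p_2 - V^{(N)}_1)p_1$ structure of the first term while the second and third terms retain no mean-field dependence at all. One must be careful that $V^{(N)}_2$ does \emph{not} appear in the final answer: this works because in the block $q_1 p_2 \cdots p_1 p_2$ we have $p_2 (N V^{(N)}_2) p_2$ on one side but the matching term from the other index placement supplies the cancellation, and on the $q_2$ side $V^{(N)}_2$ pieces land on the diagonal. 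The factors of $2$ distinguishing the first and third terms (coefficient $2$) from the second ($P_1^{\{1,2\}}$ being a sum of two terms versus $P_0^{\{1,2\}},P_2^{\{1,2\}}$ being single) from the $qq$-$pp$ transition $(a,b)=(2,0)$ versus the $qp$-$pp$ transition $(a,b)=(1,0)$ and $qq$-$pq$ transition $(a,b)=(2,1)$ also need to be tracked attentively. Everything else is routine once the shift identities of Lemma~\ref{lem:shift_Ps} and the identity $\SCP{\psi^t}{A\psi^t} - \overline{\SCP{\psi^t}{A\psi^t}} = 2i\,\Im\,\SCP{\psi^t}{A\psi^t}$ are in hand.
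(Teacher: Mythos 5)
Your proposal follows essentially the same route as the paper's proof: start from the commutator formula of Lemma~\ref{lem:alpha_derivative_pre}, insert $1=p_1+q_1$ and $1=p_2+q_2$, use Lemma~\ref{lem:shift_Ps} to shift $\widehat{f}$ so that the diagonal blocks $a=b$ drop out, pair the $(a,b)$ and $(b,a)$ blocks into imaginary parts, symmetrize the $p_1q_2/q_1p_2$ pieces using $v^{(N)}_{12}=v^{(N)}_{21}$ and the antisymmetry of $\psi^t$, and finally expand $W_{12}$; this is exactly the paper's argument and is correct in outline. One caution on the bookkeeping you flag as delicate: once the $1\leftrightarrow 2$ symmetrization is done, every $V^{(N)}_2$ piece vanishes outright from $q_1p_1=0$ (resp.\ $q_2p_2=0$) — no ``matching cancellation'' or diagonality argument is involved, and likewise the $p_1V^{(N)}_1p_1$, $q_1V^{(N)}_1q_1$ pieces appearing in off-diagonal blocks die from $p_2q_2=0$ rather than from being diagonal. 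Moreover the $V^{(N)}_1$ piece of the $q_1q_2\,W_{12}\,p_1q_2$ block does \emph{not} vanish: it survives as $-N\,q_1V^{(N)}_1p_1q_2$ and must be recombined with the $-N\,q_1V^{(N)}_1p_1p_2$ piece from the $q_1p_2\,W_{12}\,p_1p_2$ block via $p_2+q_2=1$; this recombination is precisely why $V^{(N)}_1$ appears without an accompanying $p_2$ in the first term of \eqref{alpha_derivative} and not at all in the third, so it must be moved there and not dropped.
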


\noindent\textbf{Remarks.}
\begin{enumerate}
\setcounter{enumi}{\theremarks}
\item Note that the time derivative is formally the same as for bosons, where $p_1:=\ketbr{\varphi}_1$, see \cite{pickl:2010gp_pos,pickl:2011method}. Note that in \cite{pickl:2010gp_pos,pickl:2011method} the splitting into three summands is done slightly differently: compared to \eqref{alpha_derivative}, an additional identity $1=p_2+q_2$ is added in front of $V^{(N)}_1$.
\item For the case $f(k) = n(k) = \frac{k}{N}$ we find a simple expression for the time derivative of $\alpha_n(t) = \SCP{\psi^t}{q_1 \psi^t}$. Note that, in view of Definition~\ref{def:projectors} and the identity $\sum_{k=0}^N P_{N,k} = 1$ (which we prove later in Lemma~\ref{lem:properties_P_Nk}),
\be
\widehat{n} - \widehat{n}_{-1} = \sum_{k=1}^N \left( \frac{k}{N} - \frac{(k-1)}{N} \right) P_{N,k} = \frac{1}{N} \sum_{k=1}^N P_{N,k} = \frac{1}{N} - \frac{1}{N} P_{N,0},
\ee
and
\begin{align}
\widehat{n} - \widehat{n}_{-2} &= \frac{1}{N} P_{N,1} + \sum_{k=2}^N \left( \frac{k}{N} - \frac{(k-2)}{N} \right) P_{N,k} \nonumber \\
&= \frac{1}{N} P_{N,1} + \frac{2}{N} \sum_{k=2}^N P_{N,k} \nonumber \\
&= \frac{2}{N} - \frac{1}{N} P_{N,1} - \frac{2}{N} P_{N,0}.
\end{align}
Then, using $P_{N,0}q_1 = 0 = P_{N,1} q_1q_2$, the expression \eqref{alpha_derivative} simplifies to
\begin{align}\label{alpha_derivative_n_remark}
\partial_t \alpha_n(t) &= 2 \, \Im\, \bigSCP{\psi^t}{q_1 \left( (N-1)p_2v^{(N)}_{12}p_2 - V^{(N)}_1 \right) p_1 \psi^t} \nonumber \\
&\quad + \Im\, \bigSCP{\psi^t}{q_1q_2 (N-1)v^{(N)}_{12} p_1p_2 \psi^t} \nonumber \\
&\quad + 2 \, \Im\, \bigSCP{\psi^t}{q_1q_2 (N-1)v^{(N)}_{12} p_1q_2 \psi^t}.
\end{align}
\end{enumerate}
\setcounter{remarks}{\theenumi}

\begin{proof}[Proof of Lemma \ref{lem:alpha_derivative}]
We calculate the time derivative of $\alpha_f(t)$ using the expression \eqref{alpha_dot_commutator} from Lemma~\ref{lem:alpha_derivative_pre}. The idea of the proof is to insert two identities $1=p_1+q_1$ and $1=p_2+q_2$ in front of each $\psi^t$ (which leads to $16$ summands) and then to use Lemma~\ref{lem:shift_Ps} in order to shift $\widehat{f}$. It turns out that a lot of terms drop out due to the commutator structure. We again use the notation $P_0^{\{1,2\}} = p_1p_2$, $P_1^{\{1,2\}} = p_1q_2 + q_1p_2$ and $P_2^{\{1,2\}} = q_1q_2$. Note that $\sum_{a=0}^2 P_a^{\{1,2\}} = 1$, since $p_1p_2+p_1q_2+q_1p_2+q_1q_2= (p_1+q_1)(p_2+q_2) = 1$. We abbreviate $P_a=P_a^{\{1,2\}}$ for the following calculation. Inserting the two identities and using Lemma~\ref{lem:shift_Ps} we find ($a \leftrightarrow b$ means that we interchanged the indices $a$, $b$)
\begin{align}
\partial_t \alpha_f(t) &= \frac{i}{2} \bigSCP{\psi^t}{\left[W_{12},\widehat{f}\right]\psi^t} \nonumber \\
&= \frac{i}{2} \sum_{a,b=0}^2 \bigSCP{\psi^t}{P_a\left[W_{12},\widehat{f}\right]P_b\,\psi^t} \nonumber \\
&= \frac{i}{2} \sum_{a} \bigSCP{\psi^t}{\left( P_aW_{12}P_a\widehat{f} - \widehat{f}P_aW_{12}P_a \right)\psi^t} \nonumber \\
&\quad + \frac{i}{2} \sum_{a>b} \bigSCP{\psi^t}{\left( P_aW_{12}P_b\widehat{f} - \widehat{f}P_aW_{12}P_b \right)\psi^t} \nonumber \\
&\quad + \frac{i}{2} \sum_{a<b} \bigSCP{\psi^t}{\left( P_aW_{12}P_b\widehat{f} - \widehat{f}P_aW_{12}P_b \right)\psi^t} \nonumber \\
\eqexp{by Lem.~\ref{lem:shift_Ps}} &= \frac{i}{2} \sum_{a>b} \bigSCP{\psi^t}{\left( P_aW_{12}P_b\widehat{f} - \widehat{f}P_aW_{12}P_b \right)\psi^t} \nonumber \\
\eqexp{by $a\leftrightarrow b$} &\quad + \frac{i}{2} \sum_{a>b} \bigSCP{\psi^t}{\left( P_bW_{12}P_a\widehat{f} - \widehat{f}P_bW_{12}P_a \right)\psi^t} \nonumber \\
\eqexp{by Lem.~\ref{lem:shift_Ps}} &= \frac{i}{2} \sum_{a>b} \bigSCP{\psi^t}{\left( \widehat{f}_{b-a} - \widehat{f} \right) P_aW_{12}P_b \,\psi^t} \nonumber \\
&\quad - \frac{i}{2} \sum_{a>b} \bigSCP{\psi^t}{ P_bW_{12}P_a \left( \widehat{f}_{b-a} - \widehat{f} \right) \psi^t} \nonumber \\
&= - \Im \sum_{a>b} \bigSCP{\psi^t}{\left( \widehat{f}_{b-a} - \widehat{f} \right) P_aW_{12}P_b \,\psi^t} \nonumber \\
&= - \Im  \,\bigSCP{\psi^t}{\left(\widehat{f}_{-1}-\widehat{f}\right)(p_1q_2+q_1p_2)W_{12}p_1p_2 \psi^t} \nonumber \\
&\quad - \Im \, \bigSCP{\psi^t}{\left(\widehat{f}_{-2}-\widehat{f}\right)q_1q_2W_{12}p_1p_2 \psi^t} \nonumber \\
&\quad - \Im \, \bigSCP{\psi^t}{\left(\widehat{f}_{-1}-\widehat{f}\right)q_1q_2W_{12}(p_1q_2+q_1p_2) \psi^t} \nonumber \\
\eqexp{by $W_{12}=W_{21}$} &= 2 \, \Im \, \bigSCP{\psi^t}{\left( \widehat{f} - \widehat{f}_{-1} \right)q_1p_2W_{12}p_1p_2 \psi^t} \nonumber \\
&\quad + \Im \, \bigSCP{\psi^t}{\left( \widehat{f} - \widehat{f}_{-2} \right)q_1q_2W_{12}p_1p_2 \psi^t} \nonumber \\
&\quad + 2 \, \Im \, \bigSCP{\psi^t}{\left( \widehat{f} - \widehat{f}_{-1} \right)q_1q_2W_{12}p_1q_2 \psi^t}.
\end{align}
Now recall that $W_{12} := N(N-1)v^{(N)}_{12} - NV^{(N)}_1 - NV^{(N)}_2$. Inserting this definition, and using $p_1q_1 = 0 = p_2q_2$ and $p_2+q_2=1$, we find
\begin{align}
\partial_t \alpha_f(t) &= 2 \, \Im \, \bigSCP{\psi^t}{\left( \widehat{f} - \widehat{f}_{-1} \right)q_1p_2 \left( N(N-1)v^{(N)}_{12} - NV^{(N)}_1 - NV^{(N)}_2 \right) p_1p_2 \psi^t} \nonumber \\
&\quad + \Im \, \bigSCP{\psi^t}{\left( \widehat{f} - \widehat{f}_{-2} \right)q_1q_2 \left( N(N-1)v^{(N)}_{12} - NV^{(N)}_1 - NV^{(N)}_2 \right) p_1p_2 \psi^t} \nonumber \\
&\quad + 2 \, \Im \, \bigSCP{\psi^t}{\left( \widehat{f} - \widehat{f}_{-1} \right)q_1q_2 \left( N(N-1)v^{(N)}_{12} - NV^{(N)}_1 - NV^{(N)}_2 \right) p_1q_2 \psi^t} \nonumber \\
&= 2 \, \Im \, \bigSCP{\psi^t}{\left( \widehat{f} - \widehat{f}_{-1} \right)q_1p_2 \left( N(N-1)v^{(N)}_{12} - NV^{(N)}_1 \right) p_1p_2 \psi^t} \nonumber \\
&\quad + \Im \, \bigSCP{\psi^t}{\left( \widehat{f} - \widehat{f}_{-2} \right)q_1q_2 \left( N(N-1)v^{(N)}_{12} \right) p_1p_2 \psi^t} \nonumber \\
&\quad + 2 \, \Im \, \bigSCP{\psi^t}{\left( \widehat{f} - \widehat{f}_{-1} \right)q_1q_2 \left( N(N-1)v^{(N)}_{12} - NV^{(N)}_1 \right) p_1q_2 \psi^t} \nonumber \\
&= 2 \, \Im \, \bigSCP{\psi^t}{N\left( \widehat{f} - \widehat{f}_{-1} \right)q_1 \left( (N-1)p_2v^{(N)}_{12}p_2 - V^{(N)}_1 \right) p_1 \psi^t} \nonumber \\
&\quad + \Im \, \bigSCP{\psi^t}{N\left( \widehat{f} - \widehat{f}_{-2} \right)q_1q_2 \left( (N-1)v^{(N)}_{12} \right) p_1p_2 \psi^t} \nonumber \\
&\quad + 2 \, \Im \, \bigSCP{\psi^t}{N\left( \widehat{f} - \widehat{f}_{-1} \right)q_1q_2 \left( (N-1)v^{(N)}_{12} \right) p_1q_2 \psi^t}.
\end{align}
\end{proof}

In order to control the time derivative of $\alpha_f(t)$, each of the three terms in \eqref{alpha_derivative} is bounded separately. Before we estimate these terms in Chapter~\ref{sec:alpha_m_dot_rigorous}, we need to establish several techniques and properties of these terms in Chapters \ref{sec:general_lemmas} and \ref{sec:estimates_projectors}.

\section{General Lemmas about the Projectors, $\widehat{f}$ and $\alpha_f(t)$}\label{sec:general_lemmas}
The following lemma collects some properties of the projectors $P_{N,k}$ and the operator $\widehat{n}$ from Definition~\ref{def:projectors} and \eqref{weight_n} that we use in subsequent lemmas.

\begin{lemma}\label{lem:properties_P_Nk}
For the objects from Definition~\ref{def:projectors} and all antisymmetric $\psi_{as} \in L^2(\RRR^{3N})$,
\be
P_{N,k} P_{N,\ell} = \delta_{k\ell} P_{N,k} \quad\forall k,\ell = 1,\ldots,N ,
\ee
\be
\sum_{k=0}^N P_{N,k} = 1,
\ee
\be
\widehat{n} := \sum_{k=0}^N \frac{k}{N} P_{N,k} = \frac{1}{N} \sum_{m=1}^N q_m,
\ee
\be
\alpha_n := \SCP{\psi_{as}}{\widehat{n} \, \psi_{as}} = \SCP{\psi_{as}}{q_1\psi_{as}},
\ee
\be
\SCP{\psi_{as}}{q_1q_2\psi_{as}} \leq \frac{N}{N-1} \SCP{\psi_{as}}{\left(\widehat{n}\right)^2\psi_{as}}.
\ee
\end{lemma}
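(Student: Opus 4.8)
\textbf{Proof proposal for Lemma~\ref{lem:properties_P_Nk}.}

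The plan is to establish the five claims essentially in the order stated, since each relies on the previous ones. For the first identity $P_{N,k}P_{N,\ell} = \delta_{k\ell}P_{N,k}$, I would expand both operators using the representation $P_{N,k} = \sum_{\vec{a}\in\AAA_k}\prod_{m=1}^N (p_m)^{1-a_m}(q_m)^{a_m}$ from Definition~\ref{def:projectors}(b). Using that the factors on different indices commute ($[p_m,p_n]=[p_m,q_n]=[q_m,q_n]=0$), that $p_m q_m = 0$, and that $p_m^2 = p_m$, $q_m^2 = q_m$, the product $\prod_m (p_m)^{1-a_m}(q_m)^{a_m}\prod_m (p_m)^{1-b_m}(q_m)^{b_m}$ is nonzero only when $\vec{a}=\vec{b}$ (on any given index, a $p$ times a $q$ kills the term), in which case it collapses back to $\prod_m (p_m)^{1-a_m}(q_m)^{a_m}$. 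Hence the cross terms with $\vec a\ne\vec b$ vanish and the diagonal terms reproduce $P_{N,k}$; this forces $k=\ell$. For $\sum_{k=0}^N P_{N,k}=1$, I would note that $\sum_{k=0}^N\sum_{\vec a\in\AAA_k} = \sum_{\vec a\in\{0,1\}^N}$, so $\sum_k P_{N,k} = \prod_{m=1}^N (p_m + q_m) = \prod_{m=1}^N \id = \id$.

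For the third identity, $\widehat n = \frac1N\sum_{m=1}^N q_m$, I would compute $\sum_{m=1}^N q_m$ by decomposing the identity: insert $1 = \sum_{k=0}^N P_{N,k}$, so $\sum_m q_m = \sum_k \big(\sum_m q_m\big)P_{N,k}$. On the range of $P_{N,k}$, exactly $k$ of the factors are $q$'s and $q_m P_{N,k} = P_{N,k}$ precisely for those $m$ with $a_m = 1$ in each $\vec a\in\AAA_k$; more carefully, $\big(\sum_{m=1}^N q_m\big)P_{N,k} = k\,P_{N,k}$, which follows by expanding $P_{N,k}$ into its $\vec a$-components and using $q_m\prod(p)^{1-a}(q)^{a} = a_m\prod(p)^{1-a}(q)^{a} + (\text{terms that are }\prod\text{ with }p_m\text{ replaced})$; actually the clean way is: $q_m P_{N,k} = P_{N,k}$ when restricted to the subspace where particle $m$ carries a $q$, and summing over $m$ counts $k$ such contributions, giving $\sum_m q_m P_{N,k}=k P_{N,k}$. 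Then $\frac1N\sum_m q_m = \sum_k \frac kN P_{N,k} = \widehat n$. The fourth identity is then immediate by antisymmetry: $\alpha_n = \SCP{\psi_{as}}{\widehat n\psi_{as}} = \frac1N\sum_{m=1}^N\SCP{\psi_{as}}{q_m\psi_{as}} = \SCP{\psi_{as}}{q_1\psi_{as}}$, since each $\SCP{\psi_{as}}{q_m\psi_{as}}$ is independent of $m$ for antisymmetric $\psi_{as}$.

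For the last inequality $\SCP{\psi_{as}}{q_1q_2\psi_{as}} \leq \frac{N}{N-1}\SCP{\psi_{as}}{(\widehat n)^2\psi_{as}}$, I would use $(\widehat n)^2 = \frac{1}{N^2}\big(\sum_{m=1}^N q_m\big)^2 = \frac{1}{N^2}\sum_{m,n=1}^N q_m q_n$. Splitting into $m=n$ and $m\ne n$ terms and using antisymmetry to write $\SCP{\psi_{as}}{q_mq_n\psi_{as}} = \SCP{\psi_{as}}{q_1q_2\psi_{as}}$ for $m\ne n$ and $\SCP{\psi_{as}}{q_m^2\psi_{as}} = \SCP{\psi_{as}}{q_1\psi_{as}} = \alpha_n \geq 0$, we get $\SCP{\psi_{as}}{(\widehat n)^2\psi_{as}} = \frac{1}{N^2}\big(N\alpha_n + N(N-1)\SCP{\psi_{as}}{q_1q_2\psi_{as}}\big) \geq \frac{N-1}{N}\SCP{\psi_{as}}{q_1q_2\psi_{as}}$, which rearranges to the claim.

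The main obstacle I anticipate is being careful about the combinatorial bookkeeping in the identity $\big(\sum_{m=1}^N q_m\big)P_{N,k} = k\,P_{N,k}$ — one must verify rigorously that $q_m$ acting on a component $\prod_\ell (p_\ell)^{1-a_\ell}(q_\ell)^{a_\ell}$ of $P_{N,k}$ returns that same component when $a_m=1$ and annihilates... no, wait: $q_m \cdot p_m = 0$ kills the component when $a_m = 0$, and $q_m\cdot q_m = q_m$ leaves it unchanged when $a_m=1$. So $q_m P_{N,k} = \sum_{\vec a\in\AAA_k: a_m=1}\prod_\ell(p_\ell)^{1-a_\ell}(q_\ell)^{a_\ell}$, and summing over $m$ gives $\sum_m q_m P_{N,k} = \sum_{\vec a\in\AAA_k}\big(\sum_m a_m\big)\prod_\ell(\cdots) = k\sum_{\vec a\in\AAA_k}\prod_\ell(\cdots) = k P_{N,k}$ since $\sum_m a_m = k$ on $\AAA_k$. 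This is routine once set up correctly; everything else is straightforward algebra with commuting orthogonal projectors plus the antisymmetry identities from Lemma~\ref{lem:projector_norms}.
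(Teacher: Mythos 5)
Your proposal is correct and follows essentially the same route as the paper's proof: expansion of $P_{N,k}$ over $\AAA_k$ with $p_mq_m=0$ for the first two identities, the counting identity $\big(\sum_m q_m\big)P_{N,k}=kP_{N,k}$ for $\widehat{n}$, antisymmetry for $\alpha_n$, and the expansion of $\big(\sum_m q_m\big)^2$ with the nonnegative $\alpha_n$ term dropped for the final inequality. The only difference is cosmetic — you expand $\SCP{\psi_{as}}{(\widehat n)^2\psi_{as}}$ directly while the paper starts from $\SCP{\psi_{as}}{q_1q_2\psi_{as}}$ and symmetrizes — and your final paragraph correctly resolves the brief bookkeeping hesitation about $q_mP_{N,k}$.
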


\begin{proof}
Recall from Definition~\ref{def:projectors} that the projectors $P_{N,k}$ are defined by
\be
P_{N,k} := \sum_{\vec{a} \in \AAA_k} \prod_{m=1}^N (p_m)^{1-a_m}(q_m)^{a_m}
\ee
with the set
\be
\AAA_k := \left\{ \vec{a}=(a_1,\ldots,a_N) \in \{0,1\}^N: \sum_{m=1}^N a_m=k \right\}.
\ee
From $p_mq_m=0$ it follows that for all $k,\ell = 1,\ldots,N$,
\begin{align}
P_{N,k} P_{N,\ell} &= \sum_{\vec{a} \in \AAA_k}\sum_{\vec{b} \in \AAA_{\ell}} \prod_{m=1}^N (p_m)^{1-a_m}(q_m)^{a_m}(p_m)^{1-b_m}(q_m)^{b_m} \nonumber \\
&= \sum_{\vec{a} \in \AAA_k}\sum_{\vec{b} \in \AAA_{\ell}} \prod_{m=1}^N \delta_{a_mb_m} (p_m)^{1-a_m}(q_m)^{a_m} \nonumber \\
&= \delta_{k\ell} \sum_{\vec{a} \in \AAA_k} \prod_{m=1}^N (p_m)^{1-a_m}(q_m)^{a_m} \nonumber \\
&= \delta_{k\ell} P_{N,k}.
\end{align}
From $\bigcup_{k=0}^N \AAA_k = \{0,1\}^N$ and $p_m+q_m=1$ it follows that
\begin{align}\label{calc_sum_P_Nk_id}
\sum_{k=0}^N P_{N,k} &= \sum_{k=0}^N \sum_{\vec{a} \in \AAA_k} \prod_{m=1}^N (p_m)^{1-a_m}(q_m)^{a_m} \nonumber \\
&= \sum_{\vec{a} \in \{0,1\}^N} \prod_{m=1}^N (p_m)^{1-a_m}(q_m)^{a_m} \nonumber \\
&= \prod_{m=1}^N \left(p_m + q_m\right) \nonumber \\
&= 1.
\end{align}
Now recall that $P_{N,k}$ contains $k$ $q$'s in each summand and therefore
\be
\left( \sum_{m=1}^{N} q_m \right) P_{N,k} = k \, P_{N,k}.
\ee
Using this and \eqref{calc_sum_P_Nk_id} we find
\be\label{n_hat_sum_q}
\widehat{n} = \sum_{k=0}^N \frac{k}{N} P_{N,k} = \sum_{k=0}^N \frac{1}{N} \left( \sum_{m=1}^{N} q_m \right) P_{N,k} =  \frac{1}{N} \left( \sum_{m=1}^{N} q_m \right) \sum_{k=0}^N P_{N,k} = \frac{1}{N} \sum_{m=1}^N q_m.
\ee
From the antisymmetry of $\psi_{as}$ and \eqref{n_hat_sum_q} it then follows directly that
\be\label{n_hat_q}
\SCP{\psi_{as}}{\widehat{n}\,\psi_{as}} = \frac{1}{N} \sum_{m=1}^N \SCP{\psi_{as}}{q_m\psi_{as}} = \frac{1}{N} \sum_{m=1}^N \SCP{\psi_{as}}{q_1\psi_{as}} = \SCP{\psi_{as}}{q_1\psi_{as}} .
\ee
Using again the antisymmetry of $\psi_{as}$, as well as \eqref{n_hat_sum_q} and \eqref{n_hat_q}, we find
\begin{align}
\SCP{\psi_{as}}{q_1q_2\psi_{as}} &= \frac{1}{N-1} \sum_{m=2}^N \SCP{\psi_{as}}{q_1q_m\psi_{as}} \nonumber \\
&= \frac{1}{N-1} \left( \sum_{m=1}^N \SCP{\psi_{as}}{q_1q_m\psi_{as}} - \SCP{\psi_{as}}{q_1\psi_{as}} \right) \nonumber \\
&= \frac{1}{N-1} \left( \frac{1}{N} \sum_{m=1}^N \sum_{n=1}^N \SCP{\psi_{as}}{q_mq_n\psi_{as}} - \SCP{\psi_{as}}{q_1\psi_{as}} \right) \nonumber \\
&= \frac{N}{N-1} \left( \sum_{m=1}^N \sum_{n=1}^N \frac{1}{N^2} \SCP{\psi_{as}}{q_mq_n\psi_{as}} - \frac{1}{N} \SCP{\psi_{as}}{q_1\psi_{as}} \right) \nonumber \\
&= \frac{N}{N-1} \left( \SCP{\psi_{as}}{\left( \frac{1}{N}\sum_{m=1}^N q_m \right)^2 \psi_{as}} - \frac{1}{N} \SCP{\psi_{as}}{q_1\psi_{as}} \right) \nonumber \\
\eqexp{by \eqref{n_hat_sum_q}} &= \frac{N}{N-1} \left( \SCP{\psi_{as}}{\left( \widehat{n} \right)^2 \psi_{as}} - \frac{1}{N} \underbrace{\SCP{\psi_{as}}{q_1\psi_{as}}}_{\geq 0} \right) \nonumber \\
&\leq \frac{N}{N-1} \SCP{\psi_{as}}{\left( \widehat{n} \right)^2 \psi_{as}}.
\end{align}
\end{proof}

We now turn to the operators $\widehat{f}$ from Definition~\ref{def:projectors}. The following lemma gives a simple expression for powers of $\widehat{f}$.

\begin{lemma}\label{lem:powers_f}
For all $0 < s \in \QQQ$ and $f\geq 0$,
\be
\left(\widehat{f}\,\right)^s = \widehat{f^s}.
\ee
\end{lemma}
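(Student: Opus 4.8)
The key structural fact is that the family $\{P_{N,k}\}_{k=0}^N$ consists of mutually orthogonal projectors summing to the identity, as recorded in Lemma~\ref{lem:properties_P_Nk}: $P_{N,k}P_{N,\ell}=\delta_{k\ell}P_{N,k}$ and $\sum_{k=0}^N P_{N,k}=1$. This means $\widehat{f}=\sum_{k=0}^N f(k)P_{N,k}$ is in ``spectral form'' relative to this orthogonal resolution of the identity, so functional calculus is entirely elementary: any (Laurent) polynomial, and more generally any power, of $\widehat{f}$ is computed blockwise. My plan is first to prove the statement for positive integer powers $s=n$ by a straightforward induction, then to extend to rational $s=p/q$ by taking $q$-th roots.

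First I would establish, for $n\in\NNN$, that $\bigl(\widehat{f}\,\bigr)^n=\sum_{k=0}^N f(k)^n P_{N,k}=\widehat{f^n}$. For $n=1$ this is the definition. For the inductive step, using orthogonality,
\begin{align}
\bigl(\widehat{f}\,\bigr)^{n+1} &= \Bigl(\sum_{k=0}^N f(k)^n P_{N,k}\Bigr)\Bigl(\sum_{\ell=0}^N f(\ell) P_{N,\ell}\Bigr) \nonumber \\
&= \sum_{k,\ell=0}^N f(k)^n f(\ell)\, P_{N,k}P_{N,\ell} \nonumber \\
&= \sum_{k=0}^N f(k)^{n+1} P_{N,k} = \widehat{f^{n+1}}.
\end{align}
This settles all positive integer powers. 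Note that since $f(k)\geq 0$ and $f^n$ again maps $\{0,\ldots,N\}\to[0,\infty)$, the right-hand side is again of the form $\widehat{g}$ for an admissible nonnegative $g$; one should only remark that here we do not require $g(0)=0$, $g(N)=1$, i.e.\ we use $\widehat{\cdot}$ in the obvious extended sense for arbitrary nonnegative functions on $\{0,\ldots,N\}$ (if one prefers, note $f(0)=0\Rightarrow f(0)^s=0$ and $f(N)=1\Rightarrow f(N)^s=1$ for $s>0$, so the weight-function normalization is in fact preserved).

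Next, for rational $s=p/q$ with $p,q\in\NNN$, I would define the candidate operator $A:=\widehat{f^{1/q}}=\sum_{k=0}^N f(k)^{1/q}P_{N,k}$, which makes sense because $f(k)^{1/q}\geq 0$ is well-defined. By the integer case just proved, $A^q=\widehat{(f^{1/q})^q}=\widehat{f}$, and $A\geq 0$ since it is a nonnegative linear combination of orthogonal projectors. Since $\widehat{f}$ is a bounded nonnegative self-adjoint operator, it has a unique nonnegative $q$-th root, so $\widehat{f}^{\,1/q}=A=\widehat{f^{1/q}}$. Then
\begin{align}
\bigl(\widehat{f}\,\bigr)^{p/q} = \bigl(\widehat{f}^{\,1/q}\bigr)^{p} = \bigl(\widehat{f^{1/q}}\bigr)^{p} = \widehat{(f^{1/q})^{p}} = \widehat{f^{p/q}},
\end{align}
where the third equality is the integer case and the last is just $(f^{1/q})^p=f^{p/q}$ pointwise. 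The only mild subtlety—and the ``hard part'', though it is not really hard—is justifying the uniqueness of the nonnegative $q$-th root that identifies $\widehat{f}^{\,1/q}$ with $A$; this is a standard fact from the functional calculus for bounded self-adjoint operators (or can be seen directly here by decomposing any candidate root blockwise along the $P_{N,k}$). Everything else is bookkeeping with the orthogonality relations of Lemma~\ref{lem:properties_P_Nk}.
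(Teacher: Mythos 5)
Your proposal is correct and follows essentially the same route as the paper: integer powers via the orthogonality relations $P_{N,k}P_{N,\ell}=\delta_{k\ell}P_{N,k}$, then $q$-th roots via $\bigl(\widehat{f^{1/q}}\bigr)^q=\widehat{f}$, and composition to get general rational exponents. You are merely a bit more explicit than the paper in spelling out the induction and the uniqueness of the nonnegative root, which the paper leaves implicit.
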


\begin{proof}
Recall that according to Lemma~\ref{lem:properties_P_Nk},
\be
P_{N,k} P_{N,\ell} = \delta_{k\ell} P_{N,k}
\ee
for all $k,\ell = 1,\ldots,N$. Let $0 \neq n \in \NNN$. Then
\begin{align}\label{powers_f_n}
\left(\widehat{f}\right)^{n} &= \sum_{k_1=0}^N f(k_1)P_{N,k_1} \cdot \ldots \cdot \sum_{k_n=0}^N f(k_n)P_{N,k_n} \nonumber \\
&= \sum_{k_1,\ldots,k_n=0}^N \left(\prod_{j=1}^n f(k_j)\right) \left(\prod_{j=1}^n P_{N,k_j}\right) \nonumber \\
&= \sum_{k=0}^N f(k)^n P_{N,k} \nonumber \\
&= \widehat{f^n}.
\end{align}
Recall that $f \geq 0$. It follows that for all $0 \neq m \in \NNN$,
\be
\left(\widehat{f^{\frac{1}{m}}}\right)^{m} = \widehat{f^{\frac{m}{m}}} = \widehat{f},
\ee
so
\be\label{powers_f_1_m}
\left(\widehat{f}\right)^{\frac{1}{m}} = \widehat{f^{\frac{1}{m}}}.
\ee
Together, Equations \eqref{powers_f_n} and \eqref{powers_f_1_m} prove the lemma.
\end{proof}

For completeness, let us also show how $\widehat{f}$ can be inverted. (The next lemma is not necessary for proving the rest of the statements in this thesis.)

\begin{lemma}
Let $f(0)=0$ and $f(k)>0$ for all $0 < k \leq N$. Then, for all $0 < s \in \QQQ$,
\be
\widehat{f}^{\,-s} \left(\id-P_{N,0}\right) = \widehat{f^{-s}},
\ee
with the definition $\widehat{f^{-s}}=\sum_{k=1}^N f(k)^{-s} P_{N,k}$.
\end{lemma}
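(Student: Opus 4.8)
The statement to prove is that for a weight function $f$ with $f(0)=0$ and $f(k)>0$ for $0<k\leq N$, the operator $\widehat{f}$ restricted to the orthogonal complement of the range of $P_{N,0}$ is invertible, with inverse (raised to any positive rational power $s$) given by $\widehat{f^{-s}} := \sum_{k=1}^N f(k)^{-s} P_{N,k}$. That is,
\be
\widehat{f}^{\,-s} \left(\id-P_{N,0}\right) = \widehat{f^{-s}}.
\ee

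The plan is to exploit the orthogonal projector structure of the $P_{N,k}$ established in Lemma~\ref{lem:properties_P_Nk}, namely $P_{N,k}P_{N,\ell}=\delta_{k\ell}P_{N,k}$ and $\sum_{k=0}^N P_{N,k}=\id$. First I would observe that on the subspace $\mathrm{Ran}(\id-P_{N,0}) = \bigoplus_{k=1}^N \mathrm{Ran}(P_{N,k})$, the operator $\widehat{f}=\sum_{k=1}^N f(k)P_{N,k}$ (the $k=0$ term vanishes since $f(0)=0$) acts as multiplication by $f(k)$ on each block $\mathrm{Ran}(P_{N,k})$, $k\geq 1$. Since each $f(k)>0$ by hypothesis, this is a strictly positive operator on that subspace, hence invertible there; its (positive) inverse raised to the power $s$ is precisely the operator that multiplies by $f(k)^{-s}$ on each block, i.e.\ $\widehat{f^{-s}}=\sum_{k=1}^N f(k)^{-s}P_{N,k}$. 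The cleanest way to make ``$\widehat{f}^{\,-s}$'' precise is to define it functionally: since $\widehat{f}$ is bounded, self-adjoint and non-negative with $\mathrm{ker}\,\widehat{f}\supseteq\mathrm{Ran}(P_{N,0})$, the operator $\widehat{f}^{\,-s}$ is understood via the spectral calculus on the positive part, or equivalently $\widehat{f}^{\,-s}(\id-P_{N,0})$ is the unique operator $B$ supported on $\mathrm{Ran}(\id-P_{N,0})$ with $B^{1/s}\,\widehat{f}=\id-P_{N,0}$ when $1/s\in\NNN$, extended to rational $s$ as in Lemma~\ref{lem:powers_f}.

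The concrete verification is a short computation mirroring the proof of Lemma~\ref{lem:powers_f}. I would first check the integer-power case: for $0\neq n\in\NNN$,
\be
\left(\widehat{f^{-1}}\right)^{n} = \sum_{k_1,\ldots,k_n=1}^N \left(\prod_{j=1}^n f(k_j)^{-1}\right)\left(\prod_{j=1}^n P_{N,k_j}\right) = \sum_{k=1}^N f(k)^{-n}P_{N,k} = \widehat{f^{-n}},
\ee
using $P_{N,k}P_{N,\ell}=\delta_{k\ell}P_{N,k}$ exactly as in \eqref{powers_f_n}. Combined with
\be
\widehat{f^{-1}}\,\widehat{f} = \sum_{k=1}^N f(k)^{-1}f(k)\,P_{N,k} = \sum_{k=1}^N P_{N,k} = \id - P_{N,0},
\ee
this shows $\widehat{f^{-1}}$ is the inverse of $\widehat{f}$ on $\mathrm{Ran}(\id-P_{N,0})$, hence $\widehat{f}^{\,-1}(\id-P_{N,0})=\widehat{f^{-1}}$, and then $\widehat{f}^{\,-n}(\id-P_{N,0})=(\widehat{f^{-1}})^n=\widehat{f^{-n}}$. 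For general rational $s=p/q$ I would take the $q$-th root: since $\widehat{f^{-s}}\geq 0$ and $(\widehat{f^{-s}})^{q}=\widehat{f^{-sq}}=\widehat{f^{-p}}=\widehat{f}^{\,-p}(\id-P_{N,0})$ by the integer case, uniqueness of the non-negative $q$-th root on the subspace $\mathrm{Ran}(\id-P_{N,0})$ gives $\widehat{f^{-s}}=(\widehat{f}^{\,-p}(\id-P_{N,0}))^{1/q}=\widehat{f}^{\,-s}(\id-P_{N,0})$, exactly as in \eqref{powers_f_1_m}.

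The only genuinely delicate point — and it is more a matter of bookkeeping than of mathematical difficulty — is pinning down the meaning of the symbol $\widehat{f}^{\,-s}$ on the nullspace $\mathrm{Ran}(P_{N,0})$, where $\widehat{f}$ is not invertible; the statement handles this by multiplying through with $\id-P_{N,0}$, so I would simply state at the outset that $\widehat{f}^{\,-s}$ denotes the (generalized) inverse defined via the spectral calculus restricted to the strictly positive part of $\widehat{f}$, equivalently $\widehat{f}^{\,-s}(\id-P_{N,0})$ is the operator obtained by applying $x\mapsto x^{-s}$ to the eigenvalues $f(1),\ldots,f(N)>0$ and $0$ to the eigenvalue at $\mathrm{Ran}(P_{N,0})$. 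With that convention fixed, the identity is immediate from the block-diagonal (spectral) decomposition $\widehat{f}=\sum_{k=0}^N f(k)P_{N,k}$ and the orthogonality relations, and no Gronwall-type or analytic input is needed. I expect the whole proof to be about half a page.
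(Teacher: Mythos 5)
Your proposal is correct and follows essentially the same route as the paper: both rest on the orthogonality relations $P_{N,k}P_{N,\ell}=\delta_{k\ell}P_{N,k}$, the computation $\widehat{f^{-1}}\,\widehat{f}=\sum_{k=1}^N P_{N,k}=\id-P_{N,0}$, and the extension from integer to rational powers exactly as in Lemma~\ref{lem:powers_f}. Your extra care in fixing the meaning of $\widehat{f}^{\,-s}$ on $\mathrm{Ran}(P_{N,0})$ via the spectral calculus is a welcome clarification of a point the paper leaves implicit, but it does not change the argument.
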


\begin{proof}
Recall that $f(0)=0$. First, note that
\begin{align}
\widehat{f^{-1}} \widehat{f} &= \sum_{k=1}^N \frac{1}{f(k)} P_{N,k} \sum_{\ell=1}^N f(\ell) P_{N,\ell} \nonumber \\
&= \sum_{k=1}^N P_{N,k} \nonumber \\
&= \id - P_{N,0}.
\end{align}
It follows that
\be
\widehat{f^{-1}} = \widehat{f}^{-1} \left( \id-P_{N,0} \right).
\ee
Therefore, for all $0<s\in \QQQ$,
\be
\widehat{f^{-s}} = \widehat{f^{-1}}^s = \widehat{f}^{-s} \left( \id-P_{N,0} \right)^s.
\ee
We now show that $\left( \id-P_{N,0} \right)^s = \id-P_{N,0}$. First, let $0 \neq n \in \NNN$. By induction ($\left( \id-P_{N,0} \right)^1 = \id-P_{N,0}$, and now assuming $\left( \id-P_{N,0} \right)^n = \id-P_{N,0}$),
\begin{align}
\left( \id-P_{N,0} \right)^{n+1} &= \left( \id-P_{N,0} \right)^n \left( \id-P_{N,0} \right) \nonumber \\
&= \left( \id-P_{N,0} \right) \left( \id-P_{N,0} \right) \nonumber \\
&= \id - P_{N,0} - P_{N,0} + P_{N,0} \nonumber \\
&= \id-P_{N,0},
\end{align}
so $\left( \id-P_{N,0} \right)^n = \id-P_{N,0}$ for all $0 \neq n \in \NNN$. Then also $\left( \id-P_{N,0} \right)^{\frac{1}{m}} = \id-P_{N,0}$, since $\left( \id-P_{N,0} \right)^m = \id-P_{N,0}$, which proves the lemma.
\end{proof}

In order to estimate the three terms in \eqref{alpha_derivative} of the time derivative of $\alpha_f(t)$, we need to use the Cauchy-Schwarz inequality on both sides of the scalar product. It turns out that we often need a $\sqrt{\widehat{f}-\widehat{f}_{-d}}$ together with a $\psi$, so we need to shift one $\sqrt{\widehat{f}-\widehat{f}_{-d}}$ to the other side of the scalar product. The following lemma shows how this can be done.

\begin{lemma}\label{lem:root_f_hat}
Let $h_{12}$ be an operator acting only on $x_1$ and $x_2$, $a,b \in \{0,1,2\}$ and $d \in \{1,2\}$. (Recall that according to Definition~\ref{def:projectors2}, ${P_0}^{\{1,2\}} = p_1p_2$, ${P_1}^{\{1,2\}} = p_1q_2 + q_1p_2$ and ${P_2}^{\{1,2\}} = q_1q_2$.) Then, for all monotone increasing $f$ and $\psi\in L^2(\RRR^{3N})$,
\begin{align}\label{root_f_hat_eq}
&\bigSCP{\psi}{\left( \widehat{f}-\widehat{f}_{-d} \right) P^{\{1,2\}}_a h_{12} P^{\{1,2\}}_b \psi} \nonumber \\
&\qquad= \Big\langle\!\!\Big\langle \psi , \left( \widehat{f}-\widehat{f}_{-d} \right)^{\frac{1}{2}} P^{\{1,2\}}_a h_{12} P^{\{1,2\}}_b \times \nonumber \\
&\qquad\quad \times \left( \widehat{f}_{a-b}-\widehat{f}_{a-b-d} + \sum_{\ell=1}^{a-b} f(N-d+\ell)P_{N,N-(a-b)+\ell}\right)^{\frac{1}{2}} \psi \Big\rangle\!\!\Big\rangle \nonumber \\
&\qquad= \bigSCP{\psi}{\left( \widehat{f}-\widehat{f}_{-d} \right)^{\frac{1}{2}} P^{\{1,2\}}_a h_{12} P^{\{1,2\}}_b \left( \widehat{f}_{a-b}-\widehat{f}_{a-b-d}\right)^{\frac{1}{2}} \psi}.
\end{align}
\end{lemma}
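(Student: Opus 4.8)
The plan is to reduce the statement to the shift identities of Lemma~\ref{lem:shift_Ps}, the functional calculus of Lemma~\ref{lem:powers_f}, and the orthogonality $P_{N,k}P_{N,\ell}=\delta_{k\ell}P_{N,k}$ of Lemma~\ref{lem:properties_P_Nk}, together with a careful accounting of the boundary terms produced when $\widehat{f}$ is shifted near the endpoints $0$ and $N$. No antisymmetry of $\psi$ will be needed; everything is an operator identity evaluated in the inner product.

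First I would set $G:=\widehat{f}-\widehat{f}_{-d}=\sum_{k=0}^N g(k)P_{N,k}$ with $g(k):=f(k)-f(k-d)$, using the convention $f(j)=0$ for $j<0$. Since $f$ is monotone increasing, $g(k)\ge 0$ for all $k$, so by Lemma~\ref{lem:powers_f} the operator $G^{1/2}=\widehat{\sqrt g}$ is well defined and $G=G^{1/2}G^{1/2}$. Writing $A:=P^{\{1,2\}}_a h_{12}P^{\{1,2\}}_b$, I would move the inner factor $G^{1/2}$ across $A$ by \eqref{shift_f_left_to_right} of Lemma~\ref{lem:shift_Ps}, applied to the function $\sqrt g$ (the proof of that lemma uses only linearity in the coefficients), obtaining the operator identity
\[
G\,A \;=\; G^{1/2}\,A\,\widehat{\sqrt g}_{\,a-b}.
\]
Inserting this into $\langle\!\langle\psi,G A\psi\rangle\!\rangle$ already produces the structure of the right-hand side, and it remains only to identify $\widehat{\sqrt g}_{\,a-b}$.

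Since $\sqrt g\ge 0$, Lemma~\ref{lem:powers_f} applied to the (nonnegative) function $k\mapsto g(k+a-b)$ gives $\widehat{\sqrt g}_{\,a-b}=(\widehat g_{\,a-b})^{1/2}$, and comparing coefficients yields
\[
\widehat g_{\,a-b} \;=\; \widehat{f}_{a-b}-\widehat{f}_{a-b-d}+\sum_{\ell=1}^{a-b} f(N-d+\ell)\,P_{N,N-(a-b)+\ell},
\]
the correction sum (empty unless $a>b$) accounting exactly for the indices $k$ with $k+a-b>N$, where $g(k+a-b)=0$ while $f(k+a-b)-f(k+a-b-d)=-f(k+a-b-d)$ because of the extend-by-zero convention. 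This gives the first displayed equality of the lemma. For the second equality I would note that these correction indices are $k\in\{N-(a-b)+1,\dots,N\}$, and for such $k$ one has both $\sqrt g(k+a-b)=0$ (so they are already absent from $\widehat{\sqrt g}_{\,a-b}$) and $A\,P_{N,k}=P^{\{1,2\}}_a h_{12}\,(P^{\{1,2\}}_b P_{N,k})=0$, since $P^{\{1,2\}}_b P_{N,k}=P^{\{1,2\}}_b P^{\{3,\dots,N\}}_{k-b}=0$ whenever $k-b>N-2$, which holds here because $k-b\ge N-a+1>N-2$ as $a\le 2$. Hence $A\,\widehat{\sqrt g}_{\,a-b}=A\,(\widehat{f}_{a-b}-\widehat{f}_{a-b-d})^{1/2}$ (whatever convention is adopted for the square root on the, irrelevant, top components), which after reinserting into the inner product is the second equality.

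The routine part is the coefficient bookkeeping; the only point requiring care is the treatment of these endpoint correction terms — specifically, verifying that a shift by $a-b\in\{-2,\dots,2\}$ only generates contributions at indices $k$ that are annihilated by $P^{\{1,2\}}_b$, equivalently that the bound $a\le 2$ suffices. That observation is exactly what legitimizes dropping the correction sum in passing from the first to the second equality; everything else is a direct application of Lemmas~\ref{lem:shift_Ps},~\ref{lem:powers_f} and the decomposition $P_{N,k}=\sum_{d=0}^2 P^{\{1,2\}}_d P^{\{3,\dots,N\}}_{k-d}$ used in the proof of Lemma~\ref{lem:shift_Ps}.
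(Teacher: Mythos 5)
Your proof is correct and follows essentially the same route as the paper's: split off a square root of $\widehat{f}-\widehat{f}_{-d}$, move it through $P_a^{\{1,2\}}h_{12}P_b^{\{1,2\}}$ via the shift identities of Lemma~\ref{lem:shift_Ps}, identify the boundary correction terms coming from the extend-by-zero convention, and discard them using $P_b^{\{1,2\}}P_{N,k}=0$ for $k$ near $N$. The only differences are cosmetic — you work at the level of whole hat operators (with Lemma~\ref{lem:powers_f}) instead of re-indexing the sum over $P_{N,k}$, and you replace the paper's case-by-case footnote by the uniform observation $P_b^{\{1,2\}}P_{N,k}=P_b^{\{1,2\}}P_{k-b}^{\{3,\ldots,N\}}=0$ for $k-b>N-2$ — which is the same content.
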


\begin{proof}
Recall that for all $k<0$ and $k>N$ we define $f(k)=0$ and $P_{N,k}=0$. Also recall the definition of the shifted $\widehat{f}$,
\be
\widehat{f}_d = \sum_{k=0}^N f(k+d) P_{N,k}.
\ee
We find, using Lemma~\ref{lem:shift_Ps} and $P_{N,k}P_{N,\ell}=\delta_{k\ell}P_{N,k}$ from Lemma~\ref{lem:properties_P_Nk}, that
\begin{align}\label{f-f_-d}
&\bigSCP{\psi}{\big( \widehat{f}-\widehat{f}_{-d} \big) P^{\{1,2\}}_a h_{12} P^{\{1,2\}}_b \psi} \nonumber \\
&\qquad= \bigSCP{\psi}{\sum_{k=0}^N \big( f(k)-f(k-d) \big) P_{N,k} P^{\{1,2\}}_a h_{12} P^{\{1,2\}}_b \psi} \nonumber \\
\eqexpl{by Lem.~\ref{lem:shift_Ps}} &\qquad = \bigSCP{\psi}{\sum_{k=0}^N \big( f(k)-f(k-d) \big)^{\frac{1}{2}} P_{N,k} P^{\{1,2\}}_a h_{12} P^{\{1,2\}}_b \times \nonumber \\
&\qquad\quad \times \big( f(k)-f(k-d) \big)^{\frac{1}{2}} P_{N,k+b-a} \psi} \nonumber \\
&\qquad= \bigSCP{\psi}{\sum_{k=0}^N \big( f(k)-f(k-d) \big)^{\frac{1}{2}} P_{N,k} P^{\{1,2\}}_a h_{12} P^{\{1,2\}}_b \times \nonumber \\
&\qquad\quad \times \sum_{\ell=b-a}^{N+b-a} \big( f(\ell-b+a)-f(\ell-b+a-d) \big)^{\frac{1}{2}} P_{N,\ell} \, \psi}.
\end{align}
Note that $f(k)-f(k-d)\geq 0$ since $f$ is monotone increasing. Furthermore, with Lemma~\ref{lem:powers_f},
\begin{align}\label{root_f-f_a-b}
& \sum_{\ell=b-a}^{N+b-a} \Big( f(\ell-b+a)-f(\ell-b+a-d) \Big)^{\frac{1}{2}} P_{N,\ell} \, \psi \nonumber \\
&\qquad= \left( \sum_{\ell=b-a}^{N+b-a} \Big( f(\ell-b+a)-f(\ell-b+a-d) \Big) P_{N,\ell} \right)^{\frac{1}{2}} \psi \nonumber \\
&\qquad= \left( \widehat{f}_{a-b} - \widehat{f}_{a-b-d} + \sum_{\ell=N+b-a+1}^N f(\ell-b+a-d)P_{N,\ell} \right)^{\frac{1}{2}} \psi \nonumber \\
&\qquad= \left( \widehat{f}_{a-b} - \widehat{f}_{a-b-d} + \sum_{\ell=1}^{a-b} f(N-d+\ell)P_{N,N-(a-b)+\ell} \right)^{\frac{1}{2}} \psi.
\end{align}
In order to show the second equality in \eqref{root_f_hat_eq} we use that in \eqref{f-f_-d} there is a $P^{\{1,2\}}_b$ in front of \eqref{root_f-f_a-b}. The operator $P^{\{1,2\}}_b$ contains $b$ $q$'s or $(2-b)$ $p$'s. Therefore $P_{N,\ell}P^{\{1,2\}}_b = 0$ for all $N+b-a < \ell \leq N$ for all combinations of $a,b = 0,1,2$.\footnote{The possible cases are:
\begin{itemize}
\item $b-a = -1$: Then $\ell = N$ and either $b=0$ or $b=1$. Then $P^{\{1,2\}}_b$ contains at lest one $p$ and since $P_{N,N}p_1=0$, also $P_{N,\ell}P^{\{1,2\}}_b = 0$.
\item $b-a = -2$: Then either $\ell = N-1$ or $\ell = N$ and $b=0$. Then $P^{\{1,2\}}_b$ contains two $p$'s and since $P_{N,N-1}p_1p_2=0=P_{N,N}p_1p_2$, also $P_{N,\ell}P^{\{1,2\}}_b = 0$.
\end{itemize}} Then, in \eqref{root_f-f_a-b} multiplied with $P^{\{1,2\}}_b$, one can replace $\sum_{\ell=b-a}^{N+b-a}$ by $\sum_{\ell=b-a}^N$, and
\begin{align}
& \sum_{\ell=b-a}^{N+b-a} \Big( f(\ell-b+a)-f(\ell-b+a-d) \Big)^{\frac{1}{2}} P_{N,\ell} P^{\{1,2\}}_b \psi \nonumber \\
&\qquad= \sum_{\ell=b-a}^N \Big( f(\ell-b+a)-f(\ell-b+a-d) \Big)^{\frac{1}{2}} P_{N,\ell} P^{\{1,2\}}_b \psi \nonumber \\
&\qquad= \left( \sum_{\ell=b-a}^N f(\ell-b+a)P_{N,\ell} - \sum_{\ell=b-a}^N f(\ell-b+a-d)P_{N,\ell} \right)^{\frac{1}{2}} P^{\{1,2\}}_b \psi \nonumber \\
&\qquad= \big( \widehat{f}_{a-b} - \widehat{f}_{a-b-d} \big)^{\frac{1}{2}} P^{\{1,2\}}_b \psi.
\end{align}
\end{proof}

The next lemma shows how expressions involving $\sqrt{\widehat{f}-\widehat{f}_{-d}}\,\psi$ and possibly one or two $q$'s can be estimated explicitly for the weight function $m^{(\gamma)}(k)$ that we use later.

\begin{lemma}\label{lem:q_root_f}
Let $\widehat{m} := \widehat{m^{(\gamma)}}=\sum_{k=0}^N m^{(\gamma)}(k) P_{N,k}$ with $m^{(\gamma)}(k)$ as in \eqref{weight_m_gamma}. For any antisymmetric $\psi_{as}\in L^2(\RRR^{3N})$, we abbreviate
\be
\widetilde{\psi}_1 := \left( \widehat{m}-\widehat{m}_{-d} \right)^{\frac{1}{2}}\psi_{as},
\ee
and
\be
\widetilde{\psi}_0 := \left( \widehat{m}_d-\widehat{m} + \sum_{\ell=1}^d m(N-d+\ell)P_{N,N-d+\ell} \right)^{\frac{1}{2}}\psi_{as},
\ee
for any $d =1,2$. Then, for all $c=0,1$ and all normalized antisymmetric $\psi_{as}$,
\begin{align}
\norm{\widetilde{\psi}_c}^2 \leq d \, N^{-\gamma},
\end{align}
\begin{align}
\norm{q_1 \widetilde{\psi}_c}^2 \leq d(d+1)^c \, N^{-1} \, \alpha_m,
\end{align}
\begin{align}
\norm{q_1q_2 \widetilde{\psi}_c}^2 \leq d(d+1)^{2c} \, N^{\gamma-2} \, \alpha_m.
\end{align}
\end{lemma}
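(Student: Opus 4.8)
The plan is to reduce everything to spectral estimates for the operators $\widehat{m}-\widehat{m}_{-d}$ and the shifted variant appearing in $\widetilde{\psi}_0$, using the orthogonal decomposition $\psi_{as}=\sum_{k=0}^N P_{N,k}\psi_{as}$ and the fact (Lemma~\ref{lem:properties_P_Nk}) that $P_{N,k}P_{N,\ell}=\delta_{k\ell}P_{N,k}$. For the first bound, note that $\|\widetilde{\psi}_c\|^2 = \SCP{\psi_{as}}{(\text{operator})\,\psi_{as}} = \sum_{k} g_c(k)\,\SCP{\psi_{as}}{P_{N,k}\psi_{as}}$, where for $c=1$ we have $g_1(k)=m^{(\gamma)}(k)-m^{(\gamma)}(k-d)$ and for $c=0$ the corresponding shifted expression. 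Since $m^{(\gamma)}$ is monotone with increments at most $N^{-\gamma}$ (it increases by $N^{-\gamma}$ per unit step for $k\le N^{\gamma}$ and is constant afterwards), one checks $0\le g_c(k)\le d\,N^{-\gamma}$ pointwise; combined with $\sum_k\SCP{\psi_{as}}{P_{N,k}\psi_{as}}=\|\psi_{as}\|^2=1$ this gives $\|\widetilde{\psi}_c\|^2\le d\,N^{-\gamma}$.

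For the bounds involving $q_1$ and $q_1q_2$, first I would use the antisymmetry of $\psi_{as}$ together with the identity $\widehat{n}=\frac{1}{N}\sum_{m=1}^N q_m$ from Lemma~\ref{lem:properties_P_Nk} to write $\|q_1\widetilde{\psi}_c\|^2 = \SCP{\widetilde{\psi}_c}{q_1\widetilde{\psi}_c}$; but $\widetilde{\psi}_c$ is \emph{not} antisymmetric, so I cannot directly invoke that identity for $\widetilde{\psi}_c$. Instead the key observation is that $\widehat{m}-\widehat{m}_{-d}$ (and the shifted operator) is a function of the $P_{N,k}$, hence commutes with $\widehat{n}$ and with each other, while $q_1$ does not commute with them. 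So the strategy is: $\|q_1\widetilde{\psi}_c\|^2 = \SCP{\psi_{as}}{G_c^{1/2}\,q_1\,G_c^{1/2}\,\psi_{as}}$ where $G_c$ is the relevant diagonal operator; since $G_c^{1/2}$ commutes with $\widehat{n}=\frac{1}{N}\sum q_m$ but we only have $q_1$, I would symmetrize using antisymmetry of $\psi_{as}$: $\SCP{\psi_{as}}{G_c^{1/2}q_1G_c^{1/2}\psi_{as}} = \frac{1}{N}\sum_{m=1}^N\SCP{\psi_{as}}{G_c^{1/2}q_mG_c^{1/2}\psi_{as}}$ (valid because $G_c^{1/2}$ is permutation-symmetric, being built from the symmetric $P_{N,k}$, and $\psi_{as}$ is antisymmetric), which equals $\frac{1}{N}\SCP{\psi_{as}}{G_c^{1/2}(\sum_m q_m)G_c^{1/2}\psi_{as}} = \SCP{\psi_{as}}{G_c^{1/2}\,\widehat{n}\,G_c^{1/2}\psi_{as}} = \SCP{\psi_{as}}{\widehat{n}\,G_c\,\psi_{as}}$, using that $\widehat{n}$ commutes with $G_c$. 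Now everything is diagonal in $k$: this is $\sum_k \frac{k}{N}\,g_c(k)\,\SCP{\psi_{as}}{P_{N,k}\psi_{as}}$. Since $g_c(k)$ is supported on roughly $k\le N^{\gamma}+d$ and bounded by $d N^{-\gamma}$ there, while $\frac{k}{N}\le \frac{N^{\gamma}+d}{N}$ on that support, one gets $\frac{k}{N}g_c(k)\le d(d+1)^c N^{-1}\,\frac{k}{N}$ after bounding $\frac{k}{N^{\gamma}}\le 1+ \text{(small correction)}$ on the support — here the $(d+1)^c$ factor absorbs the boundary terms coming from the shift in $\widetilde{\psi}_0$. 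Summing against $P_{N,k}$ and recognizing $\sum_k \frac{k}{N}\SCP{\psi_{as}}{P_{N,k}\psi_{as}}=\alpha_m$ (since $\alpha_m=\sum_k m^{(\gamma)}(k)\SCP{\psi_{as}}{P_{N,k}\psi_{as}}\ge \sum_k \frac{k}{N}\SCP{\psi_{as}}{P_{N,k}\psi_{as}}$) yields $\|q_1\widetilde{\psi}_c\|^2\le d(d+1)^c N^{-1}\alpha_m$.

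For the $q_1q_2$ bound the same scheme applies with two projectors: using antisymmetry and the last inequality of Lemma~\ref{lem:properties_P_Nk}, $\SCP{\psi_{as}}{q_1q_2\,\chi}$-type quantities produce a factor $\widehat{n}^2$ rather than $\widehat{n}$, i.e.\ $\|q_1q_2\widetilde{\psi}_c\|^2\le \frac{N}{N-1}\SCP{\psi_{as}}{\widehat{n}^2 G_c\psi_{as}}=\frac{N}{N-1}\sum_k(\frac{k}{N})^2 g_c(k)\SCP{\psi_{as}}{P_{N,k}\psi_{as}}$, and on the support $k\lesssim N^{\gamma}$ one has $(\frac{k}{N})^2 g_c(k)\le d(d+1)^{2c}N^{\gamma-2}\cdot\frac{k}{N}$, giving the stated $N^{\gamma-2}\alpha_m$ bound after summing. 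The main obstacle I anticipate is bookkeeping the shift correction terms in $\widetilde{\psi}_0$ — the extra sum $\sum_{\ell=1}^d m(N-d+\ell)P_{N,N-d+\ell}$ lives near $k=N$, so one must check that either it contributes zero against $\widehat{n}^jG_c$ for the relevant reasons or that its contribution is harmlessly bounded; tracking that the $d$ and $(d+1)^c$ constants come out exactly as claimed, rather than merely up to a constant, is the delicate part, and doing the pointwise comparison $g_c(k)\le d N^{-\gamma}\mathbbm{1}_{\{k\le N^\gamma+d\}}$ carefully for both $c=0$ and $c=1$ and both $d=1,2$ is where the real care is needed.
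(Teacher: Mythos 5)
Your overall strategy is the same as the paper's: reduce everything to diagonal sums $\sum_k w(k)\,g_c(k)\,\SCP{\psi_{as}}{P_{N,k}\psi_{as}}$ with $w(k)=1,\tfrac{k}{N},(\tfrac{k}{N})^2$, then estimate the increments $g_c(k)$ pointwise with a case split at $k\approx N^{\gamma}$, and finally sum against $\SCP{\psi_{as}}{P_{N,k}\psi_{as}}$. Your reduction step is valid, though two of your side remarks are off in a harmless way: each $P_{N,k}$ commutes with $q_1$ (all the $p_m,q_m$ commute) and with permutations of the particle labels, so in fact $q_1$ \emph{does} commute with $G_c$ and $\widetilde{\psi}_c$ \emph{is} antisymmetric; your symmetrization over $m$ gives the same identities, and this is exactly how the paper obtains $\norm{q_1\widetilde{\psi}_c}^2=\sum_k\tfrac{k}{N}g_c(k)\SCP{\psi_{as}}{P_{N,k}\psi_{as}}$ and the analogous $q_1q_2$ bound. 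The first estimate $\norm{\widetilde{\psi}_c}^2\le d\,N^{-\gamma}$ and your suspicion about the boundary sum in $\widetilde{\psi}_0$ (its purpose is to cancel the negative eigenvalues of $\widehat{m}_d-\widehat{m}$ at $k>N-d$, so $G_0$ vanishes there) are fine.

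The genuine gap is in the pointwise comparison. The inequalities you state, $\tfrac{k}{N}g_c(k)\le d(d+1)^c N^{-1}\tfrac{k}{N}$ and $(\tfrac{k}{N})^2 g_c(k)\le d(d+1)^{2c}N^{\gamma-2}\tfrac{k}{N}$, are false for $\gamma<1$: the first forces $g_c(k)\le d(d+1)^cN^{-1}$, while on its support $g_c(k)$ is of size $dN^{-\gamma}$, so it is off by a factor $N^{1-\gamma}$ (similarly for the second). The subsequent summation via $\sum_k\tfrac{k}{N}\SCP{\psi_{as}}{P_{N,k}\psi_{as}}=\alpha_n\le\alpha_{m^{(\gamma)}}$ therefore does not close the argument, and no repair through $\alpha_n$ is possible: for $\psi_{as}$ with $P_{N,1}\psi_{as}=\psi_{as}$ one has $\norm{q_1\widetilde{\psi}_1}^2=N^{-1-\gamma}$ while $N^{-1}\alpha_n=N^{-2}$, so any bound of the form $CN^{-1}\alpha_n$ fails; the claimed bound $d(d+1)N^{-1}\alpha_{m^{(\gamma)}}=d(d+1)N^{-1-\gamma}$ is what matches. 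The correct comparison weight is $m^{(\gamma)}(k)$ itself, which is the whole point of this weight function (it assigns the larger weight $k/N^{\gamma}$ to small $k$, exactly absorbing increments of size $N^{-\gamma}$). Concretely, splitting at $k\le N^{\gamma}$ (where $k/N^{\gamma}=m^{(\gamma)}(k)$ and $k^2\le N^{\gamma}k$) and $N^{\gamma}<k\le N^{\gamma}+d$ (where $m^{(\gamma)}(k)=1$ and $k/N^{\gamma}\le d+1$, relevant only for $c=1$ since the shifted increments vanish for $k>N^{\gamma}$) gives
\be
\frac{k}{N}\,g_c(k)\le \frac{d(d+1)^c}{N}\,m^{(\gamma)}(k),\qquad \Big(\frac{k}{N}\Big)^2 g_c(k)\le \frac{d(d+1)^{2c}N^{\gamma}}{N^2}\,m^{(\gamma)}(k),
\ee
and summing against $\SCP{\psi_{as}}{P_{N,k}\psi_{as}}$ reproduces $\alpha_{m^{(\gamma)}}$ exactly, which is how the paper concludes.
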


\begin{proof}
The proof of this lemma is not hard but the many different cases and the appearance of certain boundary terms make it a bit lengthy. Let us therefore give a short version of the proof first and afterwards present more details. First, recall that according to Lemma~\ref{lem:properties_P_Nk}, for all $\psi_{as}$,
\be\label{q_1P_N_K}
\SCP{\psi_{as}}{q_1\psi_{as}} = \sum_{k=0}^N \frac{k}{N} \SCP{\psi_{as}}{P_{N,k}\psi_{as}},
\ee
and
\be\label{q_1q_2P_N_K}
\SCP{\psi_{as}}{q_1q_2\psi_{as}} \leq 2 \sum_{k=0}^N \left(\frac{k}{N}\right)^2 \SCP{\psi_{as}}{P_{N,k}\psi_{as}}.
\ee
We denote by $\floor{\cdot}$ the floor function, i.e., for any $x \in \RRR$, $\floor{x} = \max\{m \in \ZZZ: m \leq x \}$. Note that
\be
\alpha_m = \sum_{k=0}^N m(k) \bigSCP{\psi_{as}}{P_{N,k}\psi_{as}} = \sum_{k=1}^{\floor{N^{\gamma}}} \frac{k}{N^{\gamma}} \bigSCP{\psi_{as}}{P_{N,k}\psi_{as}} + \sum_{k=\floor{N^{\gamma}}+1}^N  \bigSCP{\psi_{as}}{P_{N,k}\psi_{as}}.
\ee

\textbf{Short version.} 
Note that
\be
\widehat{m} - \widehat{m}_{-d} = \sum_{k=0}^N \, \Big( m(k) - m(k-d) \Big) \, P_{N,k} \approx N^{-\gamma} \sum_{k=0}^{N^{\gamma}} P_{N,k},
\ee
and similarly for $\widehat{m}_d - \widehat{m}$. (It is this point where we neglect boundary terms at $k\approx N^{\gamma}$; later the ``$\approx \dots$'' is replaced by ``$\leq C \dots$''.) Therefore (note that $q_1\widehat{m} = \widehat{m}q_1$ and $\sum_{k=0}^N P_{N,k}=1$),
\be
\norm{ \left( \widehat{m}-\widehat{m}_{-d} \right)^{\frac{1}{2}}\psi_{as}}^2 = \bigSCP{\psi_{as}}{ \Big( \widehat{m} - \widehat{m}_{-d} \Big) \psi_{as}} \approx \, N^{-\gamma} \,\sum_{k=0}^{N^{\gamma}} \bigSCP{\psi_{as}}{P_{N,k} \psi_{as}} \leq N^{-\gamma},
\ee
\begin{align}
\norm{ q_1 \left( \widehat{m}-\widehat{m}_{-d} \right)^{\frac{1}{2}}\psi_{as}}^2 &= \bigSCP{\psi_{as}}{ q_1 \Big( \widehat{m} - \widehat{m}_{-d} \Big) \psi_{as}} \approx \, N^{-\gamma} \,\sum_{k=0}^{N^{\gamma}} \frac{k}{N} \bigSCP{\psi_{as}}{P_{N,k} \psi_{as}} \nonumber \\
&= N^{-1} \,\sum_{k=0}^{N^{\gamma}} \frac{k}{N^\gamma} \bigSCP{\psi_{as}}{P_{N,k} \psi_{as}} \leq N^{-1} \alpha_m,
\end{align}
and
\begin{align}
\norm{ q_1q_2 \left( \widehat{m}-\widehat{m}_{-d} \right)^{\frac{1}{2}}\psi_{as}}^2 &= \bigSCP{\psi_{as}}{ q_1q_2 \Big( \widehat{m} - \widehat{m}_{-d} \Big) \psi_{as}} \lessapprox \, N^{-\gamma} \,\sum_{k=0}^{N^{\gamma}} \frac{k^2}{N^2} \bigSCP{\psi_{as}}{P_{N,k} \psi_{as}} \nonumber \\
&= N^{\gamma-2} \,\sum_{k=0}^{N^{\gamma}} \frac{k^2}{N^{2\gamma}} \bigSCP{\psi_{as}}{P_{N,k} \psi_{as}} \leq N^{\gamma-2} \alpha_m.
\end{align}
What we now do in the more detailed proof is to keep track of the boundary terms at $k\approx N^{\gamma}$ and therewith keep track of the exact constants that appear in the estimates.

\textbf{Detailed proof.} Let us first consider
\be
\widetilde{\psi} = \left( \widehat{m}-\widehat{m}_{-d} \right)^{\frac{1}{2}}\psi_{as}.
\ee
We need the following estimates:
\be
\Big( m(k) - m(k-d) \Big) \leq \left\{\begin{array}{cl} \frac{k}{N^{\gamma}} - \frac{(k-d)}{N^{\gamma}} & , k\leq N^{\gamma}+d \\ 0 & , k > N^{\gamma}+d \end{array}\right\} = \left\{\begin{array}{cl} \frac{d}{N^{\gamma}} & , k\leq N^{\gamma}+d \\ 0 & , k > N^{\gamma}+d \end{array}\right. ,
\ee
\begin{align}
\frac{k}{N}\Big( m(k) - m(k-d) \Big) &\leq \left\{\begin{array}{cl} \frac{dk}{NN^{\gamma}} & , k\leq N^{\gamma}+d \\ 0 & , k > N^{\gamma}+d \end{array}\right. \nonumber \\
&\leq \left\{\begin{array}{cl} \frac{dk}{NN^{\gamma}} & , k\leq N^{\gamma} \\ \frac{d\left(N^{\gamma}+d\right)}{NN^{\gamma}} & , N^{\gamma}<k\leq N^{\gamma}+d \\ 0 & , k > N^{\gamma}+d \end{array}\right. \nonumber \\
&\leq \frac{d(d+1)}{N} \, \left\{\begin{array}{cl} \frac{k}{N^{\gamma}} & , k\leq N^{\gamma} \\ 1 & , N^{\gamma}<k\leq N^{\gamma}+d \\ 0 & , k > N^{\gamma}+d \end{array}\right. ,
\end{align}
\begin{align}
\left(\frac{k}{N}\right)^2 \Big( m(k) - m(k-d) \Big) &\leq \, \left\{\begin{array}{cl} \frac{dk^2}{N^2N^{\gamma}} & , k\leq N^{\gamma}+d \\ 0 & , k > N^{\gamma}+d \end{array}\right. \nonumber \\
&\leq \left\{\begin{array}{cl} \frac{dN^{\gamma}k}{N^2N^{\gamma}} & , k\leq N^{\gamma} \\ \frac{d\left(N^{\gamma}+d\right)^2}{N^2N^{\gamma}} & , N^{\gamma}<k\leq N^{\gamma}+d \\ 0 & , k > N^{\gamma}+d \end{array}\right. \nonumber \\
&\leq \frac{d(d+1)^2N^{\gamma}}{N^2} \, \left\{\begin{array}{cl} \frac{k}{N^{\gamma}} & , k\leq N^{\gamma} \\ 1 & , N^{\gamma}<k\leq N^{\gamma}+d \\ 0 & , k > N^{\gamma}+d \end{array}\right. .
\end{align}
With theses estimates we have
\begin{align}
\norm{ \left( \widehat{m}-\widehat{m}_{-d} \right)^{\frac{1}{2}}\psi_{as}}^2 &= \bigSCP{\psi_{as}}{ \Big( \widehat{m} - \widehat{m}_{-d} \Big) \psi_{as}} \nonumber \\
&= \sum_{k=0}^N \, \underbrace{\Big( m(k) - m(k-d) \Big)}_{\geq 0} \, \underbrace{\bigSCP{\psi_{as}}{P_{N,k} \psi_{as}}}_{\geq 0} \nonumber \\
&\leq \sum_{k=0}^{\floor{N^{\gamma}}+d} \, \frac{d}{N^{\gamma}} \, \bigSCP{\psi_{as}}{P_{N,k} \psi_{as}} \nonumber \\
&\leq \frac{d}{N^{\gamma}} \bigSCP{\psi_{as}}{\sum_{k=0}^N P_{N,k} \psi_{as}} \nonumber \\
&= \frac{d}{N^{\gamma}},
\end{align}
and, with \eqref{q_1P_N_K},
\begin{align}
&\norm{q_1 \left( \widehat{m}-\widehat{m}_{-d} \right)^{\frac{1}{2}}\psi_{as}}^2 \nonumber \\
&\qquad= \sum_{k=0}^N \Big( m(k) - m(k-d) \Big) \bigSCP{\psi_{as}}{q_1 P_{N,k} \psi_{as}} \nonumber \\
&\qquad= \sum_{k=0}^N \, \frac{k}{N}\Big( m(k) - m(k-d) \Big) \, \bigSCP{\psi_{as}}{P_{N,k} \psi_{as}} \nonumber \\
&\qquad\leq \frac{d(d+1)}{N} \left( \sum_{k=0}^{\floor{N^{\gamma}}} \frac{k}{N^{\gamma}} \bigSCP{\psi_{as}}{P_{N,k} \psi_{as}} + \sum_{k=\floor{N^{\gamma}}+1}^N \bigSCP{\psi_{as}}{P_{N,k} \psi_{as}} \right) \nonumber \\
&\qquad\leq \frac{d(d+1)}{N} \, \alpha_m ,
\end{align}
and, with \eqref{q_1q_2P_N_K},
\begin{align}
&\norm{q_1q_2 \left( \widehat{m}-\widehat{m}_{-d} \right)^{\frac{1}{2}}\psi_{as}}^2 \nonumber  \\
&\qquad= \sum_{k=0}^N \Big( m(k) - m(k-d) \Big) \bigSCP{\psi_{as}}{q_1q_2 P_{N,k} \psi_{as}} \nonumber \\
&\qquad\leq 2 \sum_{k=0}^N \, \left(\frac{k}{N}\right)^2 \Big( m(k) - m(k-d) \Big) \, \bigSCP{\psi_{as}}{P_{N,k} \psi_{as}} \nonumber \\
&\qquad\leq \frac{d(d+1)^2N^{\gamma}}{N^2} \left( \sum_{k=0}^{\floor{N^{\gamma}}} \frac{k}{N^{\gamma}} \bigSCP{\psi_{as}}{P_{N,k} \psi_{as}} + \sum_{k=\floor{N^{\gamma}}+1}^N \bigSCP{\psi_{as}}{P_{N,k} \psi_{as}} \right) \nonumber \\
&\qquad\leq \frac{d(d+1)^2N^{\gamma}}{N^2} \, \alpha_m .
\end{align}

We now consider
\be
\widetilde{\psi} = \left( \widehat{m}_d-\widehat{m} + \sum_{\ell=1}^d m(N-d+\ell)P_{N,N-d+\ell} \right)^{\frac{1}{2}}\psi_{as}.
\ee
We need the estimates
\be
\Big( m(k+d) - m(k) \Big) \leq \left\{\begin{array}{cl} \frac{d}{N^{\gamma}} & , k\leq N^{\gamma} \\ 0 & ,N^{\gamma}<k \leq N-d \end{array}\right. ,
\ee
\be
\frac{k}{N}\Big( m(k+d) - m(k) \Big) \leq \frac{d}{N} \, \left\{\begin{array}{cl} \frac{k}{N^{\gamma}} & , k\leq N^{\gamma} \\ 0 & , N^{\gamma}<k \leq N-d \end{array}\right. ,
\ee
\begin{align}
\left(\frac{k}{N}\right)^2 \Big( m(k+d) - m(k) \Big) &\leq \, \left\{\begin{array}{cl} \frac{dk^2}{N^2N^{\gamma}} & , k\leq N^{\gamma} \\ 0 & , N^{\gamma}<k \leq N-d \end{array}\right. \nonumber \\
&\leq \frac{dN^{\gamma}}{N^2} \, \left\{\begin{array}{cl} \frac{k}{N^{\gamma}} & , k\leq N^{\gamma} \\ 0 & , N^{\gamma}<k \leq N-d \end{array}\right. .
\end{align}
With these estimates we find
\begin{align}
& \norm{\left( \widehat{m}_d-\widehat{m} + \sum_{\ell=1}^d m(N-d+\ell)P_{N,N-d+\ell} \right)^{\frac{1}{2}}\psi_{as}}^2 \nonumber \\
&\qquad= \bigSCP{\psi_{as}}{ \Big( \widehat{m}_d-\widehat{m} + \sum_{\ell=1}^d m(N-d+\ell)P_{N,N-d+\ell} \Big) \psi_{as}} \nonumber \\
&\qquad= \sum_{k=0}^N \, \bigSCP{\psi_{as}}{\left( \Big(m(k+d) - m(k)\Big) P_{N,k} + \sum_{\ell=1}^d m(N-d+\ell)P_{N,N-d+\ell} \right) \psi_{as}} \nonumber \\
&\qquad= \sum_{k=0}^{N-d} \, \Big(m(k+d) - m(k)\Big) \, \bigSCP{\psi_{as}}{ P_{N,k} \psi_{as}} \nonumber \\
&\qquad\leq \sum_{k=0}^{\floor{N^{\gamma}}} \, \frac{d}{N^{\gamma}} \, \bigSCP{\psi_{as}}{P_{N,k} \psi_{as}} \nonumber \\
&\qquad\leq \frac{d}{N^{\gamma}},
\end{align}
and, with \eqref{q_1P_N_K},
\begin{align}
& \norm{q_1 \left( \widehat{m}_d-\widehat{m} + \sum_{\ell=1}^d m(N-d+\ell)P_{N,N-d+\ell} \right)^{\frac{1}{2}}\psi_{as}}^2 \nonumber \\
&\qquad= \sum_{k=0}^{N-d} \, \frac{k}{N} \Big(m(k+d) - m(k)\Big) \, \bigSCP{\psi_{as}}{ P_{N,k} \psi_{as}} \nonumber \\
&\qquad\leq \frac{d}{N} \left( \sum_{k=0}^{\floor{N^{\gamma}}} \frac{k}{N^{\gamma}} \bigSCP{\psi_{as}}{P_{N,k} \psi_{as}} + \sum_{k=\floor{N^{\gamma}}+1}^N \bigSCP{\psi_{as}}{P_{N,k} \psi_{as}} \right) \nonumber \\
&\qquad\leq \frac{d}{N} \, \alpha_m ,
\end{align}
and, with \eqref{q_1q_2P_N_K},
\begin{align}
& \norm{q_1q_2 \left( \widehat{m}_d-\widehat{m} + \sum_{\ell=1}^d m(N-d+\ell)P_{N,N-d+\ell} \right)^{\frac{1}{2}}\psi_{as}}^2 \nonumber \\
&\qquad\leq 2 \sum_{k=0}^{N-d} \, \left(\frac{k}{N}\right)^2 \Big(m(k+d) - m(k)\Big) \, \bigSCP{\psi_{as}}{ P_{N,k} \psi_{as}} \nonumber \\
&\qquad\leq \frac{dN^{\gamma}}{N^2} \left( \sum_{k=0}^{\floor{N^{\gamma}}} \frac{k}{N^{\gamma}} \bigSCP{\psi_{as}}{P_{N,k} \psi_{as}} + \sum_{k=\floor{N^{\gamma}}+1}^N \bigSCP{\psi_{as}}{P_{N,k} \psi_{as}} \right) \nonumber \\
&\qquad\leq \frac{dN^{\gamma}}{N^2} \, \alpha_m .
\end{align}
\end{proof}

\section{Diagonalization of $p_2h_{12}p_2$ and Related Lemmas}\label{sec:estimates_projectors}
In the time derivative of $\alpha_f$ from equation \eqref{alpha_derivative} there appears the operator $p_2v_{12}p_2$. Later, when we use the Cauchy-Schwarz inequality on the terms from \eqref{alpha_derivative}, we also have to deal with related operators like $p_2v_{12}^2p_2$. Generally, for any function $h(x)$ (recall that we write $h_{12}=h(x_1-x_2)$), an operator of the type $p_2h_{12}p_2$ is a multiplication operator in $x_1$ and a projector onto the $N$-dimensional subspace $\Span(\varphi_1,\ldots,\varphi_N)$ in the second variable. Therefore one can write it as an $x_1$ dependent $(N \times N)$-matrix, acting on the second variable. This matrix is self-adjoint and non-negative for $h\geq 0$. Therefore, for fixed $x_1$, one can diagonalize it, as is shown in the following lemma. (Since we later split $v = v_{+} - v_{-}$, with $v_{+},v_{-} \geq 0$ we state the lemma only for non-negative $h$.) Recall that we denote by $\ket{\cdot}_m$ a vector acting on the $m$-th variable of $L^2(\RRR^{3N})$, and by $\scp{\cdot}{\cdot}_m$ the scalar product only in the $m$-th variable.

\begin{lemma}\label{lem:diag_pvp}
Let $h(x)$ be a non-negative function. Let $h$ and $\varphi_1,\ldots,\varphi_N$ be such that
\be
(h \star \rho_N)(x) < \infty
\ee
for all $x\in \RRR^3$, where $\rho_N(x):=\sum_{i=1}^N |\varphi_i(x)|^2$. Then, for fixed $x_1$, there are orthonormal functions $\chi_1^{x_1},\ldots,\chi_N^{x_1} \in \Span(\varphi_1,\ldots,\varphi_N)$ and non-negative eigenvalues $\lambda_1(x_1),\ldots,\lambda_N(x_1)$, such that
\be
p_2h_{12}p_2 = \sum_{i=1}^N \lambda_i(x_1) \, \ket{\chi_i^{x_1}} \bra{\chi_i^{x_1}}_2 = \sum_{i=1}^N \lambda_i(x_1) \, p_2^{\chi_i^{x_1}}
\ee
with
\be
\lambda_i(x_1) = \scp{\chi_i^{x_1}}{h_{12} \, \chi_i^{x_1}}_2(x_1) = \int d^3x~ h(x_1-x) \left|\chi_i^{x_1}(x) \right|^2 < \infty \quad \forall\, i=1,\ldots,N ~\forall\, x_1\in \RRR^3,
\ee
and, for $i \neq j$,
\be
\scp{\chi_i^{x_1}}{h_{12} \, \chi_j^{x_1}}_2(x_1) = 0 \quad \forall~ x_1 \in \RRR^3.
\ee
Furthermore,
\be
\sum_{i=1}^N \lambda_i(x_1) = \sum_{i=1}^N \scp{\varphi_i}{h_{12} \, \varphi_i}_2(x_1) = (h \star \rho_N)(x_1).
\ee
\end{lemma}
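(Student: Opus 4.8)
\textbf{Proof proposal for Lemma~\ref{lem:diag_pvp}.}

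The plan is to fix $x_1 \in \RRR^3$ and view $A(x_1) := p_2 h_{12} p_2$ as an operator on $L^2(\RRR^3_{x_2})$ whose range is contained in the $N$-dimensional subspace $\mathcal{V} := \Span(\varphi_1,\ldots,\varphi_N)$. First I would note that $A(x_1)$ is symmetric and non-negative: for any $g \in L^2(\RRR^3)$ we have $\scp{g}{A(x_1) g}_2 = \int h(x_1-x) \lvert (p_2 g)(x)\rvert^2\, d^3x \geq 0$, using $h \geq 0$. Hence $A(x_1)$ restricted to $\mathcal{V}$ is represented, in the orthonormal basis $\varphi_1,\ldots,\varphi_N$, by the Hermitian non-negative $N \times N$ matrix with entries $M_{ij}(x_1) = \scp{\varphi_i}{h_{12}\varphi_j}_2(x_1) = \int h(x_1-x)\varphi_i^*(x)\varphi_j(x)\,d^3x$. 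These entries are finite for every $x_1$: by Cauchy--Schwarz $\lvert M_{ij}(x_1)\rvert \leq (h\star\lvert\varphi_i\rvert^2)(x_1)^{1/2}(h\star\lvert\varphi_j\rvert^2)(x_1)^{1/2} \leq (h\star\rho_N)(x_1) < \infty$ by the hypothesis, since $\lvert\varphi_i\rvert^2 \leq \rho_N$ pointwise and $h \geq 0$.

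Next I would invoke the finite-dimensional spectral theorem: the Hermitian matrix $M(x_1)$ can be diagonalized by a unitary matrix, giving non-negative eigenvalues $\lambda_1(x_1),\ldots,\lambda_N(x_1)$ and an orthonormal eigenbasis; transporting this eigenbasis back through the isometry $\CCC^N \to \mathcal{V}$, $e_i \mapsto \varphi_i$, produces orthonormal $\chi_1^{x_1},\ldots,\chi_N^{x_1} \in \mathcal{V}$ with $A(x_1)\chi_j^{x_1} = \lambda_j(x_1)\chi_j^{x_1}$ and cross terms $\scp{\chi_i^{x_1}}{h_{12}\chi_j^{x_1}}_2(x_1) = \delta_{ij}\lambda_i(x_1)$. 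Since $A(x_1)$ annihilates $\mathcal{V}^\perp$ (because of the outer $p_2$'s) and $\{\chi_i^{x_1}\}$ is an orthonormal basis of $\mathcal{V}$, we get the spectral representation $p_2 h_{12} p_2 = \sum_{i=1}^N \lambda_i(x_1)\,\ket{\chi_i^{x_1}}\bra{\chi_i^{x_1}}_2 = \sum_{i=1}^N \lambda_i(x_1)\, p_2^{\chi_i^{x_1}}$, and the eigenvalue identity $\lambda_i(x_1) = \scp{\chi_i^{x_1}}{h_{12}\chi_i^{x_1}}_2(x_1) = \int h(x_1-x)\lvert\chi_i^{x_1}(x)\rvert^2\,d^3x$. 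Finiteness of each $\lambda_i(x_1)$ follows again from $\lvert\chi_i^{x_1}\rvert^2 \leq \rho_N$ not being available directly, but instead from $\lambda_i(x_1) \leq \sum_j \lambda_j(x_1) = \tr M(x_1)$, which I handle in the last step.

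Finally, for the trace identity I would compute $\sum_{i=1}^N \lambda_i(x_1) = \tr M(x_1) = \sum_{i=1}^N M_{ii}(x_1) = \sum_{i=1}^N \scp{\varphi_i}{h_{12}\varphi_i}_2(x_1) = \sum_{i=1}^N \int h(x_1-x)\lvert\varphi_i(x)\rvert^2\,d^3x = \int h(x_1-x)\rho_N(x)\,d^3x = (h\star\rho_N)(x_1)$, using that the trace is basis-independent and that $\sum_i M_{ii} = \sum_j \lambda_j$. This also certifies $\lambda_i(x_1) \leq (h\star\rho_N)(x_1) < \infty$ for each $i$, closing the finiteness claim. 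I do not anticipate a genuine obstacle here; the only mild subtlety is bookkeeping between the operator on $L^2(\RRR^3)$ and its matrix on $\mathcal{V}$, and confirming that the $x_1$-dependence enters only as a parameter (no measurability/continuity in $x_1$ is asserted in the statement, so pointwise diagonalization for each fixed $x_1$ suffices).
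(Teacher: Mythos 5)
Your proposal is correct and follows essentially the same route as the paper: for fixed $x_1$ one regards $p_2h_{12}p_2$ as the Hermitian non-negative $N\times N$ matrix $\scp{\varphi_i}{h_{12}\varphi_j}_2(x_1)$ on $\Span(\varphi_1,\ldots,\varphi_N)$, diagonalizes it by a unitary $U(x_1)$, and obtains the trace identity from basis-independence of the trace (equivalently, unitarity of $U$). Your added Cauchy--Schwarz bookkeeping for finiteness of the matrix entries and eigenvalues is a harmless elaboration of what the paper leaves implicit.
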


\begin{proof}
In the following we always keep $x_1$ fixed. First, note that
\begin{align}
p_2h_{12}p_2 &= \sum_{i,j=1}^N \ket{\varphi_i}\bra{\varphi_i}_2 \, h(x_1-x_2) \, \ket{\varphi_j}\bra{\varphi_j}_2 \nonumber \\
&= \sum_{i,j=1}^N \scp{\varphi_i}{h_{12}\,\varphi_j}_2(x_1) \, \ket{\varphi_i}\bra{\varphi_j}_2.
\end{align}
In the second variable this is a self-adjoint $(N\times N)$-matrix. For $h\geq 0$ it is non-negative and can therefore be diagonalized with non-negative eigenvalues. That means, there is a unitary $(N\times N)$-matrix $U(x_1)$, such that, for all $i=1,\ldots,N$, the functions
\be\label{chi_unitary_phi}
\ket{\chi_i^{x_1}}_2 = \sum_{k=1}^N U_{ik}(x_1) \, \ket{\varphi_k}_2
\ee
are orthonormal, and such that
\be
p_2h_{12}p_2 = \sum_{i=1}^N \lambda_i(x_1) \, \ket{\chi_i^{x_1}} \bra{\chi_i^{x_1}}_2 = \sum_{i=1}^N \lambda_i(x_1) \, p_2^{\chi_i^{x_1}}.
\ee
Note that \eqref{chi_unitary_phi} can be inverted, i.e.,
\be
\ket{\varphi_k}_2 = \sum_{\ell=1}^N U^*_{\ell k}(x_1) \, \ket{\chi_{\ell}^{x_1}}_2,
\ee
where ${}^*$ denotes complex conjugation. The projector $p_2$ is independent of the choice of basis, therefore
\be
p_2 = \sum_{i=1}^N \ket{\varphi_i} \bra{\varphi_i}_2 = \sum_{i=1}^N \ket{\chi_i^{x_1}} \bra{\chi_i^{x_1}}_2.
\ee
Then, since $\Span(\chi_1^{x_1},\ldots,\chi_N^{x_1}) = \Span(\varphi_1,\ldots,\varphi_N)$,
\begin{align}
\lambda_i(x_1) &= \scp{\chi_i^{x_1}}{p_2h_{12}p_2\chi_i^{x_1}}_2(x_1) = \scp{\chi_i^{x_1}}{h_{12}\,\chi_i^{x_1}}_2(x_1)
\end{align}
and, for all $i \neq j$,
\begin{align}
\scp{\chi_i^{x_1}}{h_{12}\,\chi_j^{x_1}}_2(x_1) &= \scp{\chi_i^{x_1}}{p_2h_{12}p_2\,\chi_j^{x_1}}_2(x_1) \nonumber \\
&= \scp{\chi_i^{x_1}}{\lambda_j(x_1)\,\chi_j^{x_1}}_2(x_1) \nonumber \\
&= \lambda_j(x_1) \scp{\chi_i^{x_1}}{\chi_j^{x_1}}_2(x_1) \nonumber \\
&= 0.
\end{align}
Furthermore, since $U(x_1)$ is unitary,
\begin{align}
\sum_{i=1}^N \lambda_i(x_1) &= \sum_{i=1}^N \scp{\chi_i^{x_1}}{h_{12} \, \chi_i^{x_1}}_2(x_1) \nonumber \\
&= \sum_{i=1}^N \sum_{j=1}^N \sum_{k=1}^N U^*_{ij}(x_1) U_{ik}(x_1) \scp{\varphi_j}{h_{12} \, \varphi_k}_2(x_1) \nonumber \\
&= \sum_{j=1}^N \sum_{k=1}^N \left( \sum_{i=1}^N U^*_{ij}(x_1) U_{ik}(x_1) \right) \scp{\varphi_j}{h_{12} \, \varphi_k}_2(x_1) \nonumber \\
&= \sum_{j=1}^N \sum_{k=1}^N \delta_{jk} \scp{\varphi_j}{h_{12} \, \varphi_k}_2(x_1) \nonumber \\
&= \sum_{j=1}^N \scp{\varphi_j}{h_{12} \, \varphi_j}_2(x_1) \nonumber \\
&= (h \star \rho_N)(x_1).
\end{align}
\end{proof}

Lemma~\ref{lem:diag_pvp} can now be used to bound scalar products involving expressions like $p_2h_{12}p_2$ by the convolution $h\star\rho_N$. The following three lemmas treat the expressions that we need later in order to bound the time derivative of $\alpha_f(t)$. Recall that $\rho_N(x) := \sum_{i=1}^N |\varphi_i(x)|^2$.

\begin{lemma}\label{lem:psi_pvp_psi}
Let $\varphi_1,\ldots,\varphi_N \in L^2(\RRR^3)$ be orthonormal and $h\geq 0$. Then, for all antisymmetric $\psi_{as} \in L^2(\RRR^{3N})$,
\be\label{psi_pvp_psi}
\SCP{\psi_{as}}{p_2 \, h_{12}\, p_2 \psi_{as}} \leq \frac{1}{N-1} \, \left(\sup_{y \in \RRR^3} (h \star \rho_N)(y) \right) \, \SCP{\psi_{as}}{\psi_{as}}.
\ee
This inequality remains true when $\psi_{as}$ is antisymmetric only in the variables $x_2,\ldots,x_N$. It also remains true with $\frac{1}{N-1}$ replaced by $\frac{1}{N-2}$, when $\psi_{as}$ is antisymmetric only in the variables $x_2,x_4,x_5,\ldots,x_N$.
\end{lemma}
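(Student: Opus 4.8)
\textbf{Proof plan for Lemma~\ref{lem:psi_pvp_psi}.}
The plan is to combine the diagonalization of $p_2 h_{12} p_2$ from Lemma~\ref{lem:diag_pvp} with the projector estimate from Lemma~\ref{lem:projector_norms}. First I would fix $x_1$ and apply Lemma~\ref{lem:diag_pvp} to write $p_2 h_{12} p_2 = \sum_{i=1}^N \lambda_i(x_1) \, p_2^{\chi_i^{x_1}}$ with non-negative eigenvalues $\lambda_i(x_1)$ satisfying $\sum_{i=1}^N \lambda_i(x_1) = (h \star \rho_N)(x_1)$. Inserting this into the scalar product gives
\be
\SCP{\psi_{as}}{p_2 h_{12} p_2 \psi_{as}} = \int d^3x_1 \sum_{i=1}^N \lambda_i(x_1) \, \SCP{\psi_{as}}{p_2^{\chi_i^{x_1}} \psi_{as}}_{2,\ldots,N}(x_1),
\ee
where I use the partial scalar product notation from \eqref{partial_scp}; note that $\lambda_i(x_1) \geq 0$ and the partial scalar product is non-negative, so all terms have a definite sign.

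The key step is then to bound $\SCP{\psi_{as}}{p_2^{\chi_i^{x_1}} \psi_{as}}_{2,\ldots,N}(x_1)$ uniformly in $i$ and $x_1$. Since $\psi_{as}$ is antisymmetric in $x_2,\ldots,x_N$ (this is all that is used, which is why the lemma extends to that weaker hypothesis), part (b) of Lemma~\ref{lem:projector_norms}, specifically inequality \eqref{scp_a-N} with $a=1$, gives
\be
\SCP{\psi_{as}}{p_2^{\chi_i^{x_1}} \psi_{as}}_{2,\ldots,N}(x_1) \leq \frac{1}{N-1} \SCP{\psi_{as}}{\psi_{as}}_{2,\ldots,N}(x_1)
\ee
for almost every $x_1$. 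Substituting this back, pulling out $\sup_y (h\star\rho_N)(y)$ as a bound for $\sum_i \lambda_i(x_1) = (h\star\rho_N)(x_1)$, and integrating the remaining $\SCP{\psi_{as}}{\psi_{as}}_{2,\ldots,N}(x_1)$ over $x_1$ to recover $\SCP{\psi_{as}}{\psi_{as}}$ yields \eqref{psi_pvp_psi}.

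For the final sentence of the lemma, the only change is that when $\psi_{as}$ is antisymmetric merely in $x_2, x_4, x_5, \ldots, x_N$ (i.e.\ in $N-2$ of the last $N-1$ variables), one applies the same projector estimate but now the relevant antisymmetric block has $N-2$ variables, so Lemma~\ref{lem:projector_norms}(b) produces a factor $\frac{1}{N-2}$ instead of $\frac{1}{N-1}$; the rest of the argument is unchanged. I do not expect a genuine obstacle here: the one point requiring a little care is making sure the diagonalizing functions $\chi_i^{x_1}$ depend measurably on $x_1$ so that the $x_1$-integral and the application of the almost-everywhere bound \eqref{scp_a-N} are legitimate, but since $p_2^{\chi_i^{x_1}}$ only enters through the sum $\sum_i \lambda_i(x_1) p_2^{\chi_i^{x_1}} = p_2 h_{12} p_2$, one can equally phrase the whole estimate without ever naming the $\chi_i^{x_1}$ individually — bounding the operator $p_2 h_{12} p_2 \leq \big(\sup_y (h\star\rho_N)(y)\big) \, p_2$ in the sense of quadratic forms on the second variable and then invoking Lemma~\ref{lem:projector_norms}(b) for $p_2$ directly.
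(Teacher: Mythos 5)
Your main argument is correct and is essentially the paper's own proof: diagonalize $p_2h_{12}p_2$ via Lemma~\ref{lem:diag_pvp}, bound each rank-one term using the partial-scalar-product estimate \eqref{scp_a-N} of Lemma~\ref{lem:projector_norms} (which only uses antisymmetry in $x_2,\ldots,x_N$, resp.\ in $x_2,x_4,\ldots,x_N$ for the $\frac{1}{N-2}$ variant), sum the eigenvalues to $(h\star\rho_N)(x_1)$, take the supremum and integrate in $x_1$.

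One caveat about your closing remark: the proposed shortcut of bounding $p_2h_{12}p_2 \leq \big(\sup_y (h\star\rho_N)(y)\big)\,p_2$ as quadratic forms and then ``invoking Lemma~\ref{lem:projector_norms}(b) for $p_2$ directly'' does not prove the lemma. Lemma~\ref{lem:projector_norms}(b) is a statement about the rank-one projectors $p_m^{\varphi}$; applied to $p_2=\sum_{j=1}^N p_2^{\varphi_j}$ it only yields $\SCP{\psi_{as}}{p_2\,\psi_{as}} \leq \frac{N}{N-1}\SCP{\psi_{as}}{\psi_{as}}$ (and indeed $\SCP{\cdot}{p_2\,\cdot}=1$ on $\bigwedge\varphi_j$), so this route loses a factor $N$ and misses the crucial $\frac{1}{N-1}$. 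That factor comes from applying the antisymmetry bound to each rank-one eigenprojector separately and only then summing the eigenvalues, i.e.\ from a trace-type inequality in the spirit of \eqref{trace_A_1}, not from an operator bound of $p_2h_{12}p_2$ against $p_2$. The measurability worry this remark was meant to circumvent is harmless here (the eigenvalues and eigenprojectors enter only through the fixed operator $p_2h_{12}p_2$, and the paper likewise does not dwell on it), so simply keep the eigenvalue decomposition as in your main argument.
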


\begin{lemma}\label{lem:psi_ppv2pp_psi}
Let $\varphi_1,\ldots,\varphi_N \in L^2(\RRR^3)$ be orthonormal and $h\geq 0$. Then, for all antisymmetric $\psi_{as} \in L^2(\RRR^{3N})$,
\be\label{psi_ppvpp_psi}
\SCP{\psi_{as}}{p_1p_2 \, h_{12} \, p_1p_2 \psi_{as}} \leq \frac{1}{N(N-1)} \, \left(\int_{\RRR^3} (h \star \rho_N)(y)\, \rho_N(y) \, d^3y\right) \, \SCP{\psi_{as}}{\psi_{as}}.
\ee
This inequality remains true with $\frac{1}{N(N-1)}$ replaced by $\frac{1}{(N-1)(N-2)}$, when $\psi_{as}$ is antisymmetric only in the variables $x_1,x_2,x_4,\ldots,x_N$.
\end{lemma}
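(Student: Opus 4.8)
The idea is to run the argument for Lemma~\ref{lem:psi_pvp_psi} but keep the two projectors $p_1$ and $p_2$ in \emph{separate} steps so as to harvest two independent factors of $1/N$. Since $p_1$ commutes with $p_2$ we have $p_1p_2h_{12}p_1p_2 = p_1\big(p_2h_{12}p_2\big)p_1$, so setting $\Phi:=p_1\psi_{as}$ and using that $p_1$ is self-adjoint,
\[\SCP{\psi_{as}}{p_1p_2\,h_{12}\,p_1p_2\,\psi_{as}} = \SCP{\Phi}{p_2\,h_{12}\,p_2\,\Phi}.\]
We may assume $(h\star\rho_N)(x)<\infty$ for a.e.\ $x$, since otherwise the right-hand side of \eqref{psi_ppvpp_psi} is infinite and there is nothing to prove; then Lemma~\ref{lem:diag_pvp} applies and, for each fixed $x_1$, gives orthonormal $\chi_1^{x_1},\ldots,\chi_N^{x_1}\in\Span(\varphi_1,\ldots,\varphi_N)$ and non-negative $\lambda_i(x_1)$ with $\sum_{i=1}^N\lambda_i(x_1)=(h\star\rho_N)(x_1)$ and $p_2h_{12}p_2 = \sum_{i=1}^N\lambda_i(x_1)\,p_2^{\chi_i^{x_1}}$. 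Inserting this and pulling the $x_1$-integration and the scalars $\lambda_i(x_1)\ge 0$ to the front yields
\[\SCP{\Phi}{p_2h_{12}p_2\Phi} = \sum_{i=1}^N\int \lambda_i(x_1)\,\SCP{\Phi}{p_2^{\chi_i^{x_1}}\Phi}_{2,\ldots,N}(x_1)\,d^3x_1.\]

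Now $\Phi = p_1\psi_{as}$ is antisymmetric in the variables $x_2,\ldots,x_N$ (the operator $p_1$ touches only $x_1$), so \eqref{scp_a-N} of Lemma~\ref{lem:projector_norms}(b), with the single exceptional variable $x_1$ (so $a=1$) and $m=2$, applied with $\varphi=\chi_i^{x_1}$ — legitimate because the inequality holds for a.e.\ $x_1$ for every fixed one-particle function and $x_1\mapsto\chi_i^{x_1}$ is measurable — gives $\SCP{\Phi}{p_2^{\chi_i^{x_1}}\Phi}_{2,\ldots,N}(x_1)\le \frac{1}{N-1}\,\SCP{\Phi}{\Phi}_{2,\ldots,N}(x_1) = \frac{1}{N-1}\,\mu_1^{\Phi}(x_1;x_1)$ for a.e.\ $x_1$. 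Summing over $i$ and using $\sum_i\lambda_i(x_1)=(h\star\rho_N)(x_1)$ then yields
\[\SCP{\psi_{as}}{p_1p_2h_{12}p_1p_2\psi_{as}} \le \frac{1}{N-1}\int (h\star\rho_N)(x_1)\,\mu_1^{\Phi}(x_1;x_1)\,d^3x_1.\]
For the remaining integral, let $p$ denote the one-particle projection onto $\Span(\varphi_1,\ldots,\varphi_N)$; a direct kernel computation gives $\mu_1^{\Phi}=p\,\mu_1^{\psi_{as}}\,p$, hence the integral equals $\tr\big((h\star\rho_N)\,p\,\mu_1^{\psi_{as}}\,p\big)=\tr\big(p\,(h\star\rho_N)\,p\,\mu_1^{\psi_{as}}\big)$. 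Since $p(h\star\rho_N)p\ge 0$ with $\tr\big(p(h\star\rho_N)p\big)=\sum_{j=1}^N\scp{\varphi_j}{(h\star\rho_N)\varphi_j}=\int(h\star\rho_N)(y)\rho_N(y)\,d^3y$, and $\mu_1^{\psi_{as}}\ge 0$ with $\norm[\op]{\mu_1^{\psi_{as}}}\le N^{-1}\SCP{\psi_{as}}{\psi_{as}}$ by Lemma~\ref{lem:properties_density_matrix}(c) (rescaled to non-normalized $\psi_{as}$), the inequality $\tr(AB)\le\norm[\op]{B}\tr(A)$ for non-negative $A,B$ gives $\int(h\star\rho_N)(x_1)\mu_1^{\Phi}(x_1;x_1)\,d^3x_1 \le \frac1N\big(\int(h\star\rho_N)\rho_N\big)\SCP{\psi_{as}}{\psi_{as}}$, which combined with the previous display is exactly \eqref{psi_ppvpp_psi}. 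The variant is obtained by bookkeeping: if $\psi_{as}$ is antisymmetric only in $x_1,x_2,x_4,\ldots,x_N$, then $\Phi=p_1\psi_{as}$ is antisymmetric in the $N-2$ variables $x_2,x_4,\ldots,x_N$, so \eqref{scp_a-N} with exceptional variables $x_1,x_3$ produces $\tfrac{1}{N-2}$ in place of $\tfrac{1}{N-1}$, while $\psi_{as}$ is still antisymmetric in $N-1$ variables, so the proof of Lemma~\ref{lem:properties_density_matrix}(c) gives $\norm[\op]{\mu_1^{\psi_{as}}}\le (N-1)^{-1}\SCP{\psi_{as}}{\psi_{as}}$; the product of the two factors is $\tfrac{1}{(N-1)(N-2)}$.

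\textbf{Main obstacle.} The only delicate point is to secure \emph{both} gains without double counting or losing one: the factor $\tfrac{1}{N-1}$ comes from the antisymmetry of $\Phi$ in the second variable (via \eqref{scp_a-N}), whereas the second factor $\tfrac1N$ comes from the fact that applying $p_1$ to an $N$-fermion state compresses its one-particle density matrix into the $N$-dimensional range of $p$, so that $\mu_1^\Phi = p\,\mu_1^{\psi_{as}}\,p$ has operator norm $O(N^{-1})$. A seemingly natural shortcut — writing $\SCP{\psi_{as}}{p_1p_2h_{12}p_1p_2\psi_{as}} = \tr\big(P_{12}h_{12}P_{12}\,\mu_2^{\psi_{as}}\big)\le\norm[\op]{\mu_2^{\psi_{as}}}\tr\big(P_{12}h_{12}P_{12}\big)$ with $P_{12}$ the two-particle projection onto $\Span(\varphi_1,\ldots,\varphi_N)^{\otimes 2}$ — only gives $\norm[\op]{\mu_2^{\psi_{as}}}=O(N^{-1})$ and hence a bound weaker by a factor of order $N$; keeping $p_1$ and $p_2$ in separate steps as above is what recovers the correct $\tfrac{1}{N(N-1)}$. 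A minor technical care is the measurable selection of $\chi_i^{x_1}$ in order to feed it into the a.e.-in-$x_1$ estimate \eqref{scp_a-N}, but this is standard.
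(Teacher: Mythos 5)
Your proof is correct and follows essentially the same route as the paper: you isolate the two projectors, use Lemma~\ref{lem:diag_pvp} and the partial-trace bound \eqref{scp_a-N} of Lemma~\ref{lem:projector_norms} on $\Phi=p_1\psi_{as}$ to extract the factor $\tfrac{1}{N-1}$, and then gain the second factor from the antisymmetry in $x_1$, with the same bookkeeping for the $\tfrac{1}{(N-1)(N-2)}$ variant. The only (cosmetic) difference is the last step: you phrase it as $\tr\big(p(h\star\rho_N)p\,\mu_1^{\psi_{as}}\big)\leq \norm[\op]{\mu_1^{\psi_{as}}}\,\tr\big(p(h\star\rho_N)p\big)$ via Lemma~\ref{lem:properties_density_matrix}(c), whereas the paper explicitly diagonalizes $p_1(h\star\rho_N)(x_1)p_1$ and applies Lemma~\ref{lem:projector_norms}(a) to each eigenprojector --- these are the same estimate in different notation.
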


\begin{lemma}\label{lem:psi_qppvvppq_psi}
Let $\varphi_1,\ldots,\varphi_N \in L^2(\RRR^3)$ be orthonormal and $h\geq 0$. Then, for all antisymmetric $\psi_{as} \in L^2(\RRR^{3N})$,
\be
\SCP{\psi_{as}}{q_3p_1p_2 \, h_{12} h_{13} \, p_1p_3q_2 \psi_{as}} \leq \frac{1}{(N-1)(N-2)} \, \left(\sup_{y \in \RRR^3} (h \star \rho_N)(y) \right)^2 \, \SCP{\psi_{as}}{q_1 \, \psi_{as}}.
\ee
\end{lemma}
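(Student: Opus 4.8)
The shape of the bound dictates the method: it is quadratic in $\sup_y(h\star\rho_N)(y)$ (the \emph{first} power of $h$, convolved, squared — not $h^2\star\rho_N$), linear in $\SCP{\psi}{q_1\psi}$, and carries the prefactor $\tfrac{1}{(N-1)(N-2)}$. Hence the estimate must be arranged so that $h$ is never squared: whenever Cauchy--Schwarz is applied, it must \emph{split} the two factors $h_{12}$, $h_{13}$ rather than duplicate one of them. The plan is to first symmetrise. Writing $h_{12}h_{13}=\big(h_{12}^{1/2}h_{13}^{1/2}\big)\big(h_{12}^{1/2}h_{13}^{1/2}\big)$ (legitimate since $h\ge0$ and these multiplication operators commute) and using that $h^{1/2}_{1k}$ is self-adjoint while $p_1,p_2,q_3$ commute with one another, one gets
\[
\SCP{\psi}{q_3 p_1 p_2 h_{12} h_{13} p_1 p_3 q_2 \psi}
= \SCP{h_{12}^{1/2}h_{13}^{1/2} p_1 p_2 q_3 \psi}{h_{12}^{1/2}h_{13}^{1/2} p_1 p_3 q_2 \psi}.
\]
The involution $x_2\leftrightarrow x_3$ fixes $|\psi|$ (antisymmetry) and interchanges the two vectors on the right, so the two Cauchy--Schwarz factors coincide; therefore it suffices to bound $\SCP{\psi}{q_3 p_2 p_1 h_{12}h_{13}p_1 p_2 q_3\psi}$.

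The next step is to peel off the two copies of $\sup_y(h\star\rho_N)(y)$ one at a time via the diagonalisation of Lemma~\ref{lem:diag_pvp}, and to produce the factor $\SCP{\psi}{q_1\psi}$. Here I would split $p_1h_{12}=p_1h_{12}p_1+p_1h_{12}q_1$ inside $\SCP{\psi}{q_3 p_2 p_1 h_{12}h_{13}p_1 p_2 q_3\psi}$. In the $q_1$-term the extra $q_1$, after moving it into place and re-using the antisymmetry of $\psi$, yields $\SCP{\psi}{q_1\psi}$. In the $p_1$-term the operator $p_1h_{12}p_1$ appears between $q_3p_2$ (left) and $h_{13}p_1p_2q_3$ (right); since $q_3$ commutes with $p_1,p_2,h_{12}$ and $p_1$ commutes with $p_2$, after rearranging this becomes (a conjugate of) $p_1\big(p_2h_{12}p_2\big)p_1$ next to the $h_{13}$-block. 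Applying Lemma~\ref{lem:diag_pvp} to $p_2h_{12}p_2=\sum_i\lambda_i(x_1)p_2^{\chi_i^{x_1}}$ — so $0\le\lambda_i(x_1)\le\sum_j\lambda_j(x_1)=(h\star\rho_N)(x_1)\le\sup_y(h\star\rho_N)(y)$, i.e.\ $p_2h_{12}p_2\le\big(\sup_y(h\star\rho_N)(y)\big)p_2$ — together with the partial-antisymmetry version of Lemma~\ref{lem:psi_pvp_psi} (the $\tfrac{1}{N-2}$ version, since the relevant vector $p_1q_3\psi$ is antisymmetric in all variables except $x_1$ and $x_3$) extracts $\tfrac{1}{N-2}\sup_y(h\star\rho_N)(y)$ and reduces matters to an $h_{13}$-block carrying a $p_3$ and a $q_3$. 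A final diagonalisation of a $p_3h_{13}p_3$-block, plus Lemma~\ref{lem:psi_pvp_psi} in its $\tfrac{1}{N-1}$ form, extracts the second factor $\sup_y(h\star\rho_N)(y)$ and the leftover $\SCP{\psi}{q_1\psi}$; throughout, Lemma~\ref{lem:projector_norms} and the elementary bound $\|p\,\phi\|\le\|\phi\|$ control the norms of the partially antisymmetric vectors that appear.

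The main obstacle is precisely this non-commutative bookkeeping. The operators $h_{12}$ and $h_{13}$ share the variable $x_1$ and commute with none of $p_1,p_2,p_3$, so to present a sub-expression in the only form for which Lemmas~\ref{lem:diag_pvp} and~\ref{lem:psi_pvp_psi} apply cleanly — namely $\SCP{\Phi}{p_kh_{1k}p_k\Phi}$ with the \emph{same} $\Phi$ on both sides — one must insert resolutions $1=p_k+q_k$ in exactly the right places, use $p_kq_k=0$ to annihilate the unwanted pieces, and carefully record in which variables each resulting vector is still antisymmetric so that the correct $\tfrac{1}{N-1}$ versus $\tfrac{1}{N-2}$ prefactor (and ultimately the product $\tfrac{1}{(N-1)(N-2)}$) is obtained. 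Keeping track of the boundary terms generated by these insertions, rather than any single inequality, is the technical heart of the argument; the actual estimates are all instances of Cauchy--Schwarz and of the two diagonalisation lemmas above.
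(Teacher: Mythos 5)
Your opening move is fine: Cauchy--Schwarz on $\SCP{h_{12}^{1/2}h_{13}^{1/2}p_1p_2q_3\psi_{as}}{h_{12}^{1/2}h_{13}^{1/2}p_1p_3q_2\psi_{as}}$ together with the unitary swap $x_2\leftrightarrow x_3$ does reduce the lemma to bounding the diagonal quantity $\SCP{\psi_{as}}{q_3p_1p_2\,h_{12}h_{13}\,p_1p_2q_3\psi_{as}}$, and this is the same mechanism as the paper, which estimates the asymmetric expression directly by four restricted operator-norm factors and identifies $\norm{q_2\psi_{as}}\,\norm{q_3\psi_{as}}=\norm{q_1\psi_{as}}^2$ by antisymmetry.

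The gap is in your plan for the diagonal term. The splitting $p_1h_{12}=p_1h_{12}p_1+p_1h_{12}q_1$ is both unnecessary and, as described, unworkable. Unnecessary, because the factor $\SCP{\psi_{as}}{q_1\psi_{as}}$ does not need to be ``produced'' by an inserted $q_1$: it is already there, since the two outer $q_3$'s give $\norm{q_3\psi_{as}}^2=\SCP{\psi_{as}}{q_1\psi_{as}}$. Unworkable, because the cross term $q_3p_2p_1h_{12}\,q_1\,h_{13}p_1p_2q_3$ has the inserted $q_1$ trapped between the two multiplication operators, with neither of which it commutes; it cannot be ``moved into place'' onto $\psi_{as}$, and any Cauchy--Schwarz applied to this term duplicates one of the potentials and produces $h^2\star\rho_N$ --- exactly the quantity you correctly declared off-limits and which the hypotheses of the lemma do not control. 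The correct finish needs no such insertion and is essentially the paper's estimate: setting $\Phi=p_1p_2q_3\psi_{as}$ and $\xi=\sqrt{h_{13}}\,p_1q_3\psi_{as}$ (which is antisymmetric in all variables except $x_1,x_3$), one has $\SCP{\Phi}{h_{12}h_{13}\Phi}=\SCP{\xi}{p_2h_{12}p_2\,\xi}$ because $p_2$ commutes with $\sqrt{h_{13}}$ and $p_1\Phi=p_2\Phi=\Phi$; Lemma~\ref{lem:psi_pvp_psi} in its $\frac{1}{N-2}$ version bounds this by $\frac{1}{N-2}\,\sup_y(h\star\rho_N)(y)\,\norm{\xi}^2$, a second application (the $\frac{1}{N-1}$ version, applied to $\norm{\xi}^2=\SCP{q_3\psi_{as}}{p_1h_{13}p_1\,q_3\psi_{as}}$ with $q_3\psi_{as}$ antisymmetric in all variables except $x_3$, using the diagonalization of Lemma~\ref{lem:diag_pvp} and Lemma~\ref{lem:projector_norms}) gives the second factor, and $\norm{q_3\psi_{as}}^2=\SCP{\psi_{as}}{q_1\psi_{as}}$ completes the bound with the constant $\frac{1}{(N-1)(N-2)}$.
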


\begin{proof}[Proof of Lemma \ref{lem:psi_pvp_psi}]
First, recall that we denote by $\SCP{\cdot}{\cdot}_{a+1,\ldots,N}(x_1,\ldots,x_a)$ the scalar product only in the variables $x_{a+1},\ldots,x_N$, evaluated at $x_1,\ldots,x_a$, i.e.,
\be
\SCP{\psi}{\chi}_{a+1,\ldots,N}(x_1,\ldots,x_a) := \int d^3x_{a+1} \ldots \int d^3x_N \, \psi^*(x_1,\ldots,x_N) \chi(x_1,\ldots,x_N).
\ee
For all antisymmetric $\psi_{as}$ we find, using Lemmas \ref{lem:diag_pvp} and \ref{lem:projector_norms},
\begin{align}\label{pvp_proof}
\SCP{\psi_{as}}{p_2 \, h_{12}\, p_2 \psi_{as}} &= \sum_{i=1}^N \SCP{\psi_{as}}{\lambda_i(x_1) \, p_2^{\chi_i^{x_1}} \psi_{as}} \nonumber \\
&= \int d^3x_1 \sum_{i=1}^N \underbrace{\lambda_i(x_1)}_{\geq 0} \underbrace{\SCP{\psi_{as}}{p_2^{\chi_i^{x_1}} \psi_{as}}_{2,\ldots,N}(x_1)}_{\geq 0 ~ \forall \, x_1} \nonumber \\
\eqexp{by Lem.~\ref{lem:projector_norms}} &\leq \int d^3x_1 \sum_{i=1}^N \lambda_i(x_1) \frac{1}{N-1}\SCP{\psi_{as}}{\psi_{as}}_{2,\ldots,N}(x_1) \nonumber \\
&\leq \frac{1}{N-1} \left( \sup_{x_1} \sum_{i=1}^N \lambda_i(x_1)\right) \int d^3x_1 \SCP{\psi_{as}}{\psi_{as}}_{2,\ldots,N}(x_1) \nonumber \\
&= \frac{1}{N-1} \left( \sup_{x_1}\, (h\star\rho_N)(x_1) \right) \SCP{\psi_{as}}{\psi_{as}}.
\end{align}
Note that we did not use the antisymmetry in the first variable, so \eqref{pvp_proof} remains true when $\psi_{as}$ is antisymmetric only in $x_2,\ldots,x_N$. For all $\psi_{as}^{1,3}$ that are antisymmetric in all variables except $x_1,x_3$, we find, again using Lemmas \ref{lem:diag_pvp} and \ref{lem:projector_norms},
\begin{align}\label{pvp_proof2}
\SCP{\psi_{as}^{1,3}}{p_2 \, h_{12}\, p_2 \psi_{as}^{1,3}} &= \sum_{i=1}^N \SCP{\psi_{as}^{1,3}}{\lambda_i(x_1) \, p_2^{\chi_i^{x_1}} \psi_{as}^{1,3}} \nonumber \\
&= \int d^3x_1 \sum_{i=1}^N \underbrace{\lambda_i(x_1)}_{\geq 0} \int d^3x_3 \underbrace{\SCP{\psi_{as}^{1,3}}{p_2^{\chi_i^{x_1}} \psi_{as}^{1,3}}_{2,4,\ldots,N}(x_1,x_3)}_{\geq 0 ~ \forall \, x_1,x_3} \nonumber \\
\eqexp{by Lem.~\ref{lem:projector_norms}} &\leq \int d^3x_1 \sum_{i=1}^N \lambda_i(x_1) \frac{1}{N-2} \int d^3x_3 \SCP{\psi_{as}^{1,3}}{\psi_{as}^{1,3}}_{2,4,\ldots,N}(x_1,x_3) \nonumber \\
&\leq \frac{1}{N-2} \left( \sup_{x_1} \sum_{i=1}^N \lambda_i(x_1)\right) \int d^3x_1 \SCP{\psi_{as}^{1,3}}{\psi_{as}^{1,3}}_{2,\ldots,N}(x_1) \nonumber \\
&= \frac{1}{N-2} \left( \sup_{x_1}\, (h\star\rho_N)(x_1) \right) \SCP{\psi_{as}^{1,3}}{\psi_{as}^{1,3}}.
\end{align}
\end{proof}

\begin{proof}[Proof of Lemma \ref{lem:psi_ppv2pp_psi}]
In the following we diagonalize $p_2h_{12}p_2$ as in Lemma~\ref{lem:diag_pvp}. For $g \geq 0$, we also diagonalize $p_1g(x_1)p_1$ as $(N \times N)$-matrix in the first variable. Here the eigenvalues are positive numbers $\mu_j$ and we call the eigenvectors $\tilde{\varphi}_j$, i.e.,
\be
p_1g(x_1)p_1 = \sum_{j=1}^N \mu_j \, p_1^{\tilde{\varphi}_j},
\ee
and
\be
\sum_{j=1}^N \mu_j = \sum_{j=1}^N \scp{\varphi_j}{g (x_1) \varphi_j}_1 = \int_{\RRR^3} g(x) \, \rho_N(x) \, d^3x.
\ee
Using the diagonalizations and Lemma~\ref{lem:projector_norms} we find
\begin{align}
\bigSCP{\psi_{as}}{p_1p_2 \, h_{12} \, p_1p_2 \psi_{as}} &= \bigSCP{p_1\psi_{as}}{p_2 h_{12} p_2 \, p_1\psi_{as}} \nonumber \\
&= \bigSCP{p_1\psi_{as}}{\sum_{i=1}^N \lambda_i(x_1) p_2^{\chi_i^{x_1}} p_1\psi_{as}} \nonumber \\
&= \int d^3x_1 \sum_{i=1}^N \underbrace{\lambda_i(x_1)}_{\geq 0} \underbrace{\bigSCP{p_1\psi_{as}}{ p_2^{\chi_i^{x_1}} p_1\psi_{as}}_{2,\ldots,N}(x_1)}_{\geq 0 ~ \forall x_1} \nonumber \\
\eqexp{by Lem.~\ref{lem:projector_norms}} &\leq \frac{1}{N-1} \int d^3x_1 \sum_{i=1}^N \lambda_i(x_1) \bigSCP{p_1\psi_{as}}{p_1\psi_{as}}_{2,\ldots,N}(x_1) \nonumber \\
&= \frac{1}{N-1} \bigSCP{\psi_{as}}{p_1 \underbrace{\sum_{i=1}^N \lambda_i(x_1)}_{:= g(x_1)} p_1\psi_{as}} \nonumber \\
&= \frac{1}{N-1} \bigSCP{\psi_{as}}{\sum_{j=1}^N \mu_j \, p_1^{\tilde{\varphi}_j} \psi_{as}} \nonumber \\
&= \frac{1}{N-1} \sum_{j=1}^N \underbrace{\mu_j}_{\geq 0} \underbrace{\bigSCP{\psi_{as}}{ p_1^{\tilde{\varphi}_j} \psi_{as}}}_{\geq 0} \nonumber \\
\eqexp{by Lem.~\ref{lem:projector_norms}} &\leq \frac{1}{N(N-1)} \left( \sum_{j=1}^N \mu_j \right) ~ \bigSCP{\psi_{as}}{\psi_{as}} \nonumber \\
&= \frac{1}{N(N-1)} \left( \int g(x) \rho_N(x) \,d^3x \right) ~ \bigSCP{\psi_{as}}{\psi_{as}} \nonumber \\
&= \frac{1}{N(N-1)} \left( \int (h\star\rho_N)(x)\, \rho_N(x) \,d^3x \right) ~ \bigSCP{\psi_{as}}{\psi_{as}}.
\end{align}
If $\psi_{as}$ is antisymmetric in all variables except $x_3$, then, similarly to \eqref{pvp_proof2}, one can only extract factors $\frac{1}{N-2}$ instead of $\frac{1}{N-1}$, and $\frac{1}{N-1}$ instead of $\frac{1}{N}$ from the antisymmetry of $\psi_{as}$, as can be seen from Lemma~\ref{lem:projector_norms}.
\end{proof}

\begin{proof}[Proof of Lemma \ref{lem:psi_qppvvppq_psi}]
We denote by $\phi_{as}^{i_1,\ldots,i_a}$ a normalized function in $L^2(\RRR^{3N})$ that is antisymmetric in all variables except in $x_{i_1}, \ldots, x_{i_a}$. Recall that $h$ is positive. Then, using Cauchy-Schwarz and Lemma~\ref{lem:psi_pvp_psi} in the end,
\begin{align}
& \SCP{\psi_{as}}{q_3p_1p_2 \, h_{12} h_{13} \, p_1p_3q_2 \psi_{as}} \nonumber \\
\quad \quad &\qquad= \SCP{\psi_{as}}{q_3 \, p_1 \sqrt{h_{13}} \, p_2 \sqrt{h_{12}} \, \sqrt{h_{13}} p_3 \, \sqrt{h_{12}} p_1 \, q_2 \psi_{as}} \nonumber \\
&\qquad\leq \norm{q_3\psi_{as}} \left( \sup_{\phi_{as}^3} \norm{\sqrt{h_{13}} p_1 \, \phi_{as}^3} \right) \left( \sup_{\phi_{as}^{1,3}} \norm{\sqrt{h_{12}} p_2 \, \phi_{as}^{1,3}} \right) \times \nonumber \\
&\qquad\quad \times \left( \sup_{\phi_{as}^{1,2}} \norm{\sqrt{h_{13}} p_3 \, \phi_{as}^{1,2}} \right) \left( \sup_{\phi_{as}^2} \norm{\sqrt{h_{12}} p_1 \, \phi_{as}^2} \right) \norm{q_2\psi_{as}} \nonumber \\
&\qquad= \norm{q_1\psi_{as}}^2 \left( \sup_{\phi_{as}^2} \norm{\sqrt{h_{12}} p_1 \, \phi_{as}^2}^2 \right) \left( \sup_{\phi_{as}^{1,3}} \norm{\sqrt{h_{12}} p_2 \, \phi_{as}^{1,3}}^2 \right) \nonumber \\
&\qquad= \norm{q_1\psi_{as}}^2 \left( \sup_{\phi_{as}^2} \, \SCP{\phi_{as}^2}{p_1h_{12} p_1 \phi_{as}^2} \right) \left( \sup_{\phi_{as}^{1,3}} \, \SCP{\phi_{as}^{1,3}}{p_2 h_{12} p_2 \, \phi_{as}^{1,3}} \right) \nonumber \\
\eqexpl{by Lem.~\ref{lem:psi_pvp_psi}} &\qquad\leq \frac{1}{(N-1)(N-2)} \, \left(\sup_{y \in \RRR^3} (h \star \rho_N)(y) \right)^2 \, \SCP{\psi_{as}}{q_1 \, \psi_{as}}.
\end{align}
\end{proof}

\section{Bounds on $\partial_t \alpha_f(t)$}\label{sec:alpha_m_dot_rigorous}
We now give the rigorous bounds for the three terms in the time derivative of $\alpha_f(t)$ given by \eqref{alpha_derivative}. Here, we use the weight function $m^{(\gamma)}(k)$ from \eqref{weight_m_gamma}. This also contains the case where $\gamma=1$, thus the bounds also hold for the weight function $n(k)$. The estimates are collected in the following lemma, which constitutes the heart of the proof of our main results.

We state this lemma only for positive $v^{(N)}$. If $v^{(N)}$ contains both positive and negative parts, we write $v^{(N)} = v^{(N)}_{+} - v^{(N)}_{-}$, with $v^{(N)}_{+},v^{(N)}_{-} \geq 0$, and then estimate the three terms in \eqref{alpha_derivative} separately for $v^{(N)}_{+}$ and $v^{(N)}_{-}$. We denote the direct term by $V^{\dir,(N)}_1 = \big(v^{(N)}\star\rho_N\big)(x_1)$.

\begin{lemma}\label{lem:estimates_terms_alpha_dot_beta}
Let $\varphi_1,\ldots,\varphi_N \in L^2(\RRR^3)$ be orthonormal and $\psi \in L^2(\RRR^{3N})$ be antisymmetric. Let $v^{(N)}$ be positive and set $\rho_N(x) = \sum_{i=1}^N |\varphi_i(x)|^2$. Then,
\begin{enumerate}[(a)]
\item \noindent for the $\boldsymbol{qp}$-$\boldsymbol{pp}$ \textbf{term},
\begin{flalign}\label{term_1_dir}
&\left\lvert 2N \, \Im\, \bigSCP{\psi}{\left(\widehat{m^{(\gamma)}}-\widehat{m^{(\gamma)}}_{-1}\right)q_1 \Big( (N-1)p_2v^{(N)}_{12}p_2 - V_1^{\dir,(N)} \Big) p_1 \psi} \right\rvert & \nonumber \\
&\qquad \leq 4\left( \sup_{x_1\in\RRR^3} \int_{\Omega_N(x_1)}v^{(N)}(x_1-y)^2\rho_N(y)\,d^3y \right)^{\frac{1}{2}} N^{\frac{1}{2}} \, \Big( \alpha_{m^{(\gamma)}}(t) + N^{-\gamma} \Big) & \nonumber \\
&\qquad\quad + 4\sqrt{2} \left( \sup_{y\in\overline{\Omega_N}(0)} v^{(N)}(y) \right) N^{\frac{1}{2}+\frac{\gamma}{2}} \, \Big( \alpha_{m^{(\gamma)}}(t) + N^{-\gamma} \Big), &
\end{flalign}
for any (possibly $N$ dependent) volume $\Omega_N$, with $\Omega_N(x_1)=\Omega_N+x_1$ and $\overline{\Omega_N}:=\RRR^3 \setminus \Omega_N$ (possibly $\Omega_N=\RRR^3$ or $\Omega_N=\emptyset$); also, with $V_1^{(N)}=V_1^{\dir,(N)}$ or $V_1^{(N)}=0$,
\begin{flalign}\label{term_1_dir_2}
&\left\lvert 2N \, \Im\, \bigSCP{\psi}{\left(\widehat{m^{(\gamma)}}-\widehat{m^{(\gamma)}}_{-1}\right)q_1 \Big( (N-1)p_2v^{(N)}_{12}p_2 - V_1^{(N)} \Big) p_1 \psi} \right\rvert & \nonumber \\
&\qquad \leq 4\sqrt{2} \left( \sup_{y\in\RRR^3} (v^{(N)}\star\rho_N)(y) \right) N^{\frac{1}{2}-\frac{\gamma}{2}}, &
\end{flalign}
\item \noindent for the $\boldsymbol{qq}$-$\boldsymbol{pp}$ \textbf{term},
\begin{flalign}\label{term_2}
&\left\lvert N \, \Im\, \bigSCP{\psi}{\left(\widehat{m^{(\gamma)}}-\widehat{m^{(\gamma)}}_{-2}\right)q_1q_2 \Big( (N-1)v^{(N)}_{12} \Big) p_1p_2 \psi} \right\rvert & \nonumber \\
&\qquad \leq \sqrt{12} \Bigg( \left( \sup_{y\in\RRR^3} (v^{(N)}\star\rho_N)(y) \right)^2 \alpha_{m^{(\gamma)}}^2 \nonumber \\
&\qquad\qquad\qquad + \bigg( \int \Big(\big(v^{(N)}\big)^2\star\rho_N\Big)(y)\,\rho_N(y)\,d^3y \bigg) \alpha_{m^{(\gamma)}}(t) N^{-\gamma} \Bigg)^{\frac{1}{2}}, &
\end{flalign}
and also,
\begin{flalign}\label{term_2_alt}
&\left\lvert N \, \Im\, \bigSCP{\psi}{\left(\widehat{m^{(\gamma)}}-\widehat{m^{(\gamma)}}_{-2}\right)q_1q_2 \Big( (N-1)v^{(N)}_{12} \Big) p_1p_2 \psi} \right\rvert & \nonumber \\
&\qquad \leq \sqrt{12} \left( \int \Big(\big(v^{(N)}\big)^2\star\rho_N\Big)(y)\,\rho_N(y)\,d^3y \right)^{\frac{1}{2}} \Big( \alpha_{m^{(\gamma)}} + N^{-\gamma} \Big), &
\end{flalign}
\item \noindent for the $\boldsymbol{qq}$-$\boldsymbol{pq}$ \textbf{term},
\begin{flalign}\label{term_3}
&\left\lvert 2N \, \Im\, \bigSCP{\psi}{\left(\widehat{m^{(\gamma)}}-\widehat{m^{(\gamma)}}_{-1}\right)q_1q_2 \Big( (N-1)v^{(N)}_{12} \Big) p_1q_2 \psi} \right\rvert & \nonumber \\
&\qquad \leq 4 \left( \sup_{y\in\RRR^3} \Big(\big(v^{(N)}\big)^2\star\rho_N\Big)(y) \right)^{\frac{1}{2}} \, N^{\frac{\gamma}{2}} \, \alpha_{m^{(\gamma)}}(t). &
\end{flalign}
\end{enumerate}
\end{lemma}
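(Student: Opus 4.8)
The proof starts from Lemma~\ref{lem:alpha_derivative}: for the weight $f=m^{(\gamma)}$ (the case $\gamma=1$ recovers $n$), $\partial_t\alpha_{m^{(\gamma)}}(t)$ is precisely the sum of the three scalar products appearing on the left-hand sides of \eqref{term_1_dir}, \eqref{term_2} and \eqref{term_3}, so it suffices to bound each separately, and — since later $v^{(N)}=v^{(N)}_+-v^{(N)}_-$ is split into its positive and negative parts — only for $v^{(N)}\ge0$. The scheme common to all three terms is: (i) use Lemma~\ref{lem:root_f_hat} to distribute the ``derivative operator'' $\widehat{m^{(\gamma)}}-\widehat{m^{(\gamma)}}_{-d}$ symmetrically across the scalar product (with $d=2$ for the $qq$-$pp$ term and $d=1$ for the other two), producing the vectors $\widetilde\psi_1=(\widehat{m}-\widehat{m}_{-d})^{1/2}\psi$ and $\widetilde\psi_0$ of Lemma~\ref{lem:q_root_f}; (ii) apply Cauchy--Schwarz; (iii) bound the factor built only from projectors and $\widetilde\psi_c$ via Lemma~\ref{lem:q_root_f}, using the antisymmetry of $\psi$ (and hence of $\widetilde\psi_c$) to move $q$'s onto the first coordinate so that the sharp bounds $\|q_1\widetilde\psi_c\|^2\lesssim N^{-1}\alpha_m$ and $\|q_1q_2\widetilde\psi_c\|^2\lesssim N^{\gamma-2}\alpha_m$ apply; (iv) bound the factor containing $v^{(N)}_{12}$ by pushing $v^{(N)}_{12}$ onto the neighbouring $p$'s, i.e.\ writing $\|v^{(N)}_{12}p_2\,\cdot\|^2=\SCP{\cdot}{p_2(v^{(N)}_{12})^2p_2\,\cdot}$ and similar, and then invoking the diagonalization results Lemmas~\ref{lem:diag_pvp}, \ref{lem:psi_pvp_psi}, \ref{lem:psi_ppv2pp_psi}, \ref{lem:psi_qppvvppq_psi} to bring in $\sup_y\big((v^{(N)})^2\star\rho_N\big)(y)$ or $\int\big((v^{(N)})^2\star\rho_N\big)(y)\rho_N(y)\,d^3y$.

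For the $qq$-$pq$ term (c), after Lemma~\ref{lem:root_f_hat} (indices $a=2$, $b=1$, $d=1$) and Cauchy--Schwarz we are left with $\|q_1q_2\widetilde\psi_1\|\cdot\|v^{(N)}_{12}p_1q_2\widetilde\psi_0\|$. The first factor is bounded by Lemma~\ref{lem:q_root_f}; for the second, $\|v^{(N)}_{12}p_1q_2\widetilde\psi_0\|^2=\SCP{q_2\widetilde\psi_0}{p_1(v^{(N)}_{12})^2p_1q_2\widetilde\psi_0}\le\tfrac{1}{N-1}\big(\sup_y((v^{(N)})^2\star\rho_N)(y)\big)\|q_2\widetilde\psi_0\|^2$ by Lemma~\ref{lem:psi_pvp_psi}, and the crucial point is $\|q_2\widetilde\psi_0\|=\|q_1\widetilde\psi_0\|\le N^{-1/2}\sqrt{\alpha_m}$ by antisymmetry and Lemma~\ref{lem:q_root_f}; collecting the powers of $N$ and $(N\!-\!1)$ yields \eqref{term_3}. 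For the $qq$-$pp$ term (b) one has $d=2$ and $\SCP{q_1q_2\widetilde\psi_1}{v^{(N)}_{12}p_1p_2\widetilde\psi_0}$: the crude bound \eqref{term_2_alt} follows by estimating $\|v^{(N)}_{12}p_1p_2\widetilde\psi_0\|^2$ through Lemma~\ref{lem:psi_ppv2pp_psi} with $\|\widetilde\psi_0\|^2\le2N^{-\gamma}$, combined with $\sqrt{\alpha_m}\le\sqrt{\alpha_m+N^{-\gamma}}$ and $\sqrt{\alpha_m N^{-\gamma}}\le\alpha_m+N^{-\gamma}$ to reassemble $\alpha_m+N^{-\gamma}$; the sharper bound \eqref{term_2} instead uses antisymmetry to rewrite $q_1q_2(N-1)v^{(N)}_{12}p_1p_2$ as $q_1\sum_{m\ge2}q_mv^{(N)}_{1m}p_1p_m$, Cauchy--Schwarz in the $q_1$-slot, and then splits the double sum over $m,n$ into the off-diagonal part ($m\ne n$), controlled by Lemma~\ref{lem:psi_qppvvppq_psi} and giving the $\big(\sup(v^{(N)}\star\rho_N)\big)^2\alpha_m^2$ contribution, and the diagonal part ($m=n$), controlled by Lemma~\ref{lem:psi_ppv2pp_psi} and giving the $\big(\int(v^{(N)})^2\star\rho_N\cdot\rho_N\big)\alpha_m N^{-\gamma}$ contribution (the $N^{-\gamma}$ from $\|\widetilde\psi_1\|^2\le2N^{-\gamma}$, the $\alpha_m$ from $\|q_1\widetilde\psi_0\|^2\lesssim N^{-1}\alpha_m$).

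The $qp$-$pp$ term (a) is where the cancellation between the Schr\"odinger interaction and the mean field must be used. After Lemma~\ref{lem:root_f_hat} (indices $a=1$, $b=0$, $d=1$; here the boundary term in Lemma~\ref{lem:root_f_hat} vanishes) one faces $\SCP{q_1p_2\widetilde\psi_1}{\big((N-1)p_2v^{(N)}_{12}p_2-V^{(N)}_1\big)p_1p_2\widetilde\psi_0}$. Diagonalizing $p_2v^{(N)}_{12}p_2=\sum_i\lambda_i(x_1)\,p_2^{\chi_i^{x_1}}$ by Lemma~\ref{lem:diag_pvp}, with $\sum_i\lambda_i(x_1)=(v^{(N)}\star\rho_N)(x_1)=V^{\dir,(N)}_1$, gives
\[
(N-1)p_2v^{(N)}_{12}p_2-V^{\dir,(N)}_1=-\sum_{i=1}^N\lambda_i(x_1)\bigl(1-(N-1)p_2^{\chi_i^{x_1}}\bigr),
\]
and on functions antisymmetric in the coordinates $2,\dots,N$ the operator $1-(N-1)p_2^{\chi_i^{x_1}}$ may be traded, inside the scalar product and up to a boundary term of order $N^{-1}$ from its $p_1^{\chi_i^{x_1}}$-part, for a projector $q^{\chi_i^{x_1}}$; this supplies an extra factor $q_2$ on the right. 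With that extra $q_2$ in hand, Cauchy--Schwarz and the estimates $\|q_1p_2\widetilde\psi_1\|\le\|q_1\widetilde\psi_1\|\lesssim N^{-1/2}\sqrt{\alpha_m}$, $\|q_1\widetilde\psi_0\|\lesssim N^{-1/2}\sqrt{\alpha_m}$, $\|\widetilde\psi_0\|^2\le N^{-\gamma}$ of Lemma~\ref{lem:q_root_f}, together with Lemma~\ref{lem:psi_ppv2pp_psi}-type bounds and $\sum_i\lambda_i(x_1)=(v^{(N)}\star\rho_N)(x_1)$, give \eqref{term_1_dir_2} (and, with $V^{(N)}_1=0$, the same bound, which drives Proposition~\ref{pro:coulombN1}). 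To obtain the singularity-resolving bound \eqref{term_1_dir} one first splits $v^{(N)}=v^{(N)}\id_{\Omega_N(x_1)}+v^{(N)}\id_{\overline{\Omega_N}(x_1)}$ before diagonalizing, estimating the near part through the $L^2$-type quantity $\sup_{x_1}\int_{\Omega_N(x_1)}(v^{(N)}(x_1-y))^2\rho_N(y)\,d^3y$ and the far part through the $L^\infty$-type quantity $\sup_{\overline{\Omega_N}(0)}v^{(N)}$.

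The main obstacle is precisely term (a): it requires the $x_1$-dependent diagonalization of Lemma~\ref{lem:diag_pvp} combined with an auxiliary antisymmetry identity converting $1-(N-1)p_2^{\chi_i^{x_1}}$ into a genuine extra $q_2$, the near/far decomposition of $v^{(N)}$ to accommodate singular potentials, and careful tracking of the many boundary terms produced by Lemmas~\ref{lem:root_f_hat} and \ref{lem:q_root_f}; the $qq$-$pp$ and $qq$-$pq$ terms, by contrast, are routine once the symmetrization and Cauchy--Schwarz steps are set up, since they carry enough $q$'s to absorb the dangerous factors of $N$ directly.
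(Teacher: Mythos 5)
Your overall scheme---Lemma~\ref{lem:root_f_hat} to split the weight, Cauchy--Schwarz, Lemma~\ref{lem:q_root_f} for the projector factors, and Lemmas~\ref{lem:diag_pvp}, \ref{lem:psi_pvp_psi}, \ref{lem:psi_ppv2pp_psi}, \ref{lem:psi_qppvvppq_psi} for the interaction factors---is exactly the paper's proof, and your treatment of \eqref{term_1_dir}, \eqref{term_2} and \eqref{term_3} (diagonalization of $p_2v^{(N)}_{12}p_2$, trading $1-(N-1)p_2^{\chi_i^{x_1}}$ for $q_{\neq 1}^{\chi_i^{x_1}}$ up to the $+1$ boundary term, near/far split of the eigenvalues, and the rewriting $q_1\sum_{m\ge 2}q_m v^{(N)}_{1m}p_1p_m$ with the diagonal/off-diagonal split) matches it in all essentials.

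There is, however, a genuine gap in your derivation of \eqref{term_2_alt}. The direct Cauchy--Schwarz pairing $\lvert\SCP{q_1q_2\widetilde\psi_1}{v^{(N)}_{12}p_1p_2\widetilde\psi_0}\rvert\le\norm{q_1q_2\widetilde\psi_1}\,\norm{v^{(N)}_{12}p_1p_2\widetilde\psi_0}$, with Lemma~\ref{lem:psi_ppv2pp_psi} and $\norm{\widetilde\psi_0}^2\le 2N^{-\gamma}$, gives after multiplying by $N(N-1)$ a bound of the form $C\big(\int\big((v^{(N)})^2\star\rho_N\big)\rho_N\big)^{\frac{1}{2}}\sqrt{\alpha_{m^{(\gamma)}}}$: the factor $N^{\gamma/2}$ from $\norm{q_1q_2\widetilde\psi_1}$ cancels against the $N^{-\gamma/2}$ from $\norm{\widetilde\psi_0}$, and no product $\alpha_{m^{(\gamma)}}N^{-\gamma}$ ever appears. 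Your proposed reassembly then fails, because $\sqrt{\alpha_{m^{(\gamma)}}}$ is \emph{not} bounded by a constant times $\alpha_{m^{(\gamma)}}+N^{-\gamma}$ (take $\alpha_{m^{(\gamma)}}=N^{-\gamma}$: the ratio is $\tfrac{1}{2}N^{\gamma/2}$), and a bound that is only $O(\sqrt{\alpha})$ rather than linear in $\alpha$ plus $O(N^{-\gamma})$ destroys the Gronwall step in Theorem~\ref{thm:estimates_terms_alpha_dot_beta_n}, whose proof invokes \eqref{term_2_alt}. The paper obtains \eqref{term_2_alt} from the \emph{same} expansion as \eqref{term_2}---Cauchy--Schwarz in the $q_1$-slot against $\sum_{m\ge 2}q_m v^{(N)}_{1m}p_1p_m$, keeping the factor $\norm{q_1\widetilde\psi}\lesssim N^{-\frac{1}{2}}\sqrt{\alpha_{m^{(\gamma)}}}$---and merely replaces the off-diagonal estimate (Lemma~\ref{lem:psi_qppvvppq_psi}) by Cauchy--Schwarz plus Lemma~\ref{lem:psi_ppv2pp_psi}, so that both contributions stay proportional to $\alpha_{m^{(\gamma)}}$ and the elementary bound $(\alpha^2+\alpha N^{-\gamma})^{\frac{1}{2}}\le\alpha+N^{-\gamma}$ applies. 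A smaller inaccuracy: \eqref{term_1_dir_2} is not a by-product of the cancellation mechanism. The paper proves it by estimating the $(N-1)p_2v^{(N)}_{12}p_2$ and $V_1^{(N)}$ contributions separately (Lemma~\ref{lem:psi_pvp_psi}, respectively the trivial sup bound), which is what makes the \emph{first} power of $\sup_y(v^{(N)}\star\rho_N)(y)$ appear and what covers the case $V_1^{(N)}=0$ needed in Proposition~\ref{pro:coulombN1}; the route via the extra $q_2$ would instead reproduce \eqref{term_1_dir} with $\Omega_N=\RRR^3$, i.e.\ a bound in terms of $\sup_{x}\int v^{(N)}(x-y)^2\rho_N(y)\,d^3y$, which is a different (and for that application insufficient) quantity.
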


\begin{proof}
For ease of notation, we often omit subscripts and superscripts $N$ or $(N)$ in this proof; in particular, we abbreviate $v=v^{(N)}$, $V_1^{\dir}=V_1^{\dir,(N)}$ and $\rho=\rho_N$.

\absatz

\textbf{The} $\boldsymbol{qp}$-$\boldsymbol{pp}$ \textbf{term.} Using Lemma~\ref{lem:root_f_hat}, we find that
\begin{align}\label{term_a_estimate}
& \left\lvert 2N \, \Im\, \bigSCP{\psi}{\left(\widehat{m^{(\gamma)}}-\widehat{m^{(\gamma)}}_{-1}\right)q_1 \Big( (N-1)p_2v_{12}p_2 - V_1^{\dir} \Big) p_1 \psi} \right\rvert \nonumber \\
&\qquad= \bigg\lvert 2N \, \Im\, \Big\langle\!\!\Big\langle \underbrace{\psi, \left(\widehat{m^{(\gamma)}}-\widehat{m^{(\gamma)}}_{-1}\right)^{\frac{1}{2}}}_{:= \widetilde{\psi}} q_1 \Big( (N-1)p_2v_{12}p_2 - V_1^{\dir} \Big) \times \nonumber \\
&\qquad \quad \quad \times p_1 \underbrace{\left(\widehat{m^{(\gamma)}}_{1}-\widehat{m^{(\gamma)}} + m^{(\gamma)}(N)P_{N,N} \right)^{\frac{1}{2}} \psi}_{:=\widetilde{\psi}'} \Big\rangle\!\!\Big\rangle \bigg\rvert \nonumber \\
&\qquad= 2N \left\lvert \Im\, \bigSCP{\widetilde{\psi}}{q_1 \Big( (N-1)p_2v_{12}p_2 - V_1^{\dir} \Big) p_1 \widetilde{\psi}'} \right\rvert.
\end{align}
In order to estimate \eqref{term_a_estimate}, we diagonalize $p_2v_{12}p_2$ according to Lemma~\ref{lem:diag_pvp}. We call the eigenvectors $\chi_i^{x_1}$ and the eigenvalues $\lambda_i(x_1)$. Note that $V_1^{\dir}(x_1)=\sum_{i=1}^N \lambda_i(x_1)$. We denote by $\Omega_N \subset \RRR^3$ a possibly $N$ dependent volume. For ease of notation we omit the subscript $N$, i.e., we write $\Omega:=\Omega_N$ for the proof. We split the eigenvalues into two parts:
\begin{align}
\lambda_i(x_1) &= \int_{\RRR^3} v(x_1-y) \left\lvert \chi_i^{x_1}(y) \right\rvert^2 \, d^3y \nonumber \\
&= \underbrace{\int_{\Omega(x_1)} v(x_1-y) \left\lvert \chi_i^{x_1}(y) \right\rvert^2 \, d^3y}_{=:\lambda_{i}^{\Omega}(x_1)} + \underbrace{\int_{\overline{\Omega(x_1)}} v(x_1-y) \left\lvert \chi_i^{x_1}(y) \right\rvert^2 \, d^3y}_{=:\lambda_{i}^{\overline{\Omega}}(x_1)}.
\end{align}
Then, continuing from \eqref{term_a_estimate}, we find 
\begin{align}\label{term_a_estimate2}
& \bigg\lvert \bigSCP{\widetilde{\psi}}{q_1 \Big( (N-1)p_2v_{12}p_2 - V_1^{\dir} \Big) p_1 \widetilde{\psi}'} \bigg\rvert \nonumber \\
&\quad= \bigg\lvert \bigSCP{\widetilde{\psi}}{q_1 \left( \sum_{m=2}^N p_mv_{1m}p_m - V_1^{\dir} \right) p_1 \widetilde{\psi}'} \bigg\rvert \nonumber \\
&\quad= \bigg\lvert \bigSCP{\widetilde{\psi}}{q_1 \Bigg( \sum_{i=1}^N \lambda_i(x_1) \underbrace{\sum_{m=2}^N p_m^{\chi_i^{x_1}}}_{=:p_{\neq 1}^{\chi_i^{x_1}}} - \sum_{i=1}^N \lambda_i(x_1) \Bigg) p_1 \widetilde{\psi}'} \bigg\rvert \nonumber \\
&\quad= \bigg\lvert \bigSCP{\widetilde{\psi}}{q_1 \Bigg( \sum_{i=1}^N \lambda_i(x_1) \underbrace{\left( p_{\neq 1}^{\chi_i^{x_1}} - \id \right)}_{=:-q_{\neq 1}^{\chi_i^{x_1}}} \Bigg) p_1 \widetilde{\psi}'} \bigg\rvert \nonumber \\
&\quad\leq \bigg\lvert \sum_{i=1}^N \bigSCP{\widetilde{\psi}}{q_1 \lambda_{i}^{\Omega}(x_1) q_{\neq 1}^{\chi_i^{x_1}} p_1 \widetilde{\psi}'} \bigg\rvert + \bigg\lvert \sum_{i=1}^N \bigSCP{\widetilde{\psi}}{q_1 \lambda_{i}^{\overline{\Omega}}(x_1) q_{\neq 1}^{\chi_i^{x_1}} p_1 \widetilde{\psi}'} \bigg\rvert.
\end{align}
Here we introduced the projectors
\be
p_{\neq 1}^{\varphi_i} := \sum_{m=2}^N p_m^{\varphi_i}, \quad q_{\neq 1}^{\varphi_i} := \id - p_{\neq 1}^{\varphi_i}
\ee
that act on all but the first variable (see also Chapter~\ref{sec:properties_projectors}). Note that, for all $\psi_{as}^1$ that are antisymmetric in all variables except $x_1$,
\begin{align}\label{p_neq_1}
\bigSCP{\psi_{as}^1}{\sum_{i=1}^N q_{\neq 1}^{\varphi_i} \psi_{as}^1} &= \bigSCP{\psi_{as}^1}{\left( N - \sum_{i=1}^N \sum_{m=2}^N p_m^{\varphi_i} \right) \psi_{as}^1} \nonumber \\
&= \bigSCP{\psi_{as}^1}{\left( N - \sum_{m=2}^N p_m \right) \psi_{as}^1} \nonumber \\
&= \bigSCP{\psi_{as}^1}{\Big( N - (N-1) p_2 \Big) \psi_{as}^1} \nonumber \\
&= (N-1) \bigSCP{\psi_{as}^1}{q_2 \psi_{as}^1} + \bigSCP{\psi_{as}^1}{\psi_{as}^1}.
\end{align}
This remains true if $\ket{\varphi_i}_m=\ket{\chi_i^{x_1}}_m$ ($m\geq 2$). We also have that $\left(q_{\neq 1}^{\chi_i^{x_1}}\right)^2 \psi_{as}^1=q_{\neq 1}^{\chi_i^{x_1}} \psi_{as}^1$. Using this, Cauchy-Schwarz (C.-S.), \eqref{p_neq_1} and Lemma~\ref{lem:q_root_f}, we find for the first summand in \eqref{term_a_estimate2},
\begin{align}\label{term_a_estimate_epsilon}
&\bigg\lvert \sum_{i=1}^N \bigSCP{\widetilde{\psi}}{q_1 \lambda_{i}^{\Omega}(x_1) q_{\neq 1}^{\chi_i^{x_1}} p_1 \widetilde{\psi}'} \bigg\rvert \nonumber \\
\eqexpl{by C.-S.} &\qquad\leq \sum_{i=1}^N \norm{\lambda_{i}^{\Omega}(x_1)q_1\widetilde{\psi}} \norm{q_{\neq 1}^{\chi_i^{x_1}} p_1 \widetilde{\psi}'} \nonumber \\
&\qquad\leq \left( \sum_{i=1}^N \bigSCP{\widetilde{\psi}}{q_1 \left(\lambda_{i}^{\Omega}(x_1)\right)^2 q_1 \widetilde{\psi}} \right)^{\frac{1}{2}} \, \left( \sum_{i=1}^N \bigSCP{\widetilde{\psi}'}{p_1 q_{\neq 1}^{\chi_i^{x_1}} p_1 \widetilde{\psi}'} \right)^{\frac{1}{2}} \nonumber \\
\eqexpl{by \eqref{p_neq_1}} &\qquad\leq \left( \sup_{x_1} \sum_{i=1}^N \left(\lambda_{i}^{\Omega}(x_1)\right)^2 \right)^{\frac{1}{2}} \bigg( \bigSCP{\widetilde{\psi}}{q_1 \widetilde{\psi}} \bigg)^{\frac{1}{2}} \times \nonumber \\
&\qquad\quad \times \, \bigg( (N-1) \bigSCP{\widetilde{\psi}'}{p_1 q_2 \widetilde{\psi}'} + \bigSCP{\widetilde{\psi}'}{p_1 \widetilde{\psi}'} \bigg)^{\frac{1}{2}} \nonumber \\
\eqexpl{by Lem.~\ref{lem:q_root_f}} &\qquad\leq \left( \sup_{x_1} \sum_{i=1}^N \left(\lambda_{i}^{\Omega}(x_1)\right)^2 \right)^{\frac{1}{2}} \bigg( 2 N^{-1} \alpha_{m^{(\gamma)}} \bigg)^{\frac{1}{2}} \, \bigg( 2\alpha_{m^{(\gamma)}} + N^{-\gamma} \bigg)^{\frac{1}{2}} \nonumber \\
&\qquad\leq 2 \left( \sup_{x_1} \sum_{i=1}^N \left(\lambda_{i}^{\Omega}(x_1)\right)^2 \right)^{\frac{1}{2}} N^{-\frac{1}{2}} \bigg( \alpha_{m^{(\gamma)}}^2 + 2N^{-\gamma}\alpha_{m^{(\gamma)}} + N^{-2\gamma} \bigg)^{\frac{1}{2}} \nonumber \\
&\qquad= 2 \left( \sup_{x_1} \sum_{i=1}^N \left(\lambda_{i}^{\Omega}(x_1)\right)^2 \right)^{\frac{1}{2}} N^{-\frac{1}{2}} \bigg( \alpha_{m^{(\gamma)}} + N^{-\gamma} \bigg).
\end{align}
Furthermore, using Cauchy-Schwarz and the fact that $\chi_1^{x_1},\ldots,\chi_N^{x_1}$ are normalized, we find
\begin{align}
\lambda_{i}^{\Omega}(x_1) &= \int_{\Omega(x_1)} v(x_1-y) \left\lvert \chi_i^{x_1}(y) \right\rvert^2 \, d^3y \nonumber \\
&\leq \left( \int_{\Omega(x_1)} v^2(x_1-y) \left\lvert \chi_i^{x_1}(y) \right\rvert^2 \, d^3y \right)^{\frac{1}{2}} \left( \int_{\Omega(x_1)} \left\lvert \chi_i^{x_1}(y) \right\rvert^2 \, d^3y \right)^{\frac{1}{2}} \nonumber \\
&\leq \left( \int_{\Omega(x_1)} v^2(x_1-y) \left\lvert \chi_i^{x_1}(y) \right\rvert^2 \, d^3y \right)^{\frac{1}{2}}.
\end{align}
Now observe that $\sum_{i=1}^N \left\lvert \chi_i^{x_1}(y) \right\rvert^2 = \sum_{i=1}^N \left\lvert \varphi_i(y) \right\rvert^2$, which follows directly from $\chi_i^{x_1}(y) = \sum_{k=1}^N U_{ik}(x_1) \, \varphi_k(y)$ with $U(x_1)$ unitary. Thus,
\begin{align}\label{sup_v2}
\left( \sup_{x_1} \sum_{i=1}^N \left(\lambda_{i}^{\Omega}(x_1)\right)^2 \right)^{\frac{1}{2}} &\leq \left( \sup_{x_1} \sum_{i=1}^N \int_{\Omega(x_1)} v^2(x_1-y) \left\lvert \chi_i^{x_1}(y) \right\rvert^2 \, d^3y \right)^{\frac{1}{2}} \nonumber \\
&= \left( \sup_{x_1} \int_{\Omega(x_1)} v^2(x_1-y) \rho(y) \, d^3y \right)^{\frac{1}{2}}.
\end{align}
Let us turn to the second summand in \eqref{term_a_estimate2}. First, note that due to the normalization of $\chi_1^{x_1},\ldots,\chi_N^{x_1}$,
\begin{align}\label{lambda_omega_bar}
\lambda_{i}^{\overline{\Omega}}(x_1) &= \int_{\overline{\Omega(x_1)}} v(x_1-y) \left\lvert \chi_i^{x_1}(y) \right\rvert^2 \, d^3y \nonumber \\
&\leq \left( \sup_{y \in \overline{\Omega(x_1)}} v(x_1-y) \right) \int_{\RRR^3} \left\lvert \chi_i^{x_1}(y) \right\rvert^2 \, d^3y  \nonumber \\
&= \sup_{y \in \overline{\Omega(0)}} v(y).
\end{align}
Then we find, using Cauchy-Schwarz (C.-S.), $\left(q_{\neq 1}^{\chi_i^{x_1}}\right)^2 \psi_{as}=q_{\neq 1}^{\chi_i^{x_1}} \psi_{as}$ for all antisymmetric $\psi_{as}$, \eqref{p_neq_1}, \eqref{lambda_omega_bar} and Lemma~\ref{lem:q_root_f},
\begin{align}\label{term_a_estimate_epsilon_bar}
&\Big\lvert \sum_{i=1}^N \bigSCP{\widetilde{\psi}}{q_1 \lambda_{i}^{\overline{\Omega}}(x_1) q_{\neq 1}^{\chi_i^{x_1}} p_1 \widetilde{\psi}'} \Big\rvert \nonumber \\
&\qquad= \Big\lvert \sum_{i=1}^N \bigSCP{\widetilde{\psi}}{q_1 q_{\neq 1}^{\chi_i^{x_1}} \lambda_{i}^{\overline{\Omega}}(x_1) q_{\neq 1}^{\chi_i^{x_1}} p_1 \widetilde{\psi}'} \Big\rvert \nonumber \\
\eqexpl{by C.-S.} &\qquad\leq \sum_{i=1}^N \norm{\sqrt{\lambda_{i}^{\overline{\Omega}}(x_1)} q_{\neq 1}^{\chi_i^{x_1}} q_1 \widetilde{\psi}} \norm{\sqrt{\lambda_{i}^{\overline{\Omega}}(x_1)} q_{\neq 1}^{\chi_i^{x_1}} p_1 \widetilde{\psi}'} \nonumber \\
\eqexpl{by C.-S.} &\qquad\leq \left( \sum_{i=1}^N \bigSCP{\widetilde{\psi}}{q_1 q_{\neq 1}^{\chi_i^{x_1}} \lambda_{i}^{\overline{\Omega}}(x_1) q_{\neq 1}^{\chi_i^{x_1}} q_1 \widetilde{\psi}} \right)^{\frac{1}{2}} \left( \sum_{i=1}^N \bigSCP{\widetilde{\psi}'}{p_1 q_{\neq 1}^{\chi_i^{x_1}} \lambda_{i}^{\overline{\Omega}}(x_1) q_{\neq 1}^{\chi_i^{x_1}} p_1 \widetilde{\psi}'} \right)^{\frac{1}{2}} \nonumber \\
&\qquad\leq \left( \left(\sup_{i,x_1} \lambda_{i}^{\overline{\Omega}}(x_1)\right) \sum_{i=1}^N \bigSCP{\widetilde{\psi}}{q_1 q_{\neq 1}^{\chi_i^{x_1}} q_1 \widetilde{\psi}} \right)^{\frac{1}{2}} \times \nonumber \\
&\qquad \quad \times \left( \left(\sup_{i,x_1} \lambda_{i}^{\overline{\Omega}}(x_1)\right) \sum_{i=1}^N \bigSCP{\widetilde{\psi}'}{p_1 q_{\neq 1}^{\chi_i^{x_1}} p_1 \widetilde{\psi}'} \right)^{\frac{1}{2}} \nonumber \\
\eqexpl{by \eqref{p_neq_1}} &\qquad\leq \left(\sup_{i,x_1} \lambda_{i}^{\overline{\Omega}}(x_1)\right) \bigg( (N-1) \bigSCP{\widetilde{\psi}}{q_1 q_2 \widetilde{\psi}} + \bigSCP{\widetilde{\psi}}{q_1 \widetilde{\psi}} \bigg)^{\frac{1}{2}} \times \nonumber \\
&\qquad \quad \times \bigg( (N-1) \bigSCP{\widetilde{\psi}'}{p_1 q_2 \widetilde{\psi}'} + \bigSCP{\widetilde{\psi}'}{p_1 \widetilde{\psi}'} \bigg)^{\frac{1}{2}} \nonumber \\
\eqexpl{by Lem.~\ref{lem:q_root_f}} &\qquad\leq \left(\sup_{i,x_1} \lambda_{i}^{\overline{\Omega}}(x_1)\right) \bigg( 4 N^{\gamma-1} \alpha_{m^{(\gamma)}} + 2 N^{-1} \alpha_{m^{(\gamma)}} \bigg)^{\frac{1}{2}} \bigg( 2 \alpha_{m^{(\gamma)}} + N^{-\gamma} \bigg)^{\frac{1}{2}} \nonumber \\
\eqexpl{by \eqref{lambda_omega_bar}} &\qquad\leq \sqrt{8} \left( \sup_{y \in \overline{\Omega(0)}} v(y) \right) N^{\frac{\gamma}{2}-\frac{1}{2}} \, \times \nonumber \\
&\qquad \quad \times \bigg( \alpha_{m^{(\gamma)}}^2 + \frac{1}{2} N^{-\gamma} \alpha_{m^{(\gamma)}} + \frac{1}{2} N^{-\gamma} \alpha_{m^{(\gamma)}}^2 + \frac{1}{4} N^{-2\gamma} \alpha_{m^{(\gamma)}} \bigg)^{\frac{1}{2}} \nonumber \\
&\qquad\leq \sqrt{8} \left( \sup_{y \in \overline{\Omega(0)}} v(y) \right) N^{\frac{\gamma}{2}-\frac{1}{2}} \bigg( \alpha_{m^{(\gamma)}}^2 + 2N^{-\gamma} \alpha_{m^{(\gamma)}} + N^{-2\gamma} \bigg)^{\frac{1}{2}} \nonumber \\
&\qquad= \sqrt{8} \left( \sup_{y \in \overline{\Omega(0)}} v(y) \right) N^{\frac{\gamma}{2}-\frac{1}{2}} \bigg( \alpha_{m^{(\gamma)}} + N^{-\gamma} \bigg).
\end{align}

The bounds \eqref{term_a_estimate_epsilon} with \eqref{sup_v2}, and \eqref{term_a_estimate_epsilon_bar} give the bound \eqref{term_1_dir} on \eqref{term_a_estimate}.

The alternative estimate \eqref{term_1_dir_2} can be obtained by starting from the second last line of \eqref{term_a_estimate}. Using Cauchy-Schwarz (C.-S.) and Lemmas~\ref{lem:psi_pvp_psi} and \ref{lem:q_root_f} we find
\begin{align}
& 2N \left\lvert \Im\, \bigSCP{\widetilde{\psi}}{q_1 \Big( (N-1)p_2v_{12}p_2 - V_1^{\dir} \Big) p_1 \widetilde{\psi}'} \right\rvert \nonumber \\
&\qquad \leq 2 N(N-1) \left\lvert \bigSCP{\widetilde{\psi}}{q_1p_2 v_{12} p_1p_2 \widetilde{\psi}'} \right\rvert + 2 N \left\lvert \bigSCP{\widetilde{\psi}}{q_1 V_1^{\dir} p_1 \widetilde{\psi}'} \right\rvert \nonumber \\
\eqexpl{by C.-S.}\ &\qquad\leq 2 N(N-1) \norm{\sqrt{v_{12}}q_1p_2\widetilde{\psi}} \norm{\sqrt{v_{12}}p_1p_2\widetilde{\psi}'} + 2 N \norm{q_1\widetilde{\psi}} \norm{V_1^{\dir} p_1\widetilde{\psi}'} \nonumber \\
\eqexpl{by Lem.~\ref{lem:psi_pvp_psi}} &\qquad\leq 2 N \left( \sup_{y} (v\star\rho)(y) \right)^{\frac{1}{2}} \norm{q_1\widetilde{\psi}} \left( \sup_{y} (v\star\rho)(y) \right)^{\frac{1}{2}} \norm{\widetilde{\psi}'} \nonumber \\
&\qquad\quad + 2 N \norm{q_1\widetilde{\psi}} \left(\sup_{y} (v\star\rho)^2(y) \right)^{\frac{1}{2}}  \norm{\widetilde{\psi}'} \nonumber \\
\eqexpl{by Lem.~\ref{lem:q_root_f}} &\qquad \leq 4 N \left( \sup_{y} (v\star\rho)(y) \right) \Big( 2 N^{-1} \alpha_{m^{(\gamma)}} \Big)^{\frac{1}{2}} \Big( N^{-\gamma} \Big)^{\frac{1}{2}} \nonumber \\
&\qquad = 4\sqrt{2} \left( \sup_{y} (v\star\rho)(y) \right) N^{\frac{1}{2}-\frac{\gamma}{2}},
\end{align}
which of course also holds when $V_1^\dir=0$.

\absatz

\textbf{The} $\boldsymbol{qq}$-$\boldsymbol{pp}$ \textbf{term.}  Using Lemma~\ref{lem:root_f_hat}, the antisymmetry of $\psi$ and Cauchy-Schwarz (C.-S.), we find that
\begin{align}\label{term_b_estimate}
&\left\lvert N \, \Im\, \bigSCP{\psi}{\left(\widehat{m^{(\gamma)}}-\widehat{m^{(\gamma)}}_{-2}\right)q_1q_2 \Big( (N-1)v_{12} \Big) p_1p_2 \psi} \right\rvert \nonumber \\
&\qquad = \bigg\lvert N \, \Im\, \Big\langle\!\!\Big\langle \underbrace{\psi, \left(\widehat{m^{(\gamma)}}-\widehat{m^{(\gamma)}}_{-2}\right)^{\frac{1}{2}}}_{:= \widetilde{\psi}} q_1q_2 \Big( (N-1)v_{12} \Big) \times \nonumber \\
&\qquad \quad \quad \times p_1p_2 \underbrace{\left(\widehat{m^{(\gamma)}}_{2}-\widehat{m^{(\gamma)}} + m^{(\gamma)}(N-1)P_{N,N-1} + m^{(\gamma)}(N)P_{N,N} \right)^{\frac{1}{2}} \psi}_{:=\widetilde{\psi}'} \Big\rangle\!\!\Big\rangle \bigg\rvert \nonumber \\
&\qquad = \left\lvert (N-1) N \, \Im\, \bigSCP{\widetilde{\psi}}{q_1q_2 \, v_{12} \, p_1p_2 \widetilde{\psi}'} \right\rvert \nonumber \\
\eqexpl{by antisym.} &\qquad = \bigg\lvert N \, \Im\, \bigSCP{\widetilde{\psi}}{q_1 \sum_{m=2}^N q_m \, v_{1m} \, p_1p_m \widetilde{\psi}'} \bigg\rvert \nonumber \\
\eqexpl{by C.-S.} &\qquad \leq  N \norm{q_1\widetilde{\psi}} \norm{\sum_{m=2}^N q_m \, v_{1m} \, p_1p_m \widetilde{\psi}'}.
\end{align}
From Lemma~\ref{lem:q_root_f} we have
\be\label{term_b_part1}
\norm{q_1\widetilde{\psi}}^2 \leq 6 \, N^{-1} \, \alpha_{m^{(\gamma)}}.
\ee
The trick of shifting one $q$ to the right-hand side of the scalar product is now done in the following calculation. Using Lemmas~\ref{lem:psi_qppvvppq_psi}, \ref{lem:psi_ppv2pp_psi} and \ref{lem:q_root_f} we find
\begin{align}\label{term_b_part2}
 \norm{\sum_{m=2}^N q_m \, v_{1m} \, p_1p_m \widetilde{\psi}'}^2 &= \sum_{\substack{m,n=2 \\ m \neq n}}^N \bigSCP{\widetilde{\psi}'}{p_1 p_m v_{1m} q_m q_n v_{1n} p_n p_1 \widetilde{\psi}'} \nonumber \\
&\quad + \sum_{m=2}^N \bigSCP{\widetilde{\psi}'}{p_1 p_m v_{1m} q_m v_{1m} p_m p_1 \widetilde{\psi}'} \nonumber \\
&= (N-1)(N-2) \bigSCP{\widetilde{\psi}'}{q_3 p_1 p_2 v_{12} v_{13} p_1 p_3 q_2 \widetilde{\psi}'} \nonumber \\
&\quad + (N-1) \bigSCP{\widetilde{\psi}'}{p_1 p_2 v_{12}q_2v_{12} p_1 p_2 \widetilde{\psi}'} \nonumber \\
\eqexp{by Lem.~\ref{lem:psi_qppvvppq_psi}} & \leq \left( \sup_{y} (v\star\rho)(y) \right)^2 \bigSCP{\widetilde{\psi}'}{q_1 \widetilde{\psi}'} \nonumber \\
\eqexp{by Lem.~\ref{lem:psi_ppv2pp_psi}} & \quad + N^{-1} \int (v^2\star\rho)(y)\,\rho(y)\,d^3y ~ \bigSCP{\widetilde{\psi}'}{\widetilde{\psi}'} \nonumber \\
\eqexp{by Lem.~\ref{lem:q_root_f}} & \leq \left( \sup_{y} (v\star\rho)(y) \right)^2 2N^{-1}\alpha_{m^{(\gamma)}} + \int (v^2\star\rho)(y)\,\rho(y)\,d^3y ~ 2N^{-1-\gamma}
\end{align}
With \eqref{term_b_part1} and \eqref{term_b_part2} we can continue the estimate from \eqref{term_b_estimate}:
\begin{align}\label{term_b_part3}
& N \norm{q_1\widetilde{\psi}} \norm{\sum_{m=2}^N q_m \, v_{1m} \, p_1p_m \widetilde{\psi}'} \nonumber \\
&\quad \leq N \left( 6 \, N^{-1} \, \alpha_{m^{(\gamma)}} \right)^{\frac{1}{2}} \left( \left( \sup_{y} (v\star\rho)(y) \right)^2 2N^{-1}\alpha_{m^{(\gamma)}} + \int (v^2\star\rho)(y)\,\rho(y)\,d^3y ~ 2N^{-1-\gamma} \right)^{\frac{1}{2}} \nonumber \\
&\quad \leq \sqrt{12} \left( \left( \sup_{y} (v\star\rho)(y) \right)^2 \alpha_{m^{(\gamma)}}^2 + \int (v^2\star\rho)(y)\,\rho(y)\,d^3y~ \alpha_{m^{(\gamma)}} N^{-\gamma} \right)^{\frac{1}{2}}.
\end{align}
This proves \eqref{term_2}. For the alternative estimate \eqref{term_2_alt}, we use Lemma~\ref{lem:psi_ppv2pp_psi} instead of Lemma~\ref{lem:psi_qppvvppq_psi} in \eqref{term_b_part2}; that is, we bound
\begin{align}\label{term_b_part4}
& (N-1)(N-2) \bigSCP{\widetilde{\psi}'}{q_3 p_1 p_2 v_{12} v_{13} p_1 p_3 q_2 \widetilde{\psi}'} \nonumber \\
\eqexpl{by C.-S.} &\qquad \leq (N-1)(N-2) \norm{v_{12} p_1 p_2 q_3\widetilde{\psi}'} \norm{v_{13} p_1 p_3 q_2 \widetilde{\psi}'} \nonumber \\
\eqexpl{by Lem.~\ref{lem:psi_ppv2pp_psi}} &\qquad \leq \int (v^2\star\rho)(y)\,\rho(y)\,d^3y ~ \bigSCP{\widetilde{\psi}'}{q_1 \widetilde{\psi}'}.
\end{align}
Using that, we derive the bound \eqref{term_b_part3} with $\sup_{y} (v\star\rho)^2(y)$ replaced by $\int (v^2\star\rho)(y)\,\rho(y)\,d^3y$, i.e.,
\begin{align}\label{term_b_part5}
& N \norm{q_1\widetilde{\psi}} \norm{\sum_{m=2}^N q_m \, v_{1m} \, p_1p_m \widetilde{\psi}'} \nonumber \\
&\qquad \leq \sqrt{12} \left( \int (v^2\star\rho)(y)\,\rho(y)\,d^3y \right)^{\frac{1}{2}} \Big( \alpha_{m^{(\gamma)}}^2 + \alpha_{m^{(\gamma)}} N^{-\gamma} \Big)^{\frac{1}{2}} \nonumber \\
&\qquad \leq \sqrt{12} \left( \int (v^2\star\rho)(y)\,\rho(y)\,d^3y \right)^{\frac{1}{2}} \Big( \alpha_{m^{(\gamma)}}^2 + 2\alpha_{m^{(\gamma)}} N^{-\gamma} + N^{-2\gamma} \Big)^{\frac{1}{2}} \nonumber \\
&\qquad = \sqrt{12} \left( \int (v^2\star\rho)(y)\,\rho(y)\,d^3y \right)^{\frac{1}{2}} \Big( \alpha_{m^{(\gamma)}} + N^{-\gamma} \Big),
\end{align}
which is the desired bound \eqref{term_2_alt}.

\absatz

\textbf{The} $\boldsymbol{qq}$-$\boldsymbol{pq}$ \textbf{term.}  Using Lemma~\ref{lem:root_f_hat}, we find that
\begin{align}
& \left\lvert 2N \, \Im\, \bigSCP{\psi}{\left(\widehat{m^{(\gamma)}}-\widehat{m^{(\gamma)}}_{-1}\right)q_1q_2 \Big( (N-1)v_{12} \Big) p_1q_2 \psi} \right\rvert \nonumber \\
&\qquad = \bigg\lvert 2N \, \Im\, \Big\langle\!\!\Big\langle \underbrace{\psi, \left(\widehat{m^{(\gamma)}}-\widehat{m^{(\gamma)}}_{-1}\right)^{\frac{1}{2}}}_{:= \widetilde{\psi}} q_1q_2 \Big( (N-1)v_{12} \Big) \times \nonumber \\
&\qquad \quad \quad \times p_1q_2 \underbrace{\left(\widehat{m^{(\gamma)}}_1-\widehat{m^{(\gamma)}} +m(N)P_{N,N} \right)^{\frac{1}{2}} \psi}_{:=\widetilde{\psi}'} \Big\rangle\!\!\Big\rangle \bigg\rvert \nonumber \\
&\qquad = \left\lvert 2N \, \Im\, \bigSCP{\widetilde{\psi}}{q_1q_2 \Big( (N-1)v_{12} \Big) p_1q_2 \widetilde{\psi}'} \right\rvert.
\end{align}
Using Cauchy-Schwarz (C.-S.) and Lemmas \ref{lem:q_root_f} and \ref{lem:psi_pvp_psi} we find
\begin{align}
& \left\lvert 2N \, \Im\, \bigSCP{\widetilde{\psi}}{q_1q_2 \Big( (N-1)v_{12} \Big) p_1q_2 \widetilde{\psi}'} \right\rvert \nonumber \\
\eqexpl{by C.-S.} &\qquad \leq 2N (N-1) \norm{q_1q_2\widetilde{\psi}} \norm{v_{12}p_1q_2\widetilde{\psi}'} \nonumber \\
\eqexpl{by Lem.~\ref{lem:q_root_f}} &\qquad \leq 2N (N-1) \left( 4N^{\gamma-2} \alpha_{m^{(\gamma)}} \right)^{\frac{1}{2}} \left( \bigSCP{\widetilde{\psi}'}{q_2 \, p_1v_{12}^2p_1 \, q_2 \widetilde{\psi}'}\right)^{\frac{1}{2}} \nonumber \\
\eqexpl{by Lem.~\ref{lem:psi_pvp_psi}} &\qquad \leq 2N (N-1) \left( 4N^{\gamma-2} \alpha_{m^{(\gamma)}} \right)^{\frac{1}{2}} \left( (N-1)^{-1} \left( \sup_{y} (v^2\star\rho)(y) \right) \bigSCP{\widetilde{\psi}'}{q_2 \widetilde{\psi}'}\right)^{\frac{1}{2}} \nonumber \\
\eqexpl{by Lem.~\ref{lem:q_root_f}} &\qquad \leq 2N (N-1) \left( 4N^{\gamma-2} \alpha_{m^{(\gamma)}} \right)^{\frac{1}{2}} \left( (N-1)^{-1} \left( \sup_{y} (v^2\star\rho)(y) \right) N^{-1} \, \alpha_{m^{(\gamma)}} \right)^{\frac{1}{2}} \nonumber \\
&\qquad \leq 4 \left( \sup_{y} (v^2\star\rho)(y) \right)^{\frac{1}{2}} \, N^{\frac{\gamma}{2}} \, \alpha_{m^{(\gamma)}}.
\end{align}
\end{proof}

\section{Proof of the Theorems}\label{sec:proofs_main_theorems_gen}
Let us first state the well-known Gronwall Lemma which we need in the proofs of the main theorems.

\begin{lemma}[Gronwall]\label{lem:gronwall}
Let $t \geq 0$ and let $\eta:\RRR \to \RRR$ be a differentiable function that satisfies the estimate
\be
\partial_t \eta(t) \leq C(t) (\eta(t) + \varepsilon)
\ee
with some real constant $\varepsilon$ and continuous function $C:\RRR \to \RRR$. Then for all $t \geq 0$,
\be
\eta(t) \leq e^{\int_0^t C(s)ds}\eta(0) + \left( e^{\int_0^t C(s)ds} - 1 \right) \varepsilon.
\ee
\end{lemma}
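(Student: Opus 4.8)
The plan is to reduce the statement to the homogeneous case by an affine shift and then run the standard integrating-factor argument. First I would set $\zeta(t) := \eta(t) + \varepsilon$, so that the hypothesis rewrites as
\[
\partial_t \zeta(t) = \partial_t \eta(t) \leq C(t)\bigl(\eta(t)+\varepsilon\bigr) = C(t)\,\zeta(t) \qquad \text{for all } t\geq 0 .
\]
It then suffices to prove the homogeneous bound $\zeta(t) \leq e^{\int_0^t C(s)\,ds}\,\zeta(0)$: substituting back $\zeta = \eta + \varepsilon$ and rearranging yields exactly $\eta(t) \leq e^{\int_0^t C(s)\,ds}\eta(0) + \bigl(e^{\int_0^t C(s)\,ds}-1\bigr)\varepsilon$, which is the claimed inequality. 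Note that $\int_0^t C(s)\,ds$ exists as a Riemann integral since $C$ is continuous, so all the exponentials appearing are well-defined.

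For the homogeneous bound I would introduce the integrating factor $G(t) := \exp\bigl(-\int_0^t C(s)\,ds\bigr)$, which is differentiable (again because $C$ is continuous), strictly positive, and satisfies $G'(t) = -C(t)G(t)$ with $G(0)=1$. Consider the auxiliary function $g(t) := G(t)\,\zeta(t)$. Since $\eta$, and hence $\zeta$, is differentiable and $G$ is differentiable, $g$ is differentiable with
\[
g'(t) = G'(t)\zeta(t) + G(t)\zeta'(t) = G(t)\bigl(\zeta'(t) - C(t)\zeta(t)\bigr) \leq 0 ,
\]
where the last inequality uses $G(t) > 0$ together with $\zeta'(t) \leq C(t)\zeta(t)$. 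Therefore $g$ is non-increasing on $[0,\infty)$, so $g(t) \leq g(0) = \zeta(0)$ for all $t \geq 0$; dividing by $G(t) > 0$ gives $\zeta(t) \leq G(t)^{-1}\zeta(0) = e^{\int_0^t C(s)\,ds}\zeta(0)$, as desired. Combining with the reduction of the first paragraph completes the proof.

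I do not expect any genuine obstacle here; the argument is entirely routine. The only two points worth a line of care are that $C$ is allowed to change sign — which causes no trouble, since positivity of $C$ is never used — and that $\eta$ is assumed merely differentiable rather than $C^1$, so that $g'$ need not be continuous; this is harmless because $g' \leq 0$ pointwise on an interval already forces $g$ to be non-increasing there (for instance by applying the mean value theorem on an arbitrary subinterval), with no regularity of $g'$ required. Thus the only mild subtlety is bookkeeping, and the substitution $\zeta = \eta + \varepsilon$ is precisely what makes the inhomogeneous term $\varepsilon$ come out in the form $\bigl(e^{\int_0^t C(s)\,ds}-1\bigr)\varepsilon$.
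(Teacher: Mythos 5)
Your proof is correct, and it differs from the paper's in one small but genuine respect: how the constant $\varepsilon$ is absorbed. The paper first proves the homogeneous bound (exactly your integrating-factor step, written as $\partial_t(f/g)\leq 0$ with $g(t)=e^{\int_0^t C(s)ds}$ rather than multiplying by $e^{-\int_0^t C(s)ds}$, which is the same computation), and then handles the inhomogeneity by writing down the explicit candidate $\xi(t)=e^{\int_0^t C(s)ds}\eta(0)+\bigl(e^{\int_0^t C(s)ds}-1\bigr)\varepsilon$, checking that $\partial_t\xi = C(\xi+\varepsilon)$ with $\xi(0)=\eta(0)$, and applying the homogeneous bound to the difference $\eta-\xi$. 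You instead exploit that the right-hand side is $C(t)(\eta+\varepsilon)$ exactly, so the affine shift $\zeta=\eta+\varepsilon$ turns the inhomogeneous inequality into a homogeneous one in a single line, and the claimed form of the bound drops out on substituting back. Your route is marginally shorter and avoids having to guess and verify the comparison function; the paper's route is the one that generalizes when the inhomogeneity is not a constant multiple of the same $C(t)$ (e.g.\ $\partial_t\eta\leq C(t)\eta + h(t)$), since a simple shift no longer works there. Your closing remarks — that the sign of $C$ is irrelevant and that pointwise $g'\leq 0$ for a merely differentiable $g$ still forces monotonicity via the mean value theorem — are correct and address the only regularity point the paper leaves implicit.
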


Since different versions of the Gronwall Lemma exist, let us here briefly prove the one we stated above.

\begin{proof}
Let $0 \leq t < \infty$. Let $f:\RRR \to \RRR$ be a differentiable function that satisfies
\be
\partial_t f(t) \leq C(t)f(t)
\ee
and let
\be
g(t) = e^{\int_0^tC(s)ds},
\ee
in particular, $g >0$. Then
\be
\partial_t\left(\frac{f}{g}\right) = \frac{(\partial_t f)g-f(\partial_t g)}{g^2} \leq \frac{Cfg-fCg}{g^2} = 0.
\ee
Since $\frac{f}{g}(0) = f(0)$ it follows that $\frac{f}{g}(t) \leq f(0)$, i.e.,
\be\label{gronwall1}
f(t) \leq g(t)f(0) = f(0)e^{\int_0^tC(s)ds}.
\ee
We now define the function $\xi:\RRR \to \RRR$ as
\be
\xi(t) = e^{\int_0^t C(s)ds}\eta(0) + \left( e^{\int_0^t C(s)ds} - 1 \right) \varepsilon.
\ee
It follows that
\be
\partial_t \xi(t) = C(t) (\xi(t) + \varepsilon)
\ee
and $\xi(0) = \eta(0)$. We then have
\be
\partial_t(\eta-\xi) \leq C (\eta + \varepsilon) - C (\xi + \varepsilon) = C (\eta-\xi)
\ee
and with \eqref{gronwall1}
\be
\eta(t)-\xi(t) \leq (\eta(0)-\xi(0))e^{\int_0^tC(s)ds} = 0,
\ee
i.e., $\eta(t) \leq \xi(t)$.
\end{proof}

Note that the Gronwall Lemma can indeed be applied to $\alpha_f(t)$, since it is differentiable due to the scalar product structure. We now prove the main theorems from Section~\ref{sec:main_theorem_mf_general_v}. 

\begin{proof}[Proof of Theorem~\ref{thm:estimates_terms_alpha_dot_beta_n}]
First, we split $v^{(N)} = v^{(N)}_{+} - v^{(N)}_{-}$, with $v^{(N)}_{+}, v^{(N)}_{-} \geq 0$. Accordingly, we have
\be\label{alpha_plus_minus}
\partial_t \alpha_f(t) = \Term_{+} - \Term_{-},
\ee
with 
\begin{align}
\Term_{\pm} &= 2 \, \Im\, \bigSCP{\psi^t}{N\left(\widehat{f}-\widehat{f}_{-1}\right)q_1 \left( (N-1)p_2v^{(N)}_{\pm,12}\,p_2 - V^{(N)}_{\pm,1} \right) p_1 \psi^t} \nonumber \\
&\quad + \Im\, \bigSCP{\psi^t}{N\left(\widehat{f}-\widehat{f}_{-2}\right)q_1q_2 (N-1)v^{(N)}_{\pm,12}\, p_1p_2 \psi^t} \nonumber \\
&\quad + 2 \, \Im\, \bigSCP{\psi^t}{N\left(\widehat{f}-\widehat{f}_{-1}\right)q_1q_2 (N-1)v^{(N)}_{\pm,12}\, p_1q_2 \psi^t}.
\end{align}
We proceed by estimating $|\Term_{+}|$ and $|\Term_{-}|$ separately. We use the bounds \eqref{term_1_dir} with $\Omega_N=\RRR^3$, \eqref{term_2_alt} and \eqref{term_3} from Lemma~\ref{lem:estimates_terms_alpha_dot_beta} with $\gamma=1$, and, due to $\int \rho_N^t = N$,
\be
\int (v^2\star\rho^t_N)(y)\,\rho^t_N(y)\,d^3y \leq \left( \sup_{y} (v^2\star\rho^t_N)(y) \right) N.
\ee
We then find
\begin{align}
|\Term_{\pm}| &\leq 4 \sqrt{D_1(t)} \, \Big( \alpha_n(t) + N^{-1} \Big) \nonumber \\ 
&\quad + \sqrt{12} \, \sqrt{D_1(t)} \Big( \alpha_n(t) + N^{-1} \Big) \nonumber \\
&\quad + 4 \sqrt{D_1(t)} \, \alpha_n(t) \nonumber \\
&\leq \frac{1}{2} C(t) \Big(\alpha_n(t) + N^{-1} \Big),
\end{align}
with $0<C(t) = 4\big(4+\sqrt{3}\big)\sqrt{D_1(t)} \leq 24\sqrt{D(t)}$. Therefore,
\be
\partial_t \alpha_n(t) \leq |\Term_{+}| + |\Term_{-}| \leq C(t) \, \Big(\alpha_n(t) + N^{-1} \Big).
\ee
Applying the Gronwall Lemma \ref{lem:gronwall} gives \eqref{main_alpha_ineq_n}.
\end{proof}

\noindent\textbf{Remarks.}
\begin{enumerate}
\setcounter{enumi}{\theremarks}
\item\label{itm:exch_term_order} Following up on Remark~\ref{itm:exch} after Theorem~\ref{thm:estimates_terms_alpha_dot_beta_n}, let us consider the size of the error we make by neglecting the exchange term. We suppose that the exchange term is of $O(N^{-\frac{2}{3}})$. It then gives an additional term $C(t) N^{-\frac{2}{3}} \sqrt{\alpha_n(t)}$ in the time derivative of $\alpha_n(t)$. Then (note that $(a+b)^2 \leq 2(a^2+b^2)$)
\begin{align}
\partial_t\alpha_n(t) &\leq C(t) \left( \alpha_n(t) + N^{-1} + N^{-\frac{2}{3}} \sqrt{\alpha_n(t)} \right) \nonumber \\
&\leq C(t) \left( \left( \sqrt{\alpha_n(t)} + N^{-\frac{2}{3}} \right)^2 + N^{-1} \right) \nonumber \\
&\leq 2 C(t) \left( \alpha_n(t) + N^{-\frac{4}{3}} + N^{-1} \right),
\end{align}
so the error in $\partial_t\alpha_n(t)$ is only of $O(N^{-\frac{4}{3}})$.
\end{enumerate}
\setcounter{remarks}{\theenumi}

Now we prove the most general version of our main theorems, using the counting functional $\alpha_{m^{(\gamma)}}(t)$.

\begin{proof}[Proof of Theorem~\ref{thm:estimates_terms_alpha_dot_beta_general}]
Under the assumptions \eqref{alpha_dot_m_ass_1}-\eqref{alpha_dot_m_ass_5} (which hold for $v^{(N)}_{\pm}$ separately), and using the splitting from \eqref{alpha_plus_minus} again, Lemma~\ref{lem:estimates_terms_alpha_dot_beta} gives
\begin{align}
\frac{1}{2} \, \partial_t \alpha_{m^{(\gamma)}}(t) &\leq 4 \sqrt{D_3(t)} \, N^{-\frac{\delta_3}{2}} \, \Big( \alpha_{m^{(\gamma)}}(t) + N^{-\gamma} \Big) + 4 \sqrt{2} D_4(t) \, N^{-\delta_4} \, \Big( \alpha_{m^{(\gamma)}}(t) + N^{-\gamma} \Big) \nonumber \\
&\quad + \sqrt{12} \, \Big( D_0(t)^2 \, \alpha_{m^{(\gamma)}}(t)^2 + D_2(t) \, \alpha_{m^{(\gamma)}}(t) \, N^{\delta_2-\gamma} \Big)^{\frac{1}{2}} \nonumber \\
&\quad + 8 \sqrt{D_1(t)} \, \alpha_{m^{(\gamma)}}(t).
\end{align}
Now note that for any $\alpha,C_N\geq 0$,
\be
\big( \alpha^2 + C_N \alpha \big)^{\frac{1}{2}} \leq \big( \alpha^2 + 2C_N\alpha + C_N^2 \big)^{\frac{1}{2}} = \alpha + C_N.
\ee
Using this we find that
\begin{align}
\partial_t \alpha_{m^{(\gamma)}}(t) &\leq C(t) \Big(\alpha_{m^{(\gamma)}}(t) + N^{-\delta} \Big),
\end{align}
where $0<\delta = \min\left\{ \gamma -\delta_2, \gamma + \frac{\delta_3}{2}, \gamma + \delta_4 \right\}$ and
\be
C(t) = 12 \max\bigg\{ 4 \sqrt{D_3(t)} N^{-\frac{\delta_3}{2}}, 4 \sqrt{2} D_4(t) N^{-\delta_4}, \sqrt{12} D_0(t), \sqrt{12} \frac{D_2(t)}{D_0(t)}, 8\sqrt{D_1(t)} \bigg\}.
\ee
(Note that $D_0(t)>0$ for all $t\geq 0$.) Applying the Gronwall Lemma \ref{lem:gronwall} gives \eqref{main_alpha_ineq_m}.
\end{proof}

\chapter{Proof of Results for $-\Delta$ and Interactions $|x|^{-s}$}\label{sec:mean-field_scalings_general}
Note that in this chapter, for ease of notation, $C$ denotes a constant which can be different from line to line.

\section{Kinetic Energy Inequalities}\label{sec:energy inequalities}
In this section, we state some well-known inequalities, which we use in Chapter~\ref{sec:mean-field_scalings} to show that the conditions of Theorems \ref{thm:estimates_terms_alpha_dot_beta_n} and \ref{thm:estimates_terms_alpha_dot_beta_general} hold if $E_{\kin,\mf}(t) \leq AN$. We denote the kinetic energy of a wave function $\psi$ by $E_{\kin,\psi} = \SCP{\psi}{\sum_{j=1}^N (-\Delta_{x_j}) \psi}$, and the diagonal of the reduced one-particle density matrix, normalized to $N$, by $\rho^{\psi}(x)$. With the notation from \eqref{definition_dens_mat}, $\rho^{\psi}(x) = N\mu_1^{\psi}(x;x)$. Recall that $H^1(\RRR^d) = \left\{ f \in L^2(\RRR^d):  \norm{\nabla f} < \infty \right\}$.

First, we need the kinetic energy inequality due to Lieb and Thirring \cite{lieb:1975}, see also \cite{lieb:2010}. (Note that we state the lemma here in a slightly less general version than in \cite{lieb:2010}.)

\begin{lemma}[Corollary 4.1 in \cite{lieb:2010}]\label{lem:Lieb_5_3}
Let $\psi \in H^1(\RRR^{3N})$ be antisymmetric. Then
\be
\int_{\RRR^3} \rho^{\psi}(x)^{\frac{5}{3}} \, d^3x \leq \frac{5}{9} (2\pi)^{-\frac{2}{3}} ~ E_{\kin,\psi}.
\ee
\end{lemma}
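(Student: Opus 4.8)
### Proof Plan for the Lieb–Thirring Kinetic Energy Inequality

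The plan is to reduce the many-body inequality to the one-particle Lieb–Thirring inequality, whose proof I would not reconstruct from scratch but rather cite from \cite{lieb:1975, lieb:2010}. The key observation is that for an antisymmetric $\psi \in H^1(\RRR^{3N})$, the reduced one-particle density matrix $\mu_1^\psi$ is a non-negative trace-class operator on $L^2(\RRR^3)$ satisfying the crucial operator bound $\norm[\op]{\mu_1^\psi} \leq \frac{1}{N}$ (this is Lemma~\ref{lem:properties_density_matrix}(c), which follows from the Pauli principle, i.e., the antisymmetry). Equivalently, after normalizing to $\rho^\psi = N \mu_1^\psi(\cdot;\cdot)$, the operator $\gamma := N \mu_1^\psi$ has eigenvalues in $[0,1]$, i.e., $0 \leq \gamma \leq 1$ as an operator, with $\tr\gamma = N$ and density $\rho^\psi(x) = \gamma(x;x)$.

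First I would record that the kinetic energy admits the representation $E_{\kin,\psi} = \tr\big((-\Delta)\,\gamma\big) = \sum_i \lambda_i \norm{\nabla u_i}^2$, where $\gamma = \sum_i \lambda_i \ketbr{u_i}$ is the spectral decomposition of $\gamma$ with $\lambda_i \in [0,1]$ and $\{u_i\}$ orthonormal; this is just unfolding the definition \eqref{definition_dens_mat} of $\mu_1^\psi$ and integrating by parts in each coordinate. Second, I would invoke the one-particle Lieb–Thirring inequality in its form for density matrices: for any operator $0 \leq \gamma \leq 1$ on $L^2(\RRR^3)$ with finite kinetic energy, one has
\be
\int_{\RRR^3} \rho_\gamma(x)^{\frac{5}{3}}\,d^3x \leq \frac{5}{9}(2\pi)^{-\frac{2}{3}}\,\tr\big((-\Delta)\gamma\big),
\ee
where $\rho_\gamma(x) = \gamma(x;x)$. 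Applying this with $\gamma = N\mu_1^\psi$ and $\rho_\gamma = \rho^\psi$, and noting $\tr((-\Delta)\gamma) = E_{\kin,\psi}$, gives exactly the claimed bound. The constant $\frac{5}{9}(2\pi)^{-2/3}$ is precisely the semiclassical constant appearing in the sharp-constant conjecture; the version with this constant is what is stated as Corollary 4.1 in \cite{lieb:2010}, so I would cite it directly rather than derive it.

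The one genuinely substantive point — and the main obstacle if one insisted on a self-contained argument — is the passage from wave functions to the class of density matrices $0 \leq \gamma \leq 1$: one must check that the operator inequality $\norm[\op]{\mu_1^\psi}\leq 1/N$ is the \emph{only} feature of fermionic $\psi$ that enters, so that the many-body problem genuinely collapses to a one-body variational problem over admissible $\gamma$'s. This is handled cleanly by Lemma~\ref{lem:properties_density_matrix}, so in the present context there is essentially no obstacle: the proof is a two-line combination of Lemma~\ref{lem:properties_density_matrix}(c), the trace representation of $E_{\kin,\psi}$, and the cited one-particle Lieb–Thirring bound. I would therefore write the proof as: (i) identify $\gamma = N\mu_1^\psi$ and note $0\leq\gamma\leq 1$, $\rho^\psi = \gamma(\cdot;\cdot)$, $E_{\kin,\psi}=\tr((-\Delta)\gamma)$; (ii) apply the density-matrix form of the Lieb–Thirring inequality; (iii) read off the stated inequality.
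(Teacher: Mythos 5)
Your proposal is correct and, in substance, matches the paper, which does not prove this lemma at all but takes it directly from the literature as Corollary 4.1 of \cite{lieb:2010}; your reduction via $\gamma = N\mu^{\psi}_1$ with $0\leq\gamma\leq 1$ (Lemma~\ref{lem:properties_density_matrix}(c)), $\rho^{\psi}(x)=\gamma(x;x)$ and $E_{\kin,\psi}=\tr\big((-\Delta)\gamma\big)$, followed by the density-matrix form of the Lieb--Thirring kinetic inequality, is exactly the standard route by which that corollary is obtained, so citing it (in either form) is entirely in the spirit of the paper. One small correction to a side remark: the constant $\frac{5}{9}(2\pi)^{-\frac{2}{3}}$ is \emph{not} the semiclassical (Thomas--Fermi) constant, which in the units $\hbar=1=2m$ of the paper would be $\frac{5}{3}(6\pi^2)^{-\frac{2}{3}}$; it is the weaker constant actually proved in \cite{lieb:2010}, corresponding by duality to the bound $L_{1,3}\leq\frac{\pi}{\sqrt{3}}L_{1,3}^{cl}$ on the Lieb--Thirring constant. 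Since you take the constant from the cited corollary rather than from this (mis)identification, the slip does not affect the validity of your argument.
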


We mostly need this lemma for antisymmetrized product states. Lemma~\ref{lem:Lieb_5_3} then says that for orthonormal  $\varphi_1,\ldots,\varphi_N \in H^1(\RRR^3)$, $\rho_N(x) = \sum_{i=1}^N |\varphi_i(x)|^2$ and $E_{\kin,\mf} = \sum_{i=1}^N \norm{\nabla \varphi_i}^2$, we have
\be
\int_{\RRR^3} \rho_N(x)^{\frac{5}{3}} \, d^3x \leq \frac{5}{9} (2\pi)^{-\frac{2}{3}} ~ E_{\kin,\mf}.
\ee

From Lemma~\ref{lem:Lieb_5_3} it immediately follows a rigorous version of the statement that any fermionic wave function with kinetic energy of $O(N)$ ``occupies a volume'' at least proportional to $N$. This is captured by the following lemma which is similar to Theorem 7.2 in \cite{lieb:2010}.

\begin{lemma}\label{lem:Omega}
Let $\psi \in H^1(\RRR^{3N})$ be antisymmetric. Let $\Omega$ be a measurable subset of $\RRR^3$ with volume $\vol(\Omega) = \int_{\Omega}dx$. Then
\be
\int_{\Omega} \rho^{\psi}(x) \, d^3x \leq C \, (\vol(\Omega))^{\frac{2}{5}} \left( E_{\kin,\psi} \right)^{\frac{3}{5}}.
\ee
\end{lemma}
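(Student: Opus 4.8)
The plan is to combine H\"older's inequality with the Lieb--Thirring kinetic energy inequality of Lemma~\ref{lem:Lieb_5_3}. First I would rewrite the integral over $\Omega$ as an integral against an indicator function, $\int_{\Omega} \rho^{\psi}(x)\, d^3x = \int_{\RRR^3} \rho^{\psi}(x)\, \id_{\Omega}(x)\, d^3x$, and then apply H\"older's inequality with the conjugate exponents $p = \frac{5}{3}$ and $q = \frac{5}{2}$ (note $\frac{1}{p} + \frac{1}{q} = \frac{3}{5} + \frac{2}{5} = 1$). This gives
\be
\int_{\Omega} \rho^{\psi}(x)\, d^3x \leq \left( \int_{\RRR^3} \rho^{\psi}(x)^{\frac{5}{3}}\, d^3x \right)^{\frac{3}{5}} \left( \int_{\RRR^3} \id_{\Omega}(x)^{\frac{5}{2}}\, d^3x \right)^{\frac{2}{5}} = \left( \int_{\RRR^3} \rho^{\psi}(x)^{\frac{5}{3}}\, d^3x \right)^{\frac{3}{5}} \big( \vol(\Omega) \big)^{\frac{2}{5}}.
\ee

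Next I would bound the first factor using Lemma~\ref{lem:Lieb_5_3}: since $\psi \in H^1(\RRR^{3N})$ is antisymmetric, $\int_{\RRR^3} \rho^{\psi}(x)^{\frac{5}{3}}\, d^3x \leq \frac{5}{9}(2\pi)^{-\frac{2}{3}}\, E_{\kin,\psi}$. Substituting this into the previous estimate yields
\be
\int_{\Omega} \rho^{\psi}(x)\, d^3x \leq \left( \frac{5}{9}(2\pi)^{-\frac{2}{3}} \right)^{\frac{3}{5}} \big( \vol(\Omega) \big)^{\frac{2}{5}} \big( E_{\kin,\psi} \big)^{\frac{3}{5}},
\ee
which is exactly the claimed inequality with explicit constant $C = \left( \frac{5}{9}(2\pi)^{-\frac{2}{3}} \right)^{\frac{3}{5}}$.

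There is essentially no serious obstacle here; the proof is a one-line application of H\"older combined with a cited inequality. The only minor points to check are that $\rho^{\psi}$ belongs to $L^{5/3}(\RRR^3)$ (finiteness of the right-hand side of Lemma~\ref{lem:Lieb_5_3}, which is guaranteed by $\psi \in H^1$) so that H\"older is legitimately applicable, and that enlarging the domain of integration in the first factor from $\Omega$ to $\RRR^3$ only increases the bound, since $\rho^{\psi} \geq 0$. For the application to antisymmetrized product states used in Chapter~\ref{sec:mf_fermions_const_E_kin_physics}, one simply specializes $\rho^{\psi}$ to $\rho_N = \sum_{i=1}^N |\varphi_i|^2$ and $E_{\kin,\psi}$ to $E_{\kin,\mf} = \sum_{i=1}^N \norm{\nabla \varphi_i}^2$, recovering the informal statement that a fermionic wave function with kinetic energy of $O(N)$ must occupy a volume at least of order $N$.
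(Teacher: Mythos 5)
Your proof is correct and is essentially identical to the paper's: both apply H\"older's inequality with exponents $\frac{5}{3}$ and $\frac{5}{2}$ to split $\int_{\Omega}\rho^{\psi}$ into $(\vol(\Omega))^{\frac{2}{5}}$ times $\big(\int (\rho^{\psi})^{\frac{5}{3}}\big)^{\frac{3}{5}}$, and then invoke the Lieb--Thirring bound of Lemma~\ref{lem:Lieb_5_3}. Your version additionally records the explicit constant $C = \big(\tfrac{5}{9}(2\pi)^{-\frac{2}{3}}\big)^{\frac{3}{5}}$, which is a harmless refinement of the same argument.
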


\begin{proof}
By H\"older's inequality and Lemma~\ref{lem:Lieb_5_3},
\be
\int_{\Omega} \rho^{\psi} \leq \left( \int_{\Omega} 1^{\frac{5}{2}} \right)^{\frac{2}{5}} \left( \int_{\Omega} \left(\rho^\psi\right)^{\frac{5}{3}} \right)^{\frac{3}{5}} \leq (\vol(\Omega))^{\frac{2}{5}} \left( C E_{\kin,\psi} \right)^{\frac{3}{5}}.
\ee
\end{proof}

Thus, for wave functions with $E_{\kin,\psi} \leq CN$, it follows that, e.g., for $\Omega_{N,\varepsilon} = \left[0,N^{\frac{(1-\varepsilon)}{3}}\right]^3$, i.e., $\vol(\Omega_{N,\varepsilon})=N^{1-\varepsilon}$, $\varepsilon>0$,
\be
\int_{\Omega_{N,\varepsilon}} \rho^{\psi}(x) \, d^3x \leq C N^{1-\frac{2}{5}\varepsilon},
\ee
i.e., there are still $C N$ particles outside the volume $\Omega_{N,\varepsilon}$ (recall $\int \rho^{\psi} = N$).

For Section~\ref{sec:mean-field_scalings} we also need the Hardy-Littlewood-Sobolev inequality (see, e.g., \cite[Theorem 4.3]{liebloss:2001}), here stated for three dimensions.

\begin{lemma}[Hardy-Littlewood-Sobolev Inequality]\label{lem:Hardy_Littlewood_Sobolev}
Let $p,r > 1$ and $0<\lambda<3$ with $\frac{1}{p} + \frac{\lambda}{3} + \frac{1}{r} = 2$. Let $f \in L^p(\RRR^3)$ and $h \in L^r(\RRR^3)$. Then there is a constant $C=C(\lambda,p)$, such that
\be\label{Hardy_Littlewood}
\left| \int_{\RRR^3} \int_{\RRR^3} f(x) |x-y|^{-\lambda} h(y) \, d^3x \, d^3y \right| \leq C \norm[p]{f} \norm[r]{h}.
\ee
\end{lemma}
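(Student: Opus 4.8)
The plan is to deduce the inequality from the \textbf{weak Young inequality} together with the Marcinkiewicz interpolation theorem; since no sharp constant is asked for, rearrangement techniques can be avoided. Replacing $f$ and $h$ by $|f|$ and $|h|$, we may assume $f,h\geq 0$. Writing $K_\lambda(x)=|x|^{-\lambda}$ and using Fubini, the left-hand side of \eqref{Hardy_Littlewood} is $\int_{\RRR^3} f(x)\,(K_\lambda\star h)(x)\,d^3x$, so by H\"older's inequality with exponents $p$ and $p'=\tfrac{p}{p-1}$ it suffices to prove
\be
\norm[p']{K_\lambda\star h}\leq C\,\norm[r]{h}.
\ee
The constraint $\tfrac1p+\tfrac\lambda3+\tfrac1r=2$ is equivalent to $\tfrac1{p'}=\tfrac\lambda3+\tfrac1r-1$, and since $p,r>1$ and $0<\lambda<3$ one checks that $\tfrac1{p'}\in(0,\tfrac1r)$, i.e.\ $1<r<p'<\infty$; thus the claim is exactly that convolution with $K_\lambda$ maps $L^r(\RRR^3)$ boundedly into $L^{p'}(\RRR^3)$, with the exponents fixed by the dilation behaviour of $K_\lambda$.

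The core step is the weak Young inequality. First, $K_\lambda$ lies in weak-$L^{3/\lambda}$: its distribution function is $|\{x:|x|^{-\lambda}>a\}|=c_3\,a^{-3/\lambda}$, so $\norm[3/\lambda,\infty]{K_\lambda}<\infty$, and crucially $3/\lambda>1$ because $\lambda<3$. Now, for $g\in L^{3/\lambda,\infty}(\RRR^3)$ I would show that $T_g h:=g\star h$ is of strong type $(\tilde r,\tilde p')$ whenever $\tfrac1{\tilde p'}=\tfrac\lambda3+\tfrac1{\tilde r}-1\in(0,1)$, by first proving the corresponding weak-type estimates. Fixing $\tilde h$ with $\norm[\tilde r]{\tilde h}=1$ and a height $t>0$, decompose $g=g\,\id_{\{|g|\leq s\}}+g\,\id_{\{|g|>s\}}$ at a level $s=s(t)$. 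The ``low'' part lies in $L^{\tilde r'}$ (legitimate since $\tfrac1{\tilde r'}<\tfrac\lambda3$, i.e.\ $\tfrac1{\tilde p'}>0$), with $L^{\tilde r'}$-norm controlled by a power of $s$, so by the ordinary Young inequality $\norm[\infty]{g\,\id_{\{|g|\leq s\}}\star\tilde h}$ can be made $\leq t/2$ by choosing $s$ suitably small; the ``high'' part lies in $L^1$ with $\norm[1]{g\,\id_{\{|g|>s\}}}\leq C\,s^{1-3/\lambda}$ (finite since $3/\lambda>1$), so Young's inequality puts $g\,\id_{\{|g|>s\}}\star\tilde h$ in $L^{\tilde r}$ and Chebyshev's inequality bounds $|\{|g\,\id_{\{|g|>s\}}\star\tilde h|>t/2\}|$. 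Since the low part does not contribute above $t/2$, substituting the chosen $s$ and tracking the powers gives $|\{|g\star\tilde h|>t\}|\leq C(\norm[3/\lambda,\infty]{g}/t)^{\tilde p'}$, the exponent $\tilde p'$ being forced by scaling. Doing this for two exponents $r_0<r<r_1$ near $r$ (all lying on the line $\tfrac1{p'}=\tfrac\lambda3+\tfrac1r-1$, and with $r_j\leq p'_j$ because $3/\lambda\geq1$), the Marcinkiewicz interpolation theorem upgrades the two weak-type bounds to the strong bound $\norm[p']{g\star h}\leq C\norm[r]{h}$ at $r$.

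Applying this with $g=K_\lambda$ and combining with the H\"older step proves \eqref{Hardy_Littlewood}; one runs the argument first for bounded, compactly supported $h$ (so that $K_\lambda\star h$ is finite a.e.) and then extends to all $h\in L^r$ by density, the uniform estimate also yielding absolute convergence of the double integral. The main obstacle is the weak Young inequality itself: the bookkeeping with distribution functions for the $t$-dependent splitting of $g$, checking that optimizing over $s$ reproduces precisely the scaling exponent $p'$, and invoking Marcinkiewicz with the correct pairs of endpoint exponents. The reduction to a convolution estimate, the weak-$L^{3/\lambda}$ computation for $|x|^{-\lambda}$, and the closing H\"older inequality are all routine.
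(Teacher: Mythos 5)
Your proposal is correct, but note first that the thesis itself contains no proof of this lemma: it is quoted as a known result from Lieb and Loss (Theorem 4.3 there) and used as a black box, so the comparison is really with the cited source rather than with an in-paper argument. Relative to Lieb--Loss your route is genuinely different: they prove HLS without interpolation, via the layer-cake representation and a direct estimate after reduction to characteristic functions (with rearrangement inequalities only for the sharp-constant, conjugate-exponent case), whereas you reduce by H\"older to the convolution bound $\norm[p']{K_\lambda\star h}\leq C\norm[r]{h}$, observe $K_\lambda\in L^{3/\lambda,\infty}$ with $3/\lambda>1$, prove the weak Young inequality by splitting the kernel at a $t$-dependent height, and upgrade to strong type by Marcinkiewicz interpolation along the scaling line. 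Your exponent bookkeeping is the right one: $\tfrac{1}{p'}=\tfrac{\lambda}{3}+\tfrac{1}{r}-1\in\big(0,\tfrac{1}{r}\big)$ exactly because $p>1$ and $\lambda<3$, so $1<r<p'<\infty$, the low/high truncations of the kernel land in $L^{\tilde r'}$ and $L^1$ respectively, and neighbouring exponents $r_0<r<r_1$ on the same line with $r_j<p_j'$ exist since all constraints are open, so Marcinkiewicz applies; the only wording to tighten is that the cut level $s$ is not ``small'' but the solution of $Cs^{1-(3/\lambda)/\tilde r'}=t/2$, hence increasing in $t$, which is what makes the optimization reproduce the weak $(\tilde r,\tilde p')$ exponent. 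What each approach buys: your interpolation proof is short, elementary, and delivers precisely the non-sharp constant $C(\lambda,p)$ that the thesis needs for its $|x|^{-s}$ estimates, while the Lieb--Loss argument is interpolation-free and, in its refined form, yields the sharp constant, which is irrelevant here.
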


For later reference, we also state Hardy's inequality (see, e.g., \cite{lieb:2010}):
\begin{lemma}[Hardy's inequality]\label{lem:Hardy}
Let $f \in H^1(\RRR^3)$. Then
\be\label{Hardy}
\int_{\RRR^3} \frac{|f(x)|^2}{|x|^2} \, d^3x \leq 4 \int_{\RRR^3} |\nabla f(x)|^2 \, d^3x.
\ee
\end{lemma}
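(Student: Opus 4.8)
The plan is to establish the inequality first for $f\in C_c^\infty(\RRR^3)$ and then pass to general $f\in H^1(\RRR^3)$ by density and Fatou's lemma. A harmless preliminary reduction: since $|f|^2=|\Re f|^2+|\Im f|^2$ and $|\nabla f|^2=|\nabla\Re f|^2+|\nabla\Im f|^2$, it suffices to treat real-valued $f$, and then (the completing-the-square argument below works verbatim for real $f$ of either sign) no further reduction is even needed.

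The heart of the proof is a completing-the-square trick that also produces the sharp constant. For a real parameter $\alpha$ I would start from the manifestly non-negative quantity
\be
0 \le \int_{\RRR^3} \left| \nabla f(x) + \alpha\,\frac{x}{|x|^2}\,f(x) \right|^2 d^3x
= \int_{\RRR^3} |\nabla f|^2 + \alpha \int_{\RRR^3} \frac{x}{|x|^2}\cdot\nabla\!\big(f^2\big) + \alpha^2 \int_{\RRR^3} \frac{f^2}{|x|^2},
\ee
using $2f\nabla f=\nabla(f^2)$ in the cross term. The key elementary identity is that $\nabla\cdot\!\big(x/|x|^2\big)=|x|^{-2}$ on $\RRR^3\setminus\{0\}$ (the computation $\partial_i(x_i|x|^{-2})=|x|^{-2}-2x_i^2|x|^{-4}$ summed over $i$ gives $(3-2)|x|^{-2}$). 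Integrating by parts over $\RRR^3\setminus B_\varepsilon(0)$ and letting $\varepsilon\to0$ — the surface term on $\partial B_\varepsilon(0)$ is $O(\varepsilon)$ because $f$ is bounded, and $|x|^{-2}f^2$ is integrable near the origin in three dimensions — one obtains $\int \frac{x}{|x|^2}\cdot\nabla(f^2) = -\int \frac{f^2}{|x|^2}$. Substituting back yields $\int|\nabla f|^2 \ge (\alpha-\alpha^2)\int \frac{f^2}{|x|^2}$ for every $\alpha$; choosing $\alpha=\tfrac12$ maximizes $\alpha-\alpha^2=\tfrac14$, which is exactly \eqref{Hardy} for $f\in C_c^\infty(\RRR^3)$.

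Finally I would extend to $f\in H^1(\RRR^3)$: choose $f_n\in C_c^\infty(\RRR^3)$ with $f_n\to f$ in $H^1$, so that $\nabla f_n\to\nabla f$ in $L^2$ and, along a subsequence, $f_n\to f$ pointwise a.e.; then Fatou's lemma gives $\int \frac{|f|^2}{|x|^2} \le \liminf_n \int \frac{|f_n|^2}{|x|^2} \le 4\,\liminf_n\int|\nabla f_n|^2 = 4\int|\nabla f|^2$. The only step that needs genuine care — and thus the main (mild) obstacle — is the integration by parts across the singularity at the origin: one must excise $B_\varepsilon(0)$, verify the boundary integral over $\partial B_\varepsilon(0)$ vanishes as $\varepsilon\to0$, and use that $|x|^{-2}$ is locally integrable in dimension three (this is precisely where $d=3$ enters; the naive argument would break for $d\le 2$). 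Everything else is routine.
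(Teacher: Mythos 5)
Your proof is correct. Note that the thesis does not prove Lemma~\ref{lem:Hardy} at all: it is stated ``for later reference'' with a citation to the stability-of-matter literature (\cite{lieb:2010}), so there is no in-paper argument to compare against. Your self-contained argument is the standard one: the completing-the-square (ground-state substitution) trick with the vector field $\alpha x/|x|^2$, the identity $\nabla\cdot\big(x/|x|^2\big)=|x|^{-2}$ in $d=3$, integration by parts on $\RRR^3\setminus B_\varepsilon(0)$ with the $O(\varepsilon)$ boundary term, optimization at $\alpha=\tfrac12$ giving the constant $4$, and the extension from $C_c^\infty$ to $H^1(\RRR^3)$ by density plus Fatou. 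All the delicate points (local integrability of $|x|^{-2}$ in three dimensions, vanishing of the surface term, reduction to real-valued $f$) are handled correctly, so the argument stands as a complete proof of the lemma as used in the thesis.
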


\section{An Estimate Using the Boundedness of Kinetic Energy}\label{sec:mean-field_scalings}
Let us now use the kinetic energy inequality from Lemma~\ref{lem:Lieb_5_3} to estimate the mean-field term $v^{(N)}\star\rho_N$. The lemma we prove in this section is necessary for the proofs of the results from Chapter~\ref{sec:main_theorem_mf_x-s} which we present in Chapter~\ref{sec:proofs_main_theorems}. Recall that we want to consider situations where $E_{\kin,\mf}(t) \leq AN$ and interaction potentials with a long range behavior like
\be\label{v_s}
v_s(x) = |x|^{-s}, ~\text{for}~ 0 < s < \frac{6}{5}.
\ee
For $\epsilon > 0$, let us define the class of interactions with singularity cut off as
\be\label{v_sepsilon}
v_{s,\varepsilon} \in L^{\infty} ~\text{such that}~ v_{s,\varepsilon}(x) = |x|^{-s} ~\text{for}~ |x|\geq\varepsilon,~ 0 \leq v_{s,\varepsilon}(x) \leq |x|^{-s} ~~\forall x\in\RRR^3.
\ee
From the next lemma we can read off that the correct scaling exponents for $v_s$ and $v_{s,\varepsilon}$ are $\beta = 1- \frac{s}{3}$. 

\begin{lemma}\label{lem:scaling_x-s}
Let $\varphi_1,\ldots,\varphi_N \in L^2(\RRR^3)$ be orthonormal. We assume that
\be\label{ass_kin_energy}
E_{\kin,\mf} = \sum_{i=1}^N \norm{\nabla \varphi_i}^2 \leq AN
\ee
for some $A>0$. Let
\be
v^{(N)}(x) = N^{-\beta} \, v(x)
\ee
with either $v=v_s$ (see \eqref{v_s}) or $v=v_{s,\varepsilon}$ (see \eqref{v_sepsilon}), and with $\beta = 1 - \frac{s}{3}$, $0<s<\frac{6}{5}$. We set $\rho_N(x) = \sum_{i=1}^N |\varphi_i(x)|^2$. Then there is a constant $0<C\propto A^{\frac{s}{2}}$ (independent of $N$, dependent on $s$) such that
\be\label{scaling_x-s_term_1}
\left( v^{(N)}\star\rho_N \right)(y) \leq C \quad \forall y \in \RRR^3.
\ee
Furthermore, there is an explicit example of orbitals $\varphi_1,\ldots,\varphi_N$, namely plane waves with constant density, such that also $\left( v^{(N)}\star\rho_N \right)(y) \geq C' ~ \forall y \in \RRR^3$ for some $C'>0$.
\end{lemma}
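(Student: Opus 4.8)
The plan is to split the convolution $\big(v^{(N)}\star\rho_N\big)(y) = N^{-\beta}\int v(y-x)\rho_N(x)\,d^3x$ over a ball $B_{R_N}(y)$ and its complement, choosing the radius $R_N$ to balance the two contributions, exactly as in the heuristic computation \eqref{summary_rho_x-1_calc}. On the ball, apply H\"older's inequality with exponents $\tfrac53$ and $\tfrac52$ to get $\big(\int_{B_{R_N}(y)}\rho_N^{5/3}\big)^{3/5}\big(\int_{B_{R_N}(0)}|x|^{-\frac52 s}\big)^{2/5}$; the first factor is controlled by Lemma~\ref{lem:Lieb_5_3} (Lieb--Thirring) together with the hypothesis $E_{\kin,\mf}\le AN$, giving something of order $(AN)^{3/5}$, and the second factor is finite precisely because $\tfrac52 s<3$, i.e. $s<\tfrac65$, and equals $C R_N^{\frac65-s}$ after integrating $r^{-\frac52 s}r^2$. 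On the complement, bound $v(y-x)\le R_N^{-s}$ (using either $v=v_s$ or $v=v_{s,\varepsilon}\le|x|^{-s}$) and pull it out, leaving $R_N^{-s}\int\rho_N = N R_N^{-s}$. So altogether $\big(v^{(N)}\star\rho_N\big)(y)\le N^{-\beta}\big(C (AN)^{3/5}R_N^{\frac65-s} + N R_N^{-s}\big)$.

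Next I would optimize in $R_N$: the two terms match in $N$-order when $R_N\propto N^{1/3}$, which turns both into $C A^{3/5}N^{1-s/3}$ (for the first term one should track that the $A$-dependence is $A^{3/5}$ inside, but since $\beta=1-\tfrac s3$ the claimed constant $C\propto A^{s/2}$ must be double-checked — I expect a slightly different power of $A$ and would simply state the bound with the correct power, or note $A^{3/5}\le \max\{1,A\}$-type estimates suffice for the qualitative claim; the precise exponent $s/2$ presumably comes from a sharper choice of $R_N$ depending on $A$, namely $R_N\propto (N/A)^{\cdot}$, so I would carry $A$ through the optimization rather than fixing $R_N\propto N^{1/3}$ blindly). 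Substituting the optimal $R_N$ gives $\big(v^{(N)}\star\rho_N\big)(y)\le N^{-\beta}\cdot C A^{?}N^{1-s/3} = C A^{?}$, uniformly in $y$, which is \eqref{scaling_x-s_term_1}. The case $v=v_{s,\varepsilon}$ needs no extra work since $v_{s,\varepsilon}\le v_s$ pointwise and $v_{s,\varepsilon}\in L^\infty$ only helps.

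For the lower bound, I would take the explicit plane-wave example from \eqref{free_particles_box}: orbitals $\varphi_j(x)=L^{-3/2}e^{i\frac{2\pi}{L}k_j\cdot x}\id_{V_L}(x)$ with the $k_j$ chosen to fill a ball of radius $\propto N^{1/3}$, so that $E_{\kin,\mf}\propto N^{5/3}L^{-2}\le AN$ forces $L\propto N^{1/3}$ and $\rho_N=\frac{N}{L^3}\id_{V_L}$ is constant of order one. Then for $y$ near the center of $V_L$, $\big(v^{(N)}\star\rho_N\big)(y)=N^{-\beta}\rho\int_{V_L}|y-x|^{-s}\,d^3x\ge c N^{-\beta}\rho\int_0^{cN^{1/3}}r^{-s}r^2\,dr\propto N^{-\beta}\cdot N^{(3-s)/3}=N^{-\beta}N^{1-s/3}$, which is a positive constant since $\beta=1-\tfrac s3$. (For boundary points of $V_L$ one loses only a fixed geometric factor, or one restricts $y$ to a slightly smaller concentric box of comparable volume.) The main obstacle is bookkeeping rather than conceptual: getting the $A$-dependence of the constant to come out as exactly $A^{s/2}$, which requires choosing $R_N$ as a function of both $N$ and $A$ (balancing $A^{3/5}R^{6/5-s}$ against $R^{-s}$ yields $R\propto A^{-3/(5s-\ldots)}N^{\ldots}$ — I would do this algebra carefully) and then verifying the resulting power simplifies to $A^{s/2}$; if it does not match exactly I would record the constant that does come out and remark it suffices for all later applications.
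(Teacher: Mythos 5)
Your proposal is correct and follows essentially the same route as the paper: split the convolution at radius $R_N$, apply H\"older plus the Lieb--Thirring bound (Lemma~\ref{lem:Lieb_5_3}) on the ball and the trivial bound $R_N^{-s}\int\rho_N = N R_N^{-s}$ on the complement with $R_N\propto N^{\frac{1}{3}}$, and use the plane-wave state with $L\propto N^{\frac{1}{3}}$ for the lower bound. Your open question about the $A$-dependence is resolved exactly as you guessed: the paper sets $R_N = r N^{\frac{1}{3}}$ and minimizes over the $N$-independent prefactor $r$ (which scales like $A^{-\frac{1}{2}}$), so the apparent $A^{\frac{3}{5}}$ from Lieb--Thirring combines with the tail term to give the stated constant $C\propto A^{\frac{s}{2}}$, cf.\ \eqref{rho_x-1_constant}.
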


\noindent\textbf{Remarks.}
\begin{enumerate}
\setcounter{enumi}{\theremarks}
\item An immediate consequence of \eqref{scaling_x-s_term_1} is, that the scaled direct interaction energy of any orbital is $O(1)$,
\be\label{scaling_x-s_term_2}
\int_{\RRR^3} \left( v^{(N)}\star\rho_N \right)(y) \, |\varphi_i(y)|^2 \,d^3y \leq C \quad \forall \, i=1,\ldots,N,
\ee
and the total scaled direct interaction energy is $O(N)$,
\be\label{scaling_x-s_term_3}
\int_{\RRR^3} \left( v^{(N)}\star\rho_N \right)(y) \, \rho_N(y) \,d^3y \leq C N.
\ee
This follows directly from H\"older's inequality, using $\norm[1]{|\varphi_i|^2}=1$ and $\norm[1]{\rho_N}=N$.

\item\label{itm:e_kin_5_3} The proof could easily be generalized for $\varphi_1,\ldots,\varphi_N$ with $E_{\kin,\mf} \leq AN^{\delta}$ for a certain range of $\delta$'s. In the semiclassical case outlined in Chapter~\ref{sec:mf_fermions_semiclassical}, where $E_{\kin,\mf} \leq AN^{\frac{5}{3}}$, we find that for all $s$, the right scaling exponent is $\beta=1$, i.e.,
\be\label{mf_term_order_sc}
N^{-1} \left( v_s \star\rho_N \right)(y) \leq C \quad \forall y \in \RRR^3.
\ee
This follows directly from choosing $R_N$ $N$-independent in \eqref{rho_x-1_calc}. In fact, one can easily show that \eqref{mf_term_order_sc} even holds for any interaction potential in $L^{\frac{5}{2}}_{\loc}(\RRR^3)$, i.e., when the singularities are in $L^{\frac{5}{2}}(\RRR^3)$.
\end{enumerate}
\setcounter{remarks}{\theenumi}

\begin{proof}
From \eqref{lem:Lieb_5_3} with the assumption \eqref{ass_kin_energy} it follows that
\be\label{kin_en_A}
\int_{\RRR^3} \rho_N(x)^{\frac{5}{3}} \, d^3x \leq \frac{5}{9} (2\pi)^{-\frac{2}{3}} A \, N.
\ee
Furthermore, recall that $\int \rho_N = N$. In the following, we show the inequality \eqref{scaling_x-s_term_1} only for $v_s$. Then, since $v_{s,\varepsilon}\leq v_s$, it also follows for $v_{s,\varepsilon}$. Recall that $B_R(y)=\{ x \in \RRR^3: |x-y| < R \}$ and $\overline{B_R(y)} := \RRR^3 \setminus B_R(y)$. First, note that for $0 < s < \frac{6}{5}$,
\begin{align}\label{int_2_5}
\left( \int_{B_R(0)} |x|^{-\frac{5}{2}s} \,d^3x \right)^{\frac{2}{5}} &= \left( 4\pi \int_{B_R(0)} r^{-\frac{5}{2}s} \,r^2 dr \right)^{\frac{2}{5}} = \left( \left(\frac{4\pi}{3-\frac{5}{2}s}\right) R^{3-\frac{5}{2}s} \right)^{\frac{2}{5}} \nonumber \\
&= \left(\frac{4\pi}{3-\frac{5}{2}s}\right)^{\frac{2}{5}} \, R^{\frac{6}{5}-s}.
\end{align}
Then, using H\"older's inequality, \eqref{kin_en_A}, $\int \rho_N = N$ and \eqref{int_2_5}, we find for any possibly $N$-dependent $R_N>0$,
\begin{align}\label{rho_x-1_calc}
\int_{\RRR^3} \frac{\rho_N(x)}{|x-y|^s} \,d^3x &= \int_{B_{R_N}(y)} \frac{\rho_N(x)}{|x-y|^s} \,d^3x + \int_{\overline{B_{R_N}(y)}} \frac{\rho_N(x)}{|x-y|^s} \,d^3x \nonumber \\
\eqexp{by H\"older} &\leq \left( \int_{B_{R_N}(y)} \rho_N(x)^{\frac{5}{3}} \,d^3x \right)^{\frac{3}{5}} \left( \int_{B_{R_N}(y)} |x-y|^{-\frac{5}{2}s} \,d^3x \right)^{\frac{2}{5}} \nonumber \\
&\quad\quad + \left( \int_{\overline{B_{R_N}(y)}} \rho_N(x) \,d^3x \right) \left( \sup_{x \in \overline{B_{R_N}(y)}} |x-y|^{-s} \right) \nonumber \\
&\leq \left( \int_{\RRR^3} \rho_N(x)^{\frac{5}{3}} \,d^3x \right)^{\frac{3}{5}} \left( \int_{B_{R_N}(0)} |x|^{-\frac{5}{2}s} \,d^3x \right)^{\frac{2}{5}} \nonumber \\
&\quad\quad + \left( \int_{\RRR^3} \rho_N(x) \,d^3x \right) \left( \sup_{x \in \overline{B_{R_N}(0)}} |x|^{-s} \right) \nonumber \\
\eqexp{by \eqref{kin_en_A}, \eqref{int_2_5}} &\leq C N^{\frac{3}{5}} R_N^{\frac{6}{5}-s} + N R_N^{-s}.
\end{align}
Setting $R_N=N^{\frac{1}{3}}$ (if we set $R_N=N^{\delta}$ and then optimize \eqref{rho_x-1_calc} with respect to $\delta$ we find $\delta=\frac{1}{3}$) we find
\be\label{rho_x-1}
\int_{\RRR^3} \frac{\rho_N(x)}{|x-y|^s} \,d^3x \leq C N^{1-\frac{s}{3}}.
\ee
Using the explicit values of the constants from \eqref{kin_en_A} and \eqref{int_2_5}, setting $R_N=rN^{\frac{1}{3}}$, with $N$-independent $r>0$, and minimizing the resulting expression \eqref{rho_x-1_calc} with respect to $r$ gives an explicit value for the constant of \eqref{rho_x-1}:
\be\label{rho_x-1_constant}
C = \left( \frac{6}{5}-s \right)^{\frac{s}{2}-1} s^{-\frac{5}{6}s} \left(\frac{6}{5}\right) 2^{\frac{2s}{3}} \, 3^{-s} \, 5^{\frac{s}{6}} ~ A^{\frac{s}{2}}.
\ee
Note that \eqref{rho_x-1} does \emph{not} follow directly from the generalized Young inequality
\be
\norm[r]{f \star g} \leq C \norm[q,w]{g} \norm[p]{f}
\ee
with $\frac{1}{p}+\frac{1}{q}=1+\frac{1}{r}$ and where $\norm[q,w]{\cdot}$ denotes the weak $L^q$ norm, since it only holds for $1 < p,q,r < \infty$. 

It remains to show that when the $\varphi_1,\ldots,\varphi_N$ are plane waves with constant kinetic energy per particle, the inequality \eqref{scaling_x-s_term_1} holds also in the other direction. Let $\id_V(x)$ denote the characteristic function of the set $\left[ -\frac{L}{2},\frac{L}{2} \right]^3 = V \subset\RRR^3$ and let $N=cL^3$ with constant $c$. The orthonormal functions
\be
\varphi_{k}(x) = L^{-\frac{3}{2}} e^{i\frac{2\pi}{L} k \cdot x} \, \id_V(x),
\ee
$k\in \ZZZ^3$, have the density $\rho_0(x) = \sum_{|k|,\#|k|=N} |\varphi_{k}(x)|^2 = \frac{N}{L^3} \id_V(x) = c \, \id_V(x)$. In the ground state, the kinetic energy is proportional to $N$, since
\be\label{plane_wave_E_kin}
\sum_{|k|,\#|k|=N} \scp{\varphi_k}{(-\Delta)\varphi_k} = \sum_{|k|,\#|k|=N} \left( \frac{2\pi}{L} k \right)^2 \leq C \left(\frac{2\pi}{L}\right)^2 \int_0^{N^{\frac{1}{3}}} r^2~ r^2 dr  \propto \frac{N^{\frac{5}{3}}}{L^2} = c^{\frac{2}{3}} N.
\ee
Then (as we already know from \eqref{scaling_x-s_term_1}), we find
\begin{align}
(v_{s,\varepsilon}\star\rho_0)(y) \leq (v_s\star\rho_0)(y) &= \int_{\RRR^3} \frac{\rho_0(x)}{|x-y|^s} \,d^3x = c \int_{\left[ -\frac{L}{2},\frac{L}{2} \right]^3} |x-y|^{-s} \,d^3x \nonumber \\
& \leq c \int_{\left[ -\frac{L}{2},\frac{L}{2} \right]^3} |x|^{-s} \,d^3x \propto \int_{0}^{N^{\frac{1}{3}}} r^{-s}\, r^2 dr \propto N^{1-\frac{s}{3}}.
\end{align}
Also, for example for $R \geq |y|$,
\begin{align}\label{calc_plane_waves}
\int_{\RRR^3} \frac{\rho_0(x)}{|x-y|^s} \,d^3x &= c \int_{\left[-\frac{L}{2},\frac{L}{2}\right]^3} |x-y|^{-s} \,d^3x = c \int_{\left[-\frac{L}{2},\frac{L}{2}\right]^3-y} |x|^{-s} \,d^3x \nonumber \\
&\geq C \int_{R}^{R+\frac{L}{2}} r^{-s} \, r^2 dr \propto \left(R+\frac{L}{2}\right)^{3-s} - R^{3-s} \geq \left(\frac{L}{2}\right)^{3-s} \propto N^{1-\frac{s}{3}}.
\end{align}
In the same way, for example for $R \geq |y|+\varepsilon$ and for $N$ big enough,
\be\label{calc_plane_waves2}
\int_{\RRR^3} \rho_0(x) v_{s,\varepsilon}(x) \,d^3x \geq C \int_{R}^{R+\frac{L}{2}} r^{-s} \,r^2 dr \propto N^{1-\frac{s}{3}}.
\ee
From this, it follows directly that \eqref{scaling_x-s_term_1} holds with $\geq$ for the example of plane waves.
\end{proof}

\section{Proof of the Results}\label{sec:proofs_main_theorems}
\begin{proof}[Proof of Corollary~\ref{cor:estimates_terms_alpha_dot_Coulomb_rho_infty}]
We use Theorem~\ref{thm:estimates_terms_alpha_dot_beta_n}. We thus have to show that for $v^{(N)}(x) = N^{-\frac{2}{3}} \, |x|^{-1}$,
\be
\left( \Big(v^{(N)}\Big)^2\star\rho_N^t \right)(y) \leq C N^{-1} \quad \forall y \in \RRR^3,
\ee
if $E_{\kin,\mf} \leq AN$ and $\norm[\infty]{\rho_N^t} < D$. Let us write $\Omega_N = B_{N^{\frac{1}{3}}}(0)$, i.e., the ball with radius $N^{\frac{1}{3}}$ around $0$, and define $\Omega_N(y) = \Omega_N + y$ and $\overline{S} = \RRR^3\setminus S$ for any set $S\subset\RRR^3$. By splitting the convolution into two integrals and then using H\"older's inequality, we find (similar to the calculation \eqref{rho_x-1_calc})
\begin{align}
\left( \Big(v^{(N)}\Big)^2\star\rho_N^t \right)(y) &= N^{-\frac{4}{3}} \left( \int_{\Omega_N(y)} |x-y|^{-2} \rho_N^t(x)\,d^3x + \int_{\overline{\Omega}_N(y)} |x-y|^{-2} \rho_N^t(x)\,d^3x \right) \nonumber \\
\eqexp{by H\"older} &\leq N^{-\frac{4}{3}} \left( \int_{\Omega_N} |x|^{-2} \,d^3x  \, \norm[\infty]{\rho_N^t} + \left( \sup_{x\in\overline{\Omega}_N} |x|^{-2} \right) \norm[1]{\rho_N^t} \right) \nonumber \\
&\leq C N^{-\frac{4}{3}} \left( N^{\frac{1}{3}}\norm[\infty]{\rho_N^t} + N^{-\frac{2}{3}} N \right) \nonumber \\
&\leq C N^{-1}.
\end{align}
\end{proof}

\begin{proof}[Proof of Theorem~\ref{thm:E_kin_only}]
We consider the three different interactions separately.
\begin{itemize}
\item Let
\be
v_s(x) = \pm |x|^{-s}, \text{with } 0<s<\frac{3}{5} \text{ and } \beta=1-\frac{s}{3}.
\ee
We use Theorem~\ref{thm:estimates_terms_alpha_dot_beta_n}, i.e., we have to show that $\Big(\big(v_s^{(N)}\big)^2\star\rho_N^t\Big)(y) \leq C N^{-1} ~ \forall y \in \RRR^3$. Since we consider only $0<s<\frac{3}{5}$, we can use \eqref{scaling_x-s_term_1} with $s$ replaced by $2s$ (recall that \eqref{scaling_x-s_term_1} holds for $0<s<\frac{6}{5}$). When we write out the scaling explicitly, \eqref{scaling_x-s_term_1} becomes
\be
\left( v_{2s}\star\rho_N^t \right)(y) \leq C N^{\left(1-\frac{2s}{3}\right)}.
\ee
Thus,
\begin{align}\label{s_2s}
\Big(\big(v_s^{(N)}\big)^2\star\rho_N^t\Big)(y) &= N^{-2\left(1-\frac{s}{3}\right)} \Big(v_s^2\star\rho_N^t\Big)(y) \nonumber \\
&= N^{-2\left(1-\frac{s}{3}\right)} \Big(v_{2s}\star\rho_N^t\Big)(y) \nonumber \\
&\leq C N^{-2\left(1-\frac{s}{3}\right)} N^{\left(1-\frac{2s}{3}\right)} \nonumber \\
&= C N^{-1}.
\end{align}
If we use that the constant in \eqref{scaling_x-s_term_1} is proportional to $A^{\frac{s}{2}}$, we find that the constant in \eqref{s_2s} is proportional to $A^{s}$ and thus the $C$ appearing in the $\alpha_n$-estimate \eqref{main_alpha_ineq_n_applied1} is proportional to $A^{\frac{s}{2}}$.

\item Let
\begin{align}
&v_{s,\varepsilon} \in L^{\infty} \text{ with } 0 \leq v_{s,\varepsilon}(x) \left\{\begin{array}{cl} \leq |x|^{-s} &, \, \text{for } |x|\leq \varepsilon \\ =|x|^{-s} &, \, \text{for } |x|>\varepsilon , \end{array}\right., \text{with } \varepsilon>0, \nonumber \\
&\text{with } 0<s<\frac{6}{5} \text{ and } \beta = 1 - \frac{s}{3}.
\end{align}
We use Theorem~\ref{thm:estimates_terms_alpha_dot_beta_general} with $\Omega_N = \emptyset$. The assumption \eqref{alpha_dot_m_ass_1} is satisfied according to Lemma~\ref{lem:scaling_x-s}. Using $v_{s,\varepsilon} \in L^{\infty}$ and \eqref{alpha_dot_m_ass_1} we find
\begin{align}
\Big(\big(v_{s,\varepsilon}^{(N)}\big)^2\star\rho_N^t\Big)(y) &\leq N^{-\left(1-\frac{s}{3}\right)} \Big( \sup_y v_{s,\varepsilon}(y) \Big) \Big(v_{s,\varepsilon}^{(N)}\star\rho_N^t\Big)(y) \nonumber \\
&\leq C N^{-\left(1-\frac{s}{3}\right)},
\end{align}
i.e., \eqref{alpha_dot_m_ass_2} holds if $\gamma \leq 1 -\frac{s}{3}$. In order to show that \eqref{alpha_dot_m_ass_3} holds, we use the Hardy-Littlewood-Sobolev inequality \eqref{Hardy_Littlewood}. Note that from $\int \rho_N^t = N$ and $\int (\rho_N^t)^{\frac{5}{3}} \leq C N$ it follows that $\int (\rho_N^t)^p \leq C N$ for all $1 \leq p \leq \frac{5}{3}$. For $\lambda=2s$ we have $p=(1-\frac{s}{3})^{-1}$ and, since $0<s<\frac{6}{5}$, we find $1 < p < \frac{5}{3}$, so that
\begin{align}\label{HLS_applied}
\int \int_{\RRR^3} \int_{\RRR^3} \big(v_{s,\varepsilon}(x-y)\big)^2 \rho_N^t(x)\rho_N^t(y) \,d^3x \,d^3y &\leq \int_{\RRR^3} \int_{\RRR^3} \frac{\rho_N^t(x)\rho_N^t(y)}{|x-y|^{2s}} \,d^3x \,d^3y \nonumber \\
&\leq C \norm[p]{\rho_N^t}^2 = C \left( \int (\rho_N^t)^p \right)^{\frac{2}{p}} \nonumber \\
&\leq C N^{\frac{2}{p}} = C N^{2\left(1-\frac{s}{3}\right)},
\end{align}
i.e., \eqref{alpha_dot_m_ass_3} is satisfied with $\delta_2=0$. Furthermore,
\be
\sup_{y \in \RRR^3} v_{s,\varepsilon}^{(N)}(y) \leq C \, N^{-\left(1-\frac{s}{3}\right)},
\ee
i.e., \eqref{alpha_dot_m_ass_5} holds if $\frac{1}{2}+\frac{\gamma}{2}+\delta_4=1-\frac{s}{3}$; that is, since $\delta_4\geq 0$, $\gamma\leq 1-\frac{2s}{3}$. Thus, $\delta = \min\left\{ \gamma -\delta_2, \gamma + \delta_4 \right\} = \gamma$. The conditions on $\gamma$ are $\gamma \leq 1 -\frac{s}{3}$ and $\gamma\leq 1-\frac{2s}{3}$, therefore, the theorem holds for all $\gamma \leq 1-\frac{2s}{3}$.

\item Let
\be
v(x) = \left\{\begin{array}{cl} |x|^{-s} &, \, \text{for } |x|\leq 1 \\ |x|^{-1} &, \, \text{for } |x|>1 , \end{array}\right., 
\ee
with $0<s<\frac{1}{3}$ and $\beta = \frac{2}{3}$. First, note that Lemma~\ref{lem:scaling_x-s} applies, since $|v(x)| \leq |x|^{-1}$. Using this, we find that
\be
\sup_{y\in\RRR^3} \Big(\big|v^{(N)}\big|\star\rho_N^t\Big)(y) \leq D_0,
\ee
and, from \eqref{HLS_applied} with $s=1$,
\be
\int \bigg(\big(v^{(N)}\big)^2\star\rho^t_N\bigg)(y)\,\rho^t_N(y)\,d^3y \leq D_2.
\ee
With Hardy's inequality \eqref{Hardy} we find
\be
\Big(\big(v^{(N)}\big)^2\star\rho_N^t\Big)(y) \leq N^{-\frac{4}{3}} \Big(|\cdot|^{-2}\star\rho_N^t\Big)(y) \leq C N^{-\frac{4}{3}} \sum_{i=1}^N \norm{\nabla\varphi_i^t}^2 \leq D_1 \, N^{-\frac{1}{3}}.
\ee
Therefore, since $\gamma\leq\frac{1}{3}$ was assumed, the assumptions \eqref{alpha_dot_m_ass_1}, \eqref{alpha_dot_m_ass_2} and \eqref{alpha_dot_m_ass_3} hold, with $\delta_2=0$. Let us now turn to assumption \eqref{alpha_dot_m_ass_5}. Let us write $\Omega_N(y)=B_{N^{-\tilde{\delta}}}(y)$ for the ball with radius $N^{-\tilde{\delta}}$ around $y$, with $\tilde{\delta}=\frac{2}{9-15s}$. Using H\"older's inequality and Lemma~\ref{lem:Lieb_5_3}, we find
\begin{align}
\int_{\Omega_N(y)} \Big(v^{(N)}(y-x)\Big)^2\rho_N^t(x)\,d^3x &= N^{-\frac{4}{3}} \int_{\Omega_N(y)} |x-y|^{-2s} \rho_N^t(x)\,d^3x \nonumber \\
\eqexp{by H\"older} &\leq  N^{-\frac{4}{3}} \left( \int_{\Omega_N(0)} \big(|x|^{-2s}\big)^{\frac{5}{2}} \,d^3x \right)^{\frac{2}{5}} \left( \int_{\RRR^3} \rho_N^t(x)^{\frac{5}{3}} \,d^3x \right)^{\frac{3}{5}} \nonumber \\
\eqexp{by Lem.~\ref{lem:Lieb_5_3}} &\leq C\, N^{-\frac{4}{3}} \left( \int_{0}^{N^{-\tilde{\delta}}} r^{-5s} \, r^2 dr \right)^{\frac{2}{5}} N^{\frac{3}{5}} \nonumber \\
&\leq C\, N^{-\frac{11}{15}} \left( \left( N^{-\tilde{\delta}} \right)^{3-5s} \right)^{\frac{2}{5}} \nonumber \\
&= C\, N^{-\frac{11}{15}-\frac{6}{5}\tilde{\delta}+2s\tilde{\delta}}.
\end{align}
Inserting $\tilde{\delta}=\frac{2}{9-15s}$, we find
\be
\sup_{y\in\RRR^3} \int_{\Omega_N(y)} \Big(v^{(N)}(y-x)\Big)^2\rho^t_N(x)\,d^3x \leq D_3 \, N^{-1},
\ee
so assumption \eqref{alpha_dot_m_ass_4} holds, with $\delta_3=0$. Then it follows that
\be
\sup_{y \in \RRR^3 \setminus \Omega_N} \big|v^{(N)}(y)\big| = N^{-\frac{2}{3}} \sup_{y \in \RRR^3 \setminus B_{N^{-\tilde{\delta}}}(0)} \big|v(y)\big| = N^{-\frac{2}{3}+s\tilde{\delta}}.
\ee
Thus, for assumption \eqref{alpha_dot_m_ass_5} to hold we need that ($\delta_4=0$)
\be
-\frac{2}{3} + s\tilde{\delta} \leq -\frac{1}{2} - \frac{\gamma}{2},
\ee
i.e.,
\be
\gamma \leq \frac{1}{3} - 2s\tilde{\delta} = \frac{1}{3} - \frac{4s}{9-15s}.
\ee
The convergence rate is $\delta = \min\left\{ \gamma -\delta_2, \gamma + \frac{\delta_3}{2}, \gamma + \delta_4 \right\} = \gamma$. \hfill $\qedhere$
\end{itemize}
\end{proof}

\begin{proof}[Proof of Proposition~\ref{pro:coulombN1}]
Note that the equations \eqref{free_equations} conserve the total energy which in this case is the total kinetic energy. Therefore \eqref{condition_E_kin_0} holds for all times $t\geq 0$, i.e., $E_{\kin,\mf}(t) \leq AN$. In the following, we write $v^{(N)}(x)=N^{-1}v(x)$. From Lemma~\ref{lem:scaling_x-s} we find
\be
\Big(\big|v^{(N)}\big|\star\rho_N^t\Big)(y) = N^{-\frac{1}{3}} \left( N^{-\frac{2}{3}} \Big(\big|v\big|\star\rho_N^t\Big)(y) \right) \leq C N^{-\frac{1}{3}},
\ee
and from the Hardy-Littlewood-Sobolev inequality, i.e., \eqref{HLS_applied} with $s=1$,
\be
\int_{\RRR^3} \left( \left(v^{(N)}\right)^2\star\rho_N \right)(y) \, \rho_N(y) \,d^3y = N^{-2} \int_{\RRR^3} \left( v^2\star\rho_N \right)(y) \, \rho_N(y) \,d^3y \leq C N^{-\frac{2}{3}}.
\ee
Using Hardy's inequality \eqref{Hardy} we find
\be
\Big(\big(v^{(N)}\big)^2\star\rho_N^t\Big)(y) = N^{-2} \int_{\RRR^3} \frac{\rho_N^t(x-y)}{|x|^2} \, d^3x \leq C N^{-2} E_{\kin,\mf}(t) \leq C N^{-1}.
\ee
Now we apply Lemma~\ref{lem:estimates_terms_alpha_dot_beta} and use the alternative estimate \eqref{term_1_dir_2} for the $qp$-$pp$ term and \eqref{term_2_alt} for the $qq$-$pp$ term. Then (using $\alpha_{m^{(\gamma)}}(t) \leq 1$)
\begin{align}\label{alpha_dot_trivial}
\partial_t \alpha_{m^{(\gamma)}}(t) &\leq C N^{-\frac{1}{3}+\frac{1}{2}-\frac{\gamma}{2}} + C N^{-\frac{1}{3}} \left(1 + N^{-\gamma} \right) + C N^{\frac{\gamma}{2}-\frac{1}{2}} \nonumber \\
&\leq C N^{-\delta},
\end{align}
where $\delta = \min\left\{ \frac{\gamma}{2}-\frac{1}{6}, \frac{1}{3}, -\frac{\gamma}{2}+\frac{1}{2} \right\} = \min\left\{ \frac{\gamma}{2}-\frac{1}{6}, -\frac{\gamma}{2}+\frac{1}{2} \right\}$. For $\delta>0$ we need $\frac{1}{3} < \gamma < 1$. Now \eqref{main_alpha_ineq_coulombN1_gamma} immediately follows by integrating \eqref{alpha_dot_trivial}.
\end{proof}

\appendix
\chapter{Proof for Semiclassical Scaling}\label{sec:proof_sc_scaling}
We here prove Theorem~\ref{thm:sc_main_thm}. We need some auxiliary lemmas. First, we state the result about the propagation of the semiclassical initial data that was obtained in \cite{benedikter:2013}. (Recall that $\hat{v}$ is the Fourier transform of the interaction potential $v$.)

\begin{lemma}[Proposition 3.4 in \cite{benedikter:2013}]\label{lem:sc_prop_sc}
Let $v \in L^1(\RRR^3)$ be such that
\be
\int d^3k \, (1+|k|^2)\, |\hat{v}(k)| < \infty.
\ee
Let $\omega_N$ be a non-negative trace class operator on $L^2(\RRR^3)$, with $\tr(\omega_N) = N$, $\norm[\op]{\omega_N} \leq 1$ and such that
\be
\sup_{k\in\RRR^3} \frac{1}{1+|k|} \norm[\tr]{\left[ \omega_N, e^{ik\cdot x} \right]} \leq CN^{\frac{2}{3}},
\ee
\be
\norm[\tr]{\left[ \omega_N, \nabla \right]} \leq C N.
\ee
Let $\omega_{N,t}$ be the solution to the Hartree-Fock equation \eqref{outline_hartree_fock_scaled_sc_app} (or the Hartree equation \eqref{outline_hartree_scaled_sc_app}) with initial data $\omega_N$. Then, there exist constants $c_1,c_2>0$, only depending on $v$, such that
\be
\sup_{k\in\RRR^3} \frac{1}{1+|k|} \norm[\tr]{\left[ \omega_{N,t}, e^{ik\cdot x} \right]} \leq c_1N^{\frac{2}{3}} \exp(c_2|t|),
\ee
\be
\norm[\tr]{\left[ \omega_{N,t}, \nabla \right]} \leq c_1 N \exp(c_2|t|),
\ee
for all $t \in \RRR$.
\end{lemma}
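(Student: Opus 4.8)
The statement is Proposition~3.4 of \cite{benedikter:2013}, and the plan is to reproduce its argument. Write $\varepsilon = N^{-\frac13}$ and pass to the operator formulation: with $\omega_{N,t} := p_1(t) = \sum_{j=1}^N\ketbr{\varphi_j^t}_1$ the one-particle projection, the Hartree--Fock equation \eqref{outline_hartree_fock_scaled_sc_app} (resp.\ the Hartree equation \eqref{outline_hartree_scaled_sc_app}) becomes
\[
i\varepsilon\,\partial_t\omega_{N,t} = \left[ h_t, \omega_{N,t}\right], \qquad h_t = -\varepsilon^2\Delta + \varepsilon^3\big(v\star\rho_t\big) - \varepsilon^3 X_t ,
\]
where $\rho_t(x) = \omega_{N,t}(x;x)$, $X_t$ has integral kernel $v(x-y)\,\omega_{N,t}(x;y)$, and the last term is absent in the Hartree case. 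Since $\int(1+|k|^2)|\hat v(k)|\,dk<\infty$ forces $v\in L^1\cap L^\infty$ and $\nabla v\in L^\infty$, the operators $\varepsilon^3(v\star\rho_t)$ and $\varepsilon^3 X_t$ are bounded and self-adjoint, so $h_t$ is self-adjoint on $H^2(\RRR^3)$ and generates a unitary two-parameter propagator $\mathcal U(t;s)$; hence $\omega_{N,t} = \mathcal U(t;0)\,\omega_N\,\mathcal U(t;0)^*$, which already gives $\norm[\op]{\omega_{N,t}}\le1$ and $\tr\omega_{N,t} = N$ for all $t$. I will also use the a priori bound $\tr\big((-\varepsilon^2\Delta)\omega_{N,t}\big)\le CN$, a standing feature of the semiclassical regime that is propagated in time by conservation of the Hartree(--Fock) energy (whose interaction part is $O(N)$). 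The plan is a Gr\"onwall estimate for the pair
\[
a(t) := \sup_{k\in\RRR^3}\frac{1}{1+|k|}\,\norm[\tr]{\,[\omega_{N,t},e^{ik\cdot x}]\,}, \qquad b(t) := \norm[\tr]{\,[\omega_{N,t},\varepsilon\nabla]\,},
\]
after which the second assertion follows from $\norm[\tr]{[\omega_{N,t},\nabla]} = \varepsilon^{-1}b(t)$.

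\textbf{Peeling off the unitary part.} For $O\in\{e^{ik\cdot x},\varepsilon\partial_j\}$, the Jacobi identity gives $i\varepsilon\,\partial_t\,[\omega_{N,t},O] = \big[[h_t,O],\omega_{N,t}\big] + \big[h_t,[\omega_{N,t},O]\big]$. Conjugating by $\mathcal U(t;s)$ removes the second term without changing the trace norm, and integration yields
\[
\norm[\tr]{[\omega_{N,t},O]} \le \norm[\tr]{[\omega_N,O]} + \varepsilon^{-1}\int_0^t \norm[\tr]{\big[[h_s,O],\omega_{N,s}\big]}\,ds .
\]
Next I would split $[h_s,O]$ according to the three terms of $h_s$. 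For the kinetic part, $[-\varepsilon^2\Delta,e^{ik\cdot x}] = -i\varepsilon\,k\cdot\{e^{ik\cdot x},\varepsilon\nabla\}$ and $[-\varepsilon^2\Delta,\varepsilon\partial_j]=0$. For the direct term, $[v\star\rho_s,e^{ik\cdot x}]=0$ (both are multiplication operators) while $[v\star\rho_s,\varepsilon\partial_j] = -\varepsilon\big((\partial_j v)\star\rho_s\big)$, multiplication by a bounded function. For the exchange term I would use the Fourier representation $v(x) = \int\hat v(k')e^{ik'\cdot x}dk'$, so that $X_s = \int\hat v(k')\,e^{ik'\cdot x}\omega_{N,s}e^{-ik'\cdot x}\,dk'$ and its commutators with $e^{ik\cdot x}$ and $\varepsilon\nabla$ unfold into integrals of quantities of the form $[\omega_{N,s},e^{ik'\cdot x}]$ and $[\omega_{N,s},\varepsilon\nabla]$ against $|\hat v(k')|$ and $|k'|\,|\hat v(k')|$, both integrable by \eqref{sc_thm_cond0}. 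Combining these with $\norm[\op]{\omega_{N,s}}\le1$, $\tr\omega_{N,s}=N$, the kinetic bound, and the elementary identity $[\varepsilon\nabla,e^{ik\cdot x}] = i\varepsilon k\,e^{ik\cdot x}$ (used to move $\varepsilon\nabla$ onto the correct side of $[\omega_{N,s},e^{ik\cdot x}]$), then dividing the $e^{ik\cdot x}$-estimate by $(1+|k|)$ and inserting $\varepsilon^{-1}=N^{\frac13}$, one arrives at a closed system $\partial_t a(t) \le C\big(a(t)+b(t)\big) + (\text{lower order in }N)$ and $\partial_t b(t) \le C\big(a(t)+b(t)\big) + (\text{lower order in }N)$, with $C$ depending only on $\norm[1]{v}$ and $\int(1+|k|+|k|^2)|\hat v(k)|\,dk$. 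Gr\"onwall's lemma (Lemma~\ref{lem:gronwall}) together with the hypotheses at $t=0$ — which give $a(0)\le cN^{\frac23}$ and $b(0) = \varepsilon\norm[\tr]{[\omega_N,\nabla]}\le cN^{\frac23}$ — yields $a(t)+b(t)\le c_1 N^{\frac23}\exp(c_2 t)$, hence $\norm[\tr]{[\omega_{N,t},\nabla]} = \varepsilon^{-1}b(t)\le c_1 N\exp(c_2 t)$. In the Hartree case $X_s\equiv0$ and the exchange integrals simply drop, so the proof is a strict simplification.

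\textbf{The main difficulty.} The delicate point is the kinetic contribution $\big[[-\varepsilon^2\Delta,e^{ik\cdot x}],\omega_{N,s}\big] = -i\varepsilon\,k\cdot[\{e^{ik\cdot x},\varepsilon\nabla\},\omega_{N,s}]$: expanding the double commutator by the derivation rule for the anticommutator produces a term $\{[\omega_{N,s},e^{ik\cdot x}],\varepsilon\nabla\}$ in which $\varepsilon\nabla$ sits on the outside of the controlled commutator $[\omega_{N,s},e^{ik\cdot x}]$, and a naive bound using only $\norm[\tr]{\varepsilon\nabla\,\omega_{N,s}}=O(N)$ loses a factor $N^{\frac13}$ relative to the claimed rate. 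Closing the estimate at rate $N^{\frac23}$ requires exploiting the semiclassical structure of $\omega_{N,t}$ — equivalently, the self-improving nature of the commutator bounds themselves — together with the kinetic energy bound, through a careful accounting of all powers of $\varepsilon$. This bookkeeping, carried out in \cite[Prop.\ 3.4]{benedikter:2013}, is the technical heart of the argument; subsidiary points are the verification of self-adjointness of $h_t$, the existence of the propagator $\mathcal U(t;s)$, and the in-time propagation of the a priori kinetic bound.
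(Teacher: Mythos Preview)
The paper does not give its own proof of this lemma: as the attribution in the lemma's title indicates, it is quoted verbatim as Proposition~3.4 of \cite{benedikter:2013} and used as a black box in Appendix~\ref{sec:proof_sc_scaling}. There is therefore nothing in the paper to compare your argument against beyond the citation itself.

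That said, your outline follows the strategy of the cited reference (a Gr\"onwall estimate for the pair of commutator trace norms, obtained by differentiating, conjugating away the $[h_t,\cdot]$ piece, and Fourier--decomposing the interaction), and you correctly locate the crux in the kinetic anticommutator term $\{\varepsilon\nabla,[\omega_{N,s},e^{ik\cdot x}]\}$. Two remarks are worth making. First, the a~priori kinetic-energy bound $\tr\big((-\varepsilon^2\Delta)\omega_{N,t}\big)\le CN$ that you invoke is \emph{not} among the hypotheses of the lemma (only trace, operator norm, and the two commutator bounds are assumed), and the proof in \cite{benedikter:2013} does not rely on it; introducing it here is both extraneous and, as you yourself observe, insufficient to close the estimate at the right rate. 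Second, your final paragraph defers the actual resolution of the kinetic term entirely to \cite{benedikter:2013}; since that is precisely the technical content of the proposition, your write-up is a sketch of the setup rather than a proof. This is consistent with how the present paper treats the result, but if you intend a self-contained argument you would need to carry out that step.
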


We apply this lemma to the case $\omega_N=p_1(0)$ and $\omega_{N,t} = p_1(t)$. What we refer to later in the proof of Theorem~\ref{thm:sc_main_thm} is (using $p_1q_1=0$)
\be\label{sc_prop}
\norm[\tr]{q_1 e^{ikx} p_1} = \norm[\tr]{q_1 \left[ p_1, e^{ikx} \right]} \leq \norm[\tr]{\left[ p_1, e^{ikx} \right]} \leq N^{\frac{2}{3}} \, C(t) \, (1 + \lvert k \rvert). 
\ee

We also make use of the singular value decomposition for compact operators. We state this decomposition for later reference in a separate lemma (for the proof, see, e.g., \cite[Thm.\ VI.17]{reedsimon1:1980}).

\begin{lemma}[Singular value decomposition]\label{lem:sing_value}
Let $A$ be a compact operator on a Hilbert space $\Hilbert$. Then there exist (not necessarily complete) orthonormal sets $\{ \phi_{\ell}\}_{\ell \in \NNN}$ and $\{ \tilde{\phi}_{\ell}\}_{\ell \in \NNN}$ and positive real numbers $\mu_{\ell}$ such that
\be
A = \sum_{\ell} \mu_{\ell} \ketbra{\phi_{\ell}}{\tilde{\phi}_{\ell}}.
\ee
The singular values $\mu_{\ell}$ are the eigenvalues of $|A|$.
\end{lemma}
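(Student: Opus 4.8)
The plan is to deduce the statement from the spectral theorem for compact self-adjoint operators (the Hilbert--Schmidt theorem), which I take as known. First I would pass to the operator $A^*A$: it is bounded, self-adjoint, non-negative (since $\scp{f}{A^*Af}=\norm{Af}^2\geq 0$), and compact, being the product of the bounded operator $A^*$ with the compact operator $A$. By the Hilbert--Schmidt theorem there is an at most countable orthonormal family $\{\tilde\phi_\ell\}_{\ell\in\NNN}$ of eigenvectors of $A^*A$, with strictly positive eigenvalues $\lambda_\ell$ listed according to multiplicity, such that $\lambda_\ell\to 0$, $A^*A=\sum_\ell\lambda_\ell\ketbr{\tilde\phi_\ell}$, and $\overline{\Span\{\tilde\phi_\ell\}}=(\ker A^*A)^\perp$. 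A one-line computation gives $\ker A^*A=\ker A$: if $A^*Af=0$ then $\norm{Af}^2=\scp{f}{A^*Af}=0$, and the reverse inclusion is trivial. Hence $\overline{\Span\{\tilde\phi_\ell\}}=(\ker A)^\perp$; in particular the range of $A$ is separable, which is what legitimizes indexing the singular values by $\NNN$.

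Next I would set $\mu_\ell:=\sqrt{\lambda_\ell}>0$ and $\phi_\ell:=\mu_\ell^{-1}A\tilde\phi_\ell$. Since $|A|=(A^*A)^{1/2}$ is obtained from $A^*A$ by functional calculus, the $\mu_\ell$ are exactly the nonzero eigenvalues of $|A|$ (with eigenvectors $\tilde\phi_\ell$), as the lemma asserts. The family $\{\phi_\ell\}$ is orthonormal, because
\be
\scp{\phi_\ell}{\phi_m}=\frac{1}{\mu_\ell\mu_m}\scp{A\tilde\phi_\ell}{A\tilde\phi_m}=\frac{1}{\mu_\ell\mu_m}\scp{\tilde\phi_\ell}{A^*A\tilde\phi_m}=\frac{\lambda_m}{\mu_\ell\mu_m}\delta_{\ell m}=\delta_{\ell m}.
\ee

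Then I would verify the decomposition on an arbitrary $f\in\Hilbert$. Using $\overline{\Span\{\tilde\phi_\ell\}}=(\ker A)^\perp$, write $f=f_0+\sum_\ell\scp{\tilde\phi_\ell}{f}\tilde\phi_\ell$ with $f_0\in\ker A$. Applying $A$ and using $Af_0=0$ together with $A\tilde\phi_\ell=\mu_\ell\phi_\ell$ yields $Af=\sum_\ell\mu_\ell\scp{\tilde\phi_\ell}{f}\phi_\ell=\big(\sum_\ell\mu_\ell\ketbra{\phi_\ell}{\tilde\phi_\ell}\big)f$. Convergence of the series in operator norm then follows from $\mu_\ell\to 0$: for the partial sums $A_n:=\sum_{\ell\leq n}\mu_\ell\ketbra{\phi_\ell}{\tilde\phi_\ell}$ one has $\norm{(A-A_n)g}^2=\sum_{\ell>n}\mu_\ell^2|\scp{\tilde\phi_\ell}{g}|^2\leq(\sup_{\ell>n}\mu_\ell^2)\norm{g}^2$, so $\norm[\op]{A-A_n}\to 0$, which completes the argument.

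This is a classical fact, so there is no genuine obstacle; the only points requiring a little care are the bookkeeping around $\ker A$ — which is precisely why the orthonormal sets are only required to be orthonormal and not complete — and the remark that compactness forces at most countably many singular values accumulating only at $0$, which is what makes both the operator-norm convergence and the indexing by $\NNN$ work.
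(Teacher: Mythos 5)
Your argument is correct and complete. Note, however, that the paper does not prove this lemma at all: it simply cites \cite[Thm.\ VI.17]{reedsimon1:1980}, so there is no in-paper proof to compare against. Your route --- applying the Hilbert--Schmidt theorem to $A^*A$, setting $\mu_\ell=\sqrt{\lambda_\ell}$ and $\phi_\ell=\mu_\ell^{-1}A\tilde\phi_\ell$, and checking orthonormality and operator-norm convergence by hand --- is the standard one; the Reed--Simon proof is essentially the same argument packaged through the polar decomposition $A=U|A|$ (diagonalize the compact self-adjoint operator $|A|$ and set $\phi_\ell=U\tilde\phi_\ell$), which your version avoids as a separate ingredient. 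You also correctly handle the two points worth care: the identification $\ker A^*A=\ker A$ (which is why the orthonormal sets need not be complete) and the fact that the nonzero spectrum of $|A|$ is at most countable and accumulates only at $0$, justifying both the indexing by $\NNN$ and the norm convergence of the partial sums. The only cosmetic mismatch with the statement is that the $\mu_\ell$ are the \emph{nonzero} eigenvalues of $|A|$, which you already point out.
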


It follows in particular that $\norm[\tr]{A} = \sum_{\ell} \mu_{\ell}$. Another lemma we use in the proof of Theorem~\ref{thm:sc_main_thm} is the following.

\begin{lemma}\label{lem:sc_trace_est}
Let $A_i: L^2(\RRR^3) \to L^2(\RRR^3)$ be positive self-adjoint compact operators that act only on $x_i$. Then, for all antisymmetric $\psi \in L^2(\RRR^{3N})$,
\begin{align}
\left|\bigSCP{\psi}{A_1A_2A_3\psi}\right| &\leq \frac{1}{(N-2)} \norm[\tr]{A_1} \left|\bigSCP{\psi}{A_2A_3 \psi}\right|.
\end{align}
\end{lemma}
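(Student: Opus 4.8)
The plan is to use the singular value decomposition of $A_1$ together with the fact that $\sum_{m=1}^N \ketbr{\varphi}_m$ is a projector on antisymmetric functions (the same trick that gives \eqref{trace_A_1}). Write $A_1 = \sum_{\ell} \mu_{\ell} \ketbr{\phi_{\ell}}_1$ via Lemma~\ref{lem:sing_value}, where the $\mu_\ell \geq 0$ and $\{\phi_\ell\}$ is orthonormal (here $A_1$ is positive self-adjoint, so the ``left'' and ``right'' singular vectors coincide), so that $\sum_\ell \mu_\ell = \norm[\tr]{A_1}$. Then
\begin{align*}
\bigSCP{\psi}{A_1A_2A_3\psi} = \sum_{\ell} \mu_{\ell} \bigSCP{\psi}{\ketbr{\phi_{\ell}}_1 A_2A_3 \psi} = \sum_{\ell} \mu_{\ell} \bigSCP{\bra{\phi_\ell}_1\psi}{\, A_2A_3 \, \bra{\phi_\ell}_1\psi}_{2,\ldots,N},
\end{align*}
where $\bra{\phi_\ell}_1\psi \in L^2(\RRR^{3(N-1)})$ is the partial scalar product in the first variable; since $A_2, A_3$ are positive, each summand is non-negative. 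I would then bound each term $\bigSCP{\psi}{\ketbr{\phi_{\ell}}_1 A_2A_3 \psi}$ above by $\frac{1}{N-2}$ times a quantity not depending on $\ell$, and sum.

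To do that, here is the key step. Since $\psi$ is antisymmetric, for each fixed $\ell$ we may symmetrize the projector $\ketbr{\phi_\ell}$ over the variables $x_1, x_4, x_5, \ldots, x_N$ (i.e. all variables except $x_2$ and $x_3$, which carry $A_2$ and $A_3$): by antisymmetry of $\psi$ in those $N-2$ variables and the fact that $A_2 A_3$ acts only on $x_2, x_3$,
\begin{align*}
\bigSCP{\psi}{\ketbr{\phi_{\ell}}_1 A_2A_3 \psi} = \frac{1}{N-2} \bigSCP{\psi}{\Big( \sum_{m \in \{1,4,5,\ldots,N\}} \ketbr{\phi_\ell}_m \Big) A_2A_3 \psi}.
\end{align*}
The operator $\Pi_\ell := \sum_{m \in \{1,4,\ldots,N\}} \ketbr{\phi_\ell}_m$ is a projector on antisymmetric functions (it is a sum of $N-2$ mutually commuting rank-one projectors built from a single normalized orbital $\phi_\ell$, acting on distinct variables, so $\Pi_\ell^2 \psi = \Pi_\ell \psi$ and $0 \le \Pi_\ell \le \id$ on the antisymmetric subspace — compare the discussion in Chapter~\ref{sec:properties_projectors}). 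Hence, inserting $\sqrt{A_2 A_3} = \sqrt{A_2}\sqrt{A_3}$ appropriately and using $\|\Pi_\ell\|_{\op,\as} \le 1$ together with Cauchy--Schwarz, one gets $\bigSCP{\psi}{\Pi_\ell A_2 A_3 \psi} \le \bigSCP{\psi}{A_2 A_3 \psi}$ — more carefully, write $\bigSCP{\psi}{\Pi_\ell A_2A_3\psi} = \bigSCP{\sqrt{A_2}\psi}{\,\Pi_\ell \sqrt{A_3}\cdots}$; since $\Pi_\ell$ commutes with $A_2,A_3$, one reduces to $\bigSCP{\sqrt{A_3}\sqrt{A_2}\psi}{\Pi_\ell \sqrt{A_3}\sqrt{A_2}\psi} \le \|\sqrt{A_3}\sqrt{A_2}\psi\|^2 = \bigSCP{\psi}{A_2A_3\psi}$ (using that $A_2, A_3$ commute and are positive). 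Therefore $\bigSCP{\psi}{\ketbr{\phi_\ell}_1 A_2 A_3 \psi} \le \frac{1}{N-2}\bigSCP{\psi}{A_2A_3\psi}$, and summing against $\mu_\ell$ gives
\begin{align*}
\bigSCP{\psi}{A_1A_2A_3\psi} \le \frac{1}{N-2}\Big(\sum_\ell \mu_\ell\Big)\bigSCP{\psi}{A_2A_3\psi} = \frac{1}{N-2}\norm[\tr]{A_1}\,\bigSCP{\psi}{A_2A_3\psi},
\end{align*}
and since all quantities involved are non-negative this yields the absolute-value form stated in the lemma.

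The main obstacle I anticipate is making rigorous the claim that $\Pi_\ell$ (the symmetrization of $\ketbr{\phi_\ell}$ over the $N-2$ variables $x_1,x_4,\ldots,x_N$) is a projector bounded by the identity on the antisymmetric subspace, and that conjugating the positive operator $A_2 A_3$ by it cannot increase the expectation value — one must be careful that $\psi$ is only antisymmetric in those $N-2$ variables after the partial contraction is \emph{not} taken, i.e. one works with the full $\psi$ and uses that $A_2,A_3$ leave the antisymmetry in $x_1,x_4,\ldots,x_N$ intact. This is exactly the same mechanism as in Lemma~\ref{lem:projector_norms}(b) and the proofs in Chapter~\ref{sec:estimates_projectors} (e.g. the $\frac{1}{N-2}$ versions of Lemmas~\ref{lem:psi_pvp_psi} and \ref{lem:psi_ppv2pp_psi}), so I would cite those and adapt the bookkeeping rather than redo it from scratch. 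Everything else is routine: the singular value decomposition is Lemma~\ref{lem:sing_value}, positivity of the summands is immediate, and $\sum_\ell \mu_\ell = \norm[\tr]{A_1}$ follows from $A_1$ being positive and trace class.
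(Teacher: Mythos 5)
Your proposal is correct and follows essentially the same route as the paper: the paper also expands $A_1=\sum_j\lambda_j\ketbr{\phi_j}_1$ (eigenvalue decomposition, which for positive $A_1$ coincides with your singular value decomposition) and then bounds each term $\bigSCP{\psi}{\ketbr{\phi_j}_1A_2A_3\psi}$ by $\frac{1}{N-2}\big|\bigSCP{\psi}{A_2A_3\psi}\big|$ via Lemma~\ref{lem:projector_norms}(b), whose proof is exactly the symmetrization-over-the-antisymmetric-variables argument you spell out (absorbing $\sqrt{A_2}\sqrt{A_3}$ into $\psi$ so that the lemma applies). The only difference is that the paper cites Lemma~\ref{lem:projector_norms} tersely where you re-derive its mechanism inline, which is fine.
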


\begin{proof}
We use the eigenvalue decomposition $A_1=\sum_j \lambda_j \ketbr{\phi_j}_1$ and Lemma~\ref{lem:projector_norms} which immediately yields
\begin{align}
\left|\bigSCP{\psi}{A_1A_2A_3\psi}\right| &= \left| \sum_j \lambda_j \bigSCP{\psi}{\ketbr{\phi_j}_1 A_2A_3\psi} \right| \nonumber \\
\eqexp{by Lem.~\ref{lem:projector_norms}} &\leq \sum_j \lambda_j \frac{1}{(N-2)} \left|\bigSCP{\psi}{A_2A_3 \psi}\right| \nonumber \\
&= \frac{1}{(N-2)} \norm[\tr]{A_1} \left|\bigSCP{\psi}{A_2A_3 \psi}\right|.
\end{align}
\end{proof}

\begin{proof}[Proof of Theorem~\ref{thm:sc_main_thm}]
The strategy of the proof is again to calculate the time derivative of $\alpha_n(t)$ and proving the bound
\be
\partial_t \alpha_n(t) \leq C(t) \big(\alpha_n(t) + N^{-1}\big).
\ee
Then the desired bound \eqref{main_alpha_ineq_sc} follows from the Gronwall Lemma~\ref{lem:gronwall}.

We calculate the time derivative of $\alpha_n(t)$ as in Lemma~\ref{lem:alpha_derivative}; in particular, we can use \eqref{alpha_derivative_n_remark}, since the weight function is $n(k)=\frac{k}{N}$ here. Due to the $N^{-\frac{1}{3}}$ in front of the time derivatives in the Schr\"odinger and mean-field equations, we get an additional factor of $N^{\frac{1}{3}}$. We find, using the scaling $v^{(N)}=N^{-1}v$,
\begin{align}\label{alpha_derivative_sc}
\partial_t \alpha_n(t) =&~ 2 N^{\frac{1}{3}} \, \Im\, \bigSCP{\psi^t}{q_1\Big( (N-1)p_2v^{(N)}_{12}p_2 - V^{(N)}_1 \Big) p_1 \psi^t} \nonumber \\
& + 2 N^{\frac{1}{3}} \, \Im\, \bigSCP{\psi^t}{q_1q_2\Big( (N-1)v^{(N)}_{12} \Big)p_1p_2 \psi^t} \nonumber \\
& + 2 N^{\frac{1}{3}} \, \Im\, \bigSCP{\psi^t}{q_1q_2\Big( (N-1)v^{(N)}_{12} \Big)p_1q_2 \psi^t} \nonumber \\
=&~ 2 N^{-\frac{2}{3}} \, \Im\, \bigSCP{\psi^t}{q_1\Big( (N-1)p_2v_{12}p_2 - V_1 \Big) p_1 \psi^t} \nonumber \\
& + 2 N^{-\frac{2}{3}} \, \Im\, \bigSCP{\psi^t}{q_1q_2\, (N-1)v_{12} \,p_1p_2 \psi^t} \nonumber \\
& + 2 N^{-\frac{2}{3}} \, \Im\, \bigSCP{\psi^t}{q_1q_2\, (N-1)v_{12} \,p_1q_2 \psi^t},
\end{align}
with $V_1 = V_1^{\dir}$ in the case of the fermionic Hartree equations, and $V_1 = V_1^{\dir} + V_1^{\exch}$ in the case of the Hartree-Fock equations. We now estimate the three terms separately. For ease of notation we omit the $t$-dependence in the following, i.e., we write $\psi^t = \psi$, $\varphi_i^t = \varphi_i$ and the constants $C$ could be $t$-dependent. For the estimates, we decompose $v$ in its Fourier components, i.e., we write $v(x) = \int d^3k \, \hat{v}(k) e^{ikx}$. Note that the assumption $\int d^3k \, (1+|k|^2)\, |\hat{v}(k)| < \infty$ in particular implies that $\int d^3k \, |\hat{v}(k)| < \infty$ and $\int d^3k \, |k| \, |\hat{v}(k)| < \infty$.

\absatz

\textbf{The} $\boldsymbol{qp}$-$\boldsymbol{pp}$ \textbf{term.} Let us first simplify the contribution from the exchange term. We find
\begin{align}\label{V_1_exch}
V_1^{\exch}p_1 &= - \sum_{j,\ell=1}^N \left(v_{12} \star (\varphi_{\ell}^*\varphi_j)\right)(x_1) \ketbra{\varphi_{\ell}}{\varphi_j}_1 \nonumber \\
&= - \int d^3k \, \hat{v}(k) \sum_{j,\ell=1}^N \left(e^{ik(x_1-\cdot)}  \star (\varphi_{\ell}^*\varphi_j)\right)(x_1) \ketbra{\varphi_{\ell}}{\varphi_j}_1 \nonumber \\
&= - \int d^3k \, \hat{v}(k)e^{ikx_1} \sum_{j,\ell=1}^N \scp{\varphi_{\ell}}{e^{-ikx} \varphi_j} \ketbra{\varphi_{\ell}}{\varphi_j}_1 \nonumber \\
&= - \int d^3k \, \hat{v}(k)e^{ikx_1} \sum_{j,\ell=1}^N \ket{\varphi_{\ell}}_1 \scp{\varphi_{\ell}}{e^{-ikx} \varphi_j} \bra{\varphi_j}_1 \nonumber \\
&= - \int d^3k \, \hat{v}(k)e^{ikx_1} p_1 e^{-ikx_1} p_1.
\end{align}
Thus, by Cauchy-Schwarz,
\begin{align}\label{sc_term1_exch}
N^{-\frac{2}{3}} \, \Big\lvert \bigSCP{\psi}{q_1 V^{\exch}_1 p_1 \psi} \Big\rvert &= N^{-\frac{2}{3}} \, \Big\lvert \int d^3k \, \hat{v}(k) \bigSCP{\psi}{q_1 e^{ikx_1} p_1 e^{-ikx_1} p_1 \psi} \Big\rvert \nonumber \\
&\leq N^{-\frac{2}{3}} \int d^3k \, \lvert\hat{v}(k)\rvert  \, \norm{q_1\psi} \nonumber \\
&\leq C N^{-\frac{2}{3}} \sqrt{\alpha_n}.
\end{align}
Here we see explicitly that the contribution from the exchange term is of lower order in $N$. Now follows the main part of the proof, namely to estimate the part of the $qp$-$pp$ term, where the difference between the Schr\"odinger and the mean-field interaction appears. Let us first rewrite this term using the Fourier decomposition of $v$. We find
\begin{align}\label{sc_term1_1}
&N^{-\frac{2}{3}} \bigSCP{\psi}{q_1\Big( (N-1)p_2v_{12}p_2 - V^{\dir}_1 \Big) p_1 \psi} \nonumber \\
&\quad =  N^{-\frac{2}{3}} \int d^3k \, \hat{v}(k) \bigSCP{\psi}{q_1\Big( (N-1) p_2e^{ik(x_1-x_2)}p_2 - \sum_{j=1}^N \scp{\varphi_j}{e^{ik(x_1-\cdot)}\varphi_j} \Big) p_1 \psi} \nonumber \\
&\quad = N^{-\frac{2}{3}} \int d^3k \, \hat{v}(k) \bigSCP{\psi}{\Big( (N-1) p_2 e^{-ikx_2} p_2 - \sum_{j=1}^N \scp{\varphi_j}{e^{-ikx}\varphi_j} \Big) q_1 e^{ikx_1} p_1 \psi}.
\end{align}
To deal with this expression we would like to diagonalize the operator $p_2e^{-ikx_2}p_2$, similar to as we did in the proof of Lemma~\ref{lem:estimates_terms_alpha_dot_beta}. However, since it is not self-adjoint, we decompose $e^{-ikx} = \cos(kx) - i \sin(kx)$ and diagonalize the self-adjoint operators
\be\label{sc_diag}
p_2 \cos(kx_2) p_2 = \sum_{j=1}^N \lambda_j p_2^{\chi_j}, \quad\quad p_2 \sin(kx_2) p_2 = \sum_{j=1}^N \tilde{\lambda}_j p_2^{\tilde{\chi}_j},
\ee
where the real eigenvalues $\lambda_j, \tilde{\lambda}_j$ and orthonormal eigenvectors $\chi_j, \tilde{\chi}_j$ depend on $k$. The eigenvalues have the property that $\lambda_j = \scp{\chi_j}{\cos(kx)\chi_j}$, so $|\lambda_j| \leq \norm{\chi_j}^2 = 1$, and furthermore $\sum_{j=1}^N \lambda_j  = \sum_{j=1}^N \scp{\chi_j}{\cos(kx) \chi_j} = \sum_{j=1}^N \scp{\varphi_j}{\cos(kx) \varphi_j}$, and analogous for $\tilde{\lambda}_j$. In the following, we use the projector $q^{\chi_j}_{\neq 1} = 1 - \sum_{m=2}^N p_m^{\chi_j}$ (it is a projector on antisymmetric functions). Let us now decompose the expression \eqref{sc_term1_1} by using $e^{-ikx} = \cos(kx) - i \sin(kx)$. We also use the singular value decomposition $q_1e^{ikx_1}p_1 = \sum_{\ell} \mu_{\ell}\ketbra{\phi_{\ell}}{\tilde{\phi}_{\ell}}_1$. Note that according to Lemma~\ref{lem:sing_value}, $\sum_{\ell} \mu_{\ell} = \norm[\tr]{q_1e^{ikx_1}p_1}$. For the $\cos$-term, we find, using the diagonalization from \eqref{sc_diag}, Cauchy-Schwarz (C.-S.), \eqref{p_neq_1} and Lemma~\ref{lem:projector_norms},
\begin{align}\label{sc_term1_part1}
&N^{-\frac{2}{3}} \, \bigg\lvert \int d^3k \, \hat{v}(k) \bigSCP{\psi}{\Big( (N-1) p_2 \cos(kx_2) p_2 - \sum_{j=1}^N \scp{\varphi_j}{\cos(kx) \varphi_j} \Big) q_1 e^{ikx_1} p_1 \psi} \bigg\rvert \nonumber \\
\eqexpl{by \eqref{sc_diag}} &\qquad = N^{-\frac{2}{3}} \, \bigg\lvert \int d^3k \, \hat{v}(k) \sum_{j=1}^N \lambda_j \bigSCP{\psi}{\Big( (N-1) p_2^{\chi_j} - 1 \Big) q_1 e^{ikx_1} p_1 \psi} \bigg\rvert \nonumber \\
\eqexpl{by antisym.} &\qquad = N^{-\frac{2}{3}} \, \bigg\lvert \int d^3k \, \hat{v}(k) \sum_{j=1}^N \lambda_j \bigSCP{\psi}{ \left( \sum_{m=2}^N p_m^{\chi_j} - 1 \right) q_1 e^{ikx_1} p_1 \psi} \bigg\rvert \nonumber \\
&\qquad = N^{-\frac{2}{3}} \, \bigg\lvert \int d^3k \, \hat{v}(k) \sum_{j=1}^N \lambda_j \bigSCP{\psi}{  q^{\chi_j}_{\neq 1} q_1 e^{ikx_1} p_1 \psi} \bigg\rvert \nonumber \\
&\qquad = N^{-\frac{2}{3}} \, \bigg\lvert \int d^3k \, \hat{v}(k) \sum_{j=1}^N \lambda_j \sum_{\ell} \mu_{\ell} \bigSCP{\psi}{ q^{\chi_j}_{\neq 1} \ketbra{\phi_{\ell}}{\tilde{\phi}_{\ell}}_1 q^{\chi_j}_{\neq 1} \psi} \bigg\rvert \nonumber \\
\eqexpl{by C.-S.} &\qquad \leq N^{-\frac{2}{3}} \, \int d^3k \, |\hat{v}(k)| \sum_{j=1}^N \underbrace{|\lambda_j|}_{\leq 1} \sum_{\ell} \mu_{\ell} \norm{\bra{\phi_{\ell}}_1q^{\chi_j}_{\neq 1}\psi} \norm{\bra{\tilde{\phi}_{\ell}}_1q^{\chi_j}_{\neq 1}\psi} \nonumber \\
\eqexpl{by C.-S.} &\qquad \leq N^{-\frac{2}{3}} \, \int d^3k \, |\hat{v}(k)| \sum_{\ell} \mu_{\ell} \sqrt{\sum_{j=1}^N \bigSCP{\psi}{ q^{\chi_j}_{\neq 1} \ketbra{\phi_{\ell}}{\phi_{\ell}}_1 \psi}} \,\times \nonumber \\ 
&\qquad \quad \quad \times \sqrt{\sum_{j=1}^N \bigSCP{\psi}{ q^{\chi_j}_{\neq 1} \ketbra{\tilde{\phi}_{\ell}}{\tilde{\phi}_{\ell}}_1 \psi}} \nonumber \\
\eqexpl{by \eqref{p_neq_1}} &\qquad = N^{-\frac{2}{3}} \, \int d^3k \, |\hat{v}(k)| \sum_{\ell} \mu_{\ell} \sqrt{\bigSCP{\psi}{\ketbra{\phi_{\ell}}{\phi_{\ell}}_1 (Nq_2+p_2) \psi}} \,\times \nonumber \\ 
&\qquad \quad \quad \times \sqrt{\bigSCP{\psi}{ \ketbra{\tilde{\phi}_{\ell}}{\tilde{\phi}_{\ell}}_1 (Nq_2+p_2)  \psi}} \nonumber \\
\eqexpl{by Lem.~\ref{lem:projector_norms}} &\qquad \leq N^{-\frac{2}{3}} \, \int d^3k \, |\hat{v}(k)| \norm[\tr]{q_1e^{ikx_1}p_1} \bigSCP{\psi}{(q_2+N^{-1}p_2) \psi} \nonumber \\ 
\eqexpl{by \eqref{sc_prop}} &\qquad \leq C \int d^3k \, |\hat{v}(k)| (1+|k|) \big( \alpha_n + N^{-1} \big) \nonumber \\
&\qquad \leq C \big( \alpha_n + N^{-1} \big).
\end{align}
The $\sin$-term goes exactly analogous. Let us summarize. If we consider only the fermionic Hartree equations (i.e., without exchange term), we find from \eqref{sc_term1_part1} that
\be
|qp\text{-}pp \text{ term}| \leq C \Big(\alpha_n(t) + N^{-1}\Big).
\ee
If we consider the Hartree-Fock equations, we use additionally \eqref{sc_term1_exch}, and thus find the same bound,
\be
|qp\text{-}pp \text{ term}| \leq C \Big( \alpha_n(t) + N^{-1} + \sqrt{\alpha_n(t)} N^{-\frac{2}{3}} \Big) \leq C \Big( \alpha_n(t) + N^{-1} \Big).
\ee

\absatz

\textbf{The} $\boldsymbol{qq}$-$\boldsymbol{pp}$ \textbf{term.} Similarly to Lemma~\ref{lem:estimates_terms_alpha_dot_beta}, we use the antisymmetry of $\psi$ to shift one $q$ to the right-hand side of the scalar product. In the following, we use the short-hand notation $A_m^k := q_me^{ikx_m}p_m$ and the decomposition $A_m^k = \sum_{\ell} \mu_{\ell} \ketbra{\phi_{\ell}}{\tilde{\phi}_{\ell}}_m$. Note that in particular $|A^k_m| = \sqrt{A^{k*}_mA^k_m} = \sum_{\ell} \mu_{\ell} \ketbr{\tilde{\phi}_{\ell}}_m$ and $|A_m^{k*}| = \sum_{\ell} \mu_{\ell} \ketbr{\phi_{\ell}}_m$. We then find
\begin{align}\label{sc_term2_prep}
&\Big\lvert \bigSCP{\psi}{q_1q_2 (N-1)v_{12} p_1p_2 \psi} \Big\rvert \nonumber \\ 
&\qquad = (N-1) \, \Big\lvert \int d^3k \,\hat{v}(k) \bigSCP{\psi}{q_1e^{ikx_1}p_1 q_2e^{-ikx_2}p_2 \psi} \Big\rvert \nonumber \\
\eqexpl{by antisym.} &\qquad= \Big\lvert \int d^3k \,\hat{v}(k) \sum_{\ell} \mu_{\ell} \bigSCP{\psi}{q_1\ketbra{\phi_{\ell}}{\tilde{\phi}_{\ell}}_1 \sum_{m=2}^N A_m^{-k} \psi} \Big\rvert \nonumber \\
\eqexpl{by C.-S.} &\qquad\leq \int d^3k\, |\hat{v}(k)| \sum_{\ell} \mu_{\ell} \Big|\Big|\bra{\phi_{\ell}}_1 q_1\psi\Big|\Big| \norm{\bra{\tilde{\phi}_{\ell}}_1 \sum_{m=2}^N A_m^{-k} \psi} \nonumber \\
\eqexpl{by C.-S.} &\qquad\leq \int d^3k\, |\hat{v}(k)| \sqrt{\bigSCP{\psi}{q_1 |A_1^{k*}| q_1\psi}} \times \nonumber \\
&\qquad\quad \times \sqrt{N^2 \bigSCP{\psi}{A_2^{-k*} |A^k_1| A_3^{-k}\psi} + N \bigSCP{\psi}{|A^k_1| \big|A_2^{-k}\big|^2\psi}}.
\end{align}
The appearing terms are bounded as follows. First, since $\norm[\op]{|A_1^{k*}|} \leq 1$,
\be\label{sc_term2_part1}
\bigSCP{\psi}{q_1 |A_1^{k*}| q_1\psi} \leq \norm{q_1\psi}^2 \norm[\op]{|A_1^{k*}|} \leq \alpha_n.
\ee
Furthermore, using Lemma~\ref{lem:sc_trace_est} twice, $\big|\big| |A_3^{-k*}| \big|\big|_{\op} \leq 1$ and the decomposition $A_m^{-k} = \sum_{\ell} \tilde{\mu}_{\ell} \ketbra{\chi_{\ell}}{\tilde{\chi}_{\ell}}_m$, we find
\begin{align}\label{sc_term2_part2}
\bigSCP{\psi}{A_2^{-k*} |A^k_1| A_3^{-k}\psi} &= \sum_{j,\ell} \tilde{\mu}_j\tilde{\mu}_{\ell} \bigSCP{\psi}{q_3 \sqrt{|A_1^k|} \ketbra{\tilde{\chi}_j}{\chi_j}_2 \ketbra{\chi_{\ell}}{\tilde{\chi}_{\ell}}_3 \sqrt{|A_1^k|} q_2 \psi} \nonumber \\
&\leq \bigSCP{\psi}{q_3 |A_1^k| \, |A_2^{-k}| \, |A_3^{-k*}| q_3 \psi} \nonumber \\
\eqexp{by Lem.~\ref{lem:sc_trace_est} twice} &\leq CN^{-2} \norm[\tr]{A_1^k} \norm[\tr]{A_2^{-k}} \alpha_n.
\end{align}
Finally, using $\sup_{\ell} \mu_{\ell} = \norm[\op]{|A^k_1|} \leq 1$, Lemma~\ref{lem:projector_norms} and Lemma~\ref{lem:sc_trace_est} for the first inequality, we find
\begin{align}\label{sc_term2_part3}
\bigSCP{\psi}{\big|A_1^k\big| \big|A_2^{-k}\big|^2\psi} &\leq C N^{-1} \norm[\tr]{A_1^k} \bigSCP{\psi}{\big|A_2^{-k}\big|^2\psi} \nonumber \\
&= C N^{-1} \norm[\tr]{A_1^k} \sum_{\ell} \mu_{\ell}^2 \bigSCP{\psi}{\ketbr{\tilde{\phi}_{\ell}}_2\psi} \nonumber \\
\eqexp{by Lem.~\ref{lem:projector_norms}} &\leq C N^{-2} \norm[\tr]{A_1^k} \left(\sup_{\ell}\mu_{\ell}\right) \sum_{\ell} \mu_{\ell} \nonumber \\
&\leq C N^{-2} \norm[\tr]{A_1^k} \norm[\tr]{A_2^{-k}}.
\end{align}
Using the estimates \eqref{sc_term2_part1}, \eqref{sc_term2_part2} and \eqref{sc_term2_part3} for the terms in \eqref{sc_term2_prep} and using \eqref{sc_prop} we find for the $qq$-$pp$ term, that
\begin{align}
N^{-\frac{2}{3}} \Big\lvert \bigSCP{\psi}{q_1q_2 (N-1)v_{12} p_1p_2 \psi} \Big\rvert &\leq C \int d^3k \, \hat{v}(k) (1+|k|) \sqrt{\alpha_n} \sqrt{\alpha_n + N^{-1}} \nonumber \\
&\leq C \big( \alpha_n + N^{-1} \big).
\end{align}

\absatz

\textbf{The} $\boldsymbol{qq}$-$\boldsymbol{pq}$ \textbf{term.} 
Using the Fourier decomposition of $v$, the singular value decomposition of $q_1e^{ikx_1}p_1$, Cauchy-Schwarz, \eqref{sc_prop} and Lemma~\ref{lem:projector_norms} we find
\begin{align}
N^{\frac{1}{3}} \, \Big\lvert \bigSCP{\psi}{q_1q_2 v_{12} p_1q_2 \psi} \Big\rvert 
&= N^{\frac{1}{3}} \, \Big\lvert \int d^3k \, \hat{v}(k) \bigSCP{\psi}{q_1e^{ikx_1}p_1q_2 e^{-ikx_2}q_2 \psi} \Big\rvert \nonumber \\
&= N^{\frac{1}{3}} \Big\lvert \int d^3k \, \hat{v}(k) \sum_{\ell} \mu_{\ell} \bigSCP{\psi}{q_2\ketbra{\phi_{\ell}}{\tilde{\phi}_{\ell}}_1 e^{-ikx_2}q_2 \psi} \Big\rvert \nonumber \\
\eqexp{by C.-S.} &\leq N^{\frac{1}{3}} \int d^3k \, |\hat{v}(k)| \sum_{\ell} \mu_{\ell} \, \Big|\Big|\bra{\phi_{\ell}}_1 q_2\psi\Big|\Big| \norm{\bra{\tilde{\phi}_{\ell}}_1 q_2 \psi} \nonumber \\
\eqexp{by Lem.~\ref{lem:projector_norms}} &\leq N^{\frac{1}{3}} \int d^3k \, |\hat{v}(k)| \norm[\tr]{q_1e^{ikx_1}p_1} N^{-1} \norm{q_2\psi}^2 \nonumber \\
\eqexp{by \eqref{sc_prop}} &\leq C \alpha_n.
\end{align}

\end{proof}

\bibliographystyle{plain}
\bibliography{references}

\newpage
\pagestyle{empty}
\cleardoublepage

\selectlanguage{ngerman}

\begin{center}
{\Large\textbf{Eidesstattliche Versicherung}\\
\small (Siehe Promotionsordnung vom 12.07.11, $\S 8$, Abs.\ 2 Pkt.\ 5)}
\end{center}

\vspace{0.1cm}

\begin{flushleft}
Hiermit erkl\"are ich an Eidesstatt, dass die Dissertation von mir selbstst\"andig, ohne unerlaubte Beihilfe angefertigt ist.

\vspace{2cm}

M\"unchen, den 15.05.2014 \hspace*{3cm} S\"oren Petrat

\end{flushleft}

\end{document}